\title{Universal Algebra\\[1ex]     
        And Effectful Computation}   
\author{Nayan Rajesh}             
\begin{document}

\baselineskip=15pt plus1pt

\setcounter{secnumdepth}{3}
\setcounter{tocdepth}{3}

\maketitle                  
\begin{abstract}
    Abstract clones serve as an algebraic presentation of the syntax of a simple type theory. From the perspective of universal algebra,  they define algebraic theories like those of groups, monoids and rings. An algebra for a clone is a realization of the theory; for instance,  an algebra for the algebraic theory of groups is an actual group. These links allow one to study the language of simple type theory from the viewpoint of universal algebra.

Programming languages, however, are much more complicated than simple type theory. Many useful features like reading, writing, and exception handling involve interacting with the environment; these are called side-effects. Algebraic presentations for languages with the appropriate syntax for handling effects are given by premulticategories and effectful multicategories \cite{staton-levy}. We study these structures with the aim of defining a suitable notion of an algebra.

To achieve this goal, we proceed in two steps. First, we define a tensor on $[\to,\category{Set}]$, and  show that this tensor along with the cartesian product gives the category a duoidal structure. Secondly, we introduce the novel notion of a multicategory enriched in a duoidal category which generalize the traditional notion of a multicategory. Further, we prove that an effectful multicategory is the same as a multicategory enriched in the duoidal category $[\to,\category{Set}]$. This result places multicategories and effectful multicategories on a similar footing, and provides a mechanism for transporting concepts from the theory of multicategories (which model pure computation) to the theory of effectful multicategories (which model effectful computation). As an example of this, we generalize the definition of a 2-morphism for multicategories to the duoidally enriched case. Our equivalence result then gives a natural definition of a 2-morphism for effectful multicategories, which we then use to define the notion of an algebra.
\end{abstract}

\begin{acknowledgements}
    	I begin by thanking my supervisor, Philip Saville. The idea of exploring the connections between universal algebra and effectful computation was his, and I am  very grateful for the time he spent in discussing and sharing his intuition on these subjects with me. 	
	Checking if $[\to,\mcal{V}]$ has a  funny-like tensor was his suggestion, and this formed the basis for most of the later parts of this dissertation. I cannot thank him enough for quickly providing me with his  comments and feedback on different drafts of chapters even though I got them to him quite late. 

I also thank Sam Staton for directing me to Philip when I approached him in March, and for agreeing to be my co-supervisor for this dissertation.

My time at Oxford has been a huge learning experience, and I thank everyone at Somerville and at the Mathematical Institute who was part of it. Many thanks go to Dan Ciubotaru and Carmen Constatin who influenced my intuition towards category theory with their excellent lectures. 	Special thanks go to Riju Dey and Erik Rydow who made  my day-to-day life memorable.

The final days of this dissertation were spent at home, and I thank my mother for supporting me, hearing me rant, and more importantly for not waking me up before 7am.  I thank my friends for keeping me sane: Darwin, Gautham, Jash, and Reyna  helped me take much-needed breaks. Lastly, I thank Akshaya who has always been my source of comfort.

\end{acknowledgements} 

\begin{romanpages}          
\tableofcontents            
\end{romanpages}            

\include{introduction}
\chapter{Introduction}
\section{Background}

Semantic models for  a formal language help us make sense of the language.  For example, given  the polynomial, $x^2 +5x - 6 \in \mathbb{R}[x]$, there are two ways to solve for a solution. The first is a syntactic approach, which involves using the rules of a ring (the language) and rewriting the polynomial as $(x-6)(x+1)$. Then, by the rules of an integral domain, we conclude that solutions are at $6$ and $-1$. The other, more semantic approach is to interpret the polynomial as specifying a graph on the plane, and reading off all the $x$-coordinates for which the $y$-coordinate is $0$. This leads to the general principle of using semantics to determine properties of the syntax. This approach has been used in mathematical logic - that a statement is derivable from a set of axioms can be proved by checking that it holds true in all models of those axioms, or can be disproved by showing that  it fails for one  model. Such an approach was used to show the independence of the parallel postulate of Euclidean geometry.
 In computer science, there are three major classes of semantic approaches that provide ways to determine whether two programs are equivalent:  operational, denotational and logical.

Universal algebra provides a framework for defining languages and their models. A signature for an algebraic theory consists of a set of operations, $O$ and a set of equations, $E$. Terms can be built up using the operations, and equality of terms is the smallest congruence that satisfies the set of equations. Noteworthy is the fact that two distinct pairs $(O,E)$ and $(O',E')$ of operations and equations may specify the same algebraic theory. An example of this is the algebraic theory of groups, and \cite{manes} gives three equivalent presentations of this theory. A presentation invariant approach is made possible by packaging the language into either (1) abstract clones, or (2) Lawvere theories. Each of these structures have a corresponding notion of  `algebra'. An algebra for an abstract clone, or for a  Lawvere theory is a model of the language, i.e, a domain which has the operations of the language, and satisfies the required equations. In summary, clones/Lawvere theories provide the grammar, and algebras for them are models for the grammar.

Simple type theories, which are very elementary  programming languages can be thought of as a (multi-sorted) abstract clone. Hence, abstract clones can be used to represent and study programming language syntax. This idea was taken as a starting point to define a type theory for cartesian closed bicategories in \cite{psaville}. Algebras for abstract clones, then,  give us models of this language, and this approach was used in \cite{hyland2012} to define  models for the untyped $\lambda$-calculus.

This correspondence between programming languages and abstract clones only holds when the language is assumed to be pure, and uses a call-by-name evaluation strategy. For effectful, call-by-value programming languages, it becomes necessary to weaken the definition of a clone, and to consider cartesian premulticategories and cartesian Freyd multicategories as defined in \cite{staton-levy}. These objects have not been studied from the perspective of universal algebra. Specifically, these structures do not have an appropriate notion of an algebra. 

The central preoccupation of this dissertation is to study  effectful, call-by-value structures, and to define  suitable notions of  `algebra' for them. Though cartesian premulticategories and cartesian Freyd multicategories do not make an appearance due to their complexity, we study and define algebras for premulticategories and effectful multicategories which are linear versions of these. 

\section{Outline}

\begin{itemize}
	\item \text{Chapter 2} provides a brief introduction to enriched category theory to serve as motivation for some of the enriched constructions in later chapters. The chapter motivates enrichment of categories using enrichment over graphs, and recalls some results proved from \cite{wolff}. 
	
	\item \text{Chapter 3} motivates two multi-ary structures for which the notion of an algebra is well-established: clones and multicategories. The first half of the chapter introduces the theory of clones and describes their connection to programming languages. Relationships between clones and categories, and between clones and cartesian categories are explored in the form of adjunctions between the respective categories. The notion of an algebra for a clone is defined, and its equivalence with standard definitions is shown. Multicategories are then motivated as linear versions of clones, and analogous ideas are presented: relationships with categories, and with monoidal categories, and  the notion of an algebra is introduced. The chapter ends with a general recipe to define algebras for multi-ary structures.
	
	\item \text{Chapter 4} introduces premulticategories and effectful multicategories. A description of their connection to effectful, call-by-value programming languages is given, followed by a brief exploration of their relationships to categories, premonoidal categories and effectful categories. A notion of algebra is put forward for premulticategories.
	
	\item \text{Chapter 5} explores the category of arrows and commutative squares, $[\to,\mcal{V}]$ for a cosmos. A symmetric monoidal structure is defined on this category, and some properties of this tensor are explored. 
 The chapter ends with the observation that this category has a duoidal structure. This tensor forms the basis for the equivalence result proved in the next chapter.
	
	\item \text{Chapter 6} generalizes the notion of a multicategory by introducing multicategories enriched in a duoidal category. Adjunctions with the category of enriched categories are explored, following which it is shown that multicategories enriched in the duoidal category from \text{Chapter 5} are the same as the effectful multicategories of \text{Chapter 4}. The chapter ends by showing that these structures can be collected into a 2-category and that a category of algebras for these can be defined.
	
	\item \text{Chapter 7} serves as a concluding chapter in which lines of further work are indicated.
\end{itemize}

\section{Contributions}

We highlight  the main contributions of this dissertation chapter-wise. Chapter 2 is skipped since it is just a brief run through of the ideas of enriched category theory.  

\begin{itemize}
	\item In Chapter 3, we explore the theory of clones. \begin{itemize}
		\item An explicit  construction of the free clone on a category is given. Though the existence of the free clone was shown in \cite{psaville}, this construction shows what the homsets should look like. This tells how how to freely construct a programming language from a signature provided by a category. 
		
		\item A construction of the free cartesian category on a clone is given.  This shows how product types can be freely added to a programming language.	The construction is a straightforward generalization of the construction of a Lawvere theory from a single-sorted clone, and is probably folklore.

		\item We introduce the  notion of a transformation between morphisms of clones, which does not seem to appear in the literature. We use this to define the category of algebras for a multi-sorted clone, and show its equivalence to the more familiar notion of an algebra for an algebraic theory as put forward in \cite{vitale}. Again, this equivalence is a straightforward generalization of the equivalence between the category of algebras for Lawvere theories and for single-sorted clones \cite{gould}. 
			\end{itemize}
			
			\item Chapter 4 is a quick run through of the theory of premulticategories as put forward in \cite{staton-levy}. Our contribution to this theory is the notion of a transformation between premulticategory morphisms. We illustrate that this is a sensible definition by showing that there is a category of premulticategory morphisms between fixed premulticategories and their transformations.
			
			\item In Chapter 5, we investigate the category of arrows and commutative squares, $[\to,\cal{V}]$, and define the funny tensor on it. Funny tensors formalize the notion of `sesquiness', and this one provides the appropriate notion to discuss the concept  of sesqui-substitution, which is an integral part of call-by-value multi-ary structures.   Finally, we prove  that  this tensor preserves colimits and induces a duoidal structure on the category.

			\item In Chapter 6, we introduce the novel notion of a multicategory enriched in a duoidal category. 
			\begin{itemize}
				\item Premonoidal and effectful categories are a generalization of monoidal categories, and premulticategories, and effectful multicategories are a generalization of multicategories. However, effectful categories and monoidal categories are also on the same footing as they are instances of  pseudomonoids in an appropriate bicategory \cite{hefford-roman}. 
    Hence, there is a general theory that can be applied to both these structures. We provide a multi-ary analogue by showing that effectful multicategories and multicategories are instances of a multicategory enriched in a duoidal category. As a result, it becomes easier to see how results from the well-established area of multicategories should be passed to the theory of effectful categories. 
				
				\item As an example of how structure can be passed, we show that the notion of a transformation of multicategories can be generalized to yield a 2-category of duoidally enriched multicategories. Then, this notion specializes to yield a valid definition of a 2-morphism of effectful multicategories. 

                \item (Single-sorted) clones and multicategories are equivalent to the category of  monoids in a certain monoidal category \cite{coend}. This equivalence makes heavy use of the fact that substitution in these structures is simultaneous. A similar characterization for premulticategories and effectful categories was blocked by the fact that subsitution in these structures cannot be made simultaneous. Our equivalence result seems to eliminate this roadblock, and provides a framework for investigating this idea.
				
				\item Finally, we use the notion of 2-morphism between effectful categories to define a category of algebras for these structures.
			\end{itemize}
\end{itemize}

\chapter{What are Enriched Categories?}

	As observed by Power in \cite{premon-alg}, effectful, call-by-value computation is best considered in the context of categories enriched over the cartesian monoidal category $[\to,\category{Set}]$, the category of arrows and commutative squares in $\category{Set}$. We consider the enriched viewpoint in later chapters, and we briefly recollect some concepts that will be alluded to later on.

	\section{Enriched Graphs and Cats}
	
Recall that a (directed) graph, $\mcal{C}$ consists of the following information:
\begin{itemize}
	\item A set of vertices, $A,B,\ldots, $
	\item For every pair of vertices, a set of arrows, $\Hom{\mcal{C}}{A}{B} \in \category{Set}$.
\end{itemize}
A (small) category, then, is a \textit{graph with extra structure}, i.e, a category is a graph that has identity arrows,
\[
id_A: 1\to \Hom{\mcal{C}}{A}{A}
\]
and the ability to compose arrows:
\begin{align*}
	\circ_{A,B,C}: \Hom{\mcal{C}}{B}{C}\times \Hom{\mcal{C}}{A}{B}&\to \Hom{\mcal{C}}{A}{C}\\
f,g& \mapsto f\circ g
\end{align*}
subject to the the unitality and associativity axioms. Then, the category of small categories, $\category{Cat}$ has a forgetful functor, $U$ to the category of graphs and their morphisms $\category{Graph}$. Given a graph, $\mcal{C}$, it is possible to construct the \textit{free category} on it, with the same objects, and morphisms from $A$ to $B$ corresponding to \textit{paths} from $A$ to $B$ in the graph \cite{awodey}. This leads to the well-known (monadic) adjunction, 
\[\begin{tikzcd}[cramped]
	{\category{Cat}} && {\category{Graph}}
	\arrow[""{name=0, anchor=center, inner sep=0}, "U"', shift right=3, from=1-1, to=1-3]
	\arrow[""{name=1, anchor=center, inner sep=0}, "F"', shift right=3, from=1-3, to=1-1]
	\arrow["\dashv"{anchor=center, rotate=-90}, draw=none, from=1, to=0]
\end{tikzcd}\]

The notion of an enriched graph comes by  considering an \textit{object} of arrows, $\Hom{\mcal{C}}{A}{B}$ in some fixed category, as opposed to in \category{Set}. 
\begin{definition}
	Let $\mcal{V}$ be a category, which we call the enriching category. Then, a $\mcal{V}$-graph, $\mcal{C}$ consists of the following information:
	\begin{itemize}
		\item A set of objects, $ \ob{\mcal{C}}$
		\item For every pair of objects $A,B$, an object $\Hom{\mcal{C}}{A}{B} \in \cal{V}$.
	\end{itemize}
	A morphism 	of $\mcal{V}$-graphs, $f: \mcal{C}\to \mcal{D}$ consists of the following data:
	\begin{enumerate}
		\item A function, $f: \ob{\mcal{C}}\to \ob{\mcal{D}}$
		\item A family of maps, $f_{A,B}: \Hom{\mcal{C}}{A}{B} \to \Hom{\mcal{D}}{fA}{fB} $
	\end{enumerate}
	Thus, we obtain a category, $\mcal{V}$-\category{Graph} of $\mcal{V}$-graphs and their morphisms.
\end{definition}

Our usual notion of a graph is a graph enriched in $\category{Set}$, a \category{Set}-graph. Notice that the definition of a category used the cartesian monoidal structure of \category{Set}, a structure  not guaranteed to exist in arbitrary categories. Hence, to define an enriched category, we ask  that the enriching category has some monoidal structure, $(\mcal{V},\otimes, I, \lambda, \rho, \alpha)$.

\begin{definition}
	A $\mcal{V}$-category is a $\mcal{V}$-graph, $\mcal{C}$ with the following additional information:
	\begin{enumerate}
		\item For every object, $A\in \ob{\mcal{C}}$, an identity arrow:
$		id_{A}: I \to \Hom{\mcal{C}}{A}{A}$ 
		
		\item For every triple of objects, $A,B,C\in \ob{\mcal{C}}$, a composition operation:
		$\circ_{A,B,C}: \Hom{\mcal{C}}{B}{C}\otimes \Hom{\mcal{C}}{A}{B}\to \Hom{\mcal{C}}{A}{C}
		$
	\end{enumerate}
	subject to the following axioms
	\begin{enumerate}
		\item Right unitality:
	\[\begin{tikzcd}[cramped]
		{\Hom{\mcal{C}}{B}{B}\otimes \Hom{\mcal{C}}{A}{B}} && {\Hom{\mcal{C}}{A}{B}} \\
		\\
		{I\otimes \Hom{\mcal{C}}{A}{B}}
		\arrow["{\circ_{A,B,B}}", from=1-1, to=1-3]
		\arrow["{id_{B}\otimes 1}", from=3-1, to=1-1]
		\arrow["\lambda"', from=3-1, to=1-3]
	\end{tikzcd}\]
		\item Left unitality: 
		\[\begin{tikzcd}[cramped]
			{\Hom{\mcal{C}}{A}{B}\otimes \Hom{\mcal{C}}{A}{A}} && {\Hom{\mcal{C}}{A}{B}} \\
			\\
			{\Hom{\mcal{C}}{A}{B}\otimes I}
			\arrow["{\circ_{A,A,B}}", from=1-1, to=1-3]
			\arrow["{1\otimes  id_{A}}", from=3-1, to=1-1]
			\arrow["\rho"', from=3-1, to=1-3]
		\end{tikzcd}\]
		\item Associativity:
		\[\begin{tikzcd}[cramped]
			{(\Hom{\mcal{C}}{C}{D}\otimes \Hom{\mcal{C}}{B}{C})\otimes \Hom{\mcal{C}}{A}{B}} && {\Hom{\mcal{C}}{C}{D}\otimes (\Hom{\mcal{C}}{B}{C}\otimes \Hom{\mcal{C}}{A}{B})} \\
			{\Hom{\mcal{C}}{B}{D}\otimes \Hom{\mcal{C}}{A}{B}} && {\Hom{\mcal{C}}{C}{D}\otimes \Hom{\mcal{C}}{B}{C}} \\
			& {\Hom{\mcal{C}}{A}{D}}
			\arrow["\alpha", from=1-1, to=1-3]
			\arrow["{\circ_{B,C,D}\otimes 1}", from=1-1, to=2-1]
			\arrow["{\circ_{A,B,D}}"', from=2-1, to=3-2]
			\arrow["{1\otimes \circ_{A,B,C}}", from=1-3, to=2-3]
			\arrow["{\circ_{B,C,D}}", from=2-3, to=3-2]
		\end{tikzcd}\]
	\end{enumerate}
If $\mcal{C}$ and $\mcal{D}$ are $\mcal{V}$-categories, a $\mcal{V}$-functor between them is a $\mcal{V}$-graph morphism, $f: \mcal{C}\to \mcal{D}$ satisfying the following conditions:
\begin{enumerate}
	\item Identities are preserved: 
\[\begin{tikzcd}[cramped]
	{\Hom{\mcal{C}}{A}{A}} && {\Hom{\mcal{D}}{fA}{fA}} \\
	\\
	I
	\arrow["{id_A^{\mcal{C}}}", from=3-1, to=1-1]
	\arrow["{f_{A,A}}", from=1-1, to=1-3]
	\arrow["{id_{fA}^{\mcal{D}}}"', from=3-1, to=1-3]
\end{tikzcd}\]	
	\item Composition is preserved:
	\[\begin{tikzcd}[cramped]
		{\Hom{\mcal{C}}{B}{C}\otimes \Hom{\mcal{C}}{A}{B}} && {\Hom{\mcal{D}}{fB}{fC}\otimes \Hom{\mcal{D}}{fA}{fB}} \\
		{\Hom{\mcal{C}}{A}{C}} && {\Hom{\mcal{D}}{fA}{fC}}
		\arrow["{f_{B,C}\otimes f_{A,C}}", from=1-1, to=1-3]
		\arrow["{\circ^{\mcal{C}}_{A,B,C}}", from=1-1, to=2-1]
		\arrow["{f_{A,C}}", from=2-1, to=2-3]
		\arrow["{\circ^{\mcal{D}}_{fA,fB,fC}}", from=1-3, to=2-3]
	\end{tikzcd}\]
\end{enumerate}
Then, $\mcal{V}$-categories and $\mcal{V}$-functors assemble into a category, $\mcal{V}$-\category{Cat}. 
\end{definition}

A (small) category is a $\mcal{V}$-category where $\mcal{V} = \category{Set}$, with monoidal structure  given by the cartesian structure on \category{Set}. To replicate the adjunction $F\dashv U$ between \category{Set} and \category{Graph} for $\mcal{V}$-categories, we require further structure on $\mcal{V}$ - the ability to take disjoint unions. 

As mentioned before, if $\mcal{C}$ is a graph, the category $F\mcal{C}$ has paths from $A$ to $ B$ as morphisms $A\to B$. The set of paths $A\to A_1\to A_2\to\ldots\to A_{n-1}\to B$ can be written as a product, 
\[
\Hom{F\mcal{C}}{A}{B}_{\Gamma}: = \Hom{\mcal{C}}{A}{A_1}\times \Hom{\mcal{C}}{A_1}{A_2}\times\ldots \times \Hom{\mcal{C}}{A_{n-1}}{B}
\] 
where $\Gamma = A_1,\ldots, A_{n-1}$.
The set of all paths from $A$ to $B$ is then obtained by taking the disjoint union of paths of arbitrary length:
\[
\Hom{F\mcal{C}}{A}{B} = \sum_{\Gamma \in \ob{\mcal{C}}^*} \Hom{F\mcal{C}}{A}{B}_\Gamma
\]
When $A=B$, we add in a special element, $1=\{*\}$ to stand for the identity. If the enriching category $\mcal{V}$ has coproducts, is is possible to carry out the exact same construction. However, to define composition, we require that the coproducts are preserved by $-\otimes B$ for every $B\in \mcal{C}$, i.e, the canonical map
\[
(\sum_{i\in I} (A_i\otimes B))\to (\sum_{i\in I} A_i)\otimes B
\]
for any $I$-indexed collection, $\{A_i\}_{i\in I}$, is an isomorphism.

\begin{notation}
	Let $\mcal{C}$ be a $\mcal{V}$-graph, and let $\Gamma = A_1,\ldots, A_n$ be a sequence of objects in it. Then, define:
	\[
	\Hom{\mcal{C}}{A,\Gamma}{B} : = \Hom{\mcal{C}}{A}{A_1}\otimes \Hom{\mcal{C}}{A_1}{A_2}\otimes\ldots \Hom{\mcal{C}}{A_n}{B}
	\]
\end{notation}

\begin{definition}\label{free-cat}
	Assume that $\mcal{V}$ is a monoidal category with coproducts that distribute over the tensor. Let $ \mcal{C}$ be a $\mcal{V}$-graph, then define the $\mcal{V}$-category $F\mcal{C}$ as follows:
	\begin{enumerate}
		\item Objects: $\ob{F\mcal{C}} = \ob{\mcal{C}}$
		\item Hom-Objects: 
		\[
		\Hom{F\mcal{C}}{A}{B} : = \begin{cases}
			\sum_{\Gamma\in \ob{\mcal{C}^*}} \Hom{\mcal{C}}{A,\Gamma}{B}, & \text{if $A\neq B$}\\
					\sum_{\Gamma\in \ob{\mcal{C}^*}} \Hom{\mcal{C}}{A,\Gamma}{B}+I &\text{if $A=B$}
		\end{cases}
		\]
		\item Identities: Define the identity arrow, $id_A: I\to \Hom{\mcal{C}}{A}{A}$ to be the canonical inclusion, $I\to 	\sum_{\Gamma\in \ob{\mcal{C}^*}} \Hom{\mcal{C}}{A,\Gamma}{B}+I $.
		
		\item Composition: Composition is just a matter of `concatenating', for example when $A\neq B$  and $B\neq C$,  
\[\begin{tikzcd}[cramped]
	{(\sum_{\Gamma} \Hom{\mcal{C}}{A,\Gamma}{B} ) \otimes (\sum_{\Gamma'} \Hom{\mcal{C}}{B,\Gamma'}{C})} && {\sum_{\Gamma} \Hom{\mcal{C}}{A,\Gamma}{C}} \\
	{\sum_{\Gamma,\Gamma'} \Hom{\mcal{C}}{A, (\Gamma,B,\Gamma')}{ C}}
	\arrow["\cong", from=1-1, to=2-1]
	\arrow["\circ", dashed, from=1-1, to=1-3]
	\arrow["\iota"', hook, from=2-1, to=1-3]
\end{tikzcd}\]
where $\cong$ is the map that is obtained by the preservation of coproducts by $-\otimes B$, and $\iota$ is the canonical inclusion into the coproduct. 
	When $A=B$ or $B = C$ or both, similar composites that account for the identities can be defined.   
	\end{enumerate} 
	Finally, define $\eta_{\mcal{C}}: \mcal{C}\to F\mcal{C}$ to be the identity-on-objects $\mcal{V}$-graph morphism where
	\[
	\Hom{\mcal{C}}{A}{B} = \Hom{\mcal{C}}{A,\emptyset}{B}\hookrightarrow \sum_{\Gamma} \Hom{\mcal{C}}{A,\Gamma}{B} = \Hom{F\mcal{C}}{A}{B}
	\]
	is the canonoical inclusion.
\end{definition}

This construction is a free construction, and defines a monadic adjunction between $\mcal{V}$-\category{Cat} and $\mcal{V}$-\category{Graph}, and this was proved by Wolff \cite{wolff}.

\begin{proposition}
	Assume that $\mcal{V}$ is symmetric monoidal closed, and cocomplete. The construction $F$ in Definition \ref{free-cat} extends to a functor, and is left adjoint to the forgetful functor $U$ which sends a $\mcal{V}$-category to its underlying $\mcal{V}$-graph. 
	\[\begin{tikzcd}[cramped]
		{\mcal{V}-\category{Cat}} && {\mcal{V}-\category{Graph}}
		\arrow[""{name=0, anchor=center, inner sep=0}, "U"', shift right=3, from=1-1, to=1-3]
		\arrow[""{name=1, anchor=center, inner sep=0}, "F"', shift right=3, from=1-3, to=1-1]
		\arrow["\dashv"{anchor=center, rotate=-90}, draw=none, from=1, to=0]
	\end{tikzcd}\]
	Further, the adjunction is monadic, and this exhibits $\mcal{V}$-\category{Cat} as an algebraic structure over $\mcal{V}$-\category{Graph}.
\end{proposition}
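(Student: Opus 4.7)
The plan is to construct $F$ as a functor, exhibit the natural bijection giving the adjunction, and then invoke Beck's monadicity theorem. First I would extend $F$ to a functor by defining, for a $\mcal{V}$-graph morphism $f : \mcal{C} \to \mcal{D}$, a $\mcal{V}$-functor $Ff : F\mcal{C} \to F\mcal{D}$ that acts as $f$ on objects and on each coproduct summand $\Hom{\mcal{C}}{A,\Gamma}{B}$ is the evident tensor $f_{A,A_1} \otimes \cdots \otimes f_{A_n,B}$ followed by the coproduct inclusion into $\Hom{F\mcal{D}}{fA}{fB}$; on the identity summand $I$ (when $A=B$) it is the identity component of $F\mcal{D}$. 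Preservation of identities and composition by $Ff$ reduces, via the universal property of coproducts together with the distributivity of $\otimes$ over coproducts, to checking the corresponding equations summand-by-summand, where they are immediate. Functoriality of $F$ in $f$ is then routine.

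Next I would produce the adjunction explicitly. The unit $\eta_{\mcal{C}} : \mcal{C} \to UF\mcal{C}$ is the canonical inclusion already defined in Definition \ref{free-cat}. For the counit I need, given any $\mcal{V}$-category $\mcal{D}$, a $\mcal{V}$-functor $\varepsilon_{\mcal{D}} : FU\mcal{D} \to \mcal{D}$ that acts as the identity on objects and on a summand $\Hom{\mcal{D}}{A,\Gamma}{B}$ is the iterated composition built from the composition operation of $\mcal{D}$, with the identity summand sent by $id_A : I \to \Hom{\mcal{D}}{A}{A}$. Equivalently, the adjunction is witnessed by the natural bijection
\[
\Hom{\mcal{V}\text{-}\category{Cat}}{F\mcal{C}}{\mcal{D}} \;\cong\; \Hom{\mcal{V}\text{-}\category{Graph}}{\mcal{C}}{U\mcal{D}}
\]
that sends a $\mcal{V}$-functor to its restriction along $\eta_{\mcal{C}}$, with inverse given by the iterated-composition assembly just described. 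The main verification here is that this assembly genuinely yields a $\mcal{V}$-functor: preservation of composition is proved summand-wise on the coproduct, using the associativity of composition in $\mcal{D}$ together with distributivity of $\otimes$ over coproducts (which holds since $\mcal{V}$ is symmetric monoidal closed, so each $-\otimes B$ is a left adjoint and hence cocontinuous). The triangle identities then reduce to computations on individual summands.

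For monadicity I would apply Beck's theorem. Two of its hypotheses are routine: $U$ reflects isomorphisms, because an isomorphism at the graph level is a bijection on objects together with an isomorphism on each hom-object in $\mcal{V}$, and the inverse graph morphism automatically preserves identities and composition once the original does. The main obstacle is showing that $\mcal{V}\text{-}\category{Cat}$ has, and $U$ preserves, coequalizers of $U$-split parallel pairs. Given such a pair, one forms the underlying graph-level coequalizer hom-wise using colimits in $\mcal{V}$; the identities and composition of the domain $\mcal{V}$-category then descend along the coequalizing graph map, and the splitting data, combined with the fact that each $-\otimes B$ preserves the coequalizers involved, lets one verify unitality and associativity for the induced structure. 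The cleanest route is to appeal to Wolff \cite{wolff}, where this verification is carried out in detail and the monadicity conclusion is formally recorded.
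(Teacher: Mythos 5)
Your proposal is correct and follows essentially the same route as the paper, which states the proposition without proof and defers entirely to Wolff's paper for both the adjunction and the monadicity claim. Your sketch in fact supplies more detail than the text does --- the functorial extension of $F$, the explicit counit via iterated composition, and the Beck's-theorem checklist --- before likewise appealing to Wolff for the hard step of creating coequalizers of $U$-split pairs.
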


Given functors, $F,G: \mcal{C}\to \mcal{D}$ between categories, there is a notion of a natural transformation, $\eta: F\Rightarrow G$ between them. This idea can also be lifted to $\mcal{V}$-categories
	
\begin{definition}\cite{kelly}
	Let $f,g: \mcal{C}\to \mcal{D}$ be $\mcal{V}$-functors. A $\mcal{V}$-natural transformation between them consists of a family of arrows, $\{\eta_A: I\to \Hom{\mcal{D}}{fA}{gA}\}_{A\in \ob{\mcal{C}}}$ such that the following diagram commutes: 
	\[\begin{tikzcd}[cramped]
		{\Hom{\mcal{C}}{A}{B}\otimes I} && {\Mhom{\mcal{D}}{gA}{gB}\otimes \Hom{\mcal{D}}{fA}{gA}} \\
		{\Hom{\mcal{C}}{A}{B}} && {\Hom{\mcal{D}}{fA}{gB}} \\
		{I\otimes \Hom{\mcal{C}}{A}{B}} && {\Hom{\mcal{D}}{fB}{gB}\otimes \Hom{\mcal{D}}{fA}{fB}}
		\arrow["\cong"', from=2-1, to=1-1]
		\arrow["\cong", from=2-1, to=3-1]
		\arrow["{g\otimes \eta_A}", from=1-1, to=1-3]
		\arrow["{\eta_B\otimes f}"', from=3-1, to=3-3]
		\arrow["\circ"', from=3-3, to=2-3]
		\arrow["\circ", from=1-3, to=2-3]
	\end{tikzcd}\]
\end{definition}	

Clearly, when $\mcal{V} = \category{Set}$, we recover the usual notion of a natural transformation. It is possible to vertically and horizontally compose $\mcal{V}$-natural transformations. If $f,g,h:\mcal{C}\to \mcal{D}$ are $\mcal{V}$-functors, and $\eta: f\Rightarrow g$ and $\epsilon: g\Rightarrow h$ are $\mcal{V}$-natural transformations, define their vertical composite, $(\epsilon\cdot \eta):f\Rightarrow h$ to have components:
\[
(\epsilon\ast\eta)_A: I\xrightarrow{\cong} I\otimes I \xrightarrow{\epsilon_A\otimes \eta_A} \Hom{\mcal{D}}{gA}{hA} \otimes \Hom{\mcal{D}}{fA}{gA} \xrightarrow{\circ} \Mhom{\mcal{D}}{fA}{hA}
\]
This enables one to construct the enriched functor category in $\mcal{V}\text{-}\category{Cat}(\mcal{C},\mcal{D})$, and shall serve as a starting point for our definition of a transformation in Chapter \ref{chapter6}.

\chapter{Multi-ary Structures for Pure Programming}\label{chapter3}
	\section{Abstract Clones and their Algebras}
	
	\subsection{Clones Model Programming Languages}

	The simply typed $\lambda$-calculus is the prototypical functional programming language - a deduction, $x_1:A_1,\ldots, x_n:A_n\vdash M:B$ in the $\lambda$-calculus corresponds to a program $M$ containing free variables among $x_1,\ldots, x_n$. On the other hand, a deduction in the $\lambda$-calculus can also be given `meaning' as a morphism in a cartesian closed category \cite{ls86}. These correspondences highlight the principle of computational trinitarianism which considers type theories, programs and categories as three faces of the same phenomenon \cite{harper}.

	A category with no extra structure models what Moggi refers to as `many-sorted monadic equational logic' \cite{moggi91}. These are  programming languages which allow for the construction of programs $x:A\vdash M:B$ with exactly one free variable. Such a language comes equipped with a set of base types, such as ${A},{nat},{bool},\ldots $
	and a set of unary function symbols/programs, $f: {A}\to {B}, \quad {succ}:{nat}\to {nat}$. It is always possible to construct the `do nothing' program: 
	\begin{prooftree}
		\AxiomC{$\vdash A: \text{Type}$}
		\UnaryInfC{$x:A\vdash x:A$}
	\end{prooftree}
	The output of one program can be fed into the input of another,
	\begin{prooftree}
		\AxiomC{$x:A \vdash M:B$}
		\AxiomC{$y:B\vdash N:C$}
		\BinaryInfC{$x:A\vdash N[M]: C$}
	\end{prooftree}		
	These are subject to the rules of a category, which mimic some intuitive properties of a programming language: feeding the output of a program into the `do nothing' program returns the first program, etc.

	This language has minimal expressive power, and can only model programs with one free variable. However, programs typically have more than one free variable, and hence we seek to model a language that can model deductions containing finitely many free variables:
	\[
	x_1:A_1,\ldots, x_n:A_n\vdash M:B
	\]
	It is always possible to `project' free variables, i.e, there is a program in $n$ free variables which returns the $j$th free variable:
	\begin{prooftree}
		\AxiomC{}
		\UnaryInfC{$x_1:A_1, \ldots, x_n: A_n\vdash x_j: A_j$}
	\end{prooftree}
	The language also allows for substitution - if there is a program in $m$ free variables  and if there are $m$ other programs that output values in the correct types, it is possible to obtain a new program:
	\begin{prooftree}
		\AxiomC{$x_1:B_1,\ldots, x_m: B_m \vdash M:C $}
		\AxiomC{$\{y_1:A_1,\ldots, y_n: A_n\vdash N_i: B_i\}_{1\leq i\leq m} $}
		\BinaryInfC{$y_1:A_1,\ldots, y_n: A_n\vdash M[ N_1, \ldots, N_m]: C$}
	\end{prooftree}

	To model a deduction with finitely many free variables  in a category, we need to be able to interpret contexts, $x_1:A_1,\ldots, x_n:A_n$ as  objects. This is done by requiring the category  to have finite products, in which case, a deduction 
$	x_1: A_1,\ldots, x_n: A_n\vdash M:B$
	is interpreted as a morphism, 	
$	A_1\times \ldots\times A_n \to B$.

	However, in doing this, the language is forced to have product types. So we ask the following question: is there a structure that is strictly stronger than a category, in that it can interpret programs with finitely many free variables, but strictly weaker than a category with finite products, so that the language can do without having product types? This middle ground is provided by the notion of an abstract clone.

	An abstract clone can be thought of as an algebraic re-interpretation of the language described above. Just like how a category has morphisms, $f: A\to B$, which correspond to deductions with a unary context, $x:A \vdash M:B$, an abstract clone has \textit{multi-morphisms}, $f: A_1,\ldots, A_n\to B$, which correspond to deductions with a context of finite length, $x_1:A_1,\ldots, x_n:A_n\vdash M:B$. Instead of identities and composition, a clone has projections, $A_1,\ldots, A_n\to A_j$, and substitution, $t[u_1,\ldots, u_n]$.

	\begin{definition}\cite{cohn}
		An abstract clone $\bb{C}$ consists of the following data: 
		\begin{itemize}
			\item A set of objects, $\ob{\bb{C}}$,
			\item For every sequence of objects, $A_1,\ldots, A_n, B$, a set of multimaps, $\Mhom{\bb{C}}{A_1,\ldots, A_n}{B}$,
			A multimap $t\in \Mhom{\bb{C}}{A_1,\ldots, A_n}{B}$ will be denoted $t: A_1,\ldots, A_n\to B$,
			\item Given $n>0$, and a sequence of objects, $A_1,\ldots, A_n$ and some index $j \in \{1,\ldots, n\}$, a projection map,
			$pr^n_j:	1\to \Mhom{\bb{C}}{A_1,\ldots, A_n}{A_j}
			$,
			\item A substitution operation,
			
			\begin{align*}
				\Mhom{\bb{C}}{B_1,\ldots, B_m}{C}\times \prod_{i=1}^{m} \Mhom{\bb{C}}{A_1,\ldots, A_n}{B_i}&\to \Mhom{\bb{C}}{A_1,\ldots, A_n}{C}\\
				t, u_1,\ldots, u_n&\mapsto t[u_1,\ldots, u_n]
			\end{align*}			
		\end{itemize}
		subject to the following axioms: 
		\begin{enumerate}
			\item Left unitality: For a collection of terms, $\{u_i: A_1,\ldots, A_n\to B_i\}_{1\leq i\leq m}$,
			\[
			pr_{B_\bullet}^j [u_1,\ldots, u_n] = u_j
			\]
			
			\item Right unitality: 
			For any term $t: B_1,\ldots, B_m\to C$:
			\[
			t[pr_{B_\bullet}^1,\ldots, pr_{B_\bullet}^n] = t
			\]

			\item Associativity: For $t: C_1,\ldots, C_l\to D$, $\{u_i: B_1,\ldots, B_m\to C_i\}_{1\leq i\leq l}$ and $\{v_i: A_1,\ldots, A_n\to B_i\}_{1\leq i\leq m}$, the following equality holds: 
			\[
			t[u_1,\ldots, u_l][v_1,\ldots, v_m] = t[u_1[v_1,\ldots, v_m],\ldots, u_l[v_1,\ldots, v_m]]
			\]	
		\end{enumerate}
	\end{definition}
	
	It is straightforward to see how the language described can be interpreted in an abstract clone. Clones make easier to prove derived rules in the language. For example, the language described above allows for free variables in programs to be renamed. Formally, a \textit{renaming of contexts}, $\rho: (x_1:A_1,\ldots, x_n:A_n)\to (y_1:B_1,\ldots, y_m:B_m)$ is a mapping, $\rho: \{1,\ldots, n\}\to \{1,\ldots, m\}$ such that $A_{i} = B_{f(i)}$ for all $i\in \{1,\ldots,n\}$.
	The language described above (and the $\lambda$-calculus) has the following derived rule, if there exists a renaming $\rho: (x_1:A_1,\ldots, x_n:A_n)\to (y_1:B_1,\ldots, y_m:B_m)$:
	\begin{prooftree}
		\AxiomC{$x_1:A_1,\ldots, x_n:A_n \vdash M:B$}
		\UnaryInfC{$y_1:B_1,\ldots, y_m:B_m\vdash M^\rho:B$}
	\end{prooftree}
	Intuitively, this is saying that if a program can be constructed using the free variables $x_1,\ldots, x_n$, and these variables are among $y_1,\ldots, y_m$, then the program can also be constructed using $y_1,\ldots, y_m$.

	To prove this rule with only the deduction rules in language would involve a tedious induction proof. However, with clones, this corresponds to saying that if there is a multimap, $t: A_1,\ldots, A_n\to B$, and there is a renaming, $\rho: A_1,\ldots, A_n\to B_1,\ldots, B_m$, then, it is possible to construct a term, $t^\rho: B_1,\ldots, B_m \to B$. To do this, note that for all $i\in \{1,\ldots n\}$, there is a projection, 
	\[
	pr^{\rho(i)}_{B_\bullet}: B_1,\ldots, B_m \to B_{\rho(i)} = A_i
	\]
	So, we can form the following term:
	\[
	t[pr^{\rho(1)}_{B_\bullet},\ldots, pr^{\rho(n)}_{B_\bullet}]: B_1,\ldots, B_m\to B
	\]
	which gives us the required `renamed program'.

	Every clone determines a category. The essence of this mapping is that a programming language that can construct programs with finitely many free variables can also construct programs with exactly one free variable. 
	For a clone, $\bb{C}$, define  $ \overline{\bb{C}}$ as the category with the same objects, $\ob{\overline{\bb{C}}} := \ob{\bb{C}}$, and hom-sets given by $\Hom{\overline{\bb{C}}}{A}{B} := \Mhom{\bb{C}}{A}{B}$. Identities, $id_A\in \Hom{\overline{\bb{C}}}{A}{A}$ are given by the projection map, $pr^1_{A}\in \Mhom{\bb{C}}{A}{A}$, and composition, $f\circ g$ is defined as $f[g]$. We require the notion of a clone morphism to show that this mapping extends to a functor between the respective categories.

	\begin{definition}
		A clone morphism, $f: \bb{C}\to \bb{D}$ consists of the following data:
		\begin{itemize}
			\item A function, $f_{ob}: \ob{\bb{C}}\to \ob{\bb{D}}$. We just write $f(A)$ instead of $f_{ob}(A)$.
			\item A family of morphisms, $f_{\Gamma;A}: \Mhom{\bb{C}}{A_1,\ldots, A_n}{B}\to \Mhom{\bb{D}}{fA_1,\ldots, fA_n}{fB}$.
		\end{itemize}
		such that 
		\begin{enumerate}
			\item Projections are preserved: For $pr^j_{A_\bullet}: A_1,\ldots, A_n\to A_j$,
			\[
			f(pr^j_{A_\bullet}) = pr^j_{FA_\bullet}
			\]
			\item Substitution is preserved: Given $t: B_1,\ldots, B_m\to C$ and $\{u_i: A_1,\ldots, A_n\to B_i\}_{1\leq i\leq m}$,
			\[
			f(t[u_1,\ldots, u_m])  = f(t)[f(u_1),\ldots, f(u_m)]
			\]
		\end{enumerate}
	\end{definition}
	
	Clearly, there is a category of clones and their morphisms, $\category{Clone}$. A clone morphism, $f: \bb{C}\to \bb{D}$, restricts to a functor, $\overline{f}: \overline{\bb{C}}\to \overline{\bb{D}}$ just by virtue of definitions, and this yields a functor, $\overline{(-)}: \category{Clone}\to \category{Cat}$ to the category of small categories.

	\begin{lemma}
		The functor $\overline{(-)}: \category{Clone}\to \category{Cat}$ which restricts a clone to its unary contexts has a left adjoint, $\mcal{F}$, such that for $\overline{\mcal{F}\mcal{C}} \cong \mcal{C}$ for all categories $\mcal{C}$. 
	\end{lemma}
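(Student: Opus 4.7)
The plan is to construct the free clone explicitly, guided by the observation that in a clone freely generated from a category all multimaps must arise by substituting a unary morphism into a projection. This forces the following definition: for a category $\mcal{C}$, let $\mcal{F}\mcal{C}$ have objects $\ob{\mcal{C}}$ and hom-sets
\[
\Mhom{\mcal{F}\mcal{C}}{A_1,\ldots,A_n}{B} := \coprod_{j=1}^{n} \Hom{\mcal{C}}{A_j}{B},
\]
with projections $pr^j_{A_\bullet} := (j, id_{A_j})$ and substitution
\[
(j, f)[(k_1, g_1),\ldots,(k_m, g_m)] := (k_j,\, f \circ g_j).
\]

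First I would verify the clone axioms for $\mcal{F}\mcal{C}$. Left unitality is immediate from the definition of substitution; right unitality and associativity both reduce directly to the identity and associativity laws of $\mcal{C}$. Next I would observe that $\overline{\mcal{F}\mcal{C}}$ has hom-sets $\coprod_{j=1}^{1}\Hom{\mcal{C}}{A}{B} = \Hom{\mcal{C}}{A}{B}$, and that the assignment $f \mapsto (1,f)$ gives a natural isomorphism $\mcal{C} \cong \overline{\mcal{F}\mcal{C}}$ preserving identities (since $pr^1_A = (1, id_A)$) and composition (since $(1,f)[(1,g)] = (1, f \circ g)$). Extending $\mcal{F}$ to a functor is straightforward: a functor $F: \mcal{C} \to \mcal{D}$ induces a clone morphism sending $(j, f) \mapsto (j, Ff)$, and functoriality is immediate from the functoriality of $F$.

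To establish the adjunction $\mcal{F} \dashv \overline{(-)}$, I would exhibit the natural bijection
\[
\Hom{\category{Clone}}{\mcal{F}\mcal{C}}{\bb{D}} \;\cong\; \Hom{\category{Cat}}{\mcal{C}}{\overline{\bb{D}}}.
\]
From right to left, a functor $F: \mcal{C} \to \overline{\bb{D}}$ extends to a clone morphism $\tilde{F}$ defined on objects by $A \mapsto FA$ and on a multimap $(j, f)$ with $f: A_j \to B$ by $\tilde{F}(j, f) := F(f)[pr^j_{FA_\bullet}]$, where $F(f)$ is viewed as a unary multimap in $\bb{D}$. From left to right, one restricts a clone morphism $G: \mcal{F}\mcal{C}\to\bb{D}$ to unary contexts and composes with the iso $\mcal{C} \cong \overline{\mcal{F}\mcal{C}}$. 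The main obstacle I expect is verifying that these assignments are mutually inverse, which amounts to a uniqueness statement for clone morphisms out of $\mcal{F}\mcal{C}$. The crucial point is that every multimap admits the canonical decomposition $(j, f) = (1, f)[pr^j_{A_\bullet}]$, so preservation of substitution forces any clone morphism $G$ to satisfy $G(j, f) = G(1,f)[pr^j_{GA_\bullet}]$; hence $G$ is completely determined by its restriction to unary contexts. Naturality in both variables is then a routine check, and the isomorphism $\overline{\mcal{F}\mcal{C}} \cong \mcal{C}$ already noted appears as the unit of the resulting adjunction.
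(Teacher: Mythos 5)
Your construction of $\mcal{F}\mcal{C}$ — hom-sets as the coproduct $\coprod_{j=1}^{n}\Hom{\mcal{C}}{A_j}{B}$, substitution $(j,f)[(k_1,g_1),\ldots,(k_m,g_m)] = (k_j, f\circ g_j)$, and the uniqueness argument via the decomposition $(j,f) = (1,f)[pr^j_{A_\bullet}]$ — is exactly the paper's proof, merely packaged as a hom-set bijection rather than via the universal property of the unit. The argument is correct and no changes are needed.
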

	\begin{proof}
		For any category $\mcal{C}$, define the clone $\mcal{F}\mcal{C}$ as 
		\begin{enumerate}
			\item Objects: $\ob{\mcal{F}\mcal{C}} = \ob{\mcal{C}}$. 
			\item Hom-sets: 		
			\begin{align*}
				\Mhom{\mcal{F}\mcal{C}}{A_1,\ldots, A_n}{B} : &= \sum_{i=1}^{n} \Hom{\mcal{C}}{A_i}{B}\\
				& = \bigcup_{i=1}^n \{(i,f) \mid f\in \Hom{\mcal{C}}{A_i}{B}\}
			\end{align*}
			\item Projections: Given $A_1,\ldots, A_n$, the projection $pr_{A_\bullet}^j$ is defined to be the element $(j, id_{A_j}) \in \Mhom{\mcal{F}\mcal{C}}{A_1,\ldots, A_n}{A_j}$.

			\item Substitution: Substitution is defined as follows:
			\begin{align*}
				\Mhom{\mcal{F}\mcal{C}}{B_1,\ldots, B_m}{C}\times \prod_{i=1}^{m} \Mhom{\mcal{F}\mcal{C}}{A_1,\ldots, A_n}{B_i} &\to \Mhom{\mcal{F}\mcal{C}}{A_1,\ldots, A_n}{C}\\
				(j,t),   (k_1, u_1),\ldots, (k_m, u_m) &\mapsto (  k_j , t\circ u_j    )
			\end{align*}	
			Note that this operation is well-defined: $t$ is a morphism in $\Hom{\mcal{C}}{B_j}{C}$ where $j\in \{1,\ldots, m\}$, and $u_j$ is a morphism in $\Hom{\mcal{C}}{A_{k_j}}{B_j}$, so $t\circ u_j$ is a map in $\Hom{\mcal{C}}{A_{k_j}}{C}$. 
		\end{enumerate}
		
		We verify the axioms of a clone: 
		\begin{enumerate}
			\item Left unitality: Let $\{(k_i,u_i) \in \Mhom{\mcal{F}\mcal{C}}{A_1,\ldots, A_n}{B_i}\}_{1\leq i\leq m} $ be a collection of multimorphisms. Then,
			\begin{align*}
				pr^j_{B_\bullet} [ (k_1,u_1),\ldots, (k_m, u_m)] & = (j,id_{A_j})[(k_1,u_1),\ldots, (k_m,u_m)]\\
				& = (k_j,  id_{A_j}\circ u_j) = (k_j, u_j)
			\end{align*}
			
			\item Right unitality: Let $(j,t) \in \Mhom{\mcal{F}\mcal{C}}{B_1,\ldots, B_m}{C}$. Then, 
			\begin{align*}
				(j,t)[pr^1_{B_\bullet},\ldots, pr^m_{B_\bullet}]& = (j,t)[(1,id_{A_1}),\ldots, (m,id_{A_m})]\\
				& = (j, t\circ id_{A_j})\\
				& = (j, t)
			\end{align*}
			
			\item Associativity: Let $(j,t)\in \Mhom{\mcal{F}\mcal{C}}{C_1,\ldots, C_l}{D}$, $\{(k_i,u_i) \in \Mhom{\mcal{F}\mcal{C}}{B_1,\ldots, B_m}{C_i}\}_{1\leq i\leq l}$ and $\{(p_i, v_i) \in \Mhom{\mcal{F}\mcal{C}}{A_1,\ldots,A_n}{B_i}\}_{1\leq i\leq m}$. Then, 
			\begin{align*}
				&(j,t)[(k_1,u_1),\ldots, (k_l,u_l)][(p_1,v_1),\ldots, (p_m,v_m)]\\
				& = (k_j, t\circ u_l) [(p_1,v_1),\ldots, (p_m), (v_m)]\\
				& = (p_{k_j}, (t\circ u_l)\circ v_{k_j})\\
				& = (p_{k_j}, t\circ (u_l\circ v_{k_j}))\\
				& = (j,t)[(p_{k_1}, u_1 \circ v_{k_1} ), \ldots, (p_{k_l}, u_l\circ v_{k_l}) ]\\
				& = (j,t)[(k_1,u_1)[(p_1,v_1),\ldots, (p_m,v_m)],\ldots, (k_l,u_l)[(p_1,v_1),\ldots, (p_m,v_m)]   ]
			\end{align*}
		\end{enumerate}
		
		The clone $\mcal{F}\mcal{C}$ has a special property: Let $A_1,\ldots, A_n,B$ be given, and assume $t\in \Hom{\mcal{C}}{A_j,B}$. Then, there are two ways this morphism can be thought of  as a multimorphism in $\mcal{F}\mcal{C}$:
		\begin{align*}
			(1,t)& \in \Hom{\mcal{F}\mcal{C}}{A_j}{B}\\
			(j,t) &\in \Hom{\mcal{F}\mcal{C}}{A_1,\ldots, A_n}{B}
		\end{align*}
		These are related as follows:
		\begin{align*}
			(1,t)[pr^j_{A_1,\ldots, A_n}] & = (1,t)[(j,id_{A_j})] = (j,t\circ id_{A_j}) = (j,t)
		\end{align*}
		This is the property of freeness. As observed before, if $f: A_1,\ldots, A_n\to B$ is  a morphism, and if $B_1,\ldots, B_m$ is a renaming of $A_1,\ldots, A_n$, it is possible to construct a renaming, $f'$. Since $(1,t): A_j\to B$ was a morphism, it was possible to construct a renaming, $(1,t)[pr^j_{A_1,\ldots, A_n}]: A_1,\ldots, A_n\to B$. The above property is saying that all multimaps $A_1,\ldots, A_n\to B$ are renamings of some multimap, $A_j\to B$. In other words, the set $\Mhom{\mcal{F}\mcal{C}}{A_1,\ldots, A_n}{B}$ does not have any more terms than it has to have.

		The category $\overline{\mcal{F}\mcal{C}}$ has:
		\begin{enumerate}
			\item Objects: $\ob{\overline{\mcal{F}\mcal{C}}} = \ob{\mcal{F}\mcal{C}} = \ob{\mcal{C}}$
			\item Hom-sets: 
			$		\Hom{\overline{\mcal{F}\mcal{C}}}{A}{B} = \Mhom{\mcal{F}\mcal{C}}{A}{B} = \sum_{i=1}^1\Hom{\mcal{C}}{A}{B} = \{(1,f) \mid f\in \Hom{\mcal{C}}{A}{B}\}$
		\end{enumerate}	
		So, we recover $\mcal{C}$ as $\overline{\mcal{F}\mcal{C}}$. Define $\eta_{\mcal{C}}: \mcal{C}\to \overline{\mcal{F}\mcal{C}}$ to be the obvious identity-on-objects isomorphism that sends   $f\mapsto (1,f)$.
		
		For universality, assume that there is a clone, $\bb{C}$, with a functor, $F: \mcal{C}\to \overline{\bb{C}} $. Then, define the clone morphism, $F^*: \mcal{F}\mcal{C}\to \bb{C}$ as:
		\begin{enumerate}
			\item On objects: $F^*(A) = F(A)$
			\item On Homs:
			\begin{align*}
				\Mhom{\mcal{F}\mcal{C}}{A_1,\ldots, A_n}{B} &\to \Mhom{\bb{C}}{FA_1,\ldots, FA_n}{FB}\\
				(j,f) & \mapsto F(f) [pr^j_{FA_\bullet}]
			\end{align*}
			The expressions type check: The map $f$ is in $\Hom{\mcal{C}}{A_j}{B}$, so $F(f) \in \Hom{\overline{\bb
					C}}{FA_j}{FB}$. The map, $pr^j_{FA_\bullet} $ belongs to $\Mhom{\bb{C}}{FA_1,\ldots, FA_n }{FA_j}$, so it is possible to substitute $pr^j_{FA_\bullet}$ into $F(f)$. 
		\end{enumerate}	Then, $\overline{F^*} \circ \eta_{\mcal{C}}= F$, and hence  makes the following diagram commute: 
		\[\begin{tikzcd}[cramped]
			{\overline{F\mcal{C}}} && {\overline{\bb{C}}} \\
			\\
			{\mcal{C}}
			\arrow["{\eta_{\mcal{C}}}", from=3-1, to=1-1]
			\arrow["F"', from=3-1, to=1-3]
			\arrow["{\overline{F^*}}", from=1-1, to=1-3]
		\end{tikzcd}\]
		If $G: \mcal{F}\mcal{C}\to \bb{C}$ satisfied $\overline{G}\circ \eta_{\mcal{C}} = F$, then, it would have to map objects in exactly the same way that $F^*$ does. 
		Since $\overline{G}\circ \eta_{\mcal{C}} = F$, for any $f: A\to B$, we have
		\[
		G((1,f)) = \overline{G}((1,f)) = F(f)
		\]
		Consider the projections, $(1,id_{A_j}): A_1,\ldots, A_n\to A_j$ in $\mcal{F}\mcal{C}$. The clone morphism $G$ has to preserve these projections, so:
		\[
		G((j, id_{A_j})) = pr^j_{FA_\bullet}
		\]
		Finally, using the freeness property discussed above, 
		\begin{align*}
			&G((j,t))\\ 
			&=\{\text{Using Freeness}\}\\
			&	 G((1,t) [pr^j_{A_1,\ldots, A_n}])\\
			& = \{\text{$G$ preserves substitution}\}\\
			& G((1,t))[G((pr^j_{A_1,\ldots, A_n}))] \\
			&= \{\text{Definition of $pr^j$}\}\\
			& G(1,t)[G(j,id_{A_j})]\\
			& = \{\text{$G$ preserves projections}\}\\
			& G(1,t)[pr^j_{FA_\bullet}] \\
			& = \{\text{Using the equality above}\}\\
			& F(t) [pr^j_{FA_\bullet}] = F^*(j,t)
		\end{align*}
		which proves that $F^*$ is unique.
	\end{proof}

	The construction is essentially saying this: a programming language that allows for the construction of programs with exactly one free variable can be thought of as a programming language that allows for the construction of programs with $n$ free variables by considering $n-1$ of them as dummies.

	Since $\overline{\mcal{F}\mcal{C}} \cong \mcal{C}$, the free functor $\mcal{F}$ is fully faithful and allows us to think of \category{Cat} as sitting inside \category{Clone}.

	\subsection{Clones and Cartesian Categories}
	
	We now deal with the other extreme: cartesian categories. As observed above, these are structures that can interpret languages with product types, and can also interpret  programs with finitely many free variables. 
	This suggests that there is a `forgetful' functor to $\category{Clone}$. We call the category of categories with chosen finite products, and strict product preserving functors (preserve products and terminal objects strictly) \category{CartCat}.

	\begin{example}
		Let $\mcal{C}$ be a cartesian category. Then, define its underlying clone, $\Cl{\mcal{C}}$ as:
		\begin{itemize}
			\item Objects: $\ob{\Cl{\mcal{C}}} = \ob{\mcal{C}}$,
			\item Multimaps: 
			$\Mhom{\Cl{\mcal{C}}}{A_1,\ldots, A_n}{B} : = \Hom{\mcal{C}}{A_1\times \ldots \times A_n}{B}
			$,
			\item Projections: The projection $pr^j_{A_\bullet}: A_1,\ldots, A_n\to A_j$ is just the projection $\pi^j: A_1\times \ldots\times A_n\to A_j$ that comes with the product structure,
			\item Substitution: This is defined using the cartesian structure: if $t: B_1\times\ldots \times B_m\to C$ and $\{u_i: A_1\times\ldots\times A_n\to B_i\}_{1\leq i\leq m}$ are given, define 
			\[
			t[u_1,\ldots, u_m] := t\circ \langle u_1,\ldots, u_m\rangle
			\]
			where $\langle u_1,\ldots, u_m\rangle: A_1\times\ldots\times A_n\to B_1\times\ldots\times B_m$ is the unique map in $\mcal{C}$ that satisfies $\pi^j\circ \langle u_1,\ldots, u_m\rangle = u_j$ for all $j\in \{1,\ldots, m\}$. 
		\end{itemize}
		It is clear that this defines a clone, just by virtue of how  projections and pairing work in cartesian categories. If $F: \mcal{C}\to \mcal{D}$ is a strict cartesian functor, define $\Cl{F}: \Cl{\mcal{C}}\to \Cl{\mcal{D}}$ to act on objects as in the same way as $F$, i.e,  $\Cl{F}(A) = F(A)$, and on hom-sets as
		\begin{align*}
			\Mhom{\Cl{\mcal{C}}}{A_1,\ldots, A_n}{B} &\to \Mhom{\Cl{\mcal{D}}}{FA_1,\ldots, FA_n}{FB}\\
			t &\mapsto Ft
		\end{align*}
		This type checks since if $t\in \Hom{\mcal{C}}{A_1\times\ldots\times A_n}{B}$, then, $Ft$ is a map in $\Hom{\mcal{D}}{F(A_1\times\ldots\times A_n)}{FB} = \Hom{\mcal{D}}{FA_1\times\ldots\times FA_n}{FB}$.

	\end{example}

	\begin{lemma}\label{free-cart}
		The functor, $\text{Cl}: \category{CartCat}\to \category{Clone}$ has a left adjoint. 
	\end{lemma}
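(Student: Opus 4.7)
The plan is to construct the free cartesian category on a clone explicitly, mirroring the standard construction that produces a (multi-sorted) Lawvere theory from a clone. Given a clone $\bb{C}$, define the category $\mcal{L}\bb{C}$ as follows. Objects are finite lists $\Gamma = (A_1,\ldots,A_n)$ of objects of $\bb{C}$, including the empty list. A morphism $\Gamma \to \Delta = (B_1,\ldots,B_m)$ is an $m$-tuple $(t_1,\ldots,t_m)$ where each $t_j \in \Mhom{\bb{C}}{A_1,\ldots,A_n}{B_j}$. The identity on $\Gamma$ is the tuple of projections $(pr^1_{A_\bullet},\ldots,pr^n_{A_\bullet})$, and composition of $(t_1,\ldots,t_m): \Gamma \to \Delta$ with $(s_1,\ldots,s_l): \Delta \to \Theta$ is given componentwise by substitution, $(s_1[t_1,\ldots,t_m],\ldots,s_l[t_1,\ldots,t_m])$. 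The unit, left and right unitality and associativity axioms of $\mcal{L}\bb{C}$ follow directly from the three clone axioms.

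Next I would equip $\mcal{L}\bb{C}$ with chosen finite products. The empty list is terminal, and the product of $\Gamma = (A_1,\ldots,A_n)$ and $\Delta = (B_1,\ldots,B_m)$ is the concatenation $\Gamma,\Delta = (A_1,\ldots,A_n,B_1,\ldots,B_m)$. The projections $\Gamma,\Delta \to \Gamma$ and $\Gamma,\Delta \to \Delta$ are the tuples whose components are the appropriate $pr^i_{(A_\bullet,B_\bullet)}$ projections of the clone. Pairing is just concatenation of tuples. Universality and strictness of this product structure reduce to the universal property of projection multimaps in a clone, again via the right unitality axiom.

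The unit of the adjunction is the clone morphism $\eta_{\bb{C}}: \bb{C} \to \Cl(\mcal{L}\bb{C})$ which sends an object $A$ to the singleton list $(A)$, and sends a multimap $t: A_1,\ldots,A_n \to B$ to itself, viewed as a morphism $(A_1,\ldots,A_n) \to (B)$, which under the identification $\Cl(\mcal{L}\bb{C})(A_1,\ldots,A_n;B) = \mcal{L}\bb{C}((A_1),\ldots,(A_n);(B)) = \mcal{L}\bb{C}((A_1,\ldots,A_n),(B))$ is a routine check. For the universal property, suppose $F: \bb{C} \to \Cl(\mcal{D})$ is a clone morphism into a cartesian category. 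Define $\hat F: \mcal{L}\bb{C} \to \mcal{D}$ on objects by $\hat F(A_1,\ldots,A_n) := FA_1 \times \cdots \times FA_n$ (using the chosen products of $\mcal{D}$, with the empty product being the chosen terminal object), and on a morphism $(t_1,\ldots,t_m): \Gamma \to \Delta$ by the pairing $\langle F(t_1),\ldots,F(t_m) \rangle$, where each $F(t_j)$ lives in $\Hom{\mcal{D}}{FA_1\times\cdots\times FA_n}{FB_j}$ by definition of $\Cl(\mcal{D})$.

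The routine verifications (that $\hat F$ is a functor, that it strictly preserves the chosen products, and that $\Cl(\hat F) \circ \eta_{\bb{C}} = F$) all follow because $F$ preserves projections and substitution. The step I expect to be the most delicate is establishing \emph{uniqueness} of $\hat F$: any strict cartesian $G$ with $\Cl(G) \circ \eta_{\bb{C}} = F$ must send $(A_1,\ldots,A_n)$ to $FA_1 \times \cdots \times FA_n$ on the nose by strictness, and must act on a general tuple $(t_1,\ldots,t_m)$ as the pairing of its components; these components are forced because each $t_j$ can be expressed in $\mcal{L}\bb{C}$ as $\eta_{\bb{C}}(t_j)$ composed with the canonical product projections, which $G$ must preserve strictly. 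This mirrors the freeness argument of the previous lemma, and strict preservation of the chosen products is exactly what rules out alternative values for $G$.
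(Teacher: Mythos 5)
Your construction is exactly the paper's: you build the free cartesian category with finite lists as objects, tuples of multimaps as morphisms, concatenation as the product, the singleton-list unit, and the same extension $\hat F = \langle F(t_1),\ldots,F(t_m)\rangle$ with uniqueness forced by strict product preservation. The proposal is correct and takes essentially the same approach as the paper.
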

	\begin{proof}
		Let $\bb{C}$ be a clone, then define $\Cart{\bb{C}}$ as:
		\begin{enumerate}
			\item Objects: The free monoid on $\ob{\bb{C}}$, i.e, lists in $\ob{\bb{C}}$
			\item Hom-sets: 
			
			\begin{align*}
				\Hom{\Cart{\bb{C}}}{[A_1,\ldots, A_n]}{[B_1,\ldots, B_m]}: &= \prod_{i=1}^{m} \Mhom{\bb{C}}{A_1,\ldots,A_n}{B_i}\\
				& = \{(f_1,\ldots, f_m) \mid f_i \in \Mhom{\bb{C}}{A_1,\ldots, A_n}{B_i} \}
			\end{align*}
			\item Identities: For a list $[A_1,\ldots, A_n]$, define its identity, $id_{[A_1,\ldots, A_n]}$ to be the $n$-tuple, $(pr_{A_\bullet}^1,\ldots, pr_{A_\bullet}^n)$. 
			
			\item Composition: Given $g=(g_1,\ldots, g_l): [B_1,\ldots, B_m] \to [C_1,\ldots, C_l]$ and $f=(f_1,\ldots, f_m): [A_1,\ldots, A_n]\to [B_1,\ldots, B_m]$, define their composite as:
			\[
			(g_1,\ldots, g_l)\circ (f_1,\ldots, f_m) = (g_1[f_1,\ldots, f_m], \ldots, g_l[f_1,\ldots, f_m])
			\]
			where $g_i[f_1,\ldots, f_m]: A_1,\ldots, A_n\to C_i$ is obtained by substituting the $f_j$ in $g_i$. 
		\end{enumerate}
		It is easily verified that this defines a category. The unitality and associativity axioms of a clone directly imply the unitality and associativity axioms of this category. 
		
		Further, $\Cart{\bb{C}}$ has the empty list $[\;\;]$ as its terminal object: For every object $[A_1,\ldots, A_n]$, 
		\[
		\Hom{\Cart{\bb{C}}}{[A_1,\ldots, A_n]}{[\;\;]} = \prod_{i=1}^{0}  = \{*\}
		\]
		Hence, there is a unique morphism, $[A_1,\ldots, A_n] \to [\;\;]$ for all $[A_1,\ldots, A_n]$. 	For the product structure, define $[A_1,\ldots, A_n]\times [A_{n+1},\ldots, A_m] := [A_1,\ldots, A_n, A_{n+1},\ldots, A_m]$. Then, the projections are given by:
		\[
		\pi = (pr^1_{A_\bullet},\ldots, pr^n_{A_\bullet}), \qquad\qquad \pi' = (pr^{n+1}_{A_\bullet},\ldots, pr^m_{A_\bullet})
		\]
		If $(f_1,\ldots, f_n): [B_1,\ldots, B_l]\to [A_1,\ldots, A_n]$ and $(f_{n+1},\ldots, f_m): [B_1,\ldots, B_l]\to [A_{n+1},\ldots, A_m]$ are morphisms, then, define 
		\[
		\langle (f_1,\ldots, f_n), (f_{n+1},\ldots, f_m)\rangle := (f_1,\ldots, f_n, f_{n+1},\ldots, f_m)
		\]
		It is verified using the unitality laws in $\bb{C}$ that $\langle (f_1,\ldots, f_n), (f_{n+1},\ldots, f_m)\rangle$ is the unique map that satisfies the universal property of a product. Thus, $\Cart{\bb{C}}$ is a cartesian category.

		Define the clone morphism, $\eta_{\bb{C}}: \bb{C}\to \Cl{\Cart{\bb{C}}}$ as:
		\begin{enumerate}
			\item On objects: $A\mapsto [A]$
			\item On homs: 
			\begin{align*}
				\eta_{\bb{C}_{A_1,\ldots, A_n;B}}:	\Hom{\bb{C}}{A_1,\ldots, A_n}{B} \to& \Hom{\Cl{\Cart{\bb{C}}}}{[A_1],\ldots, [A_n]}{[B]}\\
				& = \Hom{\Cart{\bb{C}}}{[A_1,\ldots, A_n]}{[B]} \\
				& = \Hom{\bb{C}}{A_1,\ldots, A_n}{B}
			\end{align*}
			is the identity: 
		\end{enumerate}
		That this preserves projections and identities directly follows from definitions and the axioms of a clone. For universality, assume that there is some other cartesian category, $\mcal{C}$ with a clone morphism, 
		\[
		F: \bb{C}\to \Cl{\mcal{C}} 
		\]
		Then, define the strict cartesian functor, $F^*: \Cart{\bb{C}}\to \mcal{C}$ as:
		\begin{enumerate}
			\item On objects: $F^*([A_1,\ldots, A_n]) = F(A_1)\times\ldots \times F(A_n)$
			\item On homs: 
			\begin{align*}
				\Hom{\Cart{\bb{C}}}{[A_1,\ldots, A_n]}{[B_1,\ldots, B_m]} &\to  \Hom{\mcal{C}}{F(A_1)\times \ldots \times F(A_n)}{F(B_1)\times\ldots\times F(B_m)}\\
				(f_1,\ldots, f_m)&\mapsto  \langle F(f_1),\ldots, F(f_m)\rangle
			\end{align*}
		\end{enumerate}
		Then the following diagram commutes:
		\[\begin{tikzcd}[cramped]
			{\Cl{\Cart{\bb{C}}}} && {\Cl{\mcal{C}}} \\
			\\
			{\bb{C}}
			\arrow["{\eta_{\bb{C}}}", from=3-1, to=1-1]
			\arrow["{\Cl{F^*}}", from=1-1, to=1-3]
			\arrow["F"', from=3-1, to=1-3]
		\end{tikzcd}\]
		
		Note that $F^*$ is unique in this regard: If $G$ was another such map, it would act on objects just like $F^*$ does in order to strictly preserve products. If $\Cl{G}\circ \eta_{{\bb{C}}} = F$, then,for any $f_i: A_1,\ldots, A_n\to B_i$ in $\bb{C}$, 
		\[
		F(f_i) = \Cl{G}( f_i)  
		\]
		Since $\langle f_1,\ldots, f_m\rangle $ in $\bb{C}$ is defined as $(f_1,\ldots, f_m)$ and since $G$ is strict cartesian, 
		\begin{align*}
			G((f_1,\ldots, f_m)) & = G(\langle f_1,\ldots, f_m\rangle) \\
			& = \langle Gf_1,\ldots, Gf_m\rangle \\
			& = \langle F(f_1),\ldots, F(f_m)\rangle  = F^*((f_1,\ldots, f_n))
		\end{align*}
	\end{proof}

	This construction explicitly adds product types to the language of clones. Morphisms into a product are already determined by the product structure since $\Hom{\mcal{C}}{X}{A\times B} \cong \Hom{\mcal{C}}{X}{A}\times \Hom{\mcal{C}}{X}{B}$. This construction is free in the sense that there are no more  morphisms coming out of the product than required.

	In summary, we obtain adjunctions: 
	\begin{equation}\label{adjunctions}
		\begin{tikzcd}
			&& {\category{CartCat}} \\
			\\
			{\category{Clone}} \\
			\\
			&& {\category{Cat}}
			\arrow[""{name=0, anchor=center, inner sep=0}, "{\overline{(-)}}"{description}, curve={height=18pt}, from=3-1, to=5-3]
			\arrow[""{name=1, anchor=center, inner sep=0}, "{\mcal{F}}"{description}, curve={height=18pt}, from=5-3, to=3-1]
			\arrow[""{name=2, anchor=center, inner sep=0}, "{\text{Cart}}"{description}, curve={height=18pt}, from=3-1, to=1-3]
			\arrow[""{name=3, anchor=center, inner sep=0}, "{\text{Cl}}"{description}, curve={height=24pt}, from=1-3, to=3-1]
			\arrow[""{name=4, anchor=center, inner sep=0}, "U"{description}, from=1-3, to=5-3]
			\arrow[""{name=5, anchor=center, inner sep=0}, "F"{description}, curve={height=30pt}, from=5-3, to=1-3]
			\arrow["\dashv"{anchor=center, rotate=-146}, draw=none, from=1, to=0]
			\arrow["\dashv"{anchor=center, rotate=-180}, draw=none, from=5, to=4]
			\arrow["\dashv"{anchor=center, rotate=147}, draw=none, from=2, to=3]
		\end{tikzcd}
	\end{equation}
	where $U = \overline{(-)}\circ \text{Cl}$ and $F =\text{Cart}\circ \mcal{F} $. Let $\mcal{C}$ be a cartesian category. Then, the category $U\mcal{C}$ has: 
	\begin{enumerate}
		\item Objects: $\ob{U\mcal{C}} = \ob{\overline{\Cl{\mcal{C}}}} = \ob{\Cl{\mcal{C}}} = \ob{\mcal{C}}$.
		\item Homsets: $\Hom{U\mcal{C}}{A}{B} = \Hom{\overline{\Cl{\mcal{C}}}}{A}{B} = \Mhom{\Cl{\mcal{C}}}{A}{B} = \Hom{\mcal{C}}{A}{B} $ 
	\end{enumerate}
	Therefore, $U$ is the forgetful functor $\category{CartCat}\to  \category{Cat}$, which means that $F$ is the free cartesian category functor. Intuitively, we can think of the category of clones as a pitstop on the road between \category{Cat} and \category{CartCat}: the functor $\mcal{F}$ adds half-cartesian structure to a category $\mcal{C}$, and the functor $\text{Cart}$ adds the other half. These constructions also tell us what the free cartesian category should look like. 
	
	\begin{corollary}
		The free cartesian category $F\mcal{C}$ on a category $\mcal{C}$ can be constructed as:
		\begin{itemize}
			\item Objects: $\ob{F\mcal{C}}$ is the free monoid on $\ob{\mcal{C}}$.
			\item Hom-sets:
			\begin{align*}
				&\Hom{F\mcal{C}}{[A_1,\ldots, A_n]}{[B_1,\ldots, B_m]} \\& = \prod_{i=1}^{m} \sum_{j=1}^{n} \Hom{\mcal{C}}{A_j}{B_i}\\
				& = \{((j_1,f_1),\ldots, (j_m, f_m) ) \mid j_i \in \{1,\ldots,n\}, f\in \Hom{\mcal{C}}{A_{j_i}}{B_i}   )\}
			\end{align*}
			So a morphism $f: [A_1,\ldots, A_n]\to [B_1,\ldots, B_m]$ is an $m$-tuple of morphisms, $A_{j_i}\to B_i$ for $1\leq i \leq m$.
			
			\item Identities: The morphism $id_{[A_1,\ldots, A_n]}$ is the tuple $((1, id_{A_1}),\ldots, (n, id_{A_n}))$ 
			\item Composition: If $((j_1,f_1), \ldots, (j_m,f_m)): [A_1,\ldots, A_n]\to [B_1,\ldots, B_m]$ and $((k_1,g_1),\ldots, (k_l,g_l)): [B_1,\ldots, B_m]\to [C_1,\ldots, C_l]$, then, the composite is given by:
			\begin{align*}
				&((k_1,g_1),\ldots, (k_l,g_l))\circ ((j_1,f_1), \ldots, (j_m,f_m))\\& = ((k_1,g_1)[(j_1,f_1),\ldots, (j_m,f_m)] ,\ldots, (k_l,g_l)[(j_1,f_1),\ldots, (j_m,f_m)]   )\\
				& = ((j_{k_1}, g_1\circ f_{k_1}),\ldots, (j_{k_l}, g_l\circ f_{k_l}))
			\end{align*}
		\end{itemize}
	\end{corollary}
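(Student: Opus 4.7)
The plan is to obtain the claimed description as a direct calculation of the composite free construction $F = \text{Cart} \circ \mcal{F}$, which was identified in the paragraph following the adjunctions \eqref{adjunctions}. Since a composite of left adjoints is itself left adjoint to the composite of the right adjoints, and it was already observed that $U = \overline{(-)} \circ \text{Cl}$ sends a cartesian category to its underlying category, the universal property is automatic. So everything reduces to unfolding the two free constructions from Lemmas preceding this corollary in sequence.

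First I would apply $\mcal{F}$ to $\mcal{C}$, obtaining the clone with $\ob{\mcal{F}\mcal{C}} = \ob{\mcal{C}}$ and $\Mhom{\mcal{F}\mcal{C}}{A_1,\ldots, A_n}{B} = \sum_{j=1}^n \Hom{\mcal{C}}{A_j}{B}$, with projection $pr^j_{A_\bullet} = (j, id_{A_j})$ and substitution $(j,t)[(k_1,u_1),\ldots, (k_m,u_m)] = (k_j, t \circ u_j)$. Then I would apply $\text{Cart}$ to this clone: the objects become lists in $\ob{\mcal{C}}$, and the hom-sets become
\[
\Hom{\Cart{\mcal{F}\mcal{C}}}{[A_1,\ldots, A_n]}{[B_1,\ldots, B_m]} = \prod_{i=1}^m \Mhom{\mcal{F}\mcal{C}}{A_1,\ldots, A_n}{B_i} = \prod_{i=1}^m \sum_{j=1}^n \Hom{\mcal{C}}{A_j}{B_i},
\]
matching the stated formula. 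The identity on $[A_1,\ldots, A_n]$ is by definition the tuple of projections $(pr^1_{A_\bullet}, \ldots, pr^n_{A_\bullet})$, which unfolds to $((1, id_{A_1}), \ldots, (n, id_{A_n}))$ as claimed.

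Composition is equally mechanical: the composite in $\Cart{\mcal{F}\mcal{C}}$ substitutes componentwise using substitution in $\mcal{F}\mcal{C}$, and applying the rule $(j,t)[(k_1,u_1),\ldots,(k_m,u_m)] = (k_j, t \circ u_j)$ to each component of the outer tuple yields precisely $((j_{k_1}, g_1 \circ f_{k_1}), \ldots, (j_{k_l}, g_l \circ f_{k_l}))$. The only thing requiring care is the index bookkeeping — distinguishing the inner index (which variable of the source list is used) from the outer index (which slot of the target list is being filled) — and this is really the sole ``obstacle'': there is no non-routine content beyond tracing through the two definitions in order.
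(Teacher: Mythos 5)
Your proposal is correct and is essentially the paper's own argument: the corollary is obtained by composing the two free constructions $F = \text{Cart}\circ\mcal{F}$ from the preceding lemmas, with the universal property following from the fact that composites of left adjoints are left adjoints and the identification of $U = \overline{(-)}\circ\text{Cl}$ as the forgetful functor. Your unfolding of the hom-sets, identities, and composition matches the paper's computation exactly.
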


	\subsection{Algebras for Clones}
	
	Abstract clones arose as a way to define an algebraic theory in universal algebra where they are typically considered with one object. In this case, they have a simpler definition: 
	\begin{definition}\cite{cohn}
		A (one-sorted) abstract clone is given by a set, $\mcal{T}(n)$ for every natural number $n$ such that there $n$ distinguished projection operations, $pr^1_n,\ldots, pr^n_n \in \mcal{T}(n)$, along with a substitution operation,
		\[
		\mcal{T}(n) \times \mcal{T}(m)^n \to \mcal{T}(m); \qquad (f,g_1,\ldots, g_n)\mapsto f[g_1,\ldots, g_n]
		\]
		such that 
		\begin{enumerate}
			\item $pr^k_n[g_1,\ldots, g_n] = g_k$
			\item $f[pr^1_n,\ldots, pr^n_n] = t$
			\item $f[g_1,\ldots, g_n][h_1,\ldots, h_m] = f[g_1[h_1,\ldots, h_m],\ldots, g_n[h_1,\ldots, h_m]]$
		\end{enumerate}
	\end{definition}

	Thus, a one-sorted abstract clone is a set of $n$-ary operations, such that projections are among the $n$-ary operations, and such that operations can be substituted. 	
	Each one-sorted clone determines an algebraic structure. For example, we could consider the abstract clone of monoids, which has an identity, $id \in T(1)$ and a binary operation $mult \in T(2)$, subject to certain equations. Similarly, one could consider groups, rings, and so on. An algebra for a clone allows us to consider actual models of such structures. An algebra for the clone of monoids is an actual monoid, an algebra for the clone of groups is an actual group, and so on. 
	
	\begin{definition}\cite{gould}
		Let $\mcal{T}$ be a (one-sorted) clone.	A $\mcal{T}$-algebra, $(A, \interpret{-}_A)$ is given by:
		\begin{enumerate}
			\item A set $A$,
			\item For every $n$, and every $f \in \mcal{T}(n)$, an interpretation, $\interpret{f}_A: A^n\to A$. We just write $\interpret{f}$ when the set can be gathered form context. 
		\end{enumerate}
		such that 
		\begin{enumerate}
			\item The projection is interpreted as a projection: For all $n,k$, $$\interpret{pr^k_n} (a_1,\ldots, a_n) = a_k$$
			\item Substitution is \textit{actual} substitution: For all $f\in \mcal{T}(n)$ and $g_1,\ldots, g_n \in \mcal{T}(m)$,
			\[
			\interpret{f[g_1,\ldots, g_n]}(a_1,\ldots, a_m) = \interpret{f}(\interpret{g_1}(a_1,\ldots, a_m),\ldots, \interpret{g_n}(a_1,\ldots, a_m))
			\]
		\end{enumerate}
	\end{definition}
	
	Thus, $\mcal{T}$-algebras for a clone define algebraic objects that are `defined' by $\mcal{T}$. The notion of a morphism of $\mcal{T}$-algebras defines the appropriate notion of morphism between these algebraic objects. When $\mcal{T}$ is the clone of monoids, a morphism of $\mcal{T}$-algebras is a monoid homomorphism, when $\mcal{T}$ is the clone of groups, it is a group homomorphism, and so on. 
	
	\begin{definition}
		Let $(A, \interpret{-}_A),(B, \interpret{-}_B)$ be $\mcal{T}$-algebras. A morphism, $f:(A,\interpret{-}_A)\to (B,\interpret{-}_B)$ is a function, $f: A\to B$ such that 
		\[
		\interpret{k}_A(a_1,\ldots, a_n) = \interpret{k}_B(f(a_1),\ldots, f(a_n))
		\]
		for all $k \in \mcal{T}(n)$ and for all $n \in \mathbb{N}$.
	\end{definition}

	Hence, for every one sorted clone, $\mcal{T}$, there is a category, $\mcal{T}$-\category{Alg} of $\mcal{T}$-algebras and their morphisms. There is clearly a forgetful functor, 
	\[
	U: \mcal{T}\text{-}\category{Alg} \to \category{Set}
	\] 	
	which sends a $\mcal{T}$-algebra $(A,\interpret{-}_A)$ to its underlying set $A$, and a $\mcal{T}$-algebra morphism, $f$ to its underlying function, and this functor is monadic \cite{nlab-clone}.  As a result, the category of $\mcal{T}$-algebras is well-behaved, i.e, it has nice limits.

	We will extend the notion of an algebra to multi-sorted clones, i.e, the clones that were dealt with in the previous section. First, however, we turn to a more general notion of an  algebraic theory and algebras for it, which we will use as a sanity check. 
	
	\begin{definition}\cite{vitale}\label{t-alg}
		\begin{enumerate}
			\item An algebraic theory is a small category $\mcal{T}$ with finite products, i.e, an object of \category{CartCat}. 
			
			\item A  $\mcal{T}$-algebra for an algebraic theory $\mcal{T}$ is a functor, 	
			$	A: \mcal{T} \to \category{Set}
			$ that preserves (not necessarily strictly) finite products. 
			
			\item A morphism of $\mcal{T}$-algebras, $f: A\to B$ is a natural transformation between the functors. 
			
			\item The category of $\mcal{T}$-algebras, $\mcal{T}$-\category{Alg} for an algebraic theory is the category of product preserving functors, $A: \mcal{T}\to \category{Set}$ and natural transformations.
		\end{enumerate}

		Note that the definition uses the cartesian structure of the category $\category{Set}$. Since it has a cartesian structure, \category{Set} can also be made into a clone by considering $\Cl{\category{Set}}$. Explicitly, this clone has:
		\begin{enumerate}
			\item Objects: Sets
			\item Hom-sets: $\Mhom{\Cl{\category{Set}}}{A_1,\ldots, A_n}{B} = \Hom{\category{Set}}{A_1\times\ldots\times A_n}{B}$. 
			\item Projections: $A_1\times\ldots\times A_n\to A_j$ is the function $(a_1,\ldots, a_n)\mapsto a_j$.
			\item Substitution: Given functions $t:B_1\times\ldots\times B_m \to C$ and $u_i: A_1,\ldots, A_n\to B_i$ for $1\leq i\leq m$, define $t[u_1,\ldots, u_n]$ to be the function $(a_1,\ldots, a_n) \mapsto t(u_1(a_1,\ldots, a_n),\ldots, u_n(a_1,\ldots, a_n))$.
		\end{enumerate}
		
		Thus, we can imitate the definition of an algebra for an algebraic theory with the following definition. 
		\begin{definition}
			A $\bb{C}$-algebra for a clone $\bb{C}$ is a clone morphism, $\bb{C}\to \Cl{\category{Set}}$.
		\end{definition}
		To analogously define the notion of a $\bb{C}$-algebra morphism, we require the notion of a natural transformation between clone morphisms. Fortunately, the first definition that the reader might have considered suits our purposes. 
		\begin{definition}\cite{hermida}
			Let $f,g: \bb{C}\to \bb{D}$ be  clone morphisms. A  transformation, $\eta: f\Rightarrow g$  between them is a family of maps, $\{\eta_A \in \Mhom{\bb{D}}{fA}{gA}\}_{A\in \ob{\bb{C}}}$ such that for any $t: A_1,\ldots, A_n\to B$, the following are equal
			\[
			\eta_B[f(t)] = g(t)\Big[\eta_{A_1}[pr^1_{fA_\bullet}],\ldots, \eta_{A_n}[pr^n_{fA_\bullet}]\Big]
			\]
			Note that this type checks since $pr^j_{fA_\bullet}: fA_1,\ldots, fA_n\to fA_j$ and $\eta_{A_j}: fA_j\to gA_j$, and $g(t): g_1,\ldots, gA_n\to gB$. 
		\end{definition}
		
		\begin{remark}\label{transformation}
			There is a functor $M:\category{Clone}\to \category{MultiCat}$ where \category{MutliCat} is the category of multicategories (which we define in the next section), and a transformation $\eta: f\Rightarrow g$ between clones as defined above is equivalently a transformation between the multicategory morphisms  $ Mf$ and $ Mg$. The notion of a transformation between multicategory morphisms is well-known, and can be found in \textit{e.g.} \cite{psaville}, \cite{gould}, \cite{hermida}. 
		\end{remark}

		As a justification that this is a sensible definition of a 2-morphism, we check that the following data determines a category for fixed clones $\bb{C}$ and $\bb{D}$:
		\begin{enumerate}
			\item Objects: Clone morphisms, $f: \bb{C}\to \bb{D}$
			\item Morphisms: Transformations, $\eta: f\Rightarrow g$
		\end{enumerate}
		Note that there is always an identity transformation, $id_f: f\Rightarrow f$ for a clone morphism $f: \bb{C}\to \bb{D}$ defined by: ${id_{f_A}} :=  pr^1_{fA}$ for $A\in \bb{C}$. 
		
		Further, transformations $\eta: f\Rightarrow g$ and $\epsilon: g\Rightarrow h$ can be composed just like natural transformations of functors. Define $(\epsilon\ast \eta)_{A} := \epsilon_A [ \eta_A] $. Then, if $t: A_1,\ldots, A_n\to B$ is a multimap, we have the following equalities: 
		\[
		\eta_B[f(t)] = g(t)\Big[\eta_{A_1}[pr^1_{fA_\bullet}],\ldots, \eta_{A_n}[pr^n_{fA_\bullet}]\Big]
		\]
		and
		\[
		\epsilon_B[g(t)]  = h(t)\Big[\epsilon_{A_1}[pr^1_{gA_\bullet}],\ldots, \epsilon_{A_n}[pr^n_{gA_\bullet}]\Big]
		\]
		Using these, associativity and unitality,  we infer:
		\begin{align*}
			&(\epsilon\ast\eta)_B [f(t)] \\
			& = \epsilon_B[ \eta_B] [f(t)]\\
			& = \epsilon_B[\eta_B[f(t)]]\\
			& = \epsilon_B\Big[g(t)\big[\eta_{A_1}[pr^1_{fA_\bullet}],\ldots, \eta_{A_n}[pr^n_{fA_\bullet}]\big]\Big]\\
			& = \epsilon_B\big[g(t)\big] \Big[\eta_{A_1}[pr^1_{fA_\bullet}],\ldots, \eta_{A_n}[pr^n_{fA_\bullet}]\Big]\\
			& = h(t)\Big[\epsilon_{A_1}[pr^1_{gA_\bullet}],\ldots, \epsilon_{A_n}[pr^n_{gA_\bullet}] \Big]\Big[\eta_{A_1}[pr^1_{fA_\bullet}],\ldots, \eta_{A_n}[pr^n_{fA_\bullet}]\Big]\\
			& = h(t) \Big[  \epsilon_{A_1}[pr^1_{gA_\bullet}]\Big[\eta_{A_1}[pr^1_{fA_\bullet}],\ldots, \eta_{A_n}[pr^n_{fA_\bullet}]\Big],\ldots, \epsilon_{A_n}[pr^n_{gA_\bullet}]\Big[\eta_{A_1}[pr^1_{fA_\bullet}],\ldots, \eta_{A_n}[pr^n_{fA_\bullet}]\Big]                     \Big]\\
			& = h(t)\Big[ \epsilon_{A_1}\big[\eta_{A_1} [pr^1_{fA_\bullet}] \big],\ldots, \epsilon_{A_n}\big[\eta_{A_n}[pr^n_{fA_\bullet}]\big]\Big]\\
			&=h(t) \Big[ \epsilon_{A_1}[\eta_{A_1}][pr^1_{fA_\bullet}],\ldots, \epsilon_{A_n}[\eta_{A_n}][pr^n_{fA_\bullet}]  \Big]\\
			& = h(t)\Big[\epsilon\ast\eta_{A_1} [pr^1_{fA_\bullet}],\ldots, \epsilon\ast\eta_{A_n}[pr^n_{fA_\bullet}] \Big]
		\end{align*}
		which shows that $\epsilon\ast \eta$ is also a transformation. Associativity and unitality follow from componentwise associativity and unitality. Thus, for any two clones, $\bb{C}$ and $\bb{D}$, there is a category, $\Hom{\category{Clone}}{\bb{C}}{\bb{D}}$ of clone morphisms and their transformations. If $\mcal{C}$ and $\mcal{D}$ are cartesian categories, let $\Hom{\category{ProdPres}}{\mcal{C}}{\mcal{D}}$ denote the category of (not necessarily strict) product preserving functors and natural transformations.

		\begin{lemma}
			Let $\bb{C}$ be a clone and let $\mcal{D}$ be a cartesian category. There is an equivalence of categories,
			\[
			\Hom{\category{ProdPres}}{\Cart{\bb{C}}}{\mcal{D}} \simeq \Hom{\category{Clone}}{\bb{C}}{\Cl{\mcal{D}}}
			\]
		\end{lemma}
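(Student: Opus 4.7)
The plan is to bootstrap the adjunction $\Cart \dashv \Cl$ of Lemma \ref{free-cart} from a bijection between the objects of the two hom-categories to a full equivalence. The object-level bijection already sends a clone morphism $F : \bb{C} \to \Cl{\mcal{D}}$ to the unique strict cartesian functor $F^* : \Cart{\bb{C}} \to \mcal{D}$, so what remains is to extend this correspondence to 2-cells. The guiding principle is that a natural transformation between product-preserving functors out of $\Cart{\bb{C}}$ is completely determined by its components at the singleton lists $[A]$, and this is exactly the data of a clone transformation.

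First I would define a functor $\Phi : \Hom{\category{ProdPres}}{\Cart{\bb{C}}}{\mcal{D}} \to \Hom{\category{Clone}}{\bb{C}}{\Cl{\mcal{D}}}$ by $\Phi(\alpha)_A := \alpha_{[A]}$ for a natural transformation $\alpha : F^* \Rightarrow G^*$. To check that $\Phi(\alpha)$ satisfies the clone-transformation axiom, I would take a multimap $t : A_1,\ldots,A_n \to B$, regard it as a morphism $[A_1,\ldots,A_n] \to [B]$ in $\Cart{\bb{C}}$, and apply naturality of $\alpha$ at $t$ in $\mcal{D}$, namely $\alpha_{[B]} \circ F^*(t) = G^*(t) \circ \alpha_{[A_1,\ldots,A_n]}$. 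Because $F^*$ and $G^*$ preserve products strictly and $\alpha$ is natural at each projection $\pi_i : [A_1,\ldots,A_n] \to [A_i]$, the component $\alpha_{[A_1,\ldots,A_n]}$ must factor as $\langle \alpha_{[A_1]} \circ \pi_1, \ldots, \alpha_{[A_n]} \circ \pi_n \rangle$. Unpacking substitution in $\Cl{\mcal{D}}$ then turns naturality into precisely the transformation axiom for $\Phi(\alpha)$.

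In the other direction I would define $\Psi : \Hom{\category{Clone}}{\bb{C}}{\Cl{\mcal{D}}} \to \Hom{\category{ProdPres}}{\Cart{\bb{C}}}{\mcal{D}}$ by sending a transformation $\eta : F \Rightarrow G$ to the natural transformation with components $\Psi(\eta)_{[A_1,\ldots,A_n]} := \langle \eta_{A_1} \circ \pi_1, \ldots, \eta_{A_n} \circ \pi_n \rangle$. Naturality at a general morphism $(f_1,\ldots, f_m) : [A_1,\ldots,A_n] \to [B_1,\ldots, B_m]$ in $\Cart{\bb{C}}$ follows by strict product preservation, which reduces the square to its $m$ components, and each component is precisely the clone-transformation axiom for $\eta$ at $f_i$ after translation from substitution in $\Cl{\mcal{D}}$ to composition with a pairing in $\mcal{D}$.

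Finally, I would verify that $\Phi$ and $\Psi$ are mutually inverse. The composite $\Phi\Psi(\eta)_A$ is $\langle \eta_A \circ \pi_1 \rangle$, which collapses to $\eta_A$ on singleton products, and $\Psi\Phi(\alpha)_{[A_1,\ldots,A_n]}$ equals $\alpha_{[A_1,\ldots,A_n]}$ by the factorization argument used above; in fact the equivalence is an isomorphism of categories. I expect the main obstacle to be bookkeeping: keeping straight the triple identification between multimaps $A_1,\ldots,A_n \to B$ in $\bb{C}$, morphisms $[A_1,\ldots,A_n] \to [B]$ in $\Cart{\bb{C}}$, and morphisms $FA_1 \times \cdots \times FA_n \to FB$ in $\mcal{D}$, and making sure the transformation axiom, which is phrased in the clone $\Cl{\mcal{D}}$ in terms of substitution, is matched carefully with naturality in $\mcal{D}$. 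Once that dictionary is set up, the result is really the statement that the free-forgetful adjunction $\Cart \dashv \Cl$ is 2-dimensional.
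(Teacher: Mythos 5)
Your treatment of the 2-cells is essentially the paper's: the observation that a natural transformation between these functors is determined by its components at singleton lists, via the factorization $\alpha_{[A_1,\ldots,A_n]} = \langle \alpha_{[A_1]}\circ\pi_1,\ldots,\alpha_{[A_n]}\circ\pi_n\rangle$, is exactly the paper's sub-lemma, and your $\Psi(\eta)_{[A_1,\ldots,A_n]}$ is the paper's $F\eta = \eta_{A_1}\times\ldots\times\eta_{A_n}$. The gap is at the object level, and it is precisely the reason the statement is an equivalence rather than an isomorphism. The category $\Hom{\category{ProdPres}}{\Cart{\bb{C}}}{\mcal{D}}$ consists of \emph{not necessarily strict} product-preserving functors, whereas the bijection supplied by the adjunction $\text{Cart}\dashv\text{Cl}$ of Lemma \ref{free-cart} is with the \emph{strict} cartesian functors $\Cart{\bb{C}}\to\mcal{D}$. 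So the ``object-level bijection'' you take as your starting point does not have the right codomain: a general product-preserving $g$ need not equal $F^*$ for any clone morphism $F$, your composite $\Psi\Phi$ applied to such a $g$ returns a different (strictified) functor, and consequently both the claim that $\Phi$ and $\Psi$ are mutually inverse on objects and the conclusion that the equivalence is an isomorphism of categories fail.

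What is missing is an essential surjectivity argument: given an arbitrary product-preserving $g:\Cart{\bb{C}}\to\mcal{D}$, define a clone morphism $f:\bb{C}\to\Cl{\mcal{D}}$ by $A\mapsto g([A])$ and $t\mapsto g((t))\circ\zeta_{[A_1,\ldots,A_n]}$, where $\zeta_{[A_1,\ldots,A_n]}: g[A_1]\times\ldots\times g[A_n]\cong g([A_1,\ldots,A_n])$ is the comparison isomorphism coming from product preservation, and then check that the $\zeta$'s assemble into a natural isomorphism from the functor induced by $f$ to $g$. Note also that even the functors in the image of the clone-to-functor direction are generally not strictly product-preserving (products in $\Cart{\bb{C}}$ are strictly associative while those in $\mcal{D}$ need not be), so one cannot sidestep the issue by quietly restricting to strict functors. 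With essential surjectivity supplied and the ``isomorphism of categories'' claim withdrawn, your argument goes through and coincides with the paper's proof of full faithfulness.
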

		\begin{proof}
			The constructions here are a little more general version of the natural isomorphism of \textit{sets},
			\[
			\Hom{\category{CartCat}}{\Cart{\bb{D}}}{\mcal{D}} \cong \Hom{\category{Clone}}{\bb{C}}{\Cl{\mcal{D}}}
			\] induced by the adjunction $\text{Cart}\dashv \text{Cl}$  in \ref{adjunctions}. 
			
			Define the functor, $F: \Hom{\category{Clone}}{\bb{C}}{\Cl{\mcal{D}}}\to \Hom{\category{ProdPres}}{\Cart{\bb{C}}}{\mcal{D}}$ as:
			\begin{enumerate}
				\item On objects: A clone morphism $f: \bb{C} \to \Cl{\mcal{D}}$ gets mapped to a product preserving functor, $Ff: \Cart{\bb{C}} \to \mcal{D}$ which acts as:
				\begin{enumerate}
					\item On objects: $[A_1,\ldots, A_n] \mapsto fA_1\times\ldots\times fA_n$
					\item On hom-sets: 
					\begin{align*}
						\Hom{\Cart{\bb{C}}}{[A_1,\ldots, A_n]}{[B_1,\ldots, B_m]} &\to \Hom{\mcal{D}}{fA_1\times\ldots\times fA_n}{fB_1\times\ldots\times fB_m}\\
						(t_1,\ldots, t_m) &\mapsto \langle f(t_1),\ldots, f(t_m)\rangle 
					\end{align*}
				\end{enumerate}
				Note that the map $Ff$ may \textit{not} be strictly product preserving since the products in $\Cart{\bb{C}}$ are strictly associative, while those in $\mcal{D}$ might not be.
				
				\item On morphisms: A transformation, $\eta: f\Rightarrow g$ gets mapped to a natural transformation, $F\eta: Ff\Rightarrow Fg$, defined to have components:
				\[
				(F\eta)_{[A_1,\ldots, A_n]} : = \eta_{A_1}\times\ldots\times\eta_{A_n}: (Ff)([A_1,\ldots, A_n]) \to (Fg)([A_1,\ldots, A_n])
				\]
				This type checks since $\eta_{A}: fA \to gA$ is an arrow in $\Cl{\mcal{D}}$. For naturality, assume that $(t_1,\ldots, t_m): [A_1,\ldots, A_n]\to [B_1,\ldots, B_m]$ is a morphism in $\Cart{\bb{C}}$. Then, the following diagram commutes:
				\[\begin{tikzcd}[cramped]
					{[A_1,\ldots, A_n]} && {Ff([A_1,\ldots, A_n])} && {Fg([A_1,\ldots, A_n])} \\
					\\
					{[B_1,\ldots, B_m]} && {Ff([B_1,\ldots, B_m])} && {Fg([B_1,\ldots, B_m])}
					\arrow["{(t_1,\ldots, t_m)}", from=1-1, to=3-1]
					\arrow["{\langle ft_1,\ldots, ft_m\rangle}", from=1-3, to=3-3]
					\arrow["{\eta_{A_1}\times\ldots\times \eta_{A_n}}", from=1-3, to=1-5]
					\arrow["{\eta_{B_1}\times\ldots\times \eta_{B_m}}", from=3-3, to=3-5]
					\arrow["{\langle gt_1,\ldots, gt_m\rangle}", from=1-5, to=3-5]
				\end{tikzcd}\]
				Since
				\begin{align*}
					&	\langle gt_1,\ldots, gt_m\rangle (\eta_{A_1}\times\ldots, \eta_{A_n}) \\
					& = \langle gt_1\circ (\eta_{A_1}\times\ldots\times \eta_{A_n}),\ldots, gt_m\circ (\eta_{A_1},\ldots, \eta_{A_n})\rangle \\
					& =\langle gt_1[\eta_{A_1}[\pi^1],\ldots, \eta_{A_n}[\pi^n] ],\ldots, gt_m[{\eta_{A_1}}[\pi^1],\ldots, \eta_{A_n}[\pi^n]]\rangle \\
					& = \langle  \eta_{B_1}[ft_1],\ldots, \eta_{B_m}[ft_m]\rangle \\
					& = \langle \eta_{B_1} [\pi^1][  ft_1,\ldots, ft_m] ,\ldots, \eta_{B_m}[\pi^n][ft_1,\ldots, ft_m]\rangle \\
					& = \langle \eta_{B_1}\circ \pi^1,\ldots, \eta_{B_n}\circ \pi^n\rangle \circ \langle ft_1,\ldots, ft_m\rangle\\
					& = (\eta_{B_1}\times\ldots\times\eta_{B_n})\circ \langle ft_1,\ldots, ft_m\rangle
				\end{align*}
				using definitions.		
			\end{enumerate}
			The mapping $F$ can easily be seen to be functorial. It is  essentially surjective on objects: if $g: \Cart{\bb{C}} \to \mcal{D}$ is a product preserving functor, then define a clone morphism $f: \bb{C}\to \Cl{\mcal{D}}$ as:
			\begin{enumerate}
				\item On objects: $A\mapsto g([A])$
				\item On hom-sets: 
				\begin{align*}
					\Mhom{\bb{C}}{A_1,\ldots, A_n}{B} &\to \Mhom{\Cl{\mcal{D}}}{g([A_1]),\ldots, g([A_n]) }{g([B])}\\
					t &\mapsto  g((t))\circ \zeta_{[A_1,\ldots, A_n]}
				\end{align*}
				If $t: A_1,\ldots, A_n\to B$ is a morphism in $\bb{C}$, then, $(t):[A_1,\ldots, A_n]\to [B]$ is a morphism in $\Cart{\bb{C}}$, which then gets mapped to $g((t)): g([A_1,\ldots, A_n])\to gB$, so everything type checks. The morphism $\zeta_{[A_1,\ldots, A_n]}$ is the isomorphism $g[A_1]\times\ldots g[A_n] \cong g([A_1,\ldots, A_n])$ which exists since $g$ preserves products.
				
			\end{enumerate}
			Then, $Ff:\Cart{\bb{C}}\to \mcal{D}$ is isomorphic to $g$: The natural isomorphism is given by $\zeta_{[A_1,\ldots, A_n]}: Ff([A_1,\ldots, A_n]) = g[A_1]\times\ldots\times g[ A_n] \to g([A_1,\ldots, A_n]) $.

			We now show that the functor is full and faithful, i.e, for $f,g: \bb{C}\to \Cl{\mcal{D}}$, there is an isomorphism
			\[
			F:\Hom{\category{Clone}}{\bb{C}}{\Cl{\mcal{D}}}(f,g) \cong \Hom{\category{ProdPres}}{\Cart{\bb{C}}}{\mcal{D}}(Ff, Fg)
			\]
			between the set of transformations, $\eta: f\Rightarrow g$ and the set of natural transformations, $\epsilon: Ff\Rightarrow Fg$. We invert the mapping. If $\eta: \Rightarrow Fg$ is a natural transformation, it has components $\eta_{[A_1,\ldots, A_n]}$ for an object $[A_1,\ldots, A_n] \in \ob{\Cart{\bb{C}}}$. Define $G\eta: f\Rightarrow g$ to have components $(G\eta)_{A}:= \eta_{[A]}: (Ff)([A]) = fA\to (Fg)([A]) = gA $. To check that this is indeed a transformation, we prove the following sub-lemma. 
			
			\begin{sublemma}
				For a natural transformation $\eta: Ff\Rightarrow Fg$, we have an equality,
				\[
				\eta_{[A_1,\ldots, A_n]} = \eta_{A_1}\times\ldots\times \eta_{A_n}
				\]
			\end{sublemma}
			\begin{proof}
				Note that $(Ff)([A_1,\ldots, A_n]) = fA_1\times\ldots fA_n$. There are projections, $[A_1,\ldots, A_n]\to [A_j]$ given by $pr^j_{A_\bullet}$. So, by naturality, and projection preservation, the following diagram commutes:
				\[\begin{tikzcd}[cramped]
					{Ff([A_1,\ldots, A_n])} && {Fg([A_1,\ldots, A_n])} \\
					\\
					{Ff([A_j])} && {Fg([A_j])}
					\arrow["{\pi^j_{fA_1,\ldots, fA_n}}"', from=1-1, to=3-1]
					\arrow["{\eta_{[A_1,\ldots, A_n]}}", from=1-1, to=1-3]
					\arrow["{\pi^j_{gA_1,\ldots, gA_n}}", from=1-3, to=3-3]
					\arrow["{\eta_{[A_j]}}"', from=3-1, to=3-3]
				\end{tikzcd}\]
				Then consider the following diagram:
				\[\begin{tikzcd}[cramped]
					&& {fA_1\times\ldots \times fA_n} \\
					\\
					{fA_1} && \dots && {fA_n} \\
					{gA_1} && \dots && {gA_n} \\
					\\
					&& {gA_1\times\ldots \times gA_n}
					\arrow["{\pi^1_{fA_1,\ldots, fA_n}}"', from=1-3, to=3-1]
					\arrow["{\pi^n_{fA_1,\ldots, fA_n}}", from=1-3, to=3-5]
					\arrow["{\eta_{[A_1]}}", from=3-1, to=4-1]
					\arrow["{\pi^1_{gA_1,\ldots, gA_n}}", from=6-3, to=4-1]
					\arrow["{\pi^n_{gA_1,\ldots, gA_n}}"', from=6-3, to=4-5]
					\arrow["{\eta_{[A_1,\ldots, A_n]}}"{description}, curve={height=-30pt}, from=1-3, to=6-3]
					\arrow["{\eta_{[A_1]}\times\ldots \eta_{[A_n]}}"{description}, curve={height=30pt}, from=1-3, to=6-3]
					\arrow["{\eta_{[A_n]}}"', from=3-5, to=4-5]
				\end{tikzcd}\]
				The morphism $\eta_{[A_1,\ldots, A_n]}$ satisfies $\pi^j_{gA_1,\ldots, gA_n}\circ\eta_{[A_1,\ldots, A_n]} = \eta_{[A_j]}\circ \pi^j_{fA_1,\ldots, fA_n}$ for all $j$ since the previous diagram always commutes. Thus, $\eta_{[A_1,\ldots, A_n]}$ and $\eta_{[A_1]}\times\ldots \times \eta_{[A_n]}$ satisfy the same universal property, and hence must be equal.
				
			\end{proof}
			
			Now let $t:A_1,\ldots, A_n\to B$ be a multimap in $\bb{C}$. This becomes a morphism $(t): [A_1,\ldots, A_n]\to [B]$. Then by naturality of $\eta$, the following diagram commutes: 
			
			\[\begin{tikzcd}[cramped]
				{\substack{Ff([A_1,\ldots,A_n]) \\\quad\\=f A_1\times\ldots\times fA_n}} &&& {\substack{Fg([A_1,\ldots,A_n]) \\\quad\\=g A_1\times\ldots\times gA_n}} \\
				\\
				{\substack{Ff([B]) \\ \quad \\=fB}} &&& {\substack{Fg([B]) \\ \quad \\=gB}}
				\arrow["{\substack{\eta_{[A_1,\ldots, A_n]}\\ \quad\\= \eta_{[A_1]}\times\ldots\times \eta_{[A_n]}}}", from=1-1, to=1-4]
				\arrow["{\substack{f(t)}}", from=1-1, to=3-1]
				\arrow["{\substack{\eta_{[B]}}}", from=3-1, to=3-4]
				\arrow["{\substack{g(t)}}", from=1-4, to=3-4]
			\end{tikzcd}\]
			Consequently,
			\begin{align*}
				G\eta_{B} [f(t)] & = \eta_{[B]} \circ f(t) \\
				&= g(t)\circ \eta_{[A_1]}\times\ldots \times \eta_{[A_n]}\\
				& = g(t)\circ \langle G\eta_{A_1}[\pi^1_{fA_\bullet}],\ldots, G\eta_{A_n}[\pi^n_{fA_\bullet}]\rangle 
			\end{align*}
			which shows that $G\eta$ is a transformation $f\Rightarrow g$. The mappings $\eta\mapsto F\eta$ and $\eta\mapsto G\eta$ are inverses by virtue of the Sub-Lemma. This completes the proof that the two sets are in bijective correspondence, and that $F$ is fully faithful, and that $F$ is an equivalence of categories.	
		\end{proof}

		Then since \category{Set} is a cartesian category, using Definition \ref{t-alg}, we obtain the following corollary. 
		\begin{corollary}
			For any clone $\bb{C}$, the following categories are equivalent:
			\[
			{\Cart{\bb{C}}}\text{-}\category{Alg} = \Hom{\category{ProdPres}}{\Cart{\bb{C}}}{\category{Set}}\simeq \Hom{\category{Clone}}{\bb{C}}{\Cl{\category{Set}}} 
			\]
		\end{corollary}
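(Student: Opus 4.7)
The proof is essentially a direct instantiation of the preceding lemma together with an unfolding of Definition \ref{t-alg}. The plan is to apply the lemma in the special case $\mcal{D} = \category{Set}$, which is legitimate since $\category{Set}$ is cartesian (indeed, it is the canonical cartesian category used to interpret algebraic theories), and then observe that the leftmost equality is not an equivalence to be proved but a matter of definition.

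First, I would recall that by Definition \ref{t-alg}(4), the category ${\Cart{\bb{C}}}$-$\category{Alg}$ is defined to be the category of (not necessarily strictly) product preserving functors $\Cart{\bb{C}} \to \category{Set}$ together with natural transformations between them. Since $\Cart{\bb{C}}$ is a small cartesian category (it was constructed in Lemma~\ref{free-cart} as the free cartesian category on $\bb{C}$), it qualifies as an algebraic theory in the sense of Definition \ref{t-alg}(1), so this definition applies. Unpacking the notation, this is exactly the category $\Hom{\category{ProdPres}}{\Cart{\bb{C}}}{\category{Set}}$, which gives the first equality in the displayed statement.

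Second, I would apply the preceding lemma with $\mcal{D} = \category{Set}$ to obtain the equivalence
\[
\Hom{\category{ProdPres}}{\Cart{\bb{C}}}{\category{Set}} \simeq \Hom{\category{Clone}}{\bb{C}}{\Cl{\category{Set}}},
\]
which is precisely the second relation in the displayed statement. Composing the equality and the equivalence yields the result. No further calculation is needed, since the lemma already supplies an explicit equivalence $F$ together with an essentially surjective and fully faithful inverse-up-to-isomorphism.

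There is no real obstacle here: the only thing to verify is that the hypotheses of the lemma (namely that $\mcal{D}$ is a cartesian category) are satisfied for $\category{Set}$, which is immediate. The substantive content of the corollary is entirely inherited from the lemma, and the novelty is merely the recognition that our notion of a $\bb{C}$-algebra for a clone (a clone morphism $\bb{C} \to \Cl{\category{Set}}$) agrees, via this equivalence, with the more traditional notion of a $\Cart{\bb{C}}$-algebra as a product-preserving functor into $\category{Set}$. This is precisely the sanity check the section set out to perform.
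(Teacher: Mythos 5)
Your proposal is correct and matches the paper's own argument exactly: the left-hand equality is definitional (Definition \ref{t-alg} applied to the algebraic theory $\Cart{\bb{C}}$), and the equivalence is the preceding lemma instantiated at $\mcal{D} = \category{Set}$, which is legitimate since $\category{Set}$ is cartesian. Nothing further is needed.
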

		
		This corollary serves as a sanity-check since it is basically saying that  algebras and  morphisms of algebras for a clone are the same as algebras and  morphisms of algebras for the algebraic theory it induces. This is a slightly general version of the fact that the category of algebras for a single sorted clone is equivalent to the category of algebras for the Lawvere theory it induces \cite[Theorem 1.5.4]{gould}.

		We, thus arrive at a definition for the category of algebras  for a multi-sorted clone. 
		\begin{definition}
			Let $\bb{C}$ be a clone. Then,
			\begin{enumerate}
				\item A $\bb{C}$-algebra is a clone morphism $A:\bb{C}\to \Cl{\category{Set}}$.
				\item A morphism $f: A\to B$ of $\bb{C}$-algebras is a transformation, $f: A\Rightarrow B$.
				\item The category of $\bb{C}$-algebras, $\bb{C}$-\category{Alg} is the category of $\bb{C}$-algebras and transformations between them, $\Hom{\category{Clone}}{\bb{C}}{\Cl{\category{Set}}}$.
			\end{enumerate}
		\end{definition}

		Clones serve as an algebraic presentation of the syntax for a programming language. The category of algebras for a clone allow us to study different models of the programming language which informs the study of the programming language itself.	
	\end{definition}

	\section{Multicategories and their Algebras}	
	
	We briefly go through the theory of multicategories, with the perspective that they are a  linear analogue to the theory of clones described in the previous section.

	\subsection{Multicategories are Linear Clones}
	Multicategories provide a generalization of clones. These structures can also interpret languages that can construct programs with finitely many free variables. However, the language is not required to satisfy the renaming rule:
	\begin{prooftree}
		\AxiomC{$\Delta\vdash M:B$}\RightLabel{(There is a renaming $\Delta\to \Gamma$)}
		\UnaryInfC{$\Gamma\vdash M^\Gamma:B$}
	\end{prooftree}
	Hence, multicategories can interpret a linear version of the language for clones \cite{ded-2}. Such a language has a `do nothing' program with only one free variable:
	\begin{prooftree}
		\AxiomC{$\vdash A:\text{Type}$}
		\UnaryInfC{$x:A\vdash x:A$}
	\end{prooftree}
	Substitution in this language is a linear version of substitution in a clone
	\begin{prooftree}
		\AxiomC{$x_1:B_1,\ldots, x_m:B_m\vdash M:C$}
		\AxiomC{$\{\Delta_i \vdash N_i:B_i\}_{1\leq i\leq m}$}
		\BinaryInfC{$\Delta_1,\ldots, \Delta_m\vdash M\langle N_1,\ldots, N_m\rangle:C$}
	\end{prooftree}
	Note that the contexts of the terms that get plugged in also vary, as compared to the way substitution in a clone works.

	\begin{definition}[\text{e.g.} \cite{leinster}]
		A multicategory $\bb{C}$ consists of the following data:
		\begin{enumerate}
			\item A set of objects, $\ob{\bb{C}}$,
			\item For every sequence fo objects $A_1,\ldots, A_n, B$, a set of multimaps, $\Mhom{\bb{C}}{A_1,\ldots, A_n}{B}$,
			\item For every object $A$, an identity multimap, $id_A\in \Mhom{\bb{C}}{A}{A}$,	
			\item A substitution operation:	
			\begin{align*}
				\Mhom{\bb{C}}{B_1,\ldots, B_m}{C}\times \prod_{i=1}^{n} \Mhom{\bb{C}}{A^1_i,\ldots, A^{k_i}_i}{B_i} &\to \Mhom{\bb{C}}{A^1_1,\ldots, A^{k_1}_1,\ldots, A^1_n,\ldots, A^{k_n}_n}{C}\\
				t, u_1,\ldots, u_n &\mapsto t\langle u_1,\ldots, u_n\rangle
			\end{align*}
		\end{enumerate}
		such that the following axioms hold:
		\begin{enumerate}
			\item Left unitality: For $id_{B}: B\to B$ and  $u: A^1_1,\ldots, A^{k_1}_1\to B$, we have
			\[
			id_{B}\langle u\rangle = u
			\] 
			\item Right Unitality: For $t: B_1,\ldots, B_m\to C$, and $id_{B_i}: B_i\to B_i$ for $1\leq i\leq m$, the following law holds
			\[
			t\langle id_{B_1},\ldots, id_{B_m}\rangle
			\]
			\item Associativity: For $t: C_1,\ldots, C_l\to D$, and collections $\{u_i: B^1_i,\ldots, B^{k_i}_i\}_{1\leq i\leq l}$ and $\{v_i^j: \Delta^j_i\to B^j_i\}_{1\leq j\leq k_i, 1\leq i\leq l}$, 
			\[
			t\langle u_1,\ldots, u_l\rangle \langle v^1_1,\ldots, v^{k_1}_1,\ldots, v^1_l,\ldots, v^{k_l}_l\rangle  = t\langle u_1\langle v^1_1,\ldots, v^{k_1}_1\rangle,\ldots, u_l\langle v^1_l,\ldots, v^{k_l}_l\rangle \rangle \
			\]
		\end{enumerate}
	\end{definition}
	We follow the convention in \cite{psaville}  of representing substitution in a clone by $t[u_1,\ldots, u_n]$ and substitution in a multicategory by $t\langle u_1,\ldots, u_n\rangle$.

	As with the case of clones, it is possible to obtain a category by restricting a multicategory $\bb{C}$ to its unary contexts. Define $\overline{\bb{C}}$ as having the same objects as $\bb{C}$, and hom-sets by $\Hom{\overline{\bb{C}}}{A}{B} = \Mhom{\bb{C}}{A}{B}$. Identities are as in the multicategory: $id_A\in \Mhom{\bb{C}}{A}{A}$, and composition is given by $f:A\to B,g: B\to C\mapsto g\langle f\rangle $. 
	
	\begin{definition}
		A morphism of multicategories, $f: \bb{C}\to \bb{D}$ is given by the following data:
		\begin{enumerate}
			\item A function, $f_{ob}: \ob{\bb{C}}\to \ob{\bb{D}}$
			\item A family of maps, $ f_{A_1,\ldots, A_n;B}: \Mhom{\bb{C}}{A_1,\ldots, A_n}{B}\to \Mhom{\bb{D}}{fA_1,\ldots, fA_n}{fB}$
		\end{enumerate}
		that satisfy the following axioms
		\begin{enumerate}
			\item Identities are preserved: $f(id_A) = id_{FA}$.
			\item Composition is preserved: $f(t\langle u_1,\ldots, u_m\rangle) = f(t)\langle f(u_1),\ldots, f(u_m)\rangle$.
		\end{enumerate}
	\end{definition}
	
	Thus, we have a category $\category{MultiCat}$ of multicategories and their morphisms. The mapping $\overline{(-)}$ clearly extends to a functor to $\category{Cat}$. 
	
	\begin{lemma}\cite{hermida}\label{disc-mult}
		The functor $\overline{(-)}: \category{MultiCat}\to \category{Cat}$ has a left adjoint $J$. 
	\end{lemma}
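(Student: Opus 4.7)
The plan is to mimic the construction of the free clone on a category, but taking into account the key structural difference: a multicategory lacks projections, so there is no way to freely build a multimap $A_1,\ldots,A_n\to B$ out of a morphism $A_j \to B$ the way the functor $\mcal{F}$ did in the clone case. Intuitively, the ``renaming'' rule fails, so non-unary multimaps have nothing to be built from. This suggests that the free multicategory on $\mcal{C}$ should be the one that has no non-unary multimaps at all.

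Concretely, I would define $J\mcal{C}$ to have the same objects as $\mcal{C}$, and hom-sets
\[
\Mhom{J\mcal{C}}{A_1,\ldots,A_n}{B} := \begin{cases} \Hom{\mcal{C}}{A_1}{B} & \text{if } n=1, \\ \emptyset & \text{otherwise.} \end{cases}
\]
Identity multimaps are the identities of $\mcal{C}$, and substitution, which is only ever non-vacuously defined when every multimap involved is unary, is just composition in $\mcal{C}$. The three multicategory axioms reduce to the three category axioms in $\mcal{C}$ and so hold automatically. By construction $\overline{J\mcal{C}}$ is literally $\mcal{C}$, and I would take the unit $\eta_{\mcal{C}}: \mcal{C}\to \overline{J\mcal{C}}$ to be the identity functor.

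For the universal property, given a multicategory $\bb{D}$ and a functor $F: \mcal{C}\to \overline{\bb{D}}$, define $F^\ast: J\mcal{C}\to \bb{D}$ by taking $F^\ast_{ob}:=F_{ob}$ and $F^\ast_{A;B}:=F_{A;B}$ on the (only non-empty) unary hom-sets; on every other hom-set $F^\ast$ is forced to be the empty function. Functoriality of $F$ gives preservation of identities and of substitution, since every instance of substitution in $J\mcal{C}$ is a binary composite of unary multimaps. Thus $F^\ast$ is a multifunctor with $\overline{F^\ast}\circ \eta_{\mcal{C}} = F$. For uniqueness, any multifunctor $G:J\mcal{C}\to\bb{D}$ satisfying $\overline{G}\circ\eta_{\mcal{C}}=F$ must agree with $F$ on objects and on unary multimaps, and there are no other multimaps on which the two could disagree.

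There is no real obstacle: the whole proof is driven by the observation that a linear calculus with no weakening and no projections cannot manufacture multi-variable programs out of one-variable programs, which is reflected in the emptiness of the non-unary hom-sets. The closest thing to a subtlety is simply recognising that this stark construction is indeed the correct left adjoint, as opposed to, for example, the list-based construction used for $\mcal{F}$ in the clone case; but once one compares to $\Hom{\category{Cat}}{\mcal{C}}{\overline{\bb{D}}}$ and observes that a functor into $\overline{\bb{D}}$ only specifies data on unary maps, the choice is forced.
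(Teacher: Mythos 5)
Your construction is exactly the one the paper uses: the discrete multicategory with empty non-unary hom-sets, unit the identity functor, and universality forced by the absence of non-unary data. The proposal is correct and matches the paper's proof, only spelling out the universal property in slightly more detail.
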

	\begin{proof}Let $\mcal{C}$ be a category.
		Since a multicategory does not require any projection operations, we can leave the hom-sets $\Mhom{J\mcal{C}}{A_1,\ldots, A_n}{B}$ empty whenever $n > 1$. Thus, we define the category ${J}\mcal{C}$ as:
		\begin{enumerate}
			\item Objects: $\ob{J\mcal{C}} = \ob{\mcal{C}}$,
			\item Homsets: 
			\begin{align*}
				\Mhom{J\mcal{C}}{A_1,\ldots, A_n}{B} = \begin{cases}
					\Hom{\mcal{C}}{A_1}{B}, &\text{ if $n = 1$}\\
					\emptyset, &\text{ otherwise}
				\end{cases}
			\end{align*}
			\item Identities: $id_A \in \Hom{\mcal{C}}{A}{A} = \Mhom{J\mcal{C}}{A}{A}$.
			\item Composition: As in the category: if $t: B\to C$ and $u: A\to B$, then, $t\circ u: A\to C$. Since there are no maps of other arities, this completely defines composition.
		\end{enumerate}
		Also note that $\overline{J\mcal{C}}$ returns $\mcal{C}$. Therefore, we set $\eta_{\mcal{C}}: \mcal{C}\to \overline{J\mcal{C}}$ to be the identity functor. This pair is also universal and this is easy to verify due to the simplicity of the construction.
	\end{proof}

	Every clone determines a multicategory in a canonical way. Let $\bb{C}$ be a clone, and define the multicategory $M\bb{C}$ to have the same objects and same homsets as $\bb{C}$. Identities are given by the first projection, i.e, $id_A \in \Mhom{M\bb{C}}{A}{A}$ is $pr^1_{A}\in \Mhom{\bb{C}}{A}{A}$. Finally composition is given as follows: if $t: B_1,\ldots, B_m\to C$ is a multimap and $u_i: A^1_i,\ldots, A^{k_i}_i\to B_i$ for $1\leq i\leq m$ are a collection of maps, define
	\[
	t\langle u_1,\ldots, u_m\rangle  = t[ u_1[pr^{1,1}_{A_\bullet^\bullet},\ldots, pr^{1,k_i}_{A^\bullet_\bullet} ],\ldots, u_m[pr^{m,1}_{A_\bullet^\bullet},\ldots, pr^{m,k_m}_{A_\bullet^\bullet}]  ]
	\]	
	where $pr^{i,j}_{A_\bullet^\bullet}$ is the projection $A^1_1,\ldots, A^{k_1}_1,\ldots, A^1_m,\ldots, A^{k_m}_m \to A^j_i$. A clone morphism $f: \bb{C}\to \bb{D}$ automatically becomes a morphism of multicategories, $Mf: M\bb{C}\to M\bb{D}$ by virtue of preserving projections and composition:
	\begin{align*}
		f(t\langle u_1,\ldots, u_m\rangle) &= f(t[ u_1[pr^{1,1}_{A_\bullet^\bullet},\ldots, pr^{1,k_i}_{A^\bullet_\bullet} ],\ldots, u_m[pr^{m,1}_{A_\bullet^\bullet},\ldots, pr^{m,k_m}_{A_\bullet^\bullet}]  ])\\
		& = f(t)[f(u_1)[pr^{1,1}_{fA^\bullet_\bullet},\ldots, pr^{1,k_i}_{fA_\bullet^\bullet}],\ldots, f(u_m)[pr^{m,1}_{fA_\bullet^\bullet},\ldots, pr^{m,k_m}_{fA_\bullet^\bullet}]]\\
		& = f(t)\langle f(u_1),\ldots, f(u_m)\rangle 
	\end{align*}
	Hence, we obtain a functor, $M: \category{Clone}\to \category{MultiCat}$. As a result, we can transport the cartesian category $\category{Set}$ to  $\category{MultiCat}$ through \category{Clone} as $M\Cl{\category{Set}}$, which has
	\begin{enumerate}
		\item Objects: Sets,
		\item Hom-sets: $\Mhom{M\Cl{\category{Set}}}{A_1,\ldots, A_n}{B} = \Hom{\category{Set}}{A_1\times\ldots\times A_n}{B}$,
		\item Identities: $id_A \in \Mhom{M\Cl{\category{Set}}}{A}{A}$ is the identity function on $A$,
		\item Composition: If $t: B_1\times\ldots\times B_m\to C$, and $u_i: A^1_i\times\ldots\times A^{k_i}_i\to B_i$ for $1\leq i\leq m$ are given, 
		\begin{align*}
			t\langle u_1,\ldots, u_m\rangle &= t[u_1[\pi^{1,1}_{A^\bullet_\bullet},\ldots, \pi^{1,k_1}_{A^\bullet_\bullet}],\ldots, u_m[\pi^{m,1}_{A^\bullet_\bullet},\ldots, \pi^{m,k_m}_{A^\bullet_\bullet}]] \\
			& = t\circ (u_1\times\ldots \times u_m)
		\end{align*}
		where $u_1\times\ldots\times u_m$ is the unique map, $A_1^1\times\ldots A^{k_1}_1\times\ldots \times A^1_m\times\ldots\times A^{k_m}_m\to B_1\times\ldots\times B_m$ induced in the following diagram:
		\[\begin{tikzcd}[cramped]
			& {A_1^1\times\ldots A^{k_1}_1\times\ldots \times A^1_m\times\ldots\times A^{k_m}_m} \\
			{A^1_1\times\ldots\times A^{k_1}_1} & \dots & {A^1_m\times\ldots\times A^{k_m}_m} \\
			\\
			{B_1} & \dots & {B_m} \\
			& {B_1\times\ldots\times B_m}
			\arrow["{[\pi^{1,1},\ldots, \pi^{1,k_1}]}"{description}, from=1-2, to=2-1]
			\arrow["{[\pi^{m,1},\ldots, \pi^{m,k_m}]}"{description}, from=1-2, to=2-3]
			\arrow["{u_1}", from=2-1, to=4-1]
			\arrow[from=5-2, to=4-1]
			\arrow[from=5-2, to=4-3]
			\arrow["{u_m}"', from=2-3, to=4-3]
		\end{tikzcd}\]
	\end{enumerate}

	This construction uses only the monoidal aspects of the cartesian product $\times$ on \category{Set}. This suggests that any monoidal category can be made into a multicategory.

	\subsection{Multicategories and Monoidal Categories}

	The following construction can be replicated for any monoidal category, by choosing a particular bracketing. For simplicity, we consider strict monoidal categories.
	For a (strict) monoidal category $(\mcal{C},\otimes, I)$, define the multicategory $U\mcal{C}$ as having:
	\begin{enumerate}
		\item Objects: $\ob{U\mcal{C}} = \ob{\mcal{C}}$,
		\item Hom-sets: $\Mhom{U\mcal{C}}{A_1,\ldots, A_n}{B} := \Hom{\mcal{C}}{A_1\otimes\ldots\otimes A_n}{B}$,
		\item Identities: These are the same as $\mcal{C}$: $id_A \in \Hom{\mcal{C}}{A}{A}$,
		\item Substitution: If $t:B_1\otimes \ldots \otimes B_m\to C $ and $u_i: A^1_i\otimes\ldots\otimes A^{k_i}_i\to B_i$, then define 
		\[
		t\langle u_1,\ldots, u_n\rangle : = t\circ (u_1\otimes \ldots \otimes u_m)
		\]
	\end{enumerate}
	If $f: (\mcal{C},\otimes, I)\to (\mcal{D},\odot, J)$ is a strict monoidal functor, define the multicategory morphism $Uf: U\mcal{C}\to U\mcal{D}$ as:
	\begin{enumerate}
		\item On objects: $Uf(A) = f(A)$.
		\item On hom-sets: $\Mhom{U\mcal{C}}{A_1,\ldots, A_n}{B} \to \Mhom{U\mcal{D}}{fA_1,\ldots, fA_n}{B}$ is the mapping $t: A_1\otimes\ldots\otimes A_n\to B \mapsto f(t): fA_1\otimes\ldots\otimes f_An\to fB$.
	\end{enumerate}
	This mapping extends to a functor $U:\category{StrictMonCat}\to \category{MultiCat}$. This mapping has a left adjoint, $F$, which makes the adjunction monadic \cite{hermida}. The construction is briefly sketched in \cite{leinster}, and we recapitulate it.  For a multicategory $\bb{C}$, define the strict monoidal category $F\bb{C}$  to have:
	\begin{enumerate}
		\item Objects: $\ob{F\bb{C}}$ is the free monoid on $\ob{\bb{C}}$
		\item Hom-sets: A typical morphism $t: [A_1,\ldots, A_n]\to [B_1,\ldots, B_m]$ consists of an $m-$partition of $n$: $P= (1=i_0\leq \ldots \leq i_m = n)$ and an $m-$tuple $(t_1,\ldots, t_m)$ where $t_i: A_{i-1},\ldots, A_{i}\to B_i $ is a multimorphism in $\bb{C}$.	
	\end{enumerate}
	
	This is just a monoidal counterpart to the construction of the free cartesian category on a clone sketched in Lemma \ref{free-cart}. We then obtain a diagram similar to \ref{adjunctions}: 
	
	\begin{equation}
		\begin{tikzcd}[cramped]
			&& {\category{StrictMonCat}} \\
			\\
			{\category{MultiCat}} \\
			\\
			&& {\category{Cat}}
			\arrow[""{name=0, anchor=center, inner sep=0}, "{\overline{(-)}}"{description}, curve={height=18pt}, from=3-1, to=5-3]
			\arrow[""{name=1, anchor=center, inner sep=0}, "J"{description}, curve={height=18pt}, from=5-3, to=3-1]
			\arrow[""{name=2, anchor=center, inner sep=0}, "F"{description}, curve={height=18pt}, from=3-1, to=1-3]
			\arrow[""{name=3, anchor=center, inner sep=0}, "U"{description}, curve={height=24pt}, from=1-3, to=3-1]
			\arrow[""{name=4, anchor=center, inner sep=0}, "{U'}"{description}, from=1-3, to=5-3]
			\arrow[""{name=5, anchor=center, inner sep=0}, "{F'}"{description}, curve={height=30pt}, from=5-3, to=1-3]
			\arrow["\dashv"{anchor=center, rotate=-146}, draw=none, from=1, to=0]
			\arrow["\dashv"{anchor=center, rotate=-180}, draw=none, from=5, to=4]
			\arrow["\dashv"{anchor=center, rotate=152}, draw=none, from=2, to=3]
		\end{tikzcd}
	\end{equation}
	where $U' := \overline{(-)}\circ U$ and $F'= F\circ J$. As with clones, the functor $U'$ is the forgetful functor that forgets monoidal structure, and $F'$ is the usual free strict monoidal category functor which takes a category $\mcal{C}$ to the monoidal category $F'\mcal{C}$ with objects set equal to the free monoid on $\ob{\mcal{C}}$ and hom-sets given by 
	\begin{align*}
		\Hom{F'\mcal{C}}{[A_1,\ldots, A_n]}{[B_1,\ldots, B_m]} = \begin{cases}
			\emptyset,& \text{if $n \neq m$ }\\
			\Hom{\mcal{C}}{A_1}{B_1}\times\ldots \times \Hom{\mcal{C}}{A_n}{B_n},& \text{otherwise}
		\end{cases}
	\end{align*}

	\subsection{Algebras for Multicategories}

	Multicategories serve as a `meta-compositional structure'  in that they define a theory of composition. For example, \cite[Chapter 6]{act} defines the multicategory of \category{Cospans}, which serve as a theory of composition for circuit diagrams.
	An algebra for a multicategory, then, is a realization of the theory of composition. For example, algebras for  the multicategory \category{Cospans} are actual circuit diagrams.

	\begin{definition}{\cite[Definition 2.15]{gould}}
		An algebra for a multicategory $\bb{C}$ in a multicategory $\bb{D}$ is a morphism of multicategories, $A: \bb{C}\to \bb{D}$. 
		
		We can specialize to the multicategory of sets just like in the case of clones to obtain: A $\bb{C}$-algebra for a multicategory $\bb{C}$ is a multicategory morphism, $A: \bb{C}\to M\Cl{\category{Set}}$.		
	\end{definition}
	
	To define the category of algebras, we need the notion of a transformation, which was hinted in Remark \ref{transformation}.
	\begin{definition}[\cite{hermida, gould, psaville,leinster}]
		Let $f,g: \bb{C}\to \bb{D}$ be a morphism of multicategories. A transformation $\eta: f\Rightarrow g$ is given by a collection of morphisms, $\{\eta_A \in \Mhom{\bb{D}}{fA}{gA}\}_{A\in \ob{\bb{C}}}$ such that for all multimaps, $t: A_1,\ldots, A_n\to B$, the following equation holds:
		\[
		\eta_B \langle f(t)\rangle = g(t)\langle \eta_{A_1},\ldots, \eta_{A_n}\rangle
		\] 
	\end{definition}
	
	As before, there is a category $\Hom{\category{MultiCat}}{\bb{C}}{\bb{D}}$ of morphisms of multicategories, $f: \bb{C}\to \bb{D}$ and transformations between them. Though this fact is enough for our purposes, we mention that this notion of transformation makes the category $\category{MultiCat}$ into a 2-category \cite{gould, hermida}.

	\begin{definition}
		Let $\bb{C}$ be a multicategory.
		\begin{enumerate}
			\item  A morphism of $\bb{C}$-algebras $A,B: \bb{C}\to M\Cl{\category{Set}}$ is a transformation betweeen them.
			\item The category $\bb{C}$-\category{Alg} is the category of $\bb{C}$-algebras, and morphisms between them.
		\end{enumerate} 	 
	\end{definition}
	
	Hence, for every multicategory $\bb{C}$, it is possible to define the category  $\bb{C}$-\category{Alg} in a manner analogous to the way it is done for abstract clones. 
	
	\section{Algebras for Multi-Ary Structures}
	
	With clones and multicategories serving as templates, we  state a couple of requirements that a general category of multi-ary structures, M-\category{Struct} should have, so that a category of algebras can be defined:
	\begin{enumerate}
		\item The category M-\category{Struct} should have an appropriate notion of morphism and  of a 2-morphism, i.e, M-\category{Struct} should form a 2-category.
		
		\item There should be a canonical object, $\bb{M} \in $ M-\category{Struct}. 
	\end{enumerate}
	
	In the case of clones and multicategories (i.e,  M-\category{Struct} = \category{Clones} or M-\category{Struct}= \category{MultiCat}) condition (1) is satisfied by their respective notions of morphisms and transformations. Condition (2) is satisfied by the clone/multicategory of sets.
	
	Given these requirements,  the category of algebras for a specific multi-ary structure $\bb{C} \in$ M-\category{Struct} can be defined as the category of morphisms $\bb{C}\to \bb{M}$ and 2-morphisms between them, just like in the case of clones and multicategories.

\chapter{Multi-ary Structures for Effectful Programming}

\section{Premulticategories and their Algebras}
	
	\subsection{Premulticategories Model Effectful Languages}
	
	The simply typed $\lambda$-calculus is syntax for  pure computation: any program that can be constructed in the $\lambda$-calculus cannot interact with the environment, i.e, programs cannot throw exceptions, read or write text, etc . However, these features are required for a programming language to be useful
	
	Moggi introduced two notable extensions of the $\lambda$-calculus, namely the monadic metalanguage, $\lambda_{ml}$ and the computational lambda calculus, $\lambda_c$ to model such effectful computations. These can  be thought of as prototypes of the functional programming languages Haskell and OCaml respectively. 
	
	The language $\lambda_{ml}$ separates pure terms and effectful terms explicitly with a Kleisli triple, while in $\lambda_c$, every term is assumed to have an effect. There is one major problem with modelling  a language like $\lambda_c$ in a multicategory.		Assume that we are working in a multicategory. Then,  for a term $t: B_1,B_2\to C$ and terms $u_1: A_1\to B_1$ and $u_2: A_2\to B_2$, we have the following equality: 
	
	\begin{align*}
		t[u_1, id_{B_2}][ id_{A_1}, u_2] &= t[u_1[id_{A_1}], id_{B_2}[u_2]]\\& = t[id_{B_1}[u_1], u_2[id_{A_2}]]\\& = t[id_{B_1}, u_2][u_1, id_{A_2}]
	\end{align*}

	However, this equality does not hold when effects are taken into account: Let $t: \mathbb{N},\mathbb{N}\to \mathbb{N}$ be the function 
	$a,b\mapsto a+b$.  Let $u_1: \mathbb{N},\mathbb{N}\to \mathbb{N}$ be the function
	 that prints ``hello'', and acts on inputs as  $a,b\mapsto a$  	and  let $u_2: \mathbb{N}, \mathbb{N}\to \mathbb{N}$ be the function that prints ``world'' and acts on inputs as  $a,b\mapsto a$. Then, the function $t[u_1, id_{\mathbb{N}}][id_{\mathbb{N}}, u_2]$ is the function that adds two numbers and prints ``hello'' followed by ``world'', while $t[id_{\mathbb{N}}, u_2][u_1, id_{\mathbb{N}}]$ adds two numbers and prints ``world'' followed by ''hello''. Therefore, while working with a language like $\lambda_c$, it is important to note how substitution is sequenced. In the example, the term $t[u_1,u_2]$ could mean two different things: evaluate $u_1$ and then evaluate $u_2$ or evaluate $u_2$ and then evaluate $u_1$. Hence, we have to work in a language that has a `do nothing' program:
	\begin{prooftree}
		\AxiomC{$\vdash A:\text{Type}$}
		\UnaryInfC{$x:A\vdash x:A$}
	\end{prooftree}
	In the language, terms can be substituted only one at a time: 
	\begin{prooftree}
		\AxiomC{$\Gamma, x:B,\Gamma'\vdash t:C$}
		\AxiomC{$\Delta\vdash u: B$}
		\BinaryInfC{$\Gamma,\Delta,\Gamma'\vdash t[\Gamma,u,\Gamma']$}
	\end{prooftree}
	To deal with languages of this sort, Staton and Levy developed the notion of a premulticategory.
	
	\begin{definition}\cite{staton-levy}
		A premulticategory $\bb{C}$ consists of the following data:
		\begin{enumerate}
			\item A set of objects $\ob{\bb{C}}$,
			\item For every sequence of objects, $A_1,\ldots, A_n, B$, a set of terms $\Mhom{\bb{C}}{A_1,\ldots, A_n}{B}$,
			\item Identity morphisms, $id_A \in \Mhom{\bb{C}}{A}{A}$,
			\item A unary substitution operation:
			\begin{align*}
				\Mhom{\bb{C}}{\Gamma,A,\Gamma'}{B} \times \Mhom{\bb{C}}{\Delta}{A} &\to \Mhom{\bb{C}}{\Gamma,\Delta,\Gamma'}{B}\\
				t,u&\mapsto [\Gamma, u,\Gamma']
			\end{align*}
		\end{enumerate}
		such that the following laws hold:
		\begin{enumerate}
			\item Left unitality: For any $u: \Delta \to A$, 
			\[
			id_A[\emptyset, u,\emptyset] = u
			\]
			\item Right unitality: For any $t: \Gamma,A,\Gamma'\to B$, 
			\[
			t[\Gamma, id_A, \Gamma'] = t
			\]
			\item Associativity: For  $t: \Gamma_1,B,\Gamma_1'\to C$, $u: \Gamma_2,A,\Gamma_2'\to B$ and $v: \Delta\to A$, we have
			\[
			t[\Gamma_1, u,\Gamma_1'][\Gamma_1,\Gamma_2, v,\Gamma_2',\Gamma_1'] = t[\Gamma_1, u[\Gamma_2,v,\Gamma_2'], \Gamma']
			\]
		\end{enumerate}
	\end{definition}
	
	First, notice that every multicategory is a premulticategory. Given  morphisms $t: A_1,\ldots, A_i,\ldots, A_n \to B$ and $u: \Delta\to A_i$, define
	\[
	t[A_1,\ldots, A_{i-1}, u,A_{i+1},\ldots,A_n] : = t\langle id_{A_1},\ldots, id_{A_i}, u, id_{A_{i+1}},\ldots, id_{A_n}\rangle 
	\]
	However, as observed above, a multicategory also satisfies the property that the order of substitution does not matter. There may be other such terms in a premulticategory for which the order of substitution does not matter, and these are called central elements \cite{notions-of-computation}. 
	\begin{definition}\cite{staton-levy}\label{central-term}
		Let $\bb{C}$ be a premulticategory, and let $u: \Gamma\to A$ and  $u': \Gamma'\to B$ be morphisms. 
		\begin{enumerate}
			\item The morphims $u$  and $u'$ are said to commute with each other if for any chosen term $t: \Delta_1, A,\Delta_2,B,\Delta_3\to C$, and any chosen term $t': \Delta_1',B,\Delta_2', A,\Delta_3'\to C$, there are equalities:			
			\begin{align*}
				t[\Delta_1, u,\Delta_2,B,\Delta_3][\Delta_1,\Gamma, \Delta_2, v,\Delta_3]& = t[\Delta_1,A,\Delta_2,v,\Delta_3][\Delta_1,u,\Delta_2,\Gamma',\Delta_3]\\			
				t'[\Delta_1', B,\Delta_2', u, \Delta_3'][\Delta_1', v, \Delta_2', \Gamma,\Delta_3' ] &= t'[\Delta_1', v,\Delta_2', A,\Delta_3'][\Delta_1',\Gamma', \Delta_2', u,\Delta_3']
			\end{align*} 
			This is essentially saying that $u$ and $u'$ commute with each other if it does not matter in which order they are substituted in another term. 
			
			\item The morphism $u$ is said to be central if it commutes with every other morphism in $\bb{C}$.			
		\end{enumerate}
	\end{definition}
	\begin{example}
		The identity morphism $id_A:A\to A$ in a premulticategory is always central.
	\end{example}

	A multicategory may be characterized as a premulticategory where all the morphisms are central \cite{staton-levy}. This was how Lambek initially defined a multicategory \cite{ded-2}. This fact can easily be deduced from Lemma \ref{multicat-assoc} which is proved in Chapter \ref{chapter5}. For now we note that given a morphism $t: A_1,\ldots, A_n\to B$ in a premulticategory and a collection of central morphisms $u_i: \Delta_i\to A_i$ for $1\leq i\leq n$, we can define a simultaneous substitution operation as:
	\[
	t\langle u_1,\ldots, u_n\rangle = t[\emptyset, u_1,A_2,\ldots, A_n][\Delta_1, u_2,A_3,\ldots, A_n]\ldots[\Delta_1,\ldots, \Delta_{n-1}, u_n,\emptyset]
	\]	
	and this operation satisfies the same axioms of a multicategory. In particular if $u_1,\ldots, u_n$ and $v^1_1,\ldots, v^{k_1}_1,\ldots, v^{1}_n,\ldots, v^{k_n}_n$ are a collection of central morphisms, 
	\begin{align}\label{mult-assoc}
		t\langle u_1,\ldots, u_l\rangle \langle v^1_1,\ldots, v^{k_1}_1,\ldots, v^1_l,\ldots, v^{k_l}_l\rangle  = t\langle u_1\langle v^1_1,\ldots, v^{k_1}_1\rangle,\ldots, u_l\langle v^1_l,\ldots, v^{k_l}_l\rangle \rangle \
	\end{align}

	The notion of centrality behaves analogously to the way it would in common algebraic structures.

	\begin{lemma}\label{cent-closed}
		Let $\bb{C}$ be a premulticategory. If $t: \Gamma,A,\Gamma'\to B$ and $u: \Delta\to A$ are both central, so is $\theta:= t[\Gamma,u,\Gamma']: \Gamma,\Delta,\Gamma'\to B$.
	\end{lemma}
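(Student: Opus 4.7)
The plan is to fix an arbitrary morphism $v : \Gamma'' \to C$ in $\bb{C}$ and show that $\theta$ commutes with $v$ in the sense of Definition \ref{central-term}. Since $\theta$ is built as the substitution $t[\Gamma, u, \Gamma']$, and both $t$ and $u$ are (by hypothesis) individually known to commute with $v$, the strategy is to decompose any substitution involving $\theta$ using associativity, migrate $v$ past the occurrences of $u$ and $t$ one at a time using their centrality, and then recompose.

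More concretely, pick an arbitrary term $s : \Delta_1, B, \Delta_2, C, \Delta_3 \to D$ and consider the left-hand side of the first commutation equation, namely
\[
s[\Delta_1, \theta, \Delta_2, C, \Delta_3][\Delta_1, \Gamma, \Delta, \Gamma', \Delta_2, v, \Delta_3].
\]
First I would apply the associativity axiom to rewrite $s[\Delta_1, \theta, \Delta_2, C, \Delta_3]$ as $s[\Delta_1, t, \Delta_2, C, \Delta_3][\Delta_1, \Gamma, u, \Gamma', \Delta_2, C, \Delta_3]$, exposing the $u$-substitution. Setting $\tilde{s} := s[\Delta_1, t, \Delta_2, C, \Delta_3]$, the left-hand side becomes the triple substitution $\tilde{s}[\Delta_1, \Gamma, u, \Gamma', \Delta_2, C, \Delta_3][\Delta_1, \Gamma, \Delta, \Gamma', \Delta_2, v, \Delta_3]$. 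Since $u$ is central, it commutes with $v$ as substituted into $\tilde{s}$, so I can swap the last two steps. This moves $v$ inside, leaving $u$ as the final substitution. Next I peel off $\tilde{s}$ and use centrality of $t$ in the outer term $s$ to swap $t$ past $v$. Finally I invoke associativity in the reverse direction to fold the two surviving $u$- and $t$-substitutions back into a single $\theta$-substitution, yielding the right-hand side $s[\Delta_1, B, \Delta_2, v, \Delta_3][\Delta_1, \theta, \Delta_2, \Gamma'', \Delta_3]$.

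The second equation of Definition \ref{central-term}, for a term $s'$ in which the $C$-slot occurs to the \emph{left} of the $B$-slot, is handled by the symmetric argument: decompose $\theta$ via associativity, migrate $v$ past $u$ and then past $t$ using their centrality on that side, and refold. Once both equations are established for every $v$, $s$ and $s'$, $\theta$ is central by definition.

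The bookkeeping of context strings is the main potential pitfall: one must verify at each step that the slot into which we substitute $v$ really is a distinct slot from the one occupied by $u$ or $t$, so that the centrality hypotheses of Definition \ref{central-term} apply literally (rather than requiring some generalisation). The argument itself is otherwise algebraically routine, being essentially a two-step commutation past a composite, and the proof reduces to three applications of associativity interleaved with two applications of centrality.
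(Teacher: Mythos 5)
Your proposal is correct and follows essentially the same route as the paper's proof: expand $\theta$ into the $t$-substitution followed by the $u$-substitution via associativity, commute $v$ past $u$ and then past $t$ using their respective centrality, and refold by associativity. Your explicit note that the second commutation equation is handled symmetrically is if anything slightly more careful than the paper, which only writes out the first.
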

	\begin{proof}
		Let $r: \Delta_1,B,\Delta_2,C,\Delta_3\to D$ and $v: \Lambda\to C$ be arbitrary multimaps. Then, we need to show the equality: 
		\begin{align*}
			&r[\Delta_1, \theta, \Delta_2,C,\Delta_3][\Delta_1, \Gamma,A,\Gamma', \Delta_2, v, \Delta_3]\\
			& = r[\Delta_1,B,\Delta_2,v,\Delta_3][\Delta_1, \theta, \Delta_2,\Lambda,\Delta_3]
		\end{align*}
		
		Expanding:
		\begin{align*}
			&r[\Delta_1, \theta, \Delta_2,C,\Delta_3][\Delta_1, \Gamma,\Delta,\Gamma', \Delta_2, v, \Delta_3] \\
			& = r[\Delta_1, t[\Gamma,u,\Gamma'], \Delta_2,C,\Delta_3][\Delta_1, \Gamma,\Delta,\Gamma', \Delta_2, v, \Delta_3] \\
			& = r[\Delta_1,t,\Delta_2,C,\Delta_3][\Delta_1,\Gamma,u,\Gamma',\Delta_2,C,\Delta_3][\Delta_1, \Gamma,\Delta,\Gamma', \Delta_2, v, \Delta_3] 
		\end{align*}
		
		Let $s =r[\Delta_1,t,\Delta_2,C,\Delta_3]: \Delta_1,\Gamma,A,\Gamma',\Delta_2,C,\Delta_3\to D $. Then, since $u$ is central, 
		\begin{align*}
			& = s[\Delta_1,\Gamma,u,\Gamma',\Delta_2,C,\Delta_3][\Delta_1, \Gamma,\Delta,\Gamma', \Delta_2, v, \Delta_3]  \\
			&= s[\Delta_1,A,\Gamma', \Delta_2, v,\Delta_3][\Delta_1,\Gamma,u,\Gamma',\Delta_2,\Lambda,\Delta_3]\\
			& = r[\Delta_1,t,\Delta_2,C,\Delta_3][\Delta_1,A,\Gamma', \Delta_2, v,\Delta_3][\Delta_1,\Gamma,u,\Gamma',\Delta_2,\Lambda,\Delta_3]
		\end{align*}
		Now since $t$ is central, 
		\begin{align*}
			r[\Delta_1,t,\Delta_2,C,\Delta_3][\Delta_1,A,\Gamma,\Delta_2,v,\Delta_3] = r[\Delta_1,B,\Delta_2,v,\Delta_3][\Delta_1,t,\Delta_2,\Lambda,\Delta_3]
		\end{align*}
		So, the above chain of equalities continues as:
		\begin{align*}
			&r[\Delta_1, \theta, \Delta_2,C,\Delta_3][\Delta_1, \Gamma,\Delta,\Gamma', \Delta_2, v, \Delta_3]\\
			& = r[\Delta_1,B,\Delta_2,v,\Delta_3][\Delta_1,t,\Delta_3,\Lambda,\Delta_3][\Delta_1,\Gamma,u,\Gamma',\Delta_2,\Lambda,\Delta_3]\\
			& = r[\Delta_1,B,\Delta_2,v,\Delta_3][\Delta_1,t[\Gamma,u,\Gamma'], \Delta_2,\Lambda,\Delta_3]\\
			& = r[\Delta_1,B,\Delta_2,v,\Delta_3][\Delta_1,\theta, \Delta_2,\Lambda,\Delta_3]
		\end{align*}
		which proves the equality. 
	\end{proof}

	This allows us to consider the subset of central morphisms of a premulticategory as a separate entity. 
	\begin{definition}
		Let $\bb{C}$ be a premulticategory. The centre of $\bb{C}$,  $Z(\bb{C})$ has the same objects as $\bb{C}$, and has hom-sets given by:
		\[
		\Mhom{Z\bb{C}}{A_1,\ldots, A_n}{B} = \{t: A_1,\ldots, A_n\to B \mid t \text{ is central}\}
		\]
		Since $Z\bb{C}$ has all central morphisms, it is a multicategory.
	\end{definition}

	\begin{example}
		Note that the multicategory of sets forms a premulticategory. A premulticategory that is not a multicategory is the premulticategory of stateful functions \cite{staton-levy}. For a fixed set of `states', $S$ define the premulticategory $\bb{C}_S$ as:
		\begin{enumerate}
			\item Objects: Sets
			\item Hom-sets: These are defined as:
			\[
			\Mhom{\bb{C}_S}{A_1,\ldots, A_n}{B} = \Hom{\category{Set}}{(A_1\times\ldots\times A_n)\times S}{B\times S}
			\]
			We write elements $t: A_1,\ldots, A_n\to B$ as tuples $(t, t_s)$ where $t: A_1\times\ldots\times A_n\times S\to B$ and $t_S: A_1\times \ldots \times A_n\times S\to S$. So a typical morphism changes is a program that takes an input and gives an output, while also taking an initial state and returning the state after the program has been executed.

			\item Identities: $id_A$ is the identity map, $A\times S\to A\times S$. 
			
			\item Substitution: If $(t,t_S): \Gamma,A,\Gamma'\to B$ and $(u,u_S): \Delta\to A$ are maps, then, the map obtained by substitution is the function that acts on terms as:
			\[
			(\vec{a}, \vec{b},\vec{c}, s) \mapsto t(\vec{a}, u(\vec{b},s), \vec{c},u_S(\vec{b},s))
			\]
			and acts on states as:
			\[
			(\vec{a}, \vec{b}, \vec{c}, s) \mapsto t_S(\vec{a}, u(\vec{b},s), \vec{c},u_S(\vec{b},s))
			\]
		\end{enumerate}
	\end{example}

	As with  multicategories and clones, restricting a premulticategory to its unary contexts gives a category. Before we show that this defines a functor, we recall the notion of a morphism of premulticategories.
	\begin{definition}\cite{staton-levy}
		A morphism of premulticategories, $f: \bb{C}\to \bb{D}$ consists of
		\begin{itemize}
			\item A function, $f_{ob}: \ob{\bb{C}}\to \ob{\bb{D}}$
			\item A family of maps, $f_{\Gamma;A}: \Mhom{\bb{C}}{A_1,\ldots, A_n}{B}\to \Mhom{\bb{D}}{fA_1,\ldots, fA_n}{fB}$
		\end{itemize}
		such that the following axioms are satisfied:
		\begin{enumerate}
			\item Identities are preserved: $f(id_A) = id_{fA}$.
			\item Composition is preserved: $f(t[\Gamma, u,\Gamma']) = f(t)[f\Gamma, f(u), f\Gamma']$.
		\end{enumerate}
		The category of premulticategories and their morphisms forms a category \category{PreMultCat}.
	\end{definition}

	\begin{lemma}
		There is a mapping $\overline{(-)}: \category{PreMultCat}\to\category{Cat} $ that restricts a premulticategory to its unary contexts, and this functor has a left adjoint, $J\dashv \overline{(-)}$.
	\end{lemma}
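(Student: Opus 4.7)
The plan is to mirror the proof of Lemma \ref{disc-mult} for multicategories almost verbatim, since the notion of substitution in a premulticategory is unary and therefore happens to coincide with ordinary categorical composition whenever the context $\Gamma,\Gamma'$ surrounding the substituted variable is empty. Explicitly, given a category $\mcal{C}$, I would define the premulticategory $J\mcal{C}$ to have $\ob{J\mcal{C}} := \ob{\mcal{C}}$ and hom-sets
\[
\Mhom{J\mcal{C}}{A_1,\ldots,A_n}{B} := \begin{cases} \Hom{\mcal{C}}{A_1}{B}, & n=1 \\ \emptyset, & \text{otherwise} \end{cases}
\]
with identities $id_A \in \Hom{\mcal{C}}{A}{A} = \Mhom{J\mcal{C}}{A}{A}$, and substitution inherited from composition in $\mcal{C}$.

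Next I would verify that $J\mcal{C}$ actually satisfies the axioms of a premulticategory. The key observation making this essentially automatic is that substitution $t[\Gamma, u, \Gamma']$ can only be non-trivial when $t: A \to B$ (so $\Gamma = \Gamma' = \emptyset$) and $u: A' \to A$ (so that the resulting hom-set is also unary), in which case the substitution is simply the composite $t \circ u$ in $\mcal{C}$. Hence the left/right unitality and associativity axioms of a premulticategory reduce to the unitality and associativity of composition in $\mcal{C}$. On morphisms, a functor $F: \mcal{C} \to \mcal{D}$ gives rise to a premulticategory morphism $JF: J\mcal{C} \to J\mcal{D}$ acting as $F$ on objects and on unary hom-sets, and trivially on empty hom-sets; preservation of identities and substitution is immediate.

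Finally I would prove the adjunction. Note that $\overline{J\mcal{C}} = \mcal{C}$ holds on the nose, so the unit $\eta_{\mcal{C}}: \mcal{C} \to \overline{J\mcal{C}}$ is just the identity functor. For universality, given a premulticategory $\bb{D}$ and a functor $F: \mcal{C} \to \overline{\bb{D}}$, the only candidate premulticategory morphism $F^*: J\mcal{C} \to \bb{D}$ sends $A \mapsto FA$ on objects, acts as $F$ on the unary hom-sets, and is forced to act trivially on the empty hom-sets. Uniqueness is automatic because a premulticategory morphism is completely determined by its action on non-empty hom-sets, which for $J\mcal{C}$ are precisely the unary ones; and there $F^*$ must equal $F$ in order to satisfy $\overline{F^*} \circ \eta_{\mcal{C}} = F$.

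There is no real obstacle in this proof: the construction is the ``discrete'' one where all higher-arity contexts are left empty, and every axiom reduces to an axiom of $\mcal{C}$ or holds vacuously. The only minor subtlety worth flagging in the write-up is that substitution in $J\mcal{C}$ is defined at all, which requires checking that the arities line up (the result of substituting a unary $u$ into a unary $t$ lands back in a unary hom-set), but this is immediate from the shape of the definition.
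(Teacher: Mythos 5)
Your proposal is correct and matches the paper's proof: the paper likewise defines $\overline{(-)}$ by restriction to unary contexts and takes $J\mcal{C}$ to be the discrete multicategory from Lemma \ref{disc-mult} (viewed as a premulticategory), declaring the universality straightforward. You simply spell out the verification that the paper leaves implicit.
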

	\begin{proof}
		For a premulticategory $\bb{C}$, define the category $\overline{\bb{C}}$ to have $\ob{\overline{\bb{C}}} = \ob{\bb{C}}$ and to have hom-sets $\Hom{\overline{\bb{C}}}{A}{B} = \Mhom{\bb{C}}{A}{B}$. For premulticategory morphism $f: \bb{C}\to \bb{D}$, define the functor $\overline{f}$ to act on objects and hom-sets just like how it does in \category{PreMultCat}.
		For a category $\mcal{C}$, define $J\mcal{C}$ to be the discrete multicategory on a category (Lemma \ref{disc-mult}). The universality of $J\mcal{C}$ is straightforward to verify.	
	\end{proof}

	\subsection{Premulticategories and Premonoidal Categories}
	
	Premonoidal categories are structures that model effectful languages with product types. Given a  pair $(a,b)$ in an effectful language, there are two distinct ways to evaluate it: evaluate $a$ first and then evaluate $b$ or the other way around. Hence, if a category $\bb{C}$ interprets effectful languages with product types, there need to be functors:
	\[
	\ob{\bb{C}}\times \bb{C} \to \bb{C}, \quad\text{ and }\quad \bb{C}\times \ob{\bb{C}}\to \bb{C}
	\]
	where the first functor corresponds to evaluating the element on the right of a pair, and the second one corresponds to evaluating an element on the left of a pair. This kind of 
	`sesquiness' is the essential element in a premonoidal category.

	First, note that a strict monoidal category is a monoid in the cartesian monoidal category $\category{Cat}$ of small categories and their functors. Similarly, it is possible to give \category{Cat} another monoidal structure with the funny tensor. 
	\begin{definition}
		Let $\mcal{C}$ and $\mcal{D}$ be small categories. Define their funny tensor $\mcal{C}\fun\mcal{D}$ to be the pushout: 
		\[\begin{tikzcd}[cramped]
			{\ob{\mcal{C}}\times \ob{\mcal{D}}} && {\ob{\mcal{C}}\times \mcal{D}} \\
			\\
			{\mcal{C}\times \ob{\mcal{D}}} && {\mcal{C}\fun \mcal{D}}
			\arrow["{i_{\mcal{C}}\times 1}"', from=1-1, to=3-1]
			\arrow[""{name=0, anchor=center, inner sep=0}, "{1\times i_{\mcal{D}}}", from=1-1, to=1-3]
			\arrow["{j_1}"', from=3-1, to=3-3]
			\arrow["{j_2}", from=1-3, to=3-3]
			\arrow["\lrcorner"{anchor=center, pos=0.125, rotate=180}, draw=none, from=3-3, to=0]
		\end{tikzcd}\]
		where $i_{\mcal{D}}$ is the canonical functor, $\ob{\mcal{D}}\to \mcal{D}$. 
	\end{definition}

	\begin{proposition}\cite{alg-catkelly} 
		The funny tensor defines a symmetric monoidal closed structure  with unit $1$ (the terminal category) on \category{Cat}. Further, it is the only other closed monoidal structure on $\category{Cat}$ other than the cartesian monoidal structure. 
	\end{proposition}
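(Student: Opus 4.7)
The plan is to split the statement into two parts: first the construction of the symmetric monoidal closed structure, then the uniqueness. For the first part I would give an explicit description of $\mcal{C}\fun\mcal{D}$ and then verify the axioms concretely; for the second I would invoke the classification theorem of Foltz--Lair--Kelly that underlies the cited reference.

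I would begin by unpacking the pushout. A concrete description is that $\mcal{C}\fun\mcal{D}$ has objects $\ob{\mcal{C}}\times\ob{\mcal{D}}$, and its morphisms are freely generated by the ``horizontal'' arrows $(f,1_D)\colon(C,D)\to(C',D)$ for $f\colon C\to C'$ in $\mcal{C}$ and the ``vertical'' arrows $(1_C,g)\colon(C,D)\to(C,D')$ for $g\colon D\to D'$ in $\mcal{D}$, subject only to the composition and unit laws coming from $\mcal{C}$ and $\mcal{D}$ separately; crucially, no interchange law $(f,1)(1,g)=(1,g)(f,1)$ is imposed. From this description symmetry is immediate by swapping factors, and the identification $1\fun\mcal{C}\cong\mcal{C}\cong\mcal{C}\fun 1$ is likewise clear since one of the two generating classes of arrows collapses. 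Associativity $(\mcal{A}\fun\mcal{B})\fun\mcal{C}\cong\mcal{A}\fun(\mcal{B}\fun\mcal{C})$ follows because both sides have triples of objects and are freely generated by arrows that vary exactly one coordinate, with no commutation between the three directions; the pentagon and triangle axioms then reduce to routine checks on generators.

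For closedness I would construct the internal hom $[\mcal{C},\mcal{D}]_f$ as the category whose objects are functors $\mcal{C}\to\mcal{D}$ and whose morphisms $F\to G$ are \emph{unnatural} transformations, i.e.\ families $\{\alpha_C\colon FC\to GC\}_{C\in\ob{\mcal{C}}}$ with no naturality square required. The exponential adjunction $\category{Cat}(\mcal{A}\fun\mcal{B},\mcal{C})\cong\category{Cat}(\mcal{A},[\mcal{B},\mcal{C}]_f)$ is then established by noting that a functor out of $\mcal{A}\fun\mcal{B}$ is exactly a family of functors $\mcal{B}\to\mcal{C}$ indexed by $\ob{\mcal{A}}$ together with a family of arrows $FA\to FA'$ in $[\mcal{B},\mcal{C}]_f$ for each $f\colon A\to A'$ in $\mcal{A}$, with no interaction conditions between the two; this is precisely the data of a functor $\mcal{A}\to[\mcal{B},\mcal{C}]_f$. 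Naturality of the bijection in all three arguments is then immediate from the universal property of the pushout.

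For the uniqueness claim I would defer to the classification theorem of Foltz, Lair, and Kelly, which shows that up to isomorphism the only biclosed monoidal structures on $\category{Cat}$ are the cartesian product and the funny tensor. The main obstacle is genuinely this second part: constructing the funny monoidal closed structure is a largely mechanical verification once the generators-and-relations description of the pushout is in hand, but proving uniqueness requires the nontrivial structural analysis of biclosed structures on $\category{Cat}$ carried out in the cited reference, and I would present this as a citation rather than reprove it here.
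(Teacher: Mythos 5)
The paper offers no proof of this proposition at all --- it is stated purely as a citation to the Foltz--Lair--Kelly result --- so your proposal goes further than the text by actually sketching the existence half. Your sketch is correct and is the standard argument: the pushout defining $\mcal{C}\fun\mcal{D}$ is along the bijective-on-objects inclusion of the discrete subcategory, so it does admit the generators-and-relations description you give (morphisms freely generated by arrows moving one coordinate, with no interchange), and from there symmetry, unitality and associativity are routine. Your internal hom with \emph{unnatural} transformations is the right one, and the adjunction argument via the cocone description of functors out of the pushout is sound; the only slip is a wording one --- for each $f\colon A\to A'$ a functor out of $\mcal{A}\fun\mcal{B}$ supplies a \emph{single} arrow $FA\to FA'$ of $[\mcal{B},\mcal{C}]_f$ (which is itself a family of arrows of $\mcal{C}$ indexed by $\ob{\mcal{B}}$), not a family of arrows of $[\mcal{B},\mcal{C}]_f$. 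You are also right that the uniqueness claim cannot be reproved in passing: it rests on the Foltz--Lair--Kelly classification of biclosed monoidal structures on $\category{Cat}$, and deferring to the citation there is exactly what the paper itself does for the whole proposition.
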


	A strict premonoidal category is a monoid in this monoidal category \cite{notions-of-computation}. Explicitly, using the universal property of the pushout, this boils down to the following data: 
	\begin{enumerate}
		\item Two bifunctors:
		\[- \rtimes -:  \ob{\mcal{C}}\times \mcal{C}\to \mcal{C}, \quad \text{ and } \quad - \ltimes -: \mcal{C}\times \ob{\mcal{C}}\to \mcal{C}\]
		\item An object $I \in \mcal{C}$
	\end{enumerate}
	such that the following conditions are satisfied: 
	\begin{enumerate}
		\item The bifunctors agree on objects: $A\rtimes B = A\ltimes B$ for all $A,B\in \ob{\mcal{C}}$. We denote by $A\otimes B$ the image of $(A,B)$ under these bifunctors.
		\item  Unitality: For all $	A\in \ob{\mcal{C}}$, $f: A\to A'$ and 
		\begin{align*}
			I\otimes A = A \quad &\text{ and } \quad A\otimes I = A\\
			id_I\rtimes f = f \quad&\text{ and } \quad f\ltimes id_I = f
		\end{align*}
		\item Associativity: For all $A,B,C\in \ob{\mcal{C}}$:
		\begin{align*}
			(A\otimes B)\times C &= A\otimes (B\otimes C)
		\end{align*}
		For morphisms, $f: A\to A'$, $g: B\to B'$ and $h: C\to C'$, we have equalities:
		\begin{align*}
			(f \ltimes id_B) \ltimes id_C &= f \ltimes id_{B\otimes C}\\
			(id_A \rtimes g)\ltimes id_C & = id_A \rtimes (g\ltimes id_C)\\
			id_A\rtimes (id_B\rtimes h) &=(id_{A\otimes B})\rtimes h
		\end{align*}
		We denote each of the composites above by $f\otimes B\otimes C$, $A\otimes g\otimes C$ and $A\otimes B\otimes h$ respectively.
	\end{enumerate}
	It is possible to iterate tensoring to obtain expressions like $A_1\otimes\ldots A_{i-1}\otimes u\otimes A_{i+1}\otimes\ldots \otimes A_n$ when given some $u: A_i'\to A_i$. The strictness condition ensures that this expression is well-defined.

	\begin{example}

		Every strict premonoidal category $\mcal{C}$ defines a premulticategory $U\mcal{C}$ as follows:
		\begin{itemize}
			\item Objects: $\ob{U\mcal{C}} = \ob{\mcal{C}}$
			\item Hom-sets: $\Mhom{U\mcal{C}}{A_1,\ldots, A_n}{B} = \Hom{\mcal{C}}{A_1\otimes \ldots \otimes A_n}{B}$.
			\item Identities are as in $\mcal{C}$: $id_{A} \in \Mhom{U\mcal{C}}{A}{A}$.
			\item Composition: For a context $\Gamma =A_1,\ldots, A_n$ in $U\mcal{C}$, we write $|\Gamma|$ for the tensor $A_1\otimes\ldots\otimes A_n$.	
			If $t:\Gamma,A,\Gamma'\to B$ is a morphism, and $u: \Delta \to A_i$ is a morphism, then define:
			\begin{align*}
				t[\Gamma, u,\Gamma'] 		&:=t\circ (|\Gamma|\otimes u\otimes |\Gamma'|)
			\end{align*}
		\end{itemize}

		This could have been defined with the notion of a (non-strict) premonoidal category too. We use the strict version since they are simpler to understand in the context of premulticategories.
	\end{example}
	There is also a corresponding notion of \textit{centrality} in a premonoidal category, which is linked to the notion of centrality in a premulticategory.
	\begin{definition}\cite{notions-of-computation}
		Let $\mcal{C}$ be a strict premonoidal category.
		\begin{enumerate}
			\item A morphism $f: A\to A'$ commutes with a morphism $g: B\to B'$ if the following diagrams commute:
			\[\begin{tikzcd}[cramped]
				{A\otimes B} && {A'\otimes B} && {B\otimes A} && {B\otimes A'} \\
				\\
				{A\otimes B'} && {A'\otimes B'} && {B'\otimes A} && {B'\otimes A'}
				\arrow["{f\otimes B}", from=1-1, to=1-3]
				\arrow["{A\otimes g}"', from=1-1, to=3-1]
				\arrow["{f\otimes B'}"', from=3-1, to=3-3]
				\arrow["{A'\otimes g}", from=1-3, to=3-3]
				\arrow["{B\otimes f}", from=1-5, to=1-7]
				\arrow["{g\otimes A}"', from=1-5, to=3-5]
				\arrow["{B'\otimes f}"', from=3-5, to=3-7]
				\arrow["{g\otimes A'}", from=1-7, to=3-7]
			\end{tikzcd}\]
			This corresponds to saying that evaluating either element in a pair first has no result on the final computation.
			
			\item A morphism $f: A\to A'$ is said to be central if it commutes with every other morphism in $\mcal{C}$.			
		\end{enumerate} 	
	\end{definition}
	
	\begin{example}
		The identity maps $id_A: A\to A$ in a premonoidal category $\mcal{C}$ are always central.
	\end{example}

	It is possible to define the centre $Z(\mcal{C})$ of a premonoidal category $\mcal{C}$ as containing all objects of $\mcal{C}$, and all the central morphisms of $\mcal{C}$. Just like how the centre of a premulticategory is a multicategory, the centre of a premonoidal category is a monoidal category \cite{notions-of-computation}. In the construction of a premulticategory from a premonoidal category, these two notions of centrality correspond.

	\begin{proposition}\cite{staton-levy}\label{cent-to-cent}
		Let $\mcal{C}$ be a strict premonoidal category, and let $U\mcal{C}$ be the corresponding premulticategory. A morphism $t: \Gamma\to A$ in $U\mcal{C}$ is central if and only if it is central as a morphism in $\mcal{C}$.
	\end{proposition}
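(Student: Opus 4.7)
The plan is to unpack both notions of centrality in terms of the tensor product of $\mcal{C}$ and then exhibit a direct correspondence between the equations on each side. Throughout, I write $|\Gamma|$ for $A_1 \otimes \cdots \otimes A_n$ when $\Gamma = A_1, \ldots, A_n$, so that a morphism $t \colon \Gamma \to X$ in $U\mcal{C}$ is literally a morphism $t \colon |\Gamma| \to X$ in $\mcal{C}$. A routine unfolding of the definition of substitution in $U\mcal{C}$ shows that for $t \colon \Gamma \to X$, $t' \colon \Gamma' \to Y$ and $r \colon \Delta_1, X, \Delta_2, Y, \Delta_3 \to W$, the equation demanding that $t$ and $t'$ commute reduces after cancelling $r$ to the identity
\[
(|\Delta_1| \otimes t \otimes |\Delta_2| \otimes Y \otimes |\Delta_3|)\circ (|\Delta_1| \otimes |\Gamma| \otimes |\Delta_2| \otimes t' \otimes |\Delta_3|) = (|\Delta_1| \otimes X \otimes |\Delta_2| \otimes t' \otimes |\Delta_3|)\circ (|\Delta_1| \otimes t \otimes |\Delta_2| \otimes |\Gamma'| \otimes |\Delta_3|)
\]
in $\mcal{C}$, and analogously for the mirrored equation involving $r'$.

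For the ``if'' direction, suppose $t$ is central in $\mcal{C}$. To establish the displayed equation above for arbitrary $t', \Delta_1, \Delta_2, \Delta_3, Y$, I would apply the defining centrality square for $t$ not against $t'$ itself but against the whiskered morphism $|\Delta_2| \otimes t' \colon |\Delta_2| \otimes |\Gamma'| \to |\Delta_2| \otimes Y$. This yields
\[
(t \otimes (|\Delta_2| \otimes Y))\circ (|\Gamma| \otimes (|\Delta_2| \otimes t')) = (X \otimes (|\Delta_2| \otimes t'))\circ (t \otimes (|\Delta_2| \otimes |\Gamma'|)),
\]
and then functoriality of the sesqui-whiskerings $|\Delta_1| \rtimes (-)$ and $(-) \ltimes |\Delta_3|$ produces the required identity. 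The mirrored equation is obtained in exactly the same way, using the second centrality square of $t$ against $t' \otimes |\Delta_2|$.

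For the ``only if'' direction, suppose $t \colon \Gamma \to X$ is central in $U\mcal{C}$, and let $g \colon P \to Q$ be an arbitrary morphism of $\mcal{C}$; view it as a morphism $g \colon P \to Q$ in $U\mcal{C}$. Take the test morphism $r \colon X, Q \to X \otimes Q$ to be $\mathrm{id}_{X \otimes Q}$, with all side contexts empty. Unfolding both sides of the centrality equation of $t$ against $g$ in $U\mcal{C}$ with this particular $r$, I get exactly $(t \ltimes Q) \circ (|\Gamma| \rtimes g) = (X \rtimes g)\circ (t \ltimes P)$, which is one of the two premonoidal centrality squares for $t$ with respect to $g$. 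The other square arises by choosing instead $r' \colon Q, X \to Q \otimes X$ to be $\mathrm{id}_{Q \otimes X}$ and applying the mirrored equation.

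The main obstacle is really only notational: keeping track of the side contexts $\Delta_1, \Delta_2, \Delta_3$ and of the two mirror-image equations inherent in the premulticategorical definition of centrality. Once one uses strictness of $\mcal{C}$ to make the iterated tensor $|\Delta_1| \otimes t \otimes |\Delta_2| \otimes t' \otimes |\Delta_3|$ unambiguous whenever one of $t, t'$ is central, both directions become the same underlying observation, namely that $U(-)$ converts the sesqui-functorial whiskerings of $\mcal{C}$ into unary substitutions of $U\mcal{C}$.
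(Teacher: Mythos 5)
The paper does not prove this proposition at all --- it is imported from Staton and Levy by citation --- so there is no in-text argument to compare against; your proposal supplies the missing proof, and it is correct. The two load-bearing observations are exactly the right ones: (i) the quantification over test terms $r$ collapses, since the general case follows from the context-free identity by postcomposition, while the converse is obtained by instantiating $r$ at an identity (your phrase ``cancelling $r$'' should really be read this way --- one cannot literally cancel an arbitrary $r$, but taking $r=\mathrm{id}$ on the tensor of the context is legitimate because the codomain $C$ is also universally quantified); and (ii) each factor appearing in your displayed equation contains only one non-identity morphism, so it is unambiguous under the strictness axioms --- you correctly never need the genuinely ill-defined expression $|\Delta_1|\otimes t\otimes|\Delta_2|\otimes t'\otimes|\Delta_3|$ with two non-identity factors. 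The ``if'' direction via centrality of $t$ against the whiskered morphism $|\Delta_2|\rtimes t'$ followed by functoriality of $|\Delta_1|\rtimes(-)$ and $(-)\ltimes|\Delta_3|$, and the ``only if'' direction via empty contexts and identity test terms, both go through; the only thing worth making explicit in a final write-up is that the strict interchange axioms $(A\rtimes g)\ltimes C = A\rtimes(g\ltimes C)$ and $(f\ltimes B)\ltimes C = f\ltimes(B\otimes C)$ are what identify the whiskered centrality square with the desired substitution identity.
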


	We now extend the mapping $U$ to functors, and for this we need the  notion of a (strict) premonoidal functor. 
	\begin{definition}
		A strict premonoidal functor between strict premonoidal categories $\mcal{C}$ and $\mcal{D}$ is a functor $f: \mcal{C}\to \mcal{D}$ such that $f(A\otimes B) = f(A)\otimes f(B)$ and $f(I) = I$.
		
		The category \category{PreMonCat} is the category of strict premonoidal categories and strict premonoidal functors.
	\end{definition}
	\begin{remark}
		We do not ask for a premonoidal functor to preserve centrality so as to keep the analogy between premonoidal categories and premulticategories going. However, most sources require that centrality be preserved \cite{notions-of-computation}, \cite{premon-alg}. The reason this requirement is kept in these texts is because they work with general, non-strict premonoidal categories, and non-strict functors. To ensure two non-strict premonoidal functors compose, it is required for certain coherence isomorphisms to map to central ones \cite{roman}. However, since we are working with strict premonoidal categories, the coherence isomorphisms are the identities, which get mapped to identities and this ensures that strict premonoidal functors compose nicely. 
	\end{remark}

	A premonoidal functor, $f$	readily becomes a morphism of premulticategories, $Uf$, by setting:
	\begin{align*}
		\Mhom{U\mcal{C}}{A_1,\ldots, A_n}{B} &\to \Mhom{U\mcal{D}}{fA_1,\ldots, fA_n}{fB}\\
		t &\mapsto ft
	\end{align*}

	Hence, there is a functor, $U: \category{PreMonCat}\to \category{PreMultCat}$.	
	However, constructing a left adjoint $F$ does not seem to be a  straightforward construction involving sums and products of hom-sets (like in the case of clones and multicategories). We suspect that each homset of $F\bb{C}$ is rather some kind of colimit. This prevents us from obtaining a functor that fills in the following diagram: 
	\[\begin{tikzcd}[cramped]
		&& {\category{PreMonCat}} \\
		\\
		{\category{PreMultCat}} \\
		\\
		&& {\category{Cat}}
		\arrow[""{name=0, anchor=center, inner sep=0}, "{\overline{(-)}}"{description}, curve={height=18pt}, from=3-1, to=5-3]
		\arrow[""{name=1, anchor=center, inner sep=0}, "{\mcal{J}}"{description}, curve={height=18pt}, from=5-3, to=3-1]
		\arrow["{\text{U}}"{description}, curve={height=24pt}, from=1-3, to=3-1]
		\arrow[""{name=2, anchor=center, inner sep=0}, "{\overline{(-)}\circ U}"{description}, curve={height=12pt}, from=1-3, to=5-3]
		\arrow[""{name=3, anchor=center, inner sep=0}, "F"{description}, curve={height=30pt}, from=5-3, to=1-3]
		\arrow["\dashv"{anchor=center, rotate=-180}, draw=none, from=3, to=2]
		\arrow["\dashv"{anchor=center, rotate=-149}, draw=none, from=1, to=0]
	\end{tikzcd}\]
	
	Here, the functor $F$ is the free (strict) premonoidal category functor which is defined on a category $\mcal{C}$ as:
	\[
	\sum_{n=1}^{\infty} \fun_{i=1}^n \mcal{C}
	\]
	This exists by the general fact that forgetful functor from the category of monoids on a monoidal category with coproducts has a left adjoint \cite{categories-work}.

	\subsection{Algebras for Premulticategories}	
	
	We now turn to defining the notion of an algebra for a premulticategory. As with clones and multicategories, we can make the following definition.
	\begin{definition}
		An algebra $A$ for a premulticategory $\bb{C}$ in a premulticategory $\bb{D}$ is a premulticategory morphism $A:\bb{C}\to \bb{D}$.
	\end{definition}
	
	The premulticategory of sets, \category{Set} is actually a multicategory, and hence all morphisms in this is central. Saying that an algebra for a premulticategory is a premulticategory morphism into \category{Set} would force all the effectful terms in the syntax to become central, and hence we would lose important information about how effects get evaluated. We could consider the premulticategory of stateful functions $\bb{C}_S$ instead. However, this does not seem to be a `canonical' choice since the set of states $S$ has to be chosen beforehand.

	The definition of a transformation, $\eta: f\Rightarrow g$ for both clones and multicategories made essential use of their respective simultaneous substitution operations. Premulticategories do not come equipped with these, but their centres do. This suggests the following definition. 
	
	\begin{definition}
		Let $f,g:\bb{C}\to \bb{D}$ be premulticategory morphisms. A transformation $\eta: f\Rightarrow g$ between them consists of a family of morphisms, $\{\eta_A: fA\to gA\}$ in $Z(\bb{D})$ such that for all $t:A_1,\ldots, A_n\to B$ there is an equality
		\[
		\eta_{B}\langle ft\rangle = gt\langle \eta_{A_1},\ldots, \eta_{A_n}\rangle 
		\]
	\end{definition}

	\begin{proposition}
		Given premulticategories $\bb{C}$ and $\bb{D}$, the following data determines a category, $\Hom{\category{PreMultCat}}{\bb{C}}{\bb{D}}$:
		\begin{enumerate}
			\item Objects: Premulticategory Morphisms $f: \bb{C}\to \bb{D}$.
			\item Morphisms: Transformations $\eta: f\Rightarrow g$
			\item Identities: The identity premulticategory morphism $id_f:f\Rightarrow f$ has as components  the identity morphisms $id_{fA}: fA\to fA$. Note that each component is central.
			\item Composition: If $\eta: f\Rightarrow g$ and $\epsilon: g\Rightarrow h$ are transformations, define the composite $\epsilon\ast \eta$ to have components $\epsilon_A\langle \eta_A\rangle$.
		\end{enumerate}
	\end{proposition}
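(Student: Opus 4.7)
The plan is to verify four things: that $id_f$ satisfies the transformation axiom, that the prescribed composite $\epsilon\ast\eta$ is a well-defined transformation (components central and axiom holding), and that $\ast$ is unital and associative. The first is essentially immediate: the axiom $id_{fB}\langle ft\rangle = ft\langle id_{fA_1},\ldots,id_{fA_n}\rangle$ reduces on the left to $id_{fB}[ft] = ft$ by left unitality, and on the right to $ft$ by iterating right unitality of unary substitution (the simultaneous shorthand is legal because every $id_{fA_i}$ is central).

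The heart of the argument is showing $\epsilon\ast\eta$ is a transformation. The component $(\epsilon\ast\eta)_A = \epsilon_A\langle \eta_A\rangle$ is a unary substitution $\epsilon_A[\eta_A]$, which is central by Lemma \ref{cent-closed}. For the axiom itself, I plan the following chain. Begin with $(\epsilon\ast\eta)_B\langle ft\rangle = \epsilon_B[\eta_B][ft]$ and apply associativity of unary substitution to obtain $\epsilon_B[\eta_B[ft]]$. Use the transformation axiom for $\eta$ to rewrite $\eta_B[ft]$ as $gt\langle \eta_{A_1},\ldots,\eta_{A_n}\rangle$. Apply $\epsilon_B[-]$ and, iterating unary associativity, pull $\epsilon_B$ through each of the unary substitutions that unfolds the simultaneous substitution, yielding $\epsilon_B\langle gt\rangle \langle \eta_{A_1},\ldots,\eta_{A_n}\rangle$. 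Use the transformation axiom for $\epsilon$ to obtain $ht\langle \epsilon_{A_1},\ldots,\epsilon_{A_n}\rangle \langle \eta_{A_1},\ldots,\eta_{A_n}\rangle$. Finally, since each $\epsilon_{A_i}$ and $\eta_{A_i}$ is central, equation (\ref{mult-assoc}) applies and collapses this to $ht\langle (\epsilon\ast\eta)_{A_1},\ldots, (\epsilon\ast\eta)_{A_n}\rangle$, as required.

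Unitality and associativity of $\ast$ are then immediate from the premulticategory axioms applied componentwise. For example, $(id_g\ast\eta)_A = id_{gA}[\eta_A] = \eta_A$ by left unitality; $(\eta\ast id_f)_A = \eta_A[id_{fA}] = \eta_A$ by right unitality; and $((\theta\ast\epsilon)\ast\eta)_A = \theta_A[\epsilon_A][\eta_A] = \theta_A[\epsilon_A[\eta_A]] = (\theta\ast(\epsilon\ast\eta))_A$ by unary associativity. The main obstacle is the middle step: one must be attentive that $ft$ is not assumed central, so manipulations involving it can only use the unary substitution operation, while the simultaneous shorthand $\langle - \rangle$ and the multicategory-style associativity law (\ref{mult-assoc}) become available only once all the tuples in play consist of central morphisms, which happens precisely after the $\eta$ and $\epsilon$ axioms are invoked.
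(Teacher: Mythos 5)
Your proposal is correct and follows essentially the same route as the paper: centrality of the composite components via Lemma \ref{cent-closed}, then the chain $\epsilon_B[\eta_B[ft]] \to \epsilon_B[gt\langle\eta_{A_\bullet}\rangle] \to ht\langle\epsilon_{A_\bullet}\rangle\langle\eta_{A_\bullet}\rangle \to ht\langle(\epsilon\ast\eta)_{A_\bullet}\rangle$, with unitality and associativity of $\ast$ handled componentwise. If anything you are more careful than the paper's own write-up in flagging that $ft$, $gt$, $ht$ need not be central, so the intermediate rearrangements must rely on unary associativity rather than (\ref{mult-assoc}), which only becomes applicable at the final collapse.
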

	\begin{proof}
		The components of $\epsilon\ast \eta$ are all central by Lemma \ref{cent-closed}. Let $t: A_1,\ldots, A_n\to B$ be a morphism in $\bb{C}$. Then,
		\[
		\epsilon_A\langle gt\rangle = ht\langle \epsilon_{A_1},\ldots, \epsilon_{A_n}\rangle 
		\]
		and
		\[
		\eta_A\langle ft\rangle = gt\langle \eta_{A_1},\ldots, \eta_{A_n}\rangle
		\]
		Using \ref{mult-assoc} and the equations above,
		\begin{align*}
			\epsilon\ast \eta_A\langle ft\rangle &= \epsilon_A\langle\eta_A\rangle\langle ft\rangle\\
			& = \epsilon_A\langle \eta_A\langle ft\rangle\rangle\\
			& = \epsilon_A\langle gt\langle \eta_{A_1},\ldots, \eta_{A_n}\rangle\rangle \rangle\\
			& =\epsilon_A\langle gt \rangle \langle \eta_{A_1},\ldots, \eta_{A_n}\rangle \\
			& = ht\langle \epsilon_{A_1},\ldots, \epsilon_{A_n}\rangle  \langle \eta_{A_1},\ldots, \eta_{A_n}\rangle\\
			& = ht\langle \epsilon_{A_1} \langle \eta_{A_1}\rangle,\ldots, \epsilon_{A_n}\langle \eta_{A_n}\rangle\rangle\\
			& = ht \langle \epsilon\ast \eta_{A_1},\ldots, \epsilon \ast\eta_{A_n}\rangle
		\end{align*}
		which shows that composition is closed in the category. 
		Associativity and unitality follow from componentwise associativity and unitality in $Z(\bb{D})$.
	\end{proof}

	Though the definition is slightly unnatural in its insistence that the components be central, it allows us to define a category of algebras for a premulticategory. 
	\begin{definition}
		For a premulticategory $\bb{C}$, we define its category of algebras in a premulticategory $\bb{D}$ as the category $\Hom{\category{PreMultCat}}{\bb{C}}{\bb{D}}$.
	\end{definition}

	\section{Effectful Multicategories}

	We have seen that effectful languages can be modelled by premulticategories since it is not possible to simultaneously substitute general terms. However, it is possible to pick out a subset of terms, which are called \textit{values}, which behave rather nicely. Informally, a value is a non-effectful term in a programming language. As such, it does not matter when a value is substituted into another term, enabling us to shift them around as we wish. Since all values are \textit{central}, these can be modelled by a multicategory, while all the other effectful terms (computations) are modelled by a premulticategory. This leads to the notion of an effectful multicategory.
	
	\begin{definition}
		An effectful multicategory is a triple, $(\blue{\bb{C}_0}, \red{\bb{C}_1}, J)$ where 
		\begin{enumerate}
			\item $\blue{\bb{C}_0}$ is a multicategory and $\red{\bb{C}_1}$ is a premulticategory  which have the same objects. 
			\item $J$ is an identity on objects premulticategory morphism, $J: \blue{\bb{C}_0}\to \red{\bb{C}_1}$ that preserves centrality. 
		\end{enumerate} 
The morphisms in $\blue{\bb{C}_0}$ are said to be pure and will be denoted in blue, while those in $\red{\bb{C}_1}$ are said to be effectful, and will be denoted in red.
	\end{definition}
	
	\begin{remark}
		Effectful multicategories were called Freyd multicategories in \cite{staton-levy}. However, we prefer the term effectful multicategory due to their relationship to effectful categories. The term effectul category was suggested in \cite{roman} to avoid a clash with the notion of a Freyd category. 
	\end{remark}

	Note that there may well be other central terms in $\bb{C}$ that are not values. For example, when dealing with probability all computations are central and in this case the premulticategory $\red{\bb{C}_1}$ also reduces to a multicategory.
	
	The language that these structures model are more fine-grained \cite{modelling-env}, in that are two types of deductions. The first type indicates that a program is pure, $\Gamma\vdash^v M:A$, while the second type indicates that a program is effectful $\Gamma \vdash^c M:A$. Further, every pure program can be made into an effectful one:
	\begin{prooftree}
		\AxiomC{$\Gamma\vdash^v M:B$}
		\UnaryInfC{$\Gamma\vdash^c JM: B$}
	\end{prooftree}
	Preservation of centrality is required to say that a pure program gets transformed into a \textit{trivially effectful} one.

	\begin{example}
		Every premulticategory can be thought of as an effectful multicategory by setting $\blue{\bb{C}_0}:= Z(\bb{C})$. Then, there is a centrality preserving ioo-morphism $\blue{Z(\bb{C})}\to \red{\bb{C}}$. 
	\end{example}

	Restricting to unary contexts gives us two categories, $\overline{\blue{\bb{C}_0}}$ and $\overline{\red{\bb{C}_1}}$ with an identity on objects functor $\overline{J}$ between them. Hence, instead of a functor into \category{Cat}, we get a functor into $[\to,\category{Set}]$-\category{Cat}, the category of $[\to,\category{Set}]$-enriched categories. The left adjoint to this functor is the mapping that sends a pair of categories with an ioo functor between them, $\blue{\mcal{C}_0}\to \red{\mcal{C}_1}$ to their discrete multicategories, with a premulticategory morphism between them: $\blue{J\mcal{C}_0}\to \red{J\mcal{C}_1}$.

	Adding product types into this language takes us  to a structure called an \textit{effectful category}.
	\begin{definition}
		An (strict) effectful category is a triple $(\blue{\mcal{C}_0}, \red{\mcal{C}_1}, J)$ where:
		\begin{enumerate}
			\item $\blue{\mcal{C}_0}$ is a (strict) monoidal category, $\red{\mcal{C}_1}$ is a (strict) premonoidal category, and both have the same objects
			\item $J$ is an identity on objects premonoidal functor, which preserves centrality. 
		\end{enumerate}
	\end{definition}

	Obviously $\blue{\mcal{C}_0}$ gives rise to a multicategory $\blue{U\mcal{C}_0}$, and $\red{\mcal{C}_1}$ gives rise to a premulicategory $\red{U\mcal{C}_1}$, and $J$ gives rise to an identity on objects premulticategory morphism, that preserves centrality (by virtue of Lemma \ref{cent-to-cent}). Hence, every effectful category gives rise to an effectful multicategory.
	
	To extend this mapping to a functor between respective categories, we need a notion of morphisms.

	\begin{definition}
		A morphism $(\blue{\bb{C}_0},\red{\bb{C}_1}, J_{\bb{C}}) \to (\blue{\bb{D}_0}, \red{\bb{D}_1}, J_{\bb{D}})$  of effectful multicategories is a pair of premulticategory morphisms, $\blue{f_0}: \blue{\bb{C}_0}\to \blue{\bb{D}_0} $ and $\red{f_{1}}: \red{\bb{C}_1}\to \red{\bb{D}_1}$ that make the following diagram commute:
	\[\begin{tikzcd}[cramped]
		{\blue{\bb{C}_0}} && {\red{\bb{C}_1}} \\
		\\
		{\blue{\bb{D}_0}} && {\red{\bb{D}_1}}
		\arrow["{J_1}", from=1-1, to=1-3]
		\arrow["{\blue{f_{0}}}"', from=1-1, to=3-1]
		\arrow["{J_2}"', from=3-1, to=3-3]
		\arrow["{\red{f_{1}}}", from=1-3, to=3-3]
	\end{tikzcd}\]
		The category of effectful multicategories and their morphisms will be called \category{EffMultiCat}.
	\end{definition}
	
	Morphisms of effectful categories can be similarly defined, and this gives rise to a category \category{EffCat}. The mappings above give rise to a functor, $U:\category{EffMultiCat}\to \category{EffCat}$. As with premulticategories, constructing a left adjoint $F$ does not seem to be straightforward.

	The notion of a transformation, $\eta: (\blue{f_0}, \red{f_1})\Rightarrow (\blue{g_0}, \red{g_1})$ between morphisms of effectful categories could have been defined as a pair $(\blue{\eta_0}, \red{\eta_1})$ where $\blue{\eta_0}: \blue{f_0}\Rightarrow \blue{g_0}$  and $\red{\eta_1}: \red{f_1} \Rightarrow\red{g_1} $ are transformations of premulticategory morphisms. However, we derive a definition in Chapter \ref{chapter6} which arises abstractly.
	
	\chapter{The Category of Arrows and Commutative Squares}\label{chapter5}

	\section{A Funny Tensor}
	
	We assume that $\mcal{V}$ is a cosmos, i.e, it is symmetric closed monoidal, complete and cocomplete \cite{street}. The archetypal example of a cosmos is the category $\category{Set}$ with its monoidal structure given by $\times$, the cartesian product. The category $\to$ is the two object category with exactly one arrow between the objects: 
	
	\[\begin{tikzcd}[cramped]
		0 && 1
		\arrow["\diamond", from=1-1, to=1-3]
	\end{tikzcd}\]
	
	Then, category $[\to, \mcal{V}]$ consists of functors from $\to$ to $[\to,\mcal{V}]$. 
		 Explicitly, an object in this category is an arrow  $a \equiv a_{\diamond}: a_0\to a_1$ in $\mcal{V}$. A morphism $f: a\to b$ is given by a pair of morphisms, $(f_i: a_i\to b_i)_{i =0,1}$ such that the following square commutes:
	\[\begin{tikzcd}[cramped,column sep=small,row sep=scriptsize]
		{a_0} && {a_1} \\
		\\
		{b_0} && {b_1}
		\arrow["{a_\diamond}", from=1-1, to=1-3]
		\arrow["{f_0}"', from=1-1, to=3-1]
		\arrow["{b_\diamond}"', from=3-1, to=3-3]
		\arrow["{f_1}", from=1-3, to=3-3]
	\end{tikzcd}\]
	This category inherits a `pointwise' tensor $\otimes$ from the base category $\mcal{V}$:
	\[
	(a_\diamond:a_0\to a_1)\otimes (b_\diamond:b_0\to b_1) : = (a_\diamond \otimes b_\diamond:a_0\otimes b_0 \to a_1\otimes b_1)
	\]
	The unit for this monoidal structure is the arrow $I\to I$ in $\mcal{V}$. The structure of this category allows us to define another tensor, which is reminiscent of the funny tensor on $\category{Cat}$ discussed in the previous chapter.

	The funny tensor on \category{Cat} encodes sesquiness, which is an important property for evaluating tuples $(a,b)$ in an effectful,  call-by-value language. We show that a similar funny tensor can be defined on the category $[\to,\mcal{V}]$ for a \textit{good} $\mcal{V}$. This provides us with a framework to talk about sesquiness for substitution, which from a significant part of Chapter \ref{chapter6}.

	Let $\mcal{I}_n$ be the following category: 
	\[\begin{tikzcd}
		& 1 \\
		0 & \vdots \\
		& n
		\arrow[from=2-1, to=1-2]
		\arrow[from=2-1, to=3-2]
	\end{tikzcd}\]
	A colimit for a diagram, $\Gamma:\mcal{I}_n\to \mcal{C}$ is called a (wide) pushout. To give a natural transformation, $\alpha: \Gamma\Rightarrow\Gamma'$ is to give a family of morphisms, $\{\alpha_i: \Gamma(i)\to \Gamma'(i)\}_{0\leq i\leq n}$ such that $\alpha_i\circ \Gamma(f) =  \Gamma'(f)\circ \alpha_0$ for all $f: 0\to i$. So, we have the following lemma which will turn out to be the main tool for proving many of the results of this section. 
	
	\begin{lemma}\label{main}
		Let $\{a_i:A_0\to A_i\}_{1\leq i\leq n}$ and $\{b_i:B_0\to B_i\}_{1\leq i\leq n}$ be families of morphims in $\mcal{C}$. Let $(P,p)$ be a pushout for the diagram, $\{a_i: A_0\to A_i\}_{1\leq i\leq n}$, and let $(P',p')$ be \textit{any} co-cone for $\{b_i: B_0\to B_i\}_{1\leq i\leq n}$.

		When given a family of morphisms, $\{\alpha_i: A_i\to B_i\}_{0\leq i\leq n}$ such that $\alpha_i \circ a_i = b_i\circ  \alpha_0 $, there is a unique map, $q: P\to P'$ such that $q\circ p_i =   p'_i \circ \alpha_i$. 
		
		Further, when $P'$ is also a pushout, and when the $\alpha_i$ are isomorphisms, $q$ is also an isomorphism. 	
	\end{lemma}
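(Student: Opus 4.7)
The plan is to get the map $q$ out of the universal property of the pushout $P$: to give a map $P \to P'$ it suffices to exhibit $P'$ as a cocone on the diagram $\{a_i : A_0 \to A_i\}$. The natural candidate is $\beta_i := p'_i \circ \alpha_i : A_i \to P'$, and the obvious thing to check is that this family is a cocone. That amounts to verifying $\beta_i \circ a_i = \beta_j \circ a_j$ for all $i, j$, and this falls out of the hypotheses by the chain
\[
\beta_i \circ a_i \;=\; p'_i \circ \alpha_i \circ a_i \;=\; p'_i \circ b_i \circ \alpha_0 \;=\; p'_j \circ b_j \circ \alpha_0 \;=\; \beta_j \circ a_j,
\]
where the middle equality uses that $(P', p')$ is a cocone on $\{b_i\}$. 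The universal property of the pushout $(P, p)$ then produces a unique $q : P \to P'$ with $q \circ p_i = \beta_i = p'_i \circ \alpha_i$, as required.

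For the second part I would exploit symmetry. Assuming each $\alpha_i$ is invertible, the original compatibility $\alpha_i \circ a_i = b_i \circ \alpha_0$ rearranges to $a_i \circ \alpha_0^{-1} = \alpha_i^{-1} \circ b_i$, so the inverses form a compatible family in the opposite direction. When $(P', p')$ is also a pushout, the first part applied with the roles of the two diagrams swapped yields a unique $q' : P' \to P$ satisfying $q' \circ p'_i = p_i \circ \alpha_i^{-1}$.

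To conclude that $q$ and $q'$ are mutually inverse, I would invoke uniqueness in the universal property one more time. The composite $q' \circ q : P \to P$ satisfies $(q' \circ q) \circ p_i = q' \circ p'_i \circ \alpha_i = p_i \circ \alpha_i^{-1} \circ \alpha_i = p_i$, and $\mathrm{id}_P$ obviously satisfies the same equation; by uniqueness $q' \circ q = \mathrm{id}_P$, and the identical argument on the other side gives $q \circ q' = \mathrm{id}_{P'}$.

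There is no real obstacle here: the statement is essentially a packaged form of the universal property, and the only mildly subtle point is recognising that one does not need $(P', p')$ to be a pushout for the existence of $q$ (a cocone is enough), while the isomorphism half of course does require both sides to be genuine pushouts.
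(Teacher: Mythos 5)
Your proof is correct. The paper does not actually supply an argument for this lemma---it is stated as a ``simple observation'' and immediately reinterpreted as the cube diagram---but the argument it implicitly relies on is exactly the one you give: the family $p'_i\circ\alpha_i$ is a cocone on $\{a_i\}$ because $(P',p')$ is a cocone on $\{b_i\}$ and the squares $\alpha_i\circ a_i=b_i\circ\alpha_0$ commute, the universal property of $P$ yields the unique $q$, and when both sides are pushouts and the $\alpha_i$ are invertible the inverse cocone produces $q'$ with $q'\circ q=\mathrm{id}_P$ and $q\circ q'=\mathrm{id}_{P'}$ by uniqueness. No gaps.
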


	In the context of binary pushouts, this says  that if in the following diagram, the top square is a pushout square, and the bottom square is any commutative square, and the left and front faces commute, then, there is a unique map, $P\to P'$ that makes the right and back faces commute.
\begin{equation}\label{main2}
		\begin{tikzcd}
		& {A_1} && P \\
		{A_0} && {A_2} \\
		& {B_1} && {P'} \\
		{B_0} && {B_2}
		\arrow["{a_2}"{description, pos=0.7}, from=2-1, to=2-3]
		\arrow["{a_1}"{description}, from=2-1, to=1-2]
		\arrow["{p_1}"{description}, from=1-2, to=1-4]
		\arrow["{p_2}"{description}, from=2-3, to=1-4]
		\arrow["{b_1}"{description}, from=4-1, to=3-2]
		\arrow["{b_2}"{description}, from=4-1, to=4-3]
		\arrow["{p'_2}"{description}, from=4-3, to=3-4]
		\arrow["{p'_1}"{description, pos=0.3}, from=3-2, to=3-4]
		\arrow["{\alpha_0}"{description, pos=0.7}, from=2-1, to=4-1]
		\arrow["{\alpha_1}"{description, pos=0.7}, from=1-2, to=3-2]
		\arrow["{\alpha_2}"{description, pos=0.7}, from=2-3, to=4-3]
		\arrow["q"{description, pos=0.7}, dashed, from=1-4, to=3-4]
	\end{tikzcd}
\end{equation}

This simple observation will form the basis for many arguments about the funny tensor on $[\to,\mcal{V}]$.

	\begin{definition}
		Let $a$ and $b$ be objects of $[\to,\mcal{V}]$.
		Then define $a\fun b$ as the diagonal arrow in the following pushout diagram:
		\[\begin{tikzcd}[cramped,row sep=2.25em]
			{ a_0\otimes b_0} && {a_0\otimes b_1} \\
			\\
			{a_1\otimes b_0} && {(a\fun b)_1}
			\arrow["{1\otimes b_\diamond}"{description}, from=1-1, to=1-3]
			\arrow["{a_\diamond\otimes 1}"{description}, from=1-1, to=3-1]
			\arrow["{\iota'_{a,b}}"{description}, from=3-1, to=3-3]
			\arrow["{\iota_{a,b}}"{description}, from=1-3, to=3-3]
			\arrow["{(a\fun b)_\diamond}"{description}, dashed, from=1-1, to=3-3]
		\end{tikzcd}\]

		Given morphisms, $f: a\to c$ and $g: b\to d$, it is easily seen that the top, bottom, left, and back faces of the cube below commute. As a consequence, by \ref{main2},	there is a unique morphism, $(f\fun g)_1:  (a\fun b)_1 \to (c\fun d)_1$ that makes the front and right faces commute: 
		\[\begin{tikzcd}[cramped]
			& {a_0\otimes b_1} && {(a\fun b)_1} \\
			{a_0\otimes b_0} && {a_1\otimes b_0} \\
			& {c_0\otimes d_1} && {(c\fun d)_1} \\
			{c_0\otimes d_0} && {c_1\otimes d_0}
			\arrow["{1\otimes b_\diamond}"{description}, from=2-1, to=1-2]
			\arrow["{a_\diamond\otimes 1}"{description, pos=0.7}, from=2-1, to=2-3]
			\arrow["{\iota_{a,b}}"{description}, from=1-2, to=1-4]
			\arrow["{\iota'_{a,b}}"{description}, from=2-3, to=1-4]
			\arrow[from=4-1, to=3-2]
			\arrow[from=4-1, to=4-3]
			\arrow[from=4-3, to=3-4]
			\arrow["{f_0\otimes g_0}"{description, pos=0.8}, from=2-1, to=4-1]
			\arrow["{f_0\otimes g_1}"{description, pos=0.8}, from=1-2, to=3-2]
			\arrow["{f_1\otimes g_0}"{description, pos=0.8}, from=2-3, to=4-3]
			\arrow["{(f\fun g)_1}"{description, pos=0.8}, dashed, from=1-4, to=3-4]
			\arrow[from=3-2, to=3-4]
		\end{tikzcd}\]
		It is then readily verified, using the universal property of pushouts, that this information defines a bifunctor,
$		\fun  : [\to,\mcal{V}]\times [\to,\mcal{V}]\to [\to,\mcal{V}]$
		
	\end{definition}
	
	The rest of this section is devoted to showing that this bifunctor defines a symmetric monoidal structure on $[\to,\mcal{V}]$ with unit $(1_I: I\to I)$ where $1_I$ is the identity morphism.

	\begin{lemma}\label{sym}
		There are natural isomorphisms, 
		\begin{enumerate}
			\item $(\lambda^0_a,\lambda^1_a): (I\otimes a_0 \to I\fun a_0) \to (a_0\to a_1)$
			\item $(\rho^0_a,\rho^1_a): (a_0\otimes I\to a_1\fun I) \to (a_0\to a_1) $
			\item $(s^0_{a,b}, s^1_{a,b}): (a_0\otimes b_0\to a_1\fun b_1) \to (b_0\otimes a_0\to b_1\fun a_1)$
		\end{enumerate}
	\end{lemma}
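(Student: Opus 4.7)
The plan is to invoke Lemma \ref{main} in each case: define the $0$-component using the corresponding structural isomorphism of $\mcal{V}$, and construct the $1$-component by checking that two candidate maps out of the source pushout agree on its apex. Invertibility and naturality will then both follow from uniqueness in the universal property.

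For left unitality, the first observation is that the pushout computing $(1_I\fun a)_1$ collapses: since $(1_I)_\diamond = \mathrm{id}_I$, the leg $(1_I)_\diamond\otimes 1 : I\otimes a_0\to I\otimes a_0$ is the identity, so $(1_I\fun a)_1 \cong I\otimes a_1$ with diagonal $1\otimes a_\diamond$. I would then take $(\lambda^0_a,\lambda^1_a)$ to be the unitors $\lambda^\mcal{V}_{a_0}$ and $\lambda^\mcal{V}_{a_1}$ of $\mcal{V}$, composed with this identification. Naturality of $\lambda^\mcal{V}$ in $\mcal{V}$ immediately guarantees that this pair constitutes a morphism in $[\to,\mcal{V}]$. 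The right unitor is constructed symmetrically, with the roles of the two pushout legs swapped.

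The symmetry is the only case with real substance. Set $s^0_{a,b} := \sigma^\mcal{V}_{a_0,b_0}$, and build $s^1_{a,b}$ by producing a cocone under the pushout defining $(a\fun b)_1$ into $(b\fun a)_1$ via the composites
\[
\iota'_{b,a}\circ \sigma^\mcal{V}_{a_0,b_1}: a_0\otimes b_1\to (b\fun a)_1 \qquad \text{and}\qquad \iota_{b,a}\circ \sigma^\mcal{V}_{a_1,b_0}: a_1\otimes b_0\to (b\fun a)_1.
\]
Naturality of $\sigma^\mcal{V}$ converts the required equality on $a_0\otimes b_0$ into the pushout cocone condition already satisfied by $\iota_{b,a}$ and $\iota'_{b,a}$ for $(b\fun a)_1$. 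Lemma \ref{main} then supplies the unique $s^1_{a,b}$, and the commuting square witnessing that $(s^0_{a,b},s^1_{a,b})$ is a morphism in $[\to,\mcal{V}]$ is verified by composing both sides with each of the coprojections $\iota_{a,b},\iota'_{a,b}$ and appealing to uniqueness.

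Invertibility of $s$ is then immediate: the composite $s_{b,a}\circ s_{a,b}$ has $0$-component $\sigma^\mcal{V}_{b_0,a_0}\circ \sigma^\mcal{V}_{a_0,b_0} = \mathrm{id}$, and its $1$-component restricts to the identity along each coprojection into $(a\fun b)_1$, hence must be the identity by uniqueness in the pushout. Naturality of $\lambda$, $\rho$, and $s$ is proved by the same uniqueness principle applied component-wise: for each naturality square, both sides are maps out of a pushout, and they agree after precomposition with every coprojection thanks to naturality of the corresponding structural map in $\mcal{V}$. No genuine conceptual obstacle appears once Lemma \ref{main} is on hand; the only real work is keeping the coprojections straight when unwinding the diagrams.
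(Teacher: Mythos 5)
Your proposal is correct and follows essentially the same strategy as the paper: define the $0$-components via the structural isomorphisms of $\mcal{V}$, induce the $1$-components by Lemma \ref{main} from a cocone built out of the target's coprojections, and derive invertibility and naturality from uniqueness of maps out of the pushout. The only (harmless) variation is in the unitor case, where you identify $(I\fun a)_1\cong I\otimes a_1$ directly from the degeneracy of the defining span instead of running the cube argument with a degenerate pushout on the bottom face; both yield the same map.
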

	\begin{proof}
		For the unitors, consider the cubes:
		\[\begin{tikzcd}[cramped]
			& {a_0\otimes I} && {a_1\fun I} && {I\otimes a_1} && {1\fun a_1} \\
			{a_0\otimes I} && {a_1\otimes I} && {I\otimes a_0} && {I\otimes a_0} \\
			& {a_0} && {a_1} && {a_1} && {a_1} \\
			{a_0} && {a_1} && {a_0} && {a_0}
			\arrow[from=2-1, to=1-2]
			\arrow[from=2-1, to=2-3]
			\arrow[from=1-2, to=1-4]
			\arrow[from=2-3, to=1-4]
			\arrow[from=4-1, to=3-2]
			\arrow[from=4-1, to=4-3]
			\arrow[from=3-2, to=3-4]
			\arrow[from=4-3, to=3-4]
			\arrow["{\rho_{a_0} = \rho^0_a}"{description, pos=0.6}, from=2-1, to=4-1]
			\arrow["{\rho_{a_0}}"{description, pos=0.7}, from=1-2, to=3-2]
			\arrow["{\rho_{a_1}}"{description, pos=0.6}, from=2-3, to=4-3]
			\arrow["{\rho^1_{a}}"{description}, dashed, from=1-4, to=3-4]
			\arrow[from=2-5, to=1-6]
			\arrow[from=2-5, to=2-7]
			\arrow[from=1-6, to=1-8]
			\arrow[from=2-7, to=1-8]
			\arrow["{\lambda_{a_0} = \lambda^0_a}"{description, pos=0.6}, from=2-5, to=4-5]
			\arrow[from=4-5, to=3-6]
			\arrow[from=4-5, to=4-7]
			\arrow[from=4-7, to=3-8]
			\arrow[from=3-6, to=3-8]
			\arrow["{\lambda^1_a}"{description}, dashed, from=1-8, to=3-8]
			\arrow["{\lambda_{a_0}}"{description, pos=0.6}, from=2-7, to=4-7]
			\arrow["{\lambda_{a_1}}"{description, pos=0.7}, from=1-6, to=3-6]
		\end{tikzcd}\]
		
		The squares at the bottom are obviously pushout squares, so an appeal to  \ref{main2} reveals the existence of $\rho^1_a$. Since  $\rho_{a_0}, \rho_{a_1}$ are all isos, $\rho^1_a$ also must be one. Thus, we obtain a mapping, $(\rho^0_a, \rho^1_a): (a_0\otimes I \to a_1\fun I) \to (a_0\to a_1)$. We show that this map is natural in $(a_0\to a_1)$. 
		
		Let $(f_0,f_1): (a:a_0\to a_1)\to (b:b_0\to b_1)$  be any map, and consider the cubes:	
		\[\begin{tikzcd}[cramped,column sep=tiny]
			& {a_0\otimes I} && {a_1\fun I} && {a_0\otimes I} && {a_1\fun I} \\
			{a_0\otimes I} && {a_1\otimes I} && {a_0\otimes I} && {a_1\otimes I} \\
			& {b_0\otimes I} && {b_1\fun I} && {a_0} && {a_1} \\
			{b_0\otimes I} && {b_1\otimes I} && {a_0} && {a_1} \\
			& {b_0} && {b_1} && {b_0} && {b_1} \\
			{b_0} && {b_1} && {b_0} && {b_0}
			\arrow[from=2-1, to=2-3]
			\arrow[from=2-1, to=1-2]
			\arrow[from=1-2, to=1-4]
			\arrow[from=2-3, to=1-4]
			\arrow[from=4-1, to=3-2]
			\arrow[from=3-2, to=3-4]
			\arrow[from=4-1, to=4-3]
			\arrow[from=4-3, to=3-4]
			\arrow[from=5-2, to=5-4]
			\arrow[from=6-1, to=5-2]
			\arrow[from=6-1, to=6-3]
			\arrow[from=6-3, to=5-4]
			\arrow["{f_1\fun 1}"{description, pos=0.7}, dashed, from=1-4, to=3-4]
			\arrow["{\rho^1_b}"{description, pos=0.7}, dashed, from=3-4, to=5-4]
			\arrow["{f_0\otimes 1}"{description, pos=0.7}, from=2-1, to=4-1]
			\arrow["{f_0\otimes 1}"{description, pos=0.7}, from=1-2, to=3-2]
			\arrow["{f_1\otimes 1}"{description, pos=0.7}, from=2-3, to=4-3]
			\arrow["{\rho_{b_0}}"{description, pos=0.7}, from=3-2, to=5-2]
			\arrow["{\rho_{b}^0 = \rho_{b_0}}"{description, pos=0.7}, from=4-1, to=6-1]
			\arrow["{\rho_{b_1}}"{description, pos=0.7}, from=4-3, to=6-3]
			\arrow[from=2-5, to=1-6]
			\arrow[from=2-5, to=2-7]
			\arrow[from=2-7, to=1-8]
			\arrow[from=1-6, to=1-8]
			\arrow["{\rho^1_a = \rho_{a_0}}"{description, pos=0.6}, from=2-5, to=4-5]
			\arrow[from=4-5, to=3-6]
			\arrow[from=3-6, to=3-8]
			\arrow[from=4-5, to=4-7]
			\arrow[from=4-7, to=3-8]
			\arrow[from=6-5, to=5-6]
			\arrow[from=5-6, to=5-8]
			\arrow[from=6-5, to=6-7]
			\arrow[from=6-7, to=5-8]
			\arrow["{f_0}"{description}, from=4-5, to=6-5]
			\arrow["{\rho_{a_0}}"{description, pos=0.6}, from=1-6, to=3-6]
			\arrow["{\rho_{a_0}}"{description, pos=0.6}, from=2-7, to=4-7]
			\arrow["{\rho^1_a}"{description}, dashed, from=1-8, to=3-8]
			\arrow["{f_0}"{description, pos=0.6}, from=3-6, to=5-6]
			\arrow["{f_1}"{description}, from=4-7, to=6-7]
			\arrow["{f_1}"{description}, from=3-8, to=5-8]
		\end{tikzcd}\]
		
		However, note that each of the `legs' $a_i\otimes I\to b_i$ are equal by naturality of $\otimes$ in $\mcal{V}$, i.e, $\rho_{b_i} \circ (f_i\otimes1) = f_i\circ \rho_{a_i}$ for $i \in \{1,2\}$. The following cube is then obtained:
\[\begin{tikzcd}[cramped]
	& {a_0\otimes I} && {a_1\fun I} \\
	{a_0\otimes I} && {a_1\otimes I} \\
	\\
	& {b_0} && {b_1} \\
	{b_0} && {b_1}
	\arrow[from=2-1, to=1-2]
	\arrow[from=1-2, to=1-4]
	\arrow[from=2-1, to=2-3]
	\arrow[from=2-3, to=1-4]
	\arrow[from=5-1, to=4-2]
	\arrow[from=4-2, to=4-4]
	\arrow[from=5-1, to=5-3]
	\arrow[from=5-3, to=4-4]
	\arrow["{\substack{ \rho_{b_0} \circ (f_0 \otimes 1)\\ = \\f_0\circ \rho_{a_0} }}"{description, pos=0.7}, from=2-1, to=5-1]
	\arrow["{\substack{\rho_{b_0}\circ (f_0\otimes 1)\\ = \\f_0\circ \rho_{a_0} }}"{description, pos=0.7}, from=1-2, to=4-2]
	\arrow["{\substack{ \rho_{b_1}\circ (f_1\otimes 1)\\ = \\ f_1\circ \rho_{a_0}}}"{description, pos=0.7}, from=2-3, to=5-3]
	\arrow["q"{description, pos=0.7}, dashed, from=1-4, to=4-4]
\end{tikzcd}\]

		where $q$ is the unique morphism that makes the right and back sides of the cube commute. However, both $\rho_{b}^1\circ (f_1\fun 1)$ and $f_1\circ \rho^1_a$ satisfy this property, and hence, we conclude that they are both equal. This shows that the required naturality square commutes in $[\to,\mcal{V}]$. A completely analogous chain of reasoning shows that there is a natural transformation, $(\lambda^0_a,\lambda^1_a): (I\otimes a_0 \to I\fun a_1) \to (a_0\to a_1)$.

		For the braiding, consider the cube
		\[\begin{tikzcd}[cramped]
			& {a_0\otimes b_1} && {a_1\fun b_1} \\
			{a_0\otimes b_0} && {a_1\otimes b_0} \\
			& {b_1\otimes a_0} && {b_1\fun a_1} \\
			{b_0\otimes a_0} && {b_0\otimes a_1}
			\arrow[from=2-1, to=1-2]
			\arrow[from=2-1, to=2-3]
			\arrow[from=1-2, to=1-4]
			\arrow[from=2-3, to=1-4]
			\arrow["{s^0_{a,b} = s_{a_0,b_0}}"{description, pos=0.7}, from=2-1, to=4-1]
			\arrow[from=4-1, to=3-2]
			\arrow[from=3-2, to=3-4]
			\arrow[from=4-1, to=4-3]
			\arrow[from=4-3, to=3-4]
			\arrow["{s_{a_1,b_0}}"{description, pos=0.7}, from=2-3, to=4-3]
			\arrow["{s^1_{a,b}}"{description}, dashed, from=1-4, to=3-4]
			\arrow["{s_{a_0,b_1}}"{description, pos=0.7}, from=1-2, to=3-2]
		\end{tikzcd}\]
		The map, $s^1_{a,b}$ exists by Lemma \ref{main}, and makes the cube commute. Further, it is an iso since each of the other $s_{a_i,b_j}$ are all isos. Showing naturality can be done with an argument analogous to the previous one. 
	\end{proof}

	Before describing the associator, we show that there is a nice characterization of $n$-ary applications of the bifunctor $\fun$ as wide pushouts, a fact that will be useful while reasoning about $3$-ary and $4$-ary tensors which appear in the coherence axioms.

	\begin{lemma}\label{stillpush}
		Consider the family of morphisms, $\{A_0\to A_i\}_{1\leq i\leq n+1}$. Assume that $(P, \{p_i: A_i\to P\}_{1\leq i\leq n})$ is a wide pushout of the family $\{A_0\to A_i\}_{1\leq i\leq n}$, and that $Q$ is a  pushout of the diagram, 
		
		\[\begin{tikzcd}[cramped]
			& P && Q \\
			{A_0} && {A_{n+1}}
			\arrow["p"{description}, from=2-1, to=1-2]
			\arrow["{a_{n+1}}"{description}, from=2-1, to=2-3]
			\arrow["{q_1}"{description}, from=1-2, to=1-4]
			\arrow["{q_2}"{description}, from=2-3, to=1-4]
		\end{tikzcd}\]
		where $p : = p_1\circ a_1= \ldots = p_n\circ a_n$. 
		Then, $Q$ is a wide pushout for the family $\{A_0\to A_i\}_{1\leq i\leq n+1}$.
	\end{lemma}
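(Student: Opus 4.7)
The plan is to run the standard two-stage argument that shows a pushout can be built up from smaller pushouts. I will first verify that the given data form a cocone over $\{a_i : A_0 \to A_i\}_{1 \leq i \leq n+1}$, and then check the universal property by factoring any cocone through $P$ first and then through $Q$.

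\textbf{Cocone verification.} The legs of the putative wide pushout are $q_1 \circ p_i : A_i \to Q$ for $1 \leq i \leq n$ together with $q_2 : A_{n+1} \to Q$. For $1 \leq i \leq n$ we have
\[
(q_1 \circ p_i) \circ a_i \;=\; q_1 \circ (p_i \circ a_i) \;=\; q_1 \circ p \;=\; q_2 \circ a_{n+1},
\]
using the definition $p = p_i \circ a_i$ (independent of $i$, since $P$ is a cocone) and the commutativity of the binary pushout square defining $Q$. So all legs agree on $A_0$.

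\textbf{Universal property.} Suppose $\{f_i : A_i \to X\}_{1 \leq i \leq n+1}$ is any cocone, i.e.\ $f_i \circ a_i = f_j \circ a_j$ for all $i,j$. Restricting to $1 \leq i \leq n$, the universal property of $P$ gives a unique $\tilde{f} : P \to X$ with $\tilde{f} \circ p_i = f_i$ for all such $i$. I would then check the compatibility
\[
\tilde{f} \circ p \;=\; \tilde{f} \circ p_1 \circ a_1 \;=\; f_1 \circ a_1 \;=\; f_{n+1} \circ a_{n+1},
\]
so $(\tilde{f}, f_{n+1})$ is a cocone over the binary pushout diagram defining $Q$. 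Its universal property then produces a unique $\tilde{F} : Q \to X$ with $\tilde{F} \circ q_1 = \tilde{f}$ and $\tilde{F} \circ q_2 = f_{n+1}$. This $\tilde{F}$ factors our original cocone through $Q$, since $\tilde{F} \circ (q_1 \circ p_i) = \tilde{f} \circ p_i = f_i$ for $i \leq n$ and $\tilde{F} \circ q_2 = f_{n+1}$.

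\textbf{Uniqueness.} Any $G : Q \to X$ factoring the cocone through $Q$ must satisfy $G \circ q_1 \circ p_i = f_i$ for each $i \leq n$, whence the universal property of $P$ forces $G \circ q_1 = \tilde{f}$; combined with $G \circ q_2 = f_{n+1}$, the universal property of the binary pushout forces $G = \tilde{F}$.

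I do not expect any real obstacle here — the argument is purely formal, and the only point needing care is tracking the identity $p = p_i \circ a_i$ (valid for every $i$ because $P$ is already a cocone), which is what lets the two universal properties be chained.
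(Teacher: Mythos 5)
Your proof is correct and follows essentially the same route as the paper's: factor an arbitrary cocone through $P$ via its universal property, then through $Q$ via the binary pushout, and obtain uniqueness by chaining the two universal properties in the same order. The only difference is cosmetic — you spell out the cocone verification that the paper dismisses as "clearly".
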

	
	\begin{proof}
		Define $q'_i: A_i\to Q$ as
		\[
		q'_i : = \begin{cases}
			q_2, &\text{ if $i = n+1$}\\
			q_1\circ 	p_i, &\text{ otherwise}	
		\end{cases}
		\]
		Then, clearly, $(Q,q')$ is a co-cone for the diagram induced by $\{A_0\to A_i\}_{1\leq i\leq n+1}$.  Assume $(X,\{x_i: A_i\to X\}_{1\leq i\leq n+1})$ is another co-cone. Then, it is a co-cone for the diagram, $\{A_0\to A_i\}_{1\leq i\leq n}$, and hence, there is a unique morphism, $x: P\to X$ such that  $x_i = x\circ p_i$.

		Now, since there are morphisms, $P\to X$ and $A_{n+1}\to X$ that make the square commute below, there is a unique, $x': Q\to X$ such that $x_{n+1} = x'\circ q_2$ and $x = x'\circ q_1$.
		
		\[\begin{tikzcd}
			&&& X \\
			P && Q \\
			{A_0} && {A_{n+1}}
			\arrow["p"{description}, from=3-1, to=2-1]
			\arrow["{a_{n+1}}"{description}, from=3-1, to=3-3]
			\arrow["{q_1}"{description}, from=2-1, to=2-3]
			\arrow["{q_2}"{description}, from=3-3, to=2-3]
			\arrow["x"{description}, curve={height=-18pt}, from=2-1, to=1-4]
			\arrow["{x_{n+1}}"{description}, curve={height=18pt}, from=3-3, to=1-4]
			\arrow["{x'}"{description}, dashed, from=2-3, to=1-4]
		\end{tikzcd}\]
		
		So, there is a morphism, $x': Q\to X$ such that $x_i = x'\circ q_i'$. This morphism must be unique, since if there was another such morphism, $x'': Q\to X$ such that $x_i = x''\circ q_i'$, then, $x_i  = (x''\circ  q_1)\circ p_i $, which implies that $x''\circ q_1 = x$ by uniqueness. Then, we have that $x''\circ q_1 = x$ and $x''\circ q_2 = x_{n+1}$, which means that $x'' = x'$ by uniqueness again. 
	\end{proof}

	In general monoidal categories, there are many ways to define $n$-ary products. We adopt the convention that these are bracketed to the left.
	\begin{definition}
		Let $\mcal{C}$ be a category with a bifunctor, $\otimes: \mcal{C}\times \mcal{C}\to \mcal{C}$. For a family of objects, $\{a^i\}_{1\leq i\leq n}$ in $\mcal{C}$, define the $n$-ary product, $\bigotimes_{i=1}^n a^i$ recursively as follows:
		\begin{align*}
			\bigotimes_{i=1}^1 a^i &:= a^1\\
			\bigotimes_{i=1}^{n+1} a^i &:= (\bigotimes_{i=1}^n a^i)\otimes a^{n+1}
		\end{align*}
	\end{definition}

	Then, an immediate corollary of \ref{stillpush} is the following. 
	\begin{corollary}[Pushout characterization of the $n$-ary funny tensor]\label{pushout-char}
		Let $\{a^i\}_{1\leq i\leq n}$ be a family of objects in $[\to,
		\mcal{V}]$. Then, there are morphisms, $k_1,\ldots, k_n$, that make $({{\fun}}_{i=1}^n a^i)_1$ a pushout for the following diagram: 
		\[\begin{tikzcd}[cramped]
			& {\bigotimes_{i=1}^n a_{\delta_{i,1}}^i} \\
			{\bigotimes_{i=1}^n a^i_0} & \vdots & {({{\fun}}_{i=1}^n a^i)_1} \\
			& {\bigotimes_{i=1}^n a_{\delta_{i,n}}^i}
			\arrow[from=2-1, to=1-2]
			\arrow[from=2-1, to=3-2]
			\arrow["{k_1}"{description}, from=1-2, to=2-3]
			\arrow["{k_n}"{description}, from=3-2, to=2-3]
		\end{tikzcd}\]
		where the map, $\bigotimes_{i=1}^n a^i_0 \to \bigotimes_{i=1}^n a^i_{\delta_{i,j}}$ is the map that has identities tensored with the map $a^j_\diamond$ in the $j$th position (like $1\otimes \ldots 1\otimes a^j_\diamond \otimes 1\ldots \otimes 1 $). Further, the arrow, $\bigotimes_{i=1}^n a_0^i \to (\fun_{i=1}^n a^i)_1$ is the tensor $\fun_{i=1}^n a^i$ in $[\to,\mcal{V}]$.
	\end{corollary}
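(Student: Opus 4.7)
The plan is to proceed by induction on $n$. The base case $n=1$ is trivial (a single-leg ``wide pushout'' of an arrow $A_0\to A_1$ is just $A_1$), and for $n=2$ the statement reduces to the defining pushout of $\fun$ itself. For the inductive step, assuming the result for $n$, I would derive it for $n+1$ by tensoring the $n$-ary wide pushout with $a^{n+1}_0$ and then invoking Lemma \ref{stillpush}.

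First I would record a small auxiliary fact: from the construction, $(a\fun b)_0 = a_0\otimes b_0$, so a quick induction gives $(\fun_{i=1}^n a^i)_0 = \bigotimes_{i=1}^n a^i_0$. Consequently the binary pushout defining $(\fun_{i=1}^{n+1} a^i)_1 = ((\fun_{i=1}^n a^i)\fun a^{n+1})_1$ is a pushout of the cospan
\[
(\fun_{i=1}^n a^i)_1 \otimes a^{n+1}_0 \;\longleftarrow\; \bigotimes_{i=1}^{n+1} a^i_0 \;\longrightarrow\; \bigotimes_{i=1}^{n+1} a^i_{\delta_{i,n+1}},
\]
where the right-hand leg $1\otimes\dots\otimes 1\otimes a^{n+1}_\diamond$ is precisely the desired $(n+1)$-th leg of the target wide pushout. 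By the inductive hypothesis, $(\fun_{i=1}^n a^i)_1$ is the wide pushout of the first $n$ legs, and since $\mcal{V}$ is a cosmos the functor $-\otimes a^{n+1}_0$ has a right adjoint and hence preserves this colimit. Therefore $(\fun_{i=1}^n a^i)_1\otimes a^{n+1}_0$ is a wide pushout of $\{\bigotimes_{i=1}^{n+1} a^i_0 \to \bigotimes_{i=1}^{n+1} a^i_{\delta_{i,j}}\}_{1\le j\le n}$, using that $\delta_{n+1,j}=0$ for $j\le n$ so the ``extra'' factor $a^{n+1}_0$ sits unchanged in the last slot of every leg.

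With these two observations, Lemma \ref{stillpush} applies directly: taking $P$ to be the $n$-ary wide pushout just obtained and $Q$ the binary pushout defining $(\fun_{i=1}^{n+1} a^i)_1$ identifies $(\fun_{i=1}^{n+1} a^i)_1$ as a wide pushout of the full $(n+1)$-leg diagram, as required. I do not foresee a genuine obstacle; the only point that needs a moment's care is verifying that the composites $\bigotimes_{i=1}^{n+1} a^i_0 \to \bigotimes_{i=1}^{n+1} a^i_{\delta_{i,j}} \to (\fun_{i=1}^n a^i)_1\otimes a^{n+1}_0$ for $1\le j\le n$ all coincide with the canonical map from the apex, which is automatic from the cocone condition inherited from the inductive hypothesis after tensoring with $a^{n+1}_0$. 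The final clause of the corollary, identifying the resulting apex-to-apex map with the arrow $(\fun_{i=1}^{n+1} a^i)_\diamond$, is then read off the definition of $\fun$ applied to $(\fun_{i=1}^n a^i)$ and $a^{n+1}$.
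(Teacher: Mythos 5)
Your proposal is correct and follows essentially the same route as the paper's proof: induction on $n$, identifying the $(n+1)$-ary tensor as the binary pushout of the $n$-ary one against $a^{n+1}$, using that $-\otimes a^{n+1}_0$ preserves the $n$-ary wide pushout because $\mcal{V}$ is closed, and concluding via Lemma \ref{stillpush}. The auxiliary observation $(\fun_{i=1}^n a^i)_0 = \bigotimes_{i=1}^n a^i_0$ and the cocone-compatibility check you flag are implicit in the paper's argument as well.
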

	\begin{proof}
		This is obvious for the case $n = 2$: $k_1 = \iota_{a,b}$ and $k_2 = \iota'_{a,b}$, which are the canonical morphisms into the pushout.

		Assume for induction that this is true for some $n$. Then, the tensor, $(\fun_{i=1}^n a^i)$ is the arrow  $(\bigotimes_{i=1}^n a^i_0 \to (\fun_{i=1}^n a^i)_1)$. Then the tensor $(\fun_{i=1}^{n+1} a^i)$ is calculated as 
		\[
		\Big(\bigotimes_{i=1}^n a^i_0 \to (\fun_{i=1}^n a^i)_1\Big) \fun (a_0^{n+1}\to a_1^{n+1})
		\]
		which is the pushout diagram:
		\[\begin{tikzcd}[cramped]
			& {(\fun_{i=1}^n a^i)_1 \otimes a_0^{n+1}} \\
			{(\bigotimes_{i=1}^n  a^i_0 )\otimes  a_{0}^{n+1}} & {} && {(\fun^{n+1}_{i=1} a^i)_1} \\
			& {(\bigotimes_{i=1}^n  a^i_0 )\otimes a_1^{n+1}}
			\arrow[from=2-1, to=1-2]
			\arrow[from=2-1, to=3-2]
			\arrow["{\iota'_{\fun_{i=1}^n a^i,  a^{n+1}}}"{description}, from=3-2, to=2-4]
			\arrow["{\iota_{\fun_{i=1}^n a^i,  a^{n+1}}}"{description}, from=1-2, to=2-4]
		\end{tikzcd}\]
		However, the object $(\fun_{i=1}^n a^i)_1\otimes a_{0}^{n+1}$ is the pushout of the following diagram (since $\mcal{V}$ is symmetric monoidal \textit{closed}, tensoring with a fixed object preserves colimits): 
		\[\begin{tikzcd}[cramped]
			& {(\bigotimes_{i=1}^n a^i_{\delta_{i,1}} ) \otimes a_0^{n+1}} \\
			{\bigotimes_{i=1}^n a_0^i \otimes a_0^{n+1}} & \vdots & {(\fun_{i=1}^n a^i)_1 \otimes a_0^{n+1}} \\
			& {(\bigotimes_{i=1}^n a^i_{\delta_{n,1}} ) \otimes a_0^{n+1}}
			\arrow[from=2-1, to=1-2]
			\arrow[from=2-1, to=3-2]
			\arrow["{k_1\otimes 1}"{description}, from=1-2, to=2-3]
			\arrow["{k_n\otimes 1}"{description}, from=3-2, to=2-3]
		\end{tikzcd}\]	Thus, by Lemma \ref{stillpush}, the following is a (wide) pushout diagram:
		\[\begin{tikzcd}[cramped]
			& {(\bigotimes_{i=1}^{n+1} a^i_{\delta_{i,1}})} \\
			{\bigotimes_{i=1}^{n+1} a_0^i} & \vdots && {(\fun_{i=1}^{n+1} a^i)_1} \\
			& {(\bigotimes_{i=1}^{n+1} a^i_{\delta_{i,n}})} \\
			& {(\bigotimes_{i=1}^{n+1} a^i_{\delta_{i,{n+1}}})}
			\arrow[from=2-1, to=1-2]
			\arrow[from=2-1, to=3-2]
			\arrow[from=2-1, to=4-2]
			\arrow["{\iota'}"{description}, from=4-2, to=2-4]
			\arrow["{\iota\circ (k_n\otimes 1)}"{description}, from=3-2, to=2-4]
			\arrow["{\iota \circ (k_1\otimes 1)}"{description}, from=1-2, to=2-4]
		\end{tikzcd}\]
		and the arrow, $\bigotimes_{i=1}^{n+1}a_0^i \to (\fun_{i=1}^{n+1}a^i)_1$ is required tensor. 
	\end{proof}
	
	This observation allows us to characterize a map, $(\fun_{i=1}^n a^i)_1 \to X$ as consisting of a family of maps, $\{\bigotimes_{i=1}^n a^i_{\delta_i,j}\to X\}_{1\leq j\leq n}$ such that the required diagram commutes.

	\begin{lemma}
		There is a natural isomorphism,
		\[
		(\alpha^0_{a,b,c} , \alpha^1_{a,b,c}): (a\fun b)\fun c\to a \fun (b\fun c)
		\]
	\end{lemma}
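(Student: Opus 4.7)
The plan is to build $\alpha_{a,b,c}$ component-wise, with the $0$-component coming from the associator of $\mcal{V}$ and the $1$-component arising from a universal property comparing two presentations of the ternary funny tensor as a wide pushout.

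\textbf{The 0-component.} Since $(a\fun b)_0 = a_0\otimes b_0$, the source has domain $(a_0\otimes b_0)\otimes c_0$ and the target has domain $a_0\otimes (b_0\otimes c_0)$. I would simply take $\alpha^0_{a,b,c} := \alpha_{a_0,b_0,c_0}$, the underlying associator in $\mcal{V}$.

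\textbf{The 1-component.} The strategy is to recognise both $((a\fun b)\fun c)_1$ and $(a\fun(b\fun c))_1$ as wide pushouts of essentially the same three-object diagram, namely the diagram $D$ with apex $a_0\otimes b_0\otimes c_0$ and legs
\begin{align*}
a_\diamond\otimes 1\otimes 1 &: a_0\otimes b_0\otimes c_0 \to a_1\otimes b_0\otimes c_0, \\
1\otimes b_\diamond \otimes 1 &: a_0\otimes b_0\otimes c_0 \to a_0\otimes b_1\otimes c_0, \\
1\otimes 1\otimes c_\diamond &: a_0\otimes b_0\otimes c_0 \to a_0\otimes b_0\otimes c_1.
\end{align*}
For the left-bracketed side, Corollary \ref{pushout-char} applied with $n=3$ and $a^1=a,a^2=b,a^3=c$ exhibits $((a\fun b)\fun c)_1$ as a wide pushout of $D$. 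For the right-bracketed side, I would repeat the proof of Lemma \ref{stillpush} with the other bracketing: the tensor $b\fun c$ is a pushout, and since $\mcal{V}$ is symmetric monoidal \emph{closed}, the functor $a_0\otimes -$ preserves this colimit, so applying Lemma \ref{stillpush} (with the attached leg now being the $a$-leg rather than the last-added leg) again produces a wide pushout of $D$, modulo the $\mcal{V}$-associator acting on each vertex. The universal property of wide pushouts, combined with the fact that all the $\mcal{V}$-associators are isomorphisms, then yields a unique isomorphism $\alpha^1_{a,b,c}:((a\fun b)\fun c)_1 \to (a\fun(b\fun c))_1$ compatible with the three cocone legs. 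Commutativity of the naturality square that makes $(\alpha^0_{a,b,c},\alpha^1_{a,b,c})$ a morphism in $[\to,\mcal{V}]$ follows from the defining cocone relations evaluated on the base leg $a_0\otimes b_0\otimes c_0 \to ((a\fun b)\fun c)_1$.

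\textbf{Naturality in $(a,b,c)$.} I would argue exactly as in Lemma \ref{sym}: given morphisms $f:a\to a'$, $g:b\to b'$, $h:c\to c'$, both composites $\alpha_{a',b',c'}\circ ((f\fun g)\fun h)$ and $(f\fun(g\fun h))\circ \alpha_{a,b,c}$ are morphisms out of the wide pushout $((a\fun b)\fun c)_1$, and agree after precomposition with each of the three pushout legs by naturality of $\otimes$ and of $\alpha$ in $\mcal{V}$. The universal property then forces equality.

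\textbf{Main obstacle.} The genuine content is the pushout characterisation for the right-bracketed ternary tensor: one must verify that Lemma \ref{stillpush} goes through symmetrically when the new leg is attached on the left rather than appended on the right, and that the $\mcal{V}$-associators induce the right isomorphism between the two wide-pushout presentations of $D$. Once this bookkeeping is in place, the existence, invertibility and naturality of $\alpha$ fall out of the universal property almost mechanically, in the same spirit as the unitor and braiding proofs of Lemma \ref{sym}. Verifying the pentagon axiom (needed to upgrade this data to a symmetric monoidal structure, which is what the remainder of the section presumably does) will likewise reduce to comparing two maps out of a four-fold wide pushout, each determined by four coherent legs.
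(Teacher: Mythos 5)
Your proposal is correct and follows essentially the same route as the paper: both bracketings of the ternary tensor are exhibited as wide pushouts (the right-bracketed one via the same argument as Lemma \ref{stillpush}, using that $a_0\otimes -$ preserves colimits), the $\mcal{V}$-associators connect the vertices of the two pushout diagrams, and Lemma \ref{main} then produces the unique comparison map, which is an isomorphism because each $\alpha_{a_i,b_j,c_k}$ is. Naturality is likewise handled in the paper exactly as you describe, by the diagram-chasing pattern of Lemma \ref{sym}.
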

	\begin{proof}
		Consider the following diagram: 
		\[\begin{tikzcd}[cramped,column sep=scriptsize]
			& {(a_1\otimes b_0)\otimes c_0} && {((a\fun b)\fun c)_1} \\
			&& {(a_0\otimes b_1)\otimes c_0} \\
			{(a_0\otimes b_0)\otimes c_0} && {(a_0\otimes b_0)\otimes c_1} \\
			& {a_1\otimes (b_0\otimes c_0)} && {(a\fun (b\fun c))_1} \\
			&& {a_0\otimes (b_1\otimes c_0)} \\
			{a_0\otimes (b_0\otimes c_0)} && {a_0\otimes (b_0\otimes c_1)}
			\arrow[from=3-1, to=3-3]
			\arrow[from=3-1, to=2-3]
			\arrow[from=2-3, to=1-4]
			\arrow[from=3-1, to=1-2]
			\arrow[from=1-2, to=1-4]
			\arrow[from=3-3, to=1-4]
			\arrow[from=6-1, to=4-2]
			\arrow[from=6-1, to=6-3]
			\arrow[from=4-2, to=4-4]
			\arrow[from=6-3, to=4-4]
			\arrow[from=6-1, to=5-3]
			\arrow[from=5-3, to=4-4]
			\arrow["{\alpha^1_{a,b,c} = \alpha_{a_0,b_0,c_0}}"{description, pos=0.3}, from=3-1, to=6-1]
			\arrow["{\alpha_{a_1,b_0,c_0}}"{description, pos=0.3}, from=1-2, to=4-2]
			\arrow["{\alpha_{a_0,b_1,c_0}}"{description, pos=0.8}, curve={height=30pt}, from=2-3, to=5-3]
			\arrow["{\alpha_{a_0,b_0,c_1}}"{description, pos=0.3}, from=3-3, to=6-3]
			\arrow["{\alpha^1_{a,b,c}}"{description, pos=0.3}, dashed, from=1-4, to=4-4]
		\end{tikzcd}\]
		
		The plane on top is a wide pushout diagram by Lemma \ref{stillpush}. An analogous argument as the one used in Lemma \ref{stillpush} reveals that the plane at the bottom is also a wide pushout diagram. Thus, the map, $\alpha^1_{a,b,c} $ exists by Lemma \ref{main}, and is an isomorphism since 
		each of $\alpha_{a_i,b_j,c_k}$ are. Naturality also follows by routine diagram chasing of the style used in the proof of Lemma \ref{sym}.
	\end{proof}
	
	Proofs of all the coherence laws can be obtained in a similar fashion - by appealing to the respective coherence law in $ \mcal{V}$, and then applying Lemma \ref{main}. For example, here is a proof that the braiding described in $\ref{sym}$ satisfies the inverse law: consider the cube
	\[\begin{tikzcd}[cramped]
		& {a_1\otimes b_0} && {(a\fun b)_1} \\
		{a_0\otimes b_0} && {a_0\otimes b_1} \\
		& {b_0\otimes a_1} && {(b\fun a)_1} \\
		{b_0\otimes a_0} && {b_1\otimes a_0} \\
		& {a_1\otimes b_0} && {(a\fun b)_1} \\
		{a_0\otimes b_0} && {a_0\otimes b_1}
		\arrow[from=2-1, to=2-3]
		\arrow[from=2-1, to=1-2]
		\arrow[from=1-2, to=1-4]
		\arrow[from=2-3, to=1-4]
		\arrow["{s^1_{a,b} = s_{a_0,b_0}}"{description, pos=0.6}, from=2-1, to=4-1]
		\arrow[from=4-1, to=3-2]
		\arrow[from=4-1, to=4-3]
		\arrow[from=3-2, to=3-4]
		\arrow[from=4-3, to=3-4]
		\arrow["{s_{a_0,b_1}}"{description, pos=0.6}, from=2-3, to=4-3]
		\arrow["{s_{a_1,b_0}}"{description, pos=0.6}, from=1-2, to=3-2]
		\arrow["{s^1_{a,b}}"{description}, dashed, from=1-4, to=3-4]
		\arrow[from=6-1, to=5-2]
		\arrow[from=6-1, to=6-3]
		\arrow[from=5-2, to=5-4]
		\arrow[from=6-3, to=5-4]
		\arrow["{s^1_{b,a} = s_{b_0,a_0}}"{description, pos=0.6}, from=4-1, to=6-1]
		\arrow["{s_{b_1,a_0}}"{description, pos=0.6}, from=4-3, to=6-3]
		\arrow["{s_{b_0,a_1}}"{description, pos=0.6}, from=3-2, to=5-2]
		\arrow["{s^1_{b,a}}"{description}, dashed, from=3-4, to=5-4]
	\end{tikzcd}\]
	However since $s_{b_j, a_i}\circ s_{a_i,b_j} = 1_{a_i,b_j}$, the cube `compresses' to give: 
	\[\begin{tikzcd}[cramped]
		& {a_0\otimes b_1} && {(a\fun b)_1} \\
		{a_0\otimes b_0} && {a_1\otimes b_0} \\
		& {a_0\otimes b_1} && {(a\fun b)_1} \\
		{a_0\otimes b_0} && {a_1\otimes b_0}
		\arrow[from=2-1, to=1-2]
		\arrow[from=2-1, to=2-3]
		\arrow[from=1-2, to=1-4]
		\arrow[from=2-3, to=1-4]
		\arrow[from=4-1, to=3-2]
		\arrow[from=4-1, to=4-3]
		\arrow[from=3-2, to=3-4]
		\arrow[from=4-3, to=3-4]
		\arrow["{1_{a_0,b_0}}"{description, pos=0.6}, from=2-1, to=4-1]
		\arrow["{1_{a_0,b_1}}"{description, pos=0.6}, from=1-2, to=3-2]
		\arrow["{1_{a_1,b_0}}"{description, pos=0.6}, from=2-3, to=4-3]
		\arrow["{s^1_{b,a}\circ s^1_{a,b}}"{description, pos=0.6}, dashed, from=1-4, to=3-4]
	\end{tikzcd}\]
	The map $s^1_{b,a}\circ s^1_{a,b}$ is the unique map that makes the cube commute. However, $1_{(a\fun b)_1}$ also makes the cube commute. Hence, these two morphisms must be equal. Thus, the inverse law is established.  
	
	\begin{proposition}
		The bifunctor, $\fun: [\to,\mcal{V}]\times [\to,\mcal{V}] \to [\to,\mcal{V}]$ defines a symmetric monoidal structure on $[\to,\mcal{V}]$ with unit, $(I\to I)$.
	\end{proposition}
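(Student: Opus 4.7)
The plan is to verify each coherence axiom componentwise by reducing to the corresponding coherence in $\mcal{V}$ and then invoking the uniqueness clause of Lemma \ref{main}. More precisely, every coherence equation for a symmetric monoidal structure is an equation between two parallel morphisms whose domain is an $n$-ary funny tensor $(\fun_{i=1}^n a^i)_1$. By Corollary \ref{pushout-char}, this domain is a wide pushout of the objects $\bigotimes_{i=1}^n a^i_{\delta_{i,j}}$, so to show two such morphisms agree it suffices to show they agree after precomposition with each canonical injection $k_j$. On each such $\bigotimes_{i=1}^n a^i_{\delta_{i,j}}$, everything unwinds to a composite of the unitors, associator, and braiding of $\mcal{V}$ applied to specific objects, and the required equality then becomes an instance of the corresponding coherence axiom in the cosmos $\mcal{V}$.

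Concretely, I would proceed as follows. First, I would record that the $0$-components of all the coherence isomorphisms $\lambda^0$, $\rho^0$, $\alpha^0$, $s^0$ are (by their definitions in Lemma \ref{sym} and the associator lemma) literally the corresponding coherence isomorphisms of $\mcal{V}$ on the objects $a_0$, $b_0$, $c_0$, $d_0$. Hence the coherence axioms for the $0$-components are immediate from those in $\mcal{V}$. Second, for the $1$-components I would argue as in the proof of the braiding inverse law already given: the two sides of each axiom are morphisms out of a pushout that make parallel cubes commute, and by stacking the cubes (using naturality of $\otimes$ in $\mcal{V}$) both sides reduce to the unique mediating map determined by the same legs, namely those built from the $\mcal{V}$-coherence isomorphisms on objects $a_{i}\otimes b_{j}\otimes \cdots$. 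This handles the triangle, pentagon, hexagon, and the remaining inverse law for the braiding. Naturality of $\alpha^1$ and $s^1$ in each argument follows by the same kind of diagram chase already carried out for $\rho^1_a$ in Lemma \ref{sym}, relying on Lemma \ref{main} to identify the two candidate mediating maps.

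The only mild obstacle is bookkeeping. The pentagon axiom requires a $4$-ary pushout, and one needs to check that the composite isomorphisms $(\alpha\fun 1)\circ \alpha\circ (1\fun \alpha)$ and $\alpha\circ \alpha$ agree after precomposition with each of the four injections of Corollary \ref{pushout-char}. In each case the injection picks out a factor $a^j_\diamond$ in exactly one coordinate and the composite collapses to Mac Lane's pentagon for $a_0,b_0,c_0,d_0$ in $\mcal{V}$ (up to some identity tensor factors), hence to a true identity. The hexagon is identical in spirit, using the braiding coherence in $\mcal{V}$. Similarly, the triangle reduces, on each of the two injections, to the triangle identity in $\mcal{V}$. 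Unit $(I \xrightarrow{1_I} I)$ is correct because $\lambda$ and $\rho$ are natural isomorphisms into the identity functor on $[\to,\mcal{V}]$ by Lemma \ref{sym}.

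In summary, the proof is a systematic application of the slogan \emph{``to check a morphism equation out of $(\fun_{i=1}^n a^i)_1$, check it on each summand of the defining pushout, where it becomes a coherence equation in $\mcal{V}$.''} Together with the naturality and invertibility already established, this upgrades $(\fun, (I\to I), \lambda, \rho, \alpha, s)$ to a symmetric monoidal structure on $[\to,\mcal{V}]$.
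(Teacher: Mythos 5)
Your proposal is correct and follows essentially the same strategy as the paper: the coherence isomorphisms are constructed via Lemma \ref{main}, their $0$-components are the corresponding $\mcal{V}$-coherences, and each axiom for the $1$-components is verified by exploiting the (wide) pushout description of the iterated funny tensor together with the uniqueness of the mediating map, reducing everything to the coherence laws of $\mcal{V}$ — exactly the method the paper illustrates with the braiding inverse law before asserting the proposition. Your explicit remark that the pentagon requires the $4$-ary pushout characterization (and that non-left-bracketed tensors are also wide pushouts of the same diagram) is a useful clarification of a step the paper leaves implicit.
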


	\section{Some properties of the Funny Tensor}

	This section explores some straightforward properties of the funny tensor. Since $\mcal{V}$ is a cosmos, it is symmetric monoidal closed. As a consequence, for any $a \in \mcal{C}$, the functor, $-\otimes b: \mcal{V}\to \mcal{V}$ is a left adjoint, and hence preserves colimits. Similarly, the functor $- \fun b: [\to,\mcal{V}]\to [\to,\mcal{V}]$ also preserves colimits.

	\begin{lemma}
		Let $b \in [\to,\mcal{V}]$ be any object. Then, the functor $-\fun b$ is cocontinuous. 
	\end{lemma}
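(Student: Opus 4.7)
The plan is to prove cocontinuity componentwise, using two facts already available from the setup: colimits in $[\to,\mcal{V}]$ are computed pointwise in $\mcal{V}$ (since $\to$ is small and $\mcal{V}$ is cocomplete), and tensoring with a fixed object of $\mcal{V}$ preserves all colimits (since $\mcal{V}$ is symmetric monoidal closed, hence each $-\otimes X$ is a left adjoint). Given a small diagram $D:\mcal{J}\to [\to,\mcal{V}]$ with colimit $a$, I first record that $a_0$ is the colimit of $j\mapsto D(j)_0$ and $a_1$ is the colimit of $j\mapsto D(j)_1$ in $\mcal{V}$.

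For the domain component, $(a\fun b)_0 = a_0\otimes b_0$, and since $-\otimes b_0$ preserves colimits, this is canonically isomorphic to the colimit of $j\mapsto D(j)_0\otimes b_0 = (D(j)\fun b)_0$. For the codomain component, $(a\fun b)_1$ is by definition the pushout of the span $a_1\otimes b_0\leftarrow a_0\otimes b_0\to a_0\otimes b_1$. Each corner of this span is, by the same argument applied to $-\otimes b_0$ and $-\otimes b_1$, the $\mcal{J}$-indexed colimit of the corresponding corners of the spans defining $(D(j)\fun b)_1$. The decisive step is then the standard fact that colimits commute with colimits: since a pushout is itself a colimit, $(a\fun b)_1$ is naturally isomorphic to the $\mcal{J}$-colimit of $j\mapsto (D(j)\fun b)_1$.

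Finally I would verify that these two componentwise isomorphisms are compatible with the arrows $(a\fun b)_\diamond$ and the colimit arrow of $j\mapsto (D(j)\fun b)_\diamond$, so that they assemble into an isomorphism in $[\to,\mcal{V}]$. This is a diagram chase in the spirit of Lemma~\ref{main}: both maps out of the pushout are uniquely determined by their composites with the three canonical inclusions, and on each inclusion they agree by naturality of the tensor and of the colimit cocone. The argument presents no real obstacle beyond careful bookkeeping of two nested instances of colimit-interchange; no genuinely new ideas beyond pointwise colimits and the fact that each $-\otimes X$ preserves them are required.
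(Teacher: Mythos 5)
Your proof is correct and uses exactly the same ingredients as the paper's: pointwise colimits in $[\to,\mcal{V}]$, preservation of colimits by $-\otimes X$ in the closed $\mcal{V}$, and the interchange of colimits to identify the pushout of colimits with the colimit of pushouts. The only (cosmetic) difference is that the paper reduces to checking coproducts and coequalizers separately, whereas you run the argument uniformly for an arbitrary small diagram.
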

	\begin{proof}
		We show that co-equalizers and (arbitrary) coproducts are preserved. First, we note that $[\to,\mcal{V}]$ is cocomplete, with products and coequalizers computed pointwise. That is, if $\{a^i := a_\diamond^i: a^i_0\to a_1^i\}_{i\in I}$ is an $I$-indexed collection  in $[\to,\mcal{V}]$, the coproduct can be computed in $\mcal{V}$ as the arrow 
		\[\sum_{i\in I} a^i = \sum_{i\in I} a^i_0    \xrightarrow{\sum_{i\in I} a^i_\diamond} \sum_{i\in I} a^i_1 \]
		
		Fix some $b_\diamond: b_0\to b_1$. Consider the following diagram.
		\[\begin{tikzcd}[cramped]
			& {(\sum_{i\in I} a^i_0)\otimes b_1} && {(\sum_{i\in I} a^i)\fun b} \\
			\\
			{(\sum_{i\in I} a^i_0)\otimes b_0} && {(\sum_{i\in I} a^i_1)\otimes b_0} \\
			& {\sum_{i\in I} (a^i_0\otimes b_1)} && {\sum_{i \in I} (a^i\fun b)} \\
			\\
			{\sum_{i\in I} (a^i_0\otimes b_0)} && {\sum_{i\in I} (a^i_1\otimes b_0)}
			\arrow[from=3-1, to=3-3]
			\arrow["{j_1}"{description}, from=1-2, to=1-4]
			\arrow["{j_2}"{description}, from=3-3, to=1-4]
			\arrow["\cong"{description}, from=3-1, to=6-1]
			\arrow["{\sum_{i=1}^n (1\otimes b_\diamond)}"{description}, from=6-1, to=4-2]
			\arrow["{\sum_{i=1}^n (a^i_\diamond\otimes 1)}"', from=6-1, to=6-3]
			\arrow["{i_1}"{description, pos=0.6}, from=4-2, to=4-4]
			\arrow["{i_2}"{description, pos=0.6}, from=6-3, to=4-4]
			\arrow["\cong"{description}, from=3-3, to=6-3]
			\arrow["\cong"{description, pos=0.8}, from=1-2, to=4-2]
			\arrow["\cong"{description}, dashed, from=1-4, to=4-4]
			\arrow["{1\otimes b_\diamond}"{description}, from=3-1, to=1-2]
		\end{tikzcd}\]
		
		The top square clearly computes $(\sum_{i\in I} a^i) \fun b$ and  the bottom square computes $\sum_{i\in I} (a^i \otimes b)$. Since $\otimes$ preserves colimits in $\mcal{V}$, the maps, $(\sum_{i\in I} a^i_j)\otimes b_k \to \sum_{i\in I} (a^i_j\otimes b_k)$ are isomorphisms for $j,k\in \{0,1\}$. Finally, the square at the bottom is a pushout square since colimits commute past each other. Hence, the induced map, $(\sum_{i\in I} a^i)\fun b) \to \sum_{i\in I} (a^i\fun b)$ is an isomorphism.

		Similarly, given a diagram, $(f_0,f_1), (g_0,g_1): (a_\diamond: a_0\to a_1) \to (b_\diamond: b_0\to b_1)$  in $[\to,\mcal{V}]$, the coequalizer is computed in $\mcal{V}$ as:
		\[\begin{tikzcd}[cramped]
			{a_0} && {b_0} && {c_0} \\
			\\
			{a_1} && {b_1} && {c_1}
			\arrow["{a_\diamond}"', from=1-1, to=3-1]
			\arrow["{f_1}", shift left=2, from=3-1, to=3-3]
			\arrow["{f_0}", shift left=2, from=1-1, to=1-3]
			\arrow["{g_0}"', shift right=2, from=1-1, to=1-3]
			\arrow["{g_1}"', shift right=2, from=3-1, to=3-3]
			\arrow["{b_0}", from=1-3, to=3-3]
			\arrow["{h_0}", from=1-3, to=1-5]
			\arrow["{h_1}"', from=3-3, to=3-5]
			\arrow["{c_0}", dashed, from=1-5, to=3-5]
		\end{tikzcd}\]
		Here, $c_1$ is a universal cocone for $f_0,g_0: a_0\to b_0$ with $b_0\to c_1$ given by $h_1\circ b_0$, and so $c_\diamond: c_0\to c_1$ is the unique map that makes the square on the right commute. Let $d \in [\to,\mcal{V}]$ be given. Then, there are coequalizer diagrams,
		\[\begin{tikzcd}[cramped]
			{a_i\otimes d_j} && {b_i\otimes d_j} && {e^{i,j}} && {c_i\otimes d_j}
			\arrow["{f_i\otimes 1}", shift left=2, from=1-1, to=1-3]
			\arrow["{y^{i,j}}", from=1-3, to=1-5]
			\arrow["{g_i\otimes 1}"', shift right=2, from=1-1, to=1-3]
			\arrow["\cong"{description}, dashed, from=1-5, to=1-7]
			\arrow["{h_i\otimes 1}", curve={height=-24pt}, from=1-3, to=1-7]
		\end{tikzcd}\]
		in $\mcal{V}$ for each $i,j\in\{0,1\}$. Here, $e^{i,j}$ is the chosen coequalizer of $f_i\otimes 1$ and $g_i\otimes 1$, and the map $e^{i,j}\to c_i\otimes d_j$ is an isomorphism since $\otimes - d_j$ preserves colimits in $\mcal{V}$. Now, consider the following cube:
		\[\begin{tikzcd}[cramped]
			& {a_0\otimes d_1} && {a\fun d} \\
			{a_0\otimes d_0} && {a_1\otimes d_0} \\
			& {b_0\otimes d_1} && {b\fun d} \\
			{b_0\otimes d_0} && {b_1\otimes d_0} \\
			& {e^{0,1}} && e \\
			{e^{0,0}} && {e^{1,0}} \\
			& {c_0\otimes d_1} && {c\fun d} \\
			{c_0\otimes d_0} && {c_1\otimes d_0}
			\arrow[from=2-1, to=1-2]
			\arrow[from=1-2, to=1-4]
			\arrow[from=2-3, to=1-4]
			\arrow[from=2-1, to=2-3]
			\arrow["{g_0\otimes 1}", shift left=2, from=2-1, to=4-1]
			\arrow[from=4-1, to=3-2]
			\arrow[from=4-1, to=4-3]
			\arrow[from=3-2, to=3-4]
			\arrow[from=4-3, to=3-4]
			\arrow[from=6-1, to=5-2]
			\arrow[from=6-1, to=6-3]
			\arrow[from=5-2, to=5-4]
			\arrow[from=6-3, to=5-4]
			\arrow[from=8-1, to=7-2]
			\arrow[from=8-1, to=8-3]
			\arrow[from=7-2, to=7-4]
			\arrow["\cong"{description, pos=0.7}, from=6-1, to=8-1]
			\arrow["{y^{0,0}}"{description, pos=0.7}, from=4-1, to=6-1]
			\arrow["\cong"{description, pos=0.7}, from=5-2, to=7-2]
			\arrow["{y^{0,1}}"{description, pos=0.7}, from=3-2, to=5-2]
			\arrow["\cong"{description, pos=0.7}, from=6-3, to=8-3]
			\arrow["\cong"{description, pos=0.7}, dashed, from=5-4, to=7-4]
			\arrow["{f\fun 1}"', shift right=2, dashed, from=1-4, to=3-4]
			\arrow[shift right=2, from=1-2, to=3-2]
			\arrow["y"{description, pos=0.7}, dashed, from=3-4, to=5-4]
			\arrow["{y^{1,0}}"{description, pos=0.7}, from=4-3, to=6-3]
			\arrow["{f_1}"'{pos=0.7}, shift right=2, from=2-3, to=4-3]
			\arrow[from=8-3, to=7-4]
			\arrow["{f_0\otimes 1}"', shift right=2, from=2-1, to=4-1]
			\arrow["{g_0}"{pos=0.7}, shift left=2, from=2-3, to=4-3]
			\arrow["{g\fun 1}", shift left=2, from=1-4, to=3-4]
			\arrow[shift left=2, from=1-2, to=3-2]
			\arrow["{h_0\otimes 1}"{description}, curve={height=30pt}, from=4-1, to=8-1]
			\arrow["{h\fun 1}"{description}, curve={height=-24pt}, from=3-4, to=7-4]
			\arrow["{h_0\otimes 1}"{description}, curve={height=30pt}, from=3-2, to=7-2]
			\arrow["{h_1\otimes 1}"{description, pos=0.6}, curve={height=-30pt}, from=4-3, to=8-3]
		\end{tikzcd}\]
		Since coequalizers are computed pointwise, $(e^{0,0}\to e)$ is the coequalizer of $(f\fun 1, g\fun 1): a\fun d\to b\fun d$.
		The object $ e\in \mcal{V}$ is the pushout of the square it is in since colimits commute with each other, and the	 morphism $e \to c\fun d$ is an isomorphism since each of the morphisms $e^{i,j}\to c_0\otimes d_j$ are. 		
	\end{proof}

	Since $[\to,\mcal{V}]$ has two monoidal products, it is natural to want to know how they are related.
	First, note that the funny tensor is always `behind' the inherited pointwise tensor, i.e, there is always a canonical morphism,
	$\mu: (a\fun b) \to a\otimes b
	$
	induced by \ref{main2}:
	\[\begin{tikzcd}[cramped]
		& {a_0\otimes b_1} && {(a\fun b)_1} \\
		{a_0\otimes b_0} && {a_1\otimes b_0} \\
		& {a_1\otimes b_1} && {a_1\otimes b_1} \\
		{a_0\otimes b_0} && {a_1\otimes b_1}
		\arrow["{a_\diamond\otimes 1}"{description}, from=2-1, to=2-3]
		\arrow["\iota"{description}, from=2-3, to=1-4]
		\arrow["1"{description, pos=0.7}, from=3-2, to=3-4]
		\arrow["{a_\diamond\otimes b_\diamond}"{description}, from=4-1, to=3-2]
		\arrow["{a_\diamond\otimes b_\diamond}"{description}, from=4-1, to=4-3]
		\arrow["1"{description, pos=0.7}, from=4-3, to=3-4]
		\arrow["{1 = \mu^0_{a,b}}"{description, pos=0.7}, from=2-1, to=4-1]
		\arrow["{1\otimes b_\diamond}"{description}, from=2-1, to=1-2]
		\arrow["{a_{\diamond}\otimes 1}"{description, pos=0.8}, from=1-2, to=3-2]
		\arrow["{1\otimes b_\diamond}"{description, pos=0.7}, from=2-3, to=4-3]
		\arrow["{\mu^1_{a,b}}"{description, pos=0.8}, dashed, from=1-4, to=3-4]
		\arrow["\iota"{description}, from=1-2, to=1-4]
	\end{tikzcd}\]
	Further, since both the monoidal structures have the same unit, there is an isomorphism
	\[
	1: (I \to I) \to (I\to I)\]	
	All this information points to the following fact.
	\begin{lemma}
		The identity functor is a lax monoidal functor between the monoidal categories $([\to,\mcal{V}], \fun, I\to I)$ and $([\to,\mcal{V}], \otimes, I\to I)$.
	\end{lemma}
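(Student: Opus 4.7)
The plan is to exhibit the lax structure maps and verify the three coherence axioms. For the tensor constraint take $\phi_{a,b} := \mu_{a,b} : a\fun b \to a\otimes b$ from the diagram preceding the statement, and for the unit constraint take $\phi_0 := 1_{(I\to I)}$. Naturality of $\mu$ in both arguments follows from the universal property of the pushout defining $(a\fun b)_1$ together with functoriality of $\otimes$ in $\mcal{V}$, using the same cube-chasing argument that appeared in Lemma \ref{sym}.

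The key simplifying observation is that at index $0$ the two tensors agree on objects as $a_0\otimes b_0$, and $\mu^0_{a,b}$ is the identity. Consequently every coherence diagram is automatic at index $0$, and the entire content resides in the corresponding diagram of index-$1$ components. Since each object $(\fun_{i=1}^n a^i)_1$ is a wide pushout of $n$-ary tensors in $\mcal{V}$ by Corollary \ref{pushout-char}, two morphisms out of it agree precisely when they agree after precomposition with each pushout injection. This is the uniform strategy for each coherence law.

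For the associativity hexagon, we must compare two composites $(a\fun b)\fun c \to a\otimes (b\otimes c)$. Their common domain at index $1$ is, by Corollary \ref{pushout-char}, a wide pushout with three legs $a_1\otimes b_0\otimes c_0$, $a_0\otimes b_1\otimes c_0$, and $a_0\otimes b_0\otimes c_1$. Precomposing either composite with any one of these legs collapses the pushout-induced maps (using the same trick used to define $\alpha^1_{a,b,c}$) and reduces the check to an associativity square in $\mcal{V}$ with identities in the idle slots, which commutes since $(\mcal{V},\otimes)$ is monoidal. The two unit diagrams are handled analogously, with only two pushout legs to inspect; each reduces via Lemma \ref{main} to the corresponding unit law in $\mcal{V}$.

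The main obstacle is nothing more than diagrammatic bookkeeping: each edge of the hexagon is itself defined by a universal property, so unfolding them to reveal the identifications in $\mcal{V}$ is lengthy but mechanical. Because the target $\otimes$ is computed levelwise while the source $\fun$ is computed as a pushout of those same levelwise tensors, no genuinely new coherence problem arises beyond that already present in $\mcal{V}$, and the entire verification is governed by the existing monoidal coherence of the cosmos.
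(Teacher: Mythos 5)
Your proposal is correct and matches the paper's (one-line) proof, which likewise reduces everything to the coherence axioms in $\mcal{V}$ together with Lemma \ref{main}: your observations that the index-$0$ components are trivial (both tensors and $\mu^0$ being the identity there) and that the index-$1$ components are checked leg-by-leg on the wide pushouts of Corollary \ref{pushout-char} are exactly the intended ``straightforward diagram chasing.'' The only caveat is one of convention: with $\mu_{a,b}\colon a\fun b\to a\otimes b$ as the structure map, the identity functor is strictly speaking lax monoidal from $([\to,\mcal{V}],\otimes)$ to $([\to,\mcal{V}],\fun)$ (equivalently, oplax in the direction you set up), but this only reorients the coherence hexagon and does not affect your verification.
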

	\begin{proof}
		Straightforward diagram chasing using coherence axioms in $\mcal{V}$, and Lemma \ref{main}. 
	\end{proof}

	Further, the naturality of $\mu$ also shows that when given $a,b \in [\to,\mcal{V}]$, there are natural transformations, as shown below:
	\[\begin{tikzcd}[cramped,column sep=scriptsize]
		{[\to,\mcal{V}]} &&& {[\to,\mcal{V}]} && {[\to,\mcal{V}]} &&& {[\to,\mcal{V}]}
		\arrow[""{name=0, anchor=center, inner sep=0}, "{-\otimes b}"{description}, curve={height=24pt}, from=1-1, to=1-4]
		\arrow[""{name=1, anchor=center, inner sep=0}, "{-\fun b}"{description}, curve={height=-24pt}, from=1-1, to=1-4]
		\arrow[""{name=2, anchor=center, inner sep=0}, "{a\fun -}"{description}, curve={height=-24pt}, from=1-6, to=1-9]
		\arrow[""{name=3, anchor=center, inner sep=0}, "{a\otimes -}"{description}, curve={height=24pt}, from=1-6, to=1-9]
		\arrow["{\mu_{-,b}}", shorten <=6pt, shorten >=6pt, Rightarrow, from=1, to=0]
		\arrow["{\mu_{a,-}}", shorten <=6pt, shorten >=6pt, Rightarrow, from=2, to=3]
	\end{tikzcd}\]

	When the monoidal product  $\otimes$ in $\mcal{V}$ is given by the cartesian product (like in \category{Set}), the componentwise product on $[\to,\mcal{V}]$ is also cartesian. As such, there are projections, $\pi^{a,b}:a\times b\to a$ and $\pi'^{a,b}: a\times b\to b$, which are \textit{natural} in $a,b$. So, we have natural transformations, $\pi^{-,b}: -\times b \Rightarrow id $  and $\pi^{a,-}: a\times -\Rightarrow id$, where $id$ is the identity functor on $[\to,\mcal{V}]$. Thus, we can equip the funny tensor with projections that are natural by considering the composites:
\[\begin{tikzcd}[cramped]
	{[\to,\cal{V}]} &&& {[\to,\cal{V}]} && {[\to,\cal{V}]} &&& {[\to,\cal{V}]}
	\arrow[""{name=0, anchor=center, inner sep=0}, "{-\times b}"{description}, from=1-1, to=1-4]
	\arrow[""{name=1, anchor=center, inner sep=0}, "{-\fun b}"{description}, curve={height=-30pt}, from=1-1, to=1-4]
	\arrow[""{name=2, anchor=center, inner sep=0}, "{a\fun -}"{description}, curve={height=-30pt}, from=1-6, to=1-9]
	\arrow[""{name=3, anchor=center, inner sep=0}, "{a\times -}"{description}, from=1-6, to=1-9]
	\arrow[""{name=4, anchor=center, inner sep=0}, "id"{description}, curve={height=30pt}, from=1-1, to=1-4]
	\arrow[""{name=5, anchor=center, inner sep=0}, "id"{description}, curve={height=30pt}, from=1-6, to=1-9]
	\arrow["{\mu_{-,b}}", shorten <=4pt, shorten >=4pt, Rightarrow, from=1, to=0]
	\arrow["{\pi^{-,b}}", shorten <=4pt, shorten >=4pt, Rightarrow, from=0, to=4]
	\arrow["{\mu_{a,-}}", shorten <=4pt, shorten >=4pt, Rightarrow, from=2, to=3]
	\arrow["{\pi'^{a,-}}", shorten <=4pt, shorten >=4pt, Rightarrow, from=3, to=5]
\end{tikzcd}\]

	\begin{lemma} When $\otimes$ on $\mcal{V}$ is a cartesian monoidal structure, there are projection maps, 
		\[
		a\fun b \to a \quad \text{ and }\quad a\fun b \to b
		\]
		out of the funny tensor in $[\to,\mcal{V}]$ that are
		natural in $a$ and in $b$ respectively.
	\end{lemma}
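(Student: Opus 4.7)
The plan is to build the two projections as composites of maps already at hand. Since the monoidal product $\otimes$ on $\mcal{V}$ is cartesian, the pointwise tensor on $[\to,\mcal{V}]$ coincides with the cartesian product on $[\to,\mcal{V}]$, and so it carries canonical projections $\pi^{a,b} \colon a \times b \to a$ and $\pi'^{a,b} \colon a \times b \to b$ computed componentwise. Combining these with the comparison map $\mu_{a,b} \colon a \fun b \to a \otimes b$ constructed just above, I would define the desired projections as the composites $\pi^{a,b} \circ \mu_{a,b} \colon a \fun b \to a$ and $\pi'^{a,b} \circ \mu_{a,b} \colon a \fun b \to b$. Concretely, unfolding $\mu$ at the pushout level, the $i$-component ($i \in \{0,1\}$) of the first projection is $\pi_{a_0, b_i} \colon a_0 \times b_i \to a_0$ on the $a_0 \times b_1$ summand and $\pi_{a_1, b_0} \colon a_1 \times b_0 \to a_1$ on the $a_1 \times b_0$ summand, and these agree on the common $a_0 \times b_0$ leg of the pushout, so Lemma \ref{main} hands back the required arrow out of $(a \fun b)_1$.

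For naturality, the two diagrams displayed immediately before the lemma already present this as a horizontal composition of natural transformations: $\mu_{-,b}$ is natural (as the component of the lax monoidal structure on the identity functor), and the cartesian projection $\pi^{-,b}$ is natural because the cartesian product on $[\to,\mcal{V}]$ is computed pointwise. Pasting the two gives naturality of $\pi^{a,b} \circ \mu_{a,b}$ in $a$ with $b$ fixed, and a symmetric argument handles $\pi'^{a,b} \circ \mu_{a,b}$ in $b$ with $a$ fixed.

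Since all ingredients are already in place, the proof is essentially bookkeeping, and I do not expect any genuine obstacle; the main point of the lemma is less its content than that it captures, inside $[\to,\mcal{V}]$, the same feature of the funny tensor on \category{Cat} that makes it suited to modelling sesquiness. If one preferred a fully self-contained argument bypassing $\mu$, one could construct the first projection directly by a single application of Lemma \ref{main} to the cube whose top face is the defining pushout for $a \fun b$, whose bottom face is the identity cocone on $a_1$, and whose vertical edges are the evident projections $a_i \times b_j \to a_i$ in $\mcal{V}$; commutativity of its side faces is immediate from the universal property of the cartesian product in $\mcal{V}$, and this would also make the naturality in $a$ manifest by a second invocation of the uniqueness clause of Lemma \ref{main}.
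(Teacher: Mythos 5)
Your proposal is correct and matches the paper's argument exactly: the paper also defines the projections as the pastings $\pi^{-,b}\circ\mu_{-,b}$ and $\pi'^{a,-}\circ\mu_{a,-}$ of the lax-monoidal comparison $\mu$ with the pointwise cartesian projections, with naturality coming for free from the composition of natural transformations. (Only a small slip in your unfolding: on the $a_0\otimes b_1$ leg of the pushout the $1$-component must land in $a_1$, so it is $a_\diamond\circ\pi_{a_0,b_1}$ rather than a bare projection — but this is exactly what the composite with $\mu$ produces, so nothing is affected.)
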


	Another relationship between the two tensors is the existence of an interchange law. Since $\otimes$ is symmetric monoidal, there is an isomorphism,
	\[
	(a\otimes b) \otimes (c\otimes d) \to (a\otimes c)\otimes (b\otimes d)
	\]
	This map induces an interchange law between the funny tensor and the inherited tensor, as seen in the following diagram:
	
	\[\begin{tikzcd}[ column sep=0.01ex]
		& {(a_1\otimes b_1)\otimes (c_0\otimes d_0)} && {((a\otimes b)\fun (c\otimes d))_1} \\
		{(a_0\otimes b_0)\otimes (c_0\otimes d_0)} && {(a_0\otimes b_0)\otimes (c_1\otimes d_1)} \\
		& {(a_1\otimes c_0)\otimes (b_1\otimes d_0)} && {((a\fun c)_1\otimes (b\fun d)_1)} \\
		{(a_0\otimes c_0)\otimes (c_0\otimes d_0)} && {(a_0\otimes c_1)\otimes  (b_0\otimes d_1)}
		\arrow[from=2-1, to=1-2]
		\arrow[from=2-1, to=2-3]
		\arrow[from=1-2, to=1-4]
		\arrow[from=2-3, to=1-4]
		\arrow[from=4-1, to=3-2]
		\arrow[from=4-1, to=4-3]
		\arrow[from=3-2, to=3-4]
		\arrow[from=4-3, to=3-4]
		\arrow["coh"{description, pos=0.7}, from=1-2, to=3-2]
		\arrow["{\zeta_{a,b,c,d}^0 = coh}"{description, pos=0.7}, from=2-1, to=4-1]
		\arrow["coh"{description, pos=0.7}, from=2-3, to=4-3]
		\arrow["{\zeta^1_{a,b,c,d}}"{description, pos=0.7}, dashed, from=1-4, to=3-4]
	\end{tikzcd}\]
	
	Since the coherence isomorphisms are natural, it follows that $\zeta$ is a natural transformation too by the typical argument that talks about the cube commuting. 
	
	When the tensor $\otimes $ on $\mcal{V}$ is assumed to be cartesian monoidal, i.e, if $\otimes$ is given by a choice of cartesian product structure on $\mcal{V}$, the inherited tensor on $[\to,\mcal{V}]$ is also cartesian monoidal. It is a known fact (\textit{e.g.} \cite{2-monoidal}) that  when a category $\mcal{C}$ has a cartesian monoidal structure, $(\mcal{C},\times, 1)$ and another monoidal structure $(\mcal{C}, \fun, I)$, it is possible to obtain an interchange morphism by the universal property of cartesian products, as:
	
	\[\begin{tikzcd}[cramped]
		&& {(a\times b)\fun (c\times d)} \\
		\\
		{(a\fun c)} && {(a\fun c)\times (b\fun d)} && {(b\fun  d)}
		\arrow["{\pi\fun \pi}"{description}, from=1-3, to=3-1]
		\arrow["{\pi'\fun \pi'}"{description}, from=1-3, to=3-5]
		\arrow["\pi"{description}, from=3-3, to=3-1]
		\arrow["{\pi'}"{description}, from=3-3, to=3-5]
		\arrow["{\zeta'_{a,b,c,d}}"{description}, dashed, from=1-3, to=3-3]
	\end{tikzcd}\]

	Though, on the face of it, there are two interchange laws, $\zeta$ and $\zeta'$ in $[\to,\mcal{V}]$, these can easily be seen to be equal since $\zeta$ satisfies the same universal property that $\zeta'$ does.  The morphism $\zeta$ is obtained as a morphism \textit{out} of the colimit $(a\times b)\fun (c\times d)$, while the morphism $\zeta'$ is obtained as a morphism \textit{into} the limit $(a\fun c)\times (b\fun d)$. 
	
	In general, it is known that the interchange law $\zeta'$ makes the 5-tuple $(\mcal{C}, \fun, \times, I, 1 )$ into a \textit{duoidal category}, i.e, a category with two monoidal structures that have an interchange law between them, and satisfy certain coherence axioms \cite{2-monoidal}.
	
	\begin{definition}\cite{2-monoidal}
		 A duoidal category is a five tuple, $(\mcal{C}, \fun, I, \otimes, J)$ where $(\mcal{C}, \fun, I)$ and $(\mcal{C}, \otimes, J)$ are monoidal categories with units $I$ and $J$ respectively, along with a morphism (the interchange law):
		 \[
		 \zeta_{a,b,c,d}: (A\otimes B)\fun (C\otimes D) \to (A\fun C)\otimes (B\fun D)
		 \]
		 which is natural in $A,B,C,D$, and three morphisms,
		 \[
		 \Delta_I: I \to I\otimes I, \qquad \mu_J: J\fun J \to J,\qquad \iota_J=\epsilon_I: I\to J
		 \]
		 which satisfy certain coherence axioms (see \text{e.g.} \cite{2-monoidal}).	 
	\end{definition}

With this discussion, and the  discussion above, we have illustrated the following fact.
	
	\begin{proposition}
		When the tensor product $\otimes$ on $\mcal{V}$ is given by the cartesian structure in $\mcal{V}$,  the		 5-tuple, $([\to,\mcal{V}], \fun, \otimes, I, I)$ is duoidal. 
	\end{proposition}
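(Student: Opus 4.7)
The plan is to structure the verification around two observations: that the two monoidal structures share a common unit, and that cartesianness of $\otimes$ reduces equalities into $\otimes$-products to equalities of projections.

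First, I would identify the duoidal data. Since $\otimes$ on $\mcal{V}$ is cartesian, its unit is terminal in $\mcal{V}$; consequently the inherited pointwise tensor on $[\to,\mcal{V}]$ is cartesian with terminal object $(I\to I)$, which coincides with the unit of $\fun$ proved in the preceding proposition. Writing $I$ for this common unit (as in the statement of the proposition), we have $J = I$, so the coherence morphism $\iota_J = \epsilon_I: I \to I$ is the identity, $\Delta_I: I \to I\otimes I$ is the cartesian diagonal, and $\mu_J: I\fun I \to I$ is the unique morphism into the terminal object. The interchange $\zeta$ has already been constructed in the excerpt and shown to coincide with the unique map into the cartesian product $(a\fun c)\otimes (b\fun d)$ whose two projections are $\pi\fun\pi$ and $\pi'\fun\pi'$.

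Second, I would verify each duoidal coherence axiom. Every axiom asserts equality of two parallel arrows whose codomain is an $\otimes$-product, so by cartesianness it suffices to compare projections onto the two factors. Under projection, $\zeta$ reduces to expressions built only from $\fun$ and from $\pi, \pi'$, and the required equalities then follow from coherence of the monoidal structure $(\fun, I)$ together with naturality and universal properties of the cartesian structure. For instance, compatibility of $\zeta$ with the associator of $\fun$ reduces, after projection, to associativity of $\fun$ applied to the $\pi$-images (respectively $\pi'$-images) of the inputs; compatibilities with unitors and with $\Delta_I$, $\mu_J$, $\iota_J$ follow from uniqueness of maps into $I$ together with standard diagonal identities.

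The main obstacle is the number of axioms rather than the difficulty of any one: there are compatibilities of $\zeta$ with both associators, with four unitors, and with $\Delta_I$, $\mu_J$, $\iota_J$. The projection strategy above makes each individual check routine. In fact the situation is a direct instance of the general fact referenced in the excerpt \cite{2-monoidal}: any monoidal category paired with a cartesian structure whose terminal object is the monoidal unit forms a duoidal category. Accordingly, I would cite this result, confining the written proof to identifying the duoidal data above and verifying that our case meets its hypothesis --- namely that the unit of $\fun$ is the terminal object in $([\to,\mcal{V}], \otimes)$.
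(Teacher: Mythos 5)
Your proposal is correct and follows essentially the same route as the paper: both identify the interchange $\zeta$ with the map induced by the cartesian universal property of $(a\fun c)\otimes(b\fun d)$ and then invoke the general fact from \cite{2-monoidal} that a monoidal structure together with a cartesian one (whose terminal object here coincides with the unit $(I\to I)$ of $\fun$) yields a duoidal category. Your additional remarks --- making the structure morphisms $\Delta_I$, $\mu_J$, $\iota_J$ explicit and noting that each coherence axiom reduces to a comparison of projections --- are a more detailed account of what the citation covers, but not a different argument.
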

	
\begin{remark}
	We suspect that the condition that $\otimes$ is cartesian monoidal can be omitted, and that the category acquires a duoidal structure even otherwise. We work with $\mcal{V} = \category{Set}$ for most of Chapter \ref{chapter6}, and hence this does not pose a limitation. 
\end{remark}

\chapter{Multicategories Enriched in a Duoidal Category}\label{chapter6}

We introduce the notion of a $(\mcal{V}, \otimes, \fun)$-multicategory. Here, $\mcal{V}$ is a category with two monoidal structures, $\otimes$ and $\fun$, that agree on their unit $I$, and such that there is an interchange law, 
\[
\zeta_{a,b,c,d}: (a\otimes b)\fun (c\otimes d) \to (a\fun c) \otimes (b\fun d)
\]
We further assume that this law satisfies the coherence axioms  of a duoidal category introduced in the previous chapter. 

These structures subsume traditional multicategories as well as effectful multicategories, and hence serves as  a framework for talking about both pure and impure computation.

\section{Definition and Basic Theory}

We begin by motivating our reason for the base of enrichment to be duoidal. There are two equivalent presentations of the definition of a multicategory. The circle-i presentation was the one used by Lambek in his original definition \cite{ded-2}, and this is the one adopted by Staton and Levy since this gives a natural way to define a premulticategory \cite{staton-levy}. The other presentation, which use used in Chapter 3 uses a simultaneous substitution operation, 
\[
sub: \Mhom{\bb{C}}{A_1,\ldots, A_n}{B} \times (\prod_{i=1}^{n} \Mhom{\bb{C}}{\Delta_i}{A_i}) \to \Mhom{\bb{C}}{\Delta_1,\ldots, \Delta_n}{B}
\]
which is required to satisfy certain laws. Intuitively, this can be seen as `plugging in' a sequence of terms, $u_1,\ldots, u_n$ into a term $t$  to obtain some $t[u_1,\ldots, u_n]$. Here, it is assumed that the collection of things to be plugged in, comes with a certain structure -- as an ordered tuple. 

A $(V,\otimes, \fun)$-multicategory instead asks for an operation,
\[
sub: \Mhom{\bb{C}}{A_1,\ldots, A_n}{B} \otimes \Big(\fun_{i=1}^{n} \Mhom{\bb{C}}{\Delta_i}{A_i} \Big) \to \Mhom{\bb{C}}{\Delta_1,\ldots, \Delta_n}{B}
\]
which allows us to give a different structure to the collection of things to be plugged in. However, for this kind of composition operation to be sensible, we require it to satisfy slightly modified axioms of a multicategory. Modifying the unitality law is rather straightforward.
However, the associativity law, given by the commutative diagram:

\[\begin{tikzcd}[cramped]
	{\substack{\Mhom{\bb{C}}{A_1,\ldots, A_n}{B}\times \\(\prod_{i=1}^n (\Mhom{\bb{C}}{B_i^1,\ldots, B^{k_i}_i}{A_i} \times (\prod_{j=1}^{k_i} \Mhom{\bb{C}}{\Delta^j_i}{B^j_i})})} &&& {\substack{\Mhom{\bb{C}}{A_1,\ldots, A_n}{B}\\\times (\prod_{i=1}^n \Mhom{\bb{C}}{\Delta^1_i,\ldots, \Delta_1^{k_i} }{A_i})}} \\
	\\
	{\substack{(\Mhom{\bb{C}}{A_1,\ldots, A_n}{B} \times (\prod_{i=1}^n  \Mhom{\bb{C}}{B^1_i,\ldots, B^{k_i}_i}{A_i})\\\times (\prod_{i=1}^n \prod_{j=1}^{k_i} \Mhom{\bb{C}}{\Delta^j_i}{B^j_i}))}} \\
	\\
	{\substack{\Mhom{\bb{C}}{B^1_1,\ldots, B^{k_1}_1,\ldots, B^1_n,\ldots, B^{k_n}_n}{B}\\ \times (\prod_{i=1}^n \prod_{j=1}^{k_i} \Mhom{\bb{C}}{\Delta^j_i}{B^j_i}))}} &&& {\substack{\Mhom{\bb{C}}{\Delta^1_1,\ldots, \Delta^{k_i}_1,\ldots, \Delta_n^1,\ldots, \Delta^{k_n}_n}{B}}}
	\arrow["{1\times (\prod_{i=1}^n sub)}", from=1-1, to=1-4]
	\arrow["sub", from=1-4, to=5-4]
	\arrow["\cong"', from=1-1, to=3-1]
	\arrow["{sub\times 1}"', from=3-1, to=5-1]
	\arrow["sub"', from=5-1, to=5-4]
\end{tikzcd}\]
makes essential use of the isomorphism,

\[
\prod_{i=1}^{n} a^i \otimes b^i \xrightarrow{\sim} (\prod_{i=1}^{n} a^i) \otimes \prod_{i=1}^{n}  b^i
\]
that is inherent to any symmetric monoidal category. Since we are working with two monoidal structures, we settle for a (non-iso) morphism,

\[
{{\fun}}_{i=1}^n  (a^i\otimes b^i) \to \bigl({{\fun}}_{i=1}^n a^i\bigr) \otimes \bigl({{\fun}}_{i=1} b^i \bigr) 
\] 

Such a morphism arises from and defines an interchange law, which is why have the requirement that $(\mcal{V}, \otimes, \fun)$ is duoidal.

Thus, we assume that $\mcal{V}$ has two monoidal structures, $\otimes$ and $\fun$, with a common unit, $I$. We write the unitors and associators as  $\lambda_{\otimes}, \rho_{\otimes}, \alpha_\otimes$ and $\lambda_{\fun}, \rho_{\fun}, \alpha_{\fun}$ for the respective structures. Further, we assume that there is an interchange law, 
\[
\zeta_{a,b,c,d}: (a\otimes b)\fun (c\otimes d) \to (a\fun c)\otimes (b\fun d)
\]
The following are two  examples of such categories which we will mainly be interested in. 

\begin{example}\label{pure}
	If $(\mcal{V}, \otimes, I)$ is symmetric monoidal, there is an interchange morphism,
	\[
	(a\otimes b)\otimes (c\otimes d) \to (a\otimes c)\otimes (b\otimes ds)
	\]
	This makes $(\mcal{V},\otimes, \otimes, I,I)$ duoidal, and the `two' monoidal structures have the same unit, $I$. 
	In particular, $(\category{Set}, \times, \times, 1,1)$ is an example of such a duoidal category.
\end{example}

\begin{example}\label{effectful}
	As seen in the previous chapter, if $(\mcal{V}, \times, 1)$ is a cartesian monoidal category which is complete and cocomplete, then, \\$([\to,\mcal{V}], \times, \fun, 1, 1)$ is a duoidal category where $\fun$ is the funny tensor.
	When $\mcal{V}=\category{Set}$, the 5-tuple $([\to,\category{Set}], \times, \fun, 1, 1)$ is a duoidal category.
\end{example}

\begin{notation}
We shall work with the same general assumption (unless otherwise mentioned) that $\otimes$ and $\fun$ make $\mcal{V}$ a duoidal category, and that they agree on their units, $I$. Hence, from now on, use the triple $(\mcal{V},\otimes, \fun)$ to denote this category. 
\end{notation}

\begin{definition}
	A $(\mcal{V},\otimes,\fun)$-multicategory, $\bb{C}$ consists of the following data:
	\begin{itemize}
		\item A set of objects, $\ob{\bb{C}}$
		\item For every sequence of objects, $A_1,\ldots, A_n,B$, an object, 
		\[
		\Mhom{\bb{C}}{A_1,\ldots, A_n}{B} \in \mcal{V}
		\]
		\item For every object $A$, an arrow
		\[
		id_A: I \to \Mhom{\bb{C}}{A}{A}
		\]
		\item A substitution operation,
		\begin{align*}
			&sub:\Mhom{\bb{C}}{A_1,\ldots, A_n}{B}\otimes (\fun_{i=1}^n \Mhom{\bb{C}}{\Delta_i}{A_i})\\
			&	\qquad\qquad\qquad\qquad\qquad\qquad\qquad \to \Mhom{\bb{C}}{\Delta_1,\ldots, \Delta_n}{B}
		\end{align*}
	\end{itemize}
	subject to the following axioms
	
	\begin{enumerate}
		\item Left unitality: 
		
		\[\begin{tikzcd}
			{\Mhom{\bb{C}}{A}{A}\otimes \Mhom{\bb{C}}{\Delta}{A}} && {\Mhom{\bb{C}}{\Delta}{A}} \\
			\\
			{I\otimes \Mhom{\bb{C}}{\Delta}{A}}
			\arrow["{id_A\otimes 1}", from=3-1, to=1-1]
			\arrow["sub", from=1-1, to=1-3]
			\arrow["\lambda"', from=3-1, to=1-3]
		\end{tikzcd}\]

		\item Right unitality: 
		
	\[\begin{tikzcd}[cramped]
		{\Mhom{\bb{C}}{A_1,\ldots, A_n}{B}\otimes (\fun_{i=1}^n \Mhom{\bb{C}}{A_i}{A_i})} && {\Mhom{\bb{C}}{A_1,\ldots, A_n}{B}} \\
		\\
		{\Mhom{\bb{C}}{A_1,\ldots, A_n}{B}\otimes (\fun_{i=1}^n I)} && {\Mhom{\bb{C}}{A_1,\ldots, A_n}{B}\otimes I}
		\arrow["sub", from=1-1, to=1-3]
		\arrow["{1\otimes (\fun_{i=1}^n id_{A_i})}", from=3-1, to=1-1]
		\arrow["{1\otimes coh}"', from=3-1, to=3-3]
		\arrow["\lambda"', from=3-3, to=1-3]
	\end{tikzcd}\]
		
		\item Associativity: 
	\[\begin{tikzcd}[cramped]
		{\substack{\Mhom{\bb{C}}{A_1,\ldots, A_n}{B}\\\otimes (\fun_{i=1}^n (\Mhom{\bb{C}}{B_i^1,\ldots, B_i^{k_i}}{A_i} \otimes (\fun_{j=1}^{k_i} \Mhom{\bb{C}}{\Delta^j_i}{B^j_i})  ))}} &&& {\substack{          \Mhom{\bb{C}}{A_1,\ldots, A_n}{B}\\\otimes (\fun_{i=1}^n \Mhom{\bb{C}}{\Delta^1_i,\ldots, \Delta^{k_i}_i}{A_i})                             }} \\
		\\
		{\substack{(\Mhom{\bb{C}}{A_1,\ldots, A_n}{B}\otimes   (\fun_{i=1}^n \Mhom{\bb{C}}{B^1_i,\ldots, B_i^{k_i}}{A_i}) ) \\\otimes (\fun_{i=1}^n \fun_{j=1}^{k_i} \Mhom{\bb{C}}{\Delta^j_i}{B^j_i})}} \\
		\\
		{\substack{\Mhom{\bb{C}}{B^1_1,\ldots, B^{k_1}_1,\ldots, B^1_n,\ldots, B^{k_n}_n}{B}\\\otimes (\fun_{i=1}^n \fun_{j=1}^{k_i}\Mhom{\bb{C}}{\Delta^j_i}{B^j_i}) }} &&& {\substack{\Mhom{\bb{C}}{\Delta^1_1,\ldots, \Delta^{k_1}_1,\ldots, \Delta^1_n,\ldots, \Delta^{k_n}_n}{Bdd}}}
		\arrow["{1\otimes (\fun_{i=1}^n sub)}", from=1-1, to=1-4]
		\arrow["sub", from=1-4, to=5-4]
		\arrow["coh"', from=1-1, to=3-1]
		\arrow["{sub\otimes 1}"', from=3-1, to=5-1]
		\arrow["sub"', from=5-1, to=5-4]
	\end{tikzcd}\]

	\end{enumerate}	
	
	The map $coh$ in the associativity axiom makes use of the interchange law, and the map $coh$ in the right unitality axiom makes use of the fact that $I$ is also a unit for the monoidal structure $\fun$. 
\end{definition}

\begin{example}
	A $(\mcal{V},\otimes, \otimes)$-multicategory (where $\mcal{V}$ is a symmetric monoidal category considered as a duoidal category) is the same as a $\mcal{V}$-enriched multicategory. When $\mcal{V} =\category{Set}$, this is the same as a multicategory.  
\end{example}

	Just as every category can be trivially thought of as a multicategory, any $(\mcal{V},\otimes)$-enriched category can be thought of as a $(\mcal{V},\otimes,\fun)$-multicategory when $\mcal{V}$ is \textit{nice}.
	Assume that $\mcal{V}$ has an initial object $0$, and that $\otimes$ is cartesian monoidal. If $\mcal{C}$ is a $(\mcal{V},\otimes)$-enriched category, define the $(\mcal{V},\otimes,\fun)$-multicategory $J\mcal{C}$ as follows: 
	\begin{enumerate}
		\item Objects: Same as that of $\mcal{C}$
		\item Multi-homs: 
		\begin{align*}
			\Mhom{J\mcal{C}}{A}{B} & = \Mhom{\mcal{C}}{A}{B}\\
			\Mhom{J\mcal{C}}{A_1,\ldots, A_n}{B} & = 0
		\end{align*}
		when $n>2$.
		\item Identities are as in $\mcal{C}$: 
		\[\begin{tikzcd}[cramped]
			I & {\Mhom{\mcal{C}}{A}{A}} & {\Mhom{J\mcal{C}}{A}{A}}
			\arrow["{id_A}", from=1-1, to=1-2]
			\arrow["{=}", from=1-2, to=1-3]
		\end{tikzcd}\]
		\item Substitution is  derived from composition in $\mcal{C}$ when the context is unary: 
		\[\begin{tikzcd}[cramped]
			{\Mhom{J\mcal{C}}{A}{B}\otimes \Mhom{J\mcal{C}}{B}{C}} && {\Mhom{J\mcal{C}}{A}{C}} \\
			{\Mhom{\mcal{C}}{A}{B}\otimes \Mhom{\mcal{C}}{B}{C}} && {\Mhom{\mcal{C}}{A}{B}}
			\arrow["sub", from=1-1, to=1-3]
			\arrow["{=}", no head, from=1-1, to=2-1]
			\arrow["{\circ_{\mcal{C}}}", from=2-1, to=2-3]
			\arrow["{=}"', no head, from=2-3, to=1-3]
		\end{tikzcd}\]
		Otherwise, we use the fact that the initial object acts as an annihilator for $\otimes$:
		\[\begin{tikzcd}[cramped]
			{\small{\Mhom{J\mcal{C}}{A_1,\ldots, A_n}{B}\otimes (\fun_{i=1}^n \Mhom{J\mcal{C}}{\Delta_i}{A_i})}} & {\small{\Mhom{J\mcal{C}}{\Delta_1,\ldots, \Delta_n}{B}}} \\
			{\small{0\otimes (\fun_{i=1}^n \Mhom{J\mcal{C}}{\Delta_i}{A_i})}} & {\small{0}}
			\arrow["sub", from=1-1, to=1-2]
			\arrow["{=}", no head, from=1-1, to=2-1]
			\arrow["\iota"', from=2-2, to=1-2]
			\arrow["\cong", from=2-1, to=2-2]
		\end{tikzcd}\]
		where $\iota$ is the unique map from the initial object in $\mcal{V}$.
	\end{enumerate}
	Checking that this is a $(\mcal{V},\otimes, \fun)$-multicategory is straightforward. Note that the tensor $\fun$ never comes into play. This illustrates the fact that in a $(\mcal{V},\otimes, \fun)$-multicategory, $\otimes$ is responsible for composition, while $\fun$ provides structure for the inputs.

\begin{definition}
	Let $\bb{C}$ and $\bb{D}$ be  $(\mcal{V},\otimes,\fun)$-multicategories. Then a $(\mcal{V},\otimes, \fun)$-multicategory morphism, $f: \bb{C}\to\bb{D}$ consists of the following data:
	\begin{itemize}
		\item A mapping of objects, $f_{ob}:\ob{\bb{C}}\to \ob{\bb{D}}$,
		\item A family of maps,
		\[
		f_{\Gamma;A}:\Mhom{\bb{C}}{\Gamma}{B} \to \Mhom{\bb{D}}{f\Gamma}{fB}
		\]
		such that the following diagrams commute: 
	\end{itemize}
	\begin{enumerate}
		\item Identities are preserved: 
		\[\begin{tikzcd}[cramped]
			{\Mhom{\bb{C}}{A}{A}} && {\Mhom{\bb{D}}{fA}{fA}} \\
			I
			\arrow["{id_A^{\bb{C}}}", from=2-1, to=1-1]
			\arrow["{id_{fA}^{\bb{D}}}"', from=2-1, to=1-3]
			\arrow["{f_{A:A}}", from=1-1, to=1-3]
		\end{tikzcd}\]
		\item Composition is preserved: 
		\[\begin{tikzcd}[cramped,row sep=scriptsize]
			{\small\Mhom{\bb{C}}{A_1,\ldots, A_n}{B}\otimes (\fun_{i=1}^n \Mhom{\bb{C}}{\Delta_i}{A_i})} && {\small\Mhom{\bb{D}}{\Delta_1,\ldots, \Delta_n}{B}} \\
			\\
			{\small\Mhom{\bb{D}}{fA_1,\ldots, fA_n}{fB}\otimes (\fun_{i=1}^n \Mhom{\bb{D}}{f\Delta_i}{fA_i}} && {\small\Mhom{\bb{D}}{f\Delta_1,\ldots, f\Delta_n}{fB}}
			\arrow["{sub^{\bb{C}}}", from=1-1, to=1-3]
			\arrow["{f_{A_1,\ldots, A_n;B}\otimes (\fun_{i=1}^n f_{\Delta_i;A_i})}", from=1-1, to=3-1]
			\arrow["{sub^{\bb{D}}}", from=3-1, to=3-3]
			\arrow["{f_{\Delta_1,\ldots, \Delta_n;B}}", from=1-3, to=3-3]
		\end{tikzcd}\]
	\end{enumerate}
\end{definition}

It is clear that morphisms of $(\mcal{V},\otimes,\fun)$-multicategories compose, and hence we obtain a category, $(\mcal{V},\otimes, \fun)$-\category{MultiCat}. Just as restricting a multicategory  to its unary contexts gives a category, restricting a $(\mcal{V},\otimes, \fun)$-multicategory to its unary contexts gives a $(\mcal{V},\otimes)$-category. Further, this mapping is functorial, and has a left adjoint, given by the construction of $J$, outlined above. As a consequence of this general fact, we re-obtain the following adjunction from Chapter \ref{chapter3}:

\begin{lemma}
	Assume that $\otimes$ in the duoidal category $(\mcal{V},\otimes,\fun)$ is cartesian monoidal, and that $\mcal{V}$ has an initial object $0$. Then,	there is an adjunction,
	\[\begin{tikzcd}[cramped]
		{(\mcal{V},\otimes, \fun)\text{-}\category{MultiCat}} && {(\mcal{V},\otimes)\text{-}\category{Cat}}
		\arrow[""{name=0, anchor=center, inner sep=0}, "{\overline{(-)}}"', shift right=3, from=1-1, to=1-3]
		\arrow[""{name=1, anchor=center, inner sep=0}, "J"', shift right=3, from=1-3, to=1-1]
		\arrow["\dashv"{anchor=center, rotate=-90}, draw=none, from=1, to=0]
	\end{tikzcd}\]
	where $\overline{(-)}$ restricts a $(\mcal{V},\otimes,\fun)$-multicategory to its unary contexts, and $J$ is the discrete multicategory on a category. 
\end{lemma}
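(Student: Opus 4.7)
The plan is to exhibit the unit $\eta_\mcal{C}:\mcal{C}\to\overline{J\mcal{C}}$ as the identity. This is legitimate because the construction of $J\mcal{C}$ forces $\overline{J\mcal{C}}=\mcal{C}$ on the nose: the unary hom-objects are $\Mhom{\mcal{C}}{A}{B}$ by definition, identities are inherited, and substitution on unary contexts coincides with composition in $\mcal{C}$. The action of $J$ on a $(\mcal{V},\otimes)$-functor $g:\mcal{C}\to\mcal{D}$ is then forced: on objects and on unary hom-objects it must agree with $g$, and on each higher-arity hom-object the source is $0$, so there is a unique morphism $0\to 0$. The multicategory-morphism axioms for $Jg$ reduce either to the unary case (handled by $g$ being a $(\mcal{V},\otimes)$-functor) or to statements about morphisms out of the initial object, which are unique.

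For the universality half, given $f:\mcal{C}\to \overline{\bb{D}}$, define $f^*:J\mcal{C}\to\bb{D}$ to agree with $f$ on objects and on unary hom-objects, and on each higher-arity hom-object $\Mhom{J\mcal{C}}{A_1,\ldots,A_n}{B}=0$ to be the unique morphism into $\Mhom{\bb{D}}{fA_1,\ldots,fA_n}{fB}$. Identity preservation is immediate, since identities live in unary homs and $f$ preserves them. Substitution preservation splits into two cases: in the all-unary case it reduces directly to $f$'s preservation of binary composition in the enriched category; in every other case at least one tensor factor of the source is $0$, so the source itself is $0$ (using that $\otimes$ is cartesian, hence $0\otimes x\cong 0$), and both composites around the substitution square are forced to agree by initiality.

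Uniqueness of $f^*$ follows the same dichotomy: $\overline{f^*}\circ\eta_\mcal{C}=f$ pins down the action on objects and on unary homs, while on higher-arity homs there is no freedom because the source is $0$. The main obstacle I foresee is the mixed-arity case of substitution preservation, where one must verify that the source of the substitution square really collapses to $0$. This requires not only that $0$ be annihilating for $\otimes$ but also that any $\fun$-tensor one of whose factors is $0$ reduces to $0$; the latter holds in our principal examples (Examples~\ref{pure} and~\ref{effectful}) because $\fun$ is a left adjoint in each variable there, but in full generality it may require an extra hypothesis on $\mcal{V}$. Isolating the minimal such hypothesis, and packaging the resulting bijection $\Hom{}{J\mcal{C}}{\bb{D}} \cong \Hom{}{\mcal{C}}{\overline{\bb{D}}}$ as a natural one in both variables, is the technical heart of the argument; naturality itself is then automatic, since both sides are determined componentwise by postcomposition with $\overline{(-)}$-images of morphisms of multicategories.
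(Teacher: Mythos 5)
Your construction is exactly the paper's: $J\mcal{C}$ has $0$ in every non-unary arity, the unit $\eta_{\mcal{C}}$ is the identity, and everything not forced by $f$ on unary homs is forced by initiality; the paper only sketches this ("checking \ldots is straightforward"), so the level of detail matches. Two remarks on the one place you hedge. First, the extra hypothesis you isolate as the "technical heart" --- that a $\fun$-tensor with a $0$ factor collapses to $0$ --- is never actually needed: in every substitution square the vanishing hom-object sits as a $\otimes$-factor of the source (for $n\neq 1$ it is $\Mhom{J\mcal{C}}{A_1,\ldots,A_n}{B}=0$ itself, $\otimes$-tensored against the $\fun$-fold; for $n=1$ with $\Delta_1$ non-unary the $\fun$-fold is unary and equals $\Mhom{J\mcal{C}}{\Delta_1}{A_1}=0$), so $\otimes$-annihilation alone makes the whole source initial and the $\fun$-tensor never has to be evaluated --- exactly the paper's observation that "the tensor $\fun$ never comes into play". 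Second, your parenthetical inference "$\otimes$ is cartesian, hence $0\otimes x\cong 0$" is not valid from cartesianness alone (in a category with a zero object one has $0\times x\cong x$); one genuinely needs $-\otimes x$ to preserve the initial object, e.g.\ because $\mcal{V}$ is cartesian \emph{closed}. This gap is shared with the paper, which simply asserts that the initial object "acts as an annihilator for $\otimes$", and it is harmless in the intended examples $\category{Set}$ and $[\to,\category{Set}]$; but it is the hypothesis worth isolating, not the behaviour of $\fun$ on $0$.
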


\section{Effectful Multicategories Revisited}

The motivation behind introducing $(\mcal{V},\otimes,\fun)$-multicategories was to talk about effectful multicategories. This section will provide a proof of the fact that $([\to,\category{Set}],\times, \fun)$-multicategories are equivalent to  effectful multicategories.

\subsection{Unpacking definitions}

We state a couple of  technical lemmas which will allow us to have a peek into the internal functioning of  a $([\to,\category{Set}],\times, \fun)$-multicategory. These lemmas will allow us to derive a simpler presentation  of $([\to,\category{Set}],\times, \fun)$-multicategories. The proofs for these can be found in Appendix \ref{appendix1}.

\begin{notation}
	Many proofs here will involve using the pushout characterization of the $n$-ary funny tensor (Corollary \ref{pushout-char}). To write statements about these, we will require maps of form $f_0^1\otimes \ldots f_0^{i-1}\otimes f_1^{i}\otimes f_0^{i+1}\otimes\ldots f_0^n$, and we use a special notation for these. 
	
	Let $f_0^i: a_0^i\to b_0^i $ be a collection of morphisms for $i=1,\ldots, n$ and assume that there is another collection $f_1^i: a_1^i\to b_1^i$ for $i=1,\ldots, n$. Then, we use the expression $\prod_{i=1}^{n} f_0^i | f_1^j$ to denote the morphism $\prod_{i=1}^{n} a_{\delta_{i,j}}^i \to \prod_{i=1}^{n} b_{\delta_{i,j}}^i$ which looks like $f_0^1\otimes\ldots f_0^{j-1}\otimes f_1^j\otimes f_0^{j+1}\otimes\ldots, f^n_0$.
\end{notation}

\begin{lemma}
\label{unwrap-1}
	Let $a,c \in [\to,\category{Set}]$, and $\{b_i\}_{1\leq i\leq n}$ be a family of elements in $[\to,\category{Set}]$. A map,
	\[
	f: a\times (\fun_{i=1}^{n} b^i )\to c
	\]
	in $[\to,\category{Set}]$
	determines and is determined by the following information:
	\begin{enumerate}
		\item A map,
		\[
		f_0:	a_0\times (\prod_{i=1}^{n} b^i_0) \to c_0
		\]
		in \category{Set}.
		\item For every $j \in \{1,\ldots, n\}$, a map
		\[
		f_{1,j}:	a_1\times (\prod_{i=1}^{n} b^i_{\delta_{i,j}}) \to c_1
		\] in \category{Set}.
	\end{enumerate}
	such that the following diagrams commute for all $j,k\in \{1,\ldots, n\}$:
	
	\[\begin{tikzcd}
		{a_0\times (\prod_{i=1}^n b_0^i)} && {a_1\times (\prod_{i=1}^n b^i_{\delta_{i,j}})} \\
		{c_0} && {c_1}
		\arrow["{f_0}", from=1-1, to=2-1]
		\arrow["{a_\diamond \times (\prod_{i=1}^n 1 | b^j_1)}", from=1-1, to=1-3]
		\arrow["{f_{1,j}}", from=1-3, to=2-3]
		\arrow["{c_\diamond}", from=2-1, to=2-3]
	\end{tikzcd}\]
	\[\begin{tikzcd}
		& {a_1\times (\prod_{i=1}^n b^i_0)} \\
		\\
		{a_1\times (\prod_{i=1}^n b^i_0)} && {c_1} \\
		\\
		& {a_1\times (\prod_{i=1}^n b^i_0)}
		\arrow["{1\times (\prod_{i=1}^n 1 | b^j_1)}"{description}, from=3-1, to=1-2]
		\arrow["{1\times (\prod_{i=1}^n 1 | b^k_1)}"{description}, from=3-1, to=5-2]
		\arrow["{f_{1,j}}"{description}, from=1-2, to=3-3]
		\arrow["{f_{1,k}}"{description}, from=5-2, to=3-3]
	\end{tikzcd}\]
\end{lemma}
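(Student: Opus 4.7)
The plan is to combine the pushout characterization of the $n$-ary funny tensor from Corollary \ref{pushout-char} with the fact that the cartesian product in \category{Set} preserves colimits in each argument. By definition of the pointwise product on $[\to,\category{Set}]$, giving a morphism $f: a\times(\fun_{i=1}^n b^i)\to c$ is the same as giving a pair of functions $f_0: a_0\times\prod_{i=1}^n b^i_0\to c_0$ and $f_1: a_1\times(\fun_{i=1}^n b^i)_1\to c_1$ such that the naturality square $f_1\circ(a\times\fun_{i=1}^n b^i)_\diamond = c_\diamond\circ f_0$ commutes. The given $f_0$ already matches the one in the statement, so the task is to decompose $f_1$ and retranslate the naturality square.

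First I would apply Corollary \ref{pushout-char} to present $(\fun_{i=1}^n b^i)_1$ as the wide pushout of the diagram $\{\prod_{i=1}^n b^i_0 \xrightarrow{\prod 1|b^j_\diamond} \prod_{i=1}^n b^i_{\delta_{i,j}}\}_{j=1}^n$ with cocone $\{k_j\}_{j=1}^n$. Since \category{Set} is cartesian closed, the functor $a_1\times -$ preserves this colimit, so $a_1\times(\fun_{i=1}^n b^i)_1$ is the wide pushout of $\{a_1\times \prod_{i=1}^n b^i_0 \xrightarrow{1\times(\prod 1|b^j_\diamond)} a_1\times \prod_{i=1}^n b^i_{\delta_{i,j}}\}_{j=1}^n$ with cocone $\{1\times k_j\}_{j=1}^n$. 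By the universal property, giving a function $f_1$ out of this pushout is equivalent to giving a family $\{f_{1,j}: a_1\times \prod_{i=1}^n b^i_{\delta_{i,j}}\to c_1\}_{j=1}^n$ (with $f_{1,j}:=f_1\circ(1\times k_j)$) such that $f_{1,j}\circ(1\times(\prod 1|b^j_\diamond))$ and $f_{1,k}\circ(1\times(\prod 1|b^k_\diamond))$ agree on the common domain $a_1\times\prod_{i=1}^n b^i_0$, which is exactly the second compatibility diagram.

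It then remains to show that the naturality square is equivalent to the first compatibility diagram. The arrow $(a\times \fun_{i=1}^n b^i)_\diamond$ is the product $a_\diamond\times(\fun_{i=1}^n b^i)_\diamond$, and by Corollary \ref{pushout-char} its second factor is the diagonal of the wide pushout, which factors as $k_j\circ(\prod 1|b^j_\diamond)$ for every $j$. Because the maps $\{1\times k_j\}_j$ are jointly epimorphic as the cocone of a colimit, the naturality equation $f_1\circ(a_\diamond\times(\fun b)_\diamond)=c_\diamond\circ f_0$ is equivalent to its pre-composition with each $1\times k_j$. Unpacking this via $f_1\circ(1\times k_j)=f_{1,j}$ yields precisely the first compatibility diagram for every $j$. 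The reverse direction, assembling $f$ from $f_0$ and the family $\{f_{1,j}\}$, follows by running these steps backwards: one uses the universal property of the pushout to produce $f_1$ from the family, then checks the naturality square on each $1\times k_j$.

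The main obstacle is bookkeeping: tracking how the wide pushout's legs and its diagonal transform under the functor $a_1\times -$, and verifying that joint epimorphicity of the $k_j$ lets one reduce the single naturality square to the $n$-fold family of diagrams in the statement. Everything else is a direct unfolding of universal properties.
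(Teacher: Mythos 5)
Your proposal is correct and follows essentially the same route as the paper's proof in Appendix~\ref{appendix1}: split $f$ into its two components, use Corollary~\ref{pushout-char} together with preservation of colimits by $a_1\times -$ to trade $f_1$ for the family $\{f_{1,j}\}$ subject to the second compatibility diagram, and translate the naturality square into the first family of diagrams via the factorization $(\fun_{i=1}^n b^i)_\diamond = k_j\circ(\prod_{i=1}^n 1\,|\,b^j_\diamond)$. One small caveat: the appeal to joint epimorphicity of the maps $1\times k_j$ does not typecheck as stated, since the naturality square is an equation between maps out of $a_0\times(\prod_{i=1}^n b^i_0)$ rather than out of the pushout, but this is harmless because the factorization of the diagonal through each leg, which you also invoke, already yields the equivalence with the first diagram for every $j$.
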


Now that maps $f\times (\fun_{i=1}^n b^i) \to c$ have been deconstructed, we explore,  what the Associativity axiom of a $(\mcal{V},\otimes, \fun)$-multicategory is saying.

\begin{lemma}\label{unwrap-2}

	Given some $a \in [\to,\category{Set}]$, and families, $\{b^i\}_{1\leq i\leq n}$ and $\{c^{i,j}\}_{1\leq i\leq n, 1\leq j\leq k_i}$ in $[\to,\category{Set}]$,
	to give maps,
	\begin{align*}
		f^i&: b^i\times (\fun_{j=1}^{k_i} c^{i,j} ) \to e^i, \qquad 1\leq i\leq n\\
		g&: a\times (\fun_{i=1}^n e^i) \to q\\
		x&: a\times (\fun_{i=1}^n b^i) \to d\\
		y&: d\times (\fun_{i=1}^n \fun_{j=1}^{k_i} c^{i,j}) \to q
	\end{align*}
	such that the following square commutes
	\begin{equation}\label{main-square}
		\begin{tikzcd}
			{a\times (\fun_{i=1}^n (b^i \times (\fun_{j=1}^{k_i} c^{i,j})) )} &&& {a\times (\fun_{i=1}^n e^i)} \\
			{(a\times (\fun_{i=1}^n b^i))\times (\fun_{i=1}^n \fun_{j=1}^{k_i} c^{i,j})} \\
			{d\times (\fun_{i=1}^n \fun_{j=1}^{k_i} c^{i,j})} &&& q
			\arrow["coh", from=1-1, to=2-1]
			\arrow["{x\times 1}", from=2-1, to=3-1]
			\arrow["y", from=3-1, to=3-4]
			\arrow["{1\times (\fun_{i=1}^n f^i)}", from=1-1, to=1-4]
			\arrow["g", from=1-4, to=3-4]
		\end{tikzcd}
	\end{equation}
	is equivalent to saying that the maps
	\begin{flalign*}
		f^i_0&: b^i_0\times \prod_{j=1}^{k_i}c^{i,j}_0\to e^i_0, & \{f^i_{1,m}: b^i_1\times \prod_{j=1}^{k_i}c^{i,j}_{\delta_{j,m}}\to e^i_1\}_{1\leq m \leq k_i}  \\
		g_0&: a_0\times \prod_{i=1}^{n} e^i_0 \to q_0, & \{g_{1,l}: a_1\times \prod_{i=1}^{n} e^i_{\delta_{i,l}}\to q_1\}_{1\leq l \leq n}\\
		x_0&: a_0\times \prod_{i=1}^{n} b^i_0\to d_0, & \{x_{1,l}: a_1\times \prod_{i=1}^{n} b^i_{\delta_{i,l}} \to d_1\}_{1\leq l\leq n} \\		
		y_0&: d_0\times \prod_{i=1}^{n}\prod_{j=1}^{k_i} c^{i,j}_0\to q_0, & \{y_{1,l,m}: d_1\times \prod_{i=1}^{n}\prod_{j=1}^{k_i}c^{i,j}_{\delta_{i,l}\cdot\delta_{j,m}} \to q_1\}_{\substack{1\leq l \leq n,\\ 1\leq m\leq k_l}}
	\end{flalign*}
	induced by the previous Lemma, make the following squares commute for all $l\in\{1,\ldots, n\}$ and $m\in \{1,\ldots k_j\}$. 
	\begin{equation}\label{first-square}
		\begin{tikzcd}
			{a_0\times (\prod_{i=1}^n (b^i_0 \times (\prod_{j=1}^{k_i} c_0^{i,j})))} && {a_0\times (\prod_{i=1}^n e_0^i)} \\
			{(a_0\times (\prod_{i=1}^n b^i_0)) \times (\prod_{i=1}^n \prod_{j=1}^{k_i} c^{i,j}_0)} \\
			{d_0\times (\prod_{i=1}^n \prod_{j=1}^{k_i} c^{i,j}_0)} && {q_0}
			\arrow["{1\times (\prod_{i=1}^n f^i_0)}", from=1-1, to=1-3]
			\arrow["coh"', from=1-1, to=2-1]
			\arrow["{x_0\times 1}"', from=2-1, to=3-1]
			\arrow["{y_0}"', from=3-1, to=3-3]
			\arrow["{g_0}", from=1-3, to=3-3]
		\end{tikzcd}
	\end{equation}
	
	\begin{equation}\label{second-square}
		\begin{tikzcd}
			{a_1\times (\prod_{i=1}^n (b^i_{\delta_{i,l}}\times (\prod_{j=1}^{k_i} c^{i,j}_{\delta_{i,l}\cdot\delta_{j,m}})  ))}	 &&& {a_1\times (\prod_{i=1}^n e^i_{\delta_{i,l}})} \\
			{(a_1\times (\prod_{i=1}^n b^i_{\delta_{i,l}}))\times \prod_{i=1}^n \prod_{j=1}^{k_i} c^{i,j}_{\delta_{i,l}\cdot\delta_{j,m}}} \\
			{d_1\times (\prod_{i=1}^n \prod_{j=1}^{k_i} c^{i,j}_{\delta_{i,l}\cdot\delta_{j,m}})} &&& {q_1}
			\arrow["{\substack{1\times (f^1_0\times \ldots\times f^l_{1,m}\times\ldots\times f_0^n)}}", from=1-1, to=1-4]
			\arrow["coh"', from=1-1, to=2-1]
			\arrow["{x_{1,l}\times 1}"', from=2-1, to=3-1]
			\arrow["{y_{{1,l,m}}}"', from=3-1, to=3-4]
			\arrow["{g_{1,l}}", from=1-4, to=3-4]
		\end{tikzcd}
	\end{equation}
\end{lemma}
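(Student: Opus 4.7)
The plan is to repeatedly apply Lemma \ref{unwrap-1} to decompose each of the maps $f^i, g, x, y$ into its $0$-component datum and its family of $(1,l)$-component data. Once this is done, a morphism in $[\to,\category{Set}]$ is determined by a pair of morphisms in $\category{Set}$ (one on the $0$-part and one on the $1$-part, compatible with $a_\diamond$), so the commutativity of square \eqref{main-square} is equivalent to the commutativity of its $0$-component and $1$-component separately. The reverse implication (component data assembling to a commuting square in $[\to,\category{Set}]$) is routine once we check that the compatibility conditions of Lemma \ref{unwrap-1} hold for the induced composite maps, which follows by combining the individual compatibility conditions for $f^i, g, x, y$ together with naturality of the coherence map $coh$.

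For the $0$-component, the pushout characterization of the $n$-ary funny tensor (Corollary \ref{pushout-char}) restricted to the $0$-part is just the $n$-fold product; consequently the $0$-component of square \eqref{main-square} reduces directly, via the decompositions $f^i_0, g_0, x_0, y_0$, to the square \eqref{first-square}. No serious work is needed here beyond bookkeeping, because the interchange $coh$ on the $0$-parts is the usual associativity/symmetry isomorphism for the cartesian product of sets.

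For the $1$-component, the domain $\bigl(a\times (\fun_{i=1}^n (b^i \times (\fun_{j=1}^{k_i} c^{i,j})))\bigr)_1$ is a wide pushout whose cocone legs are indexed by a pair $(l,m)$ with $1\leq l\leq n$ and $1\leq m\leq k_l$ (the $l$-th outer leg uses $b^l_1$ and the other $b^i_0$'s, and within it the $m$-th inner leg uses $c^{l,m}_1$ and the other $c^{l,j}_0$'s). By the universal property of this pushout, the $1$-component of square \eqref{main-square} commutes if and only if its restriction along each leg $(l,m)$ commutes. Unwinding the definitions of $coh$, of the $f^i$'s, and of the pushout structure on both sides, the restriction along the $(l,m)$-leg is exactly square \eqref{second-square}. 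In particular, the key point is that on the outer pushout the index $l$ selects which $b^i$-factor has its $1$-component and hence which $e^i_{\delta_{i,l}}$ appears in the target of $1\times (\fun_{i=1}^n f^i)$; and on the inner pushout inside $b^l\times (\fun_{j=1}^{k_l} c^{l,j})$ the index $m$ selects which $c^{l,j}_{\delta_{j,m}}$ appears, so $f^l$ contributes its $f^l_{1,m}$ component while all the other $f^i$ ($i\neq l$) contribute their $f^i_0$ component; this is precisely the top arrow of \eqref{second-square}.

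The main obstacle is the bookkeeping for the double pushout and for the interchange $coh$: one has to check that under $coh: a\times (\fun_i (b^i\times \fun_j c^{i,j})) \to (a\times \fun_i b^i)\times (\fun_i \fun_j c^{i,j})$, the $(l,m)$-leg on the left is sent into the $(l,m)$-leg on the right in the expected way, so that the square one reads off on the $1$-part matches \eqref{second-square} on the nose rather than up to a transposition of indices. Once the pushout cocones on both sides are identified via $coh$ (which is essentially the cartesian-monoidal interchange isomorphism of $\category{Set}$ restricted to the appropriate summand of each pushout), the equivalence of \eqref{main-square} with the conjunction of \eqref{first-square} and the family \eqref{second-square} becomes a direct application of the universal property, and the proof is complete.
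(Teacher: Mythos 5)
Your proposal is correct and follows essentially the same route as the paper: decompose $f^i,g,x,y$ into their component data via Lemma \ref{unwrap-1}, identify the $0$-component of \eqref{main-square} with \eqref{first-square}, and use the universal property of the pushout presentation of the $1$-component to reduce its commutativity to the family \eqref{second-square}. The only cosmetic difference is that the paper performs the pushout reduction in two stages (first restricting along the outer index $l$ to get maps $\phi_l,\psi_l$, then along the inner index $m$ to get $\phi_{l,m},\psi_{l,m}$), whereas you phrase it as a single wide pushout indexed by pairs $(l,m)$; these are equivalent by Lemma \ref{stillpush}.
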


Using these two results, we can explicitly spell out what a $([\to,\category{Set}],\times, \fun)$-multicategory should look like.

\begin{lemma}\label{unwrapped}
	A $([\to,\category{Set}],\times, \fun)$-multicategory, $\bb{C}$ uniquely determines and is determined by the following data:
	\begin{itemize}
		\item A set of objects, $\ob{\bb{C}}$
		\item For every sequence of objects, $\Gamma$ and an object $B$, an arrow
		\[
		\Mhom{\bb{C}}{\Gamma}{B}_\diamond: \Mhom{\bb{C}}{\Gamma}{B}_0 \to \Mhom{\bb{C}}{\Gamma}{B}_1
		\]
		in the category $\category{Set}$. We denote terms in $\Mhom{\bb{C}}{\Gamma}{B}_0$ with blue colored symbols, $\color{blue}t,u,v,\ldots$, and denote terms in $\Mhom{\bb{C}}{\Gamma}{B}_1$ with symbols in red, $\color{red} {t},{u},{v},\ldots$. Further, the action of the arrow will be notated as: $\color{blue}t\color{black}\mapsto \color{red}\overline{\color{blue}{t}}$. 
		\[
		id_A :1\to \Mhom{\bb{C}}{A}{A}_0 
		\]
		or equivalently some $\color{blue}id_A\color{black}\in \Mhom{\bb{C}}{A}{A}_0$.
		\item For every $A_1,\ldots, A_n,B$ and $\Delta_1,\ldots, \Delta_n$, a \textit{pure} composition operation,
		
		\begin{align*}
			&psub:	\Mhom{\bb{C}}{A_1,\ldots, A_n}{B}_0\times (\prod_{i=1}^{n}\Mhom{\bb{C}}{\Delta_i}{A_i})_0 \\ &\qquad\qquad\qquad\qquad\qquad\qquad\qquad\to \Mhom{\bb{C}}{\Delta_1,\ldots, \Delta_n}{B}_0\\
			&\qquad\qquad\qquad\qquad\quad \color{blue} t\color{black}, \color{blue} u_1\color{black},\ldots, \color{blue} u_n\color{black} \mapsto \bterms{t}{u_1}{u_n}
		\end{align*}
		\item  For every $A_1,\ldots, A_n,B$ and $\Delta_1,\ldots, \Delta_n$, an operation,
		
		\begin{align*}
			&	esub_j^n:\Mhom{\bb{C}}{A_1,\ldots, A_n}{B}_1\times (\prod_{i=1}^{n}\Mhom{\bb{C}}{\Delta_i}{A_i})_{\delta_{i,j}}\\
			& \qquad\qquad\qquad\qquad\qquad\qquad\qquad \to \Mhom{\bb{C}}{\Delta_1,\ldots, \Delta_n}{B}_1\\
			& \qquad\qquad\quad \;\;\; \color{red} t \color{black}, \color{blue} u_1\color{black},\ldots,\color{red}u_j\color{black},\ldots, \color{blue} u_n \color{black}\mapsto \mterms{t}{u_1}{u_j}{u_n}
		\end{align*}
		for every $j \in \{1,\ldots, n\}$.
	\end{itemize}
	such that the following conditions hold (whenever the expressions make sense): 
	
	\begin{enumerate}
		\item Composition coherence: 
		\begin{enumerate}
			\item For all pure terms, $\color{blue}t\color{black}, \color{blue} u_1\color{black},\ldots, \color{blue} u_n$  and for any $j\in\{1,\ldots, n\}$, the following equation always holds:	
			\begin{align*}
				\color{red}\overline{\bterms{t}{u_1\color{black},\ldots,\color{blue} u_j}{u_n}} = \mterms{\overline{\color{blue}t}}{u_1}{\overline{\color{blue} u_j}}{u_n}
			\end{align*}
			\item For an effectful term, $\color{red} t$ and pure terms $\color{blue}u_1\color{black},\ldots,\color{blue} u_n$, and any
			$j,k\in \{1,\ldots, n\}$, the following equality holds:
			\begin{equation*}
				\mterms{t}{u_1}{\overline{\color{blue}u_j}\color{black},
					\ldots, \color{blue}u_k}{u_n} = \mterms{t}{u_1}{\color{blue}u_j\color{black},\ldots,\color{red}\overline{\color{blue}u_k}}{u_n}
			\end{equation*}
		\end{enumerate}
		\item Left unitality: 
		\begin{enumerate}
			\item For any pure term, $\color{blue} u$, 
			\begin{align*}
				\color{blue} id_A\color{black}(\color{blue}u\color{black}) = \color{blue}u
			\end{align*}
			\item For any effectful term $\color{red} u$,
			\begin{align*}
				\color{red}	\overline{\color{blue} id_A}\color{black}[ \color{red}u\color{black}] = \color{red} u
			\end{align*}
		\end{enumerate}

		\item Right unitality:
		\begin{enumerate}
			\item For any pure term, $\color{blue}t$,
			\begin{align*}
				\bterms{t}{id_{A_1}}{id_{A_n}} = \color{blue} t
			\end{align*}
			\item For an effectful term, $\color{red} t$, and any $j\in \{1,\ldots, n\}$,			\begin{align*}
				\mterms{t}{id_{A_1}}{\overline{\color{blue}id_{A_j}}}{id_{A_n}} = \color{red}t
			\end{align*}
		\end{enumerate}
		
		\item Associativity
		\begin{enumerate}
			\item For pure terms, $\color{blue} t\color{black},\{\color{blue} u_i\color{black}\}_{1\leq i\leq n}, \{\color{blue} v^j_i\color{black}\}_{\substack{1\leq i\leq n\\ 1\leq j \leq k_i}}$,
			\begin{align*}
				&	\bterms{t}{\bterms{u_1}{v_1^1}{v^{k_1}_1}}{\bterms{u_n}{v_1^n}{v_n^{k_n}}} 
				\\& = \bterms{\bterms{t}{u_1}{u_n}}{v^1_1\color{black},\ldots, \color{blue} v^{k_1}_1}{v^1_n\color{black},\ldots,\color{blue}v^{k_n}_n}
			\end{align*}

			\item Choose some $l\in\{1,\ldots, n\}$ and $m\in \{1,\ldots, k_l\}$. Then, for effectful terms, $\color{red}t\color{black}, \color{red} u_l \color{black}, \color{red} v^m_l $ and pure terms  $\{\color{blue} u_i\color{black}\}_{\substack{1\leq i\leq n,\\ i\neq l}}$, $\{\color{blue}v^j_i\color{black}\}_{\substack{1\leq i\leq n,\\ 1\leq j\leq k_i, \\(i,j)\neq (l,m)}}$,
			\begin{center}
				\begin{align*}
					&	\mterms{t}{\bterms{u_1}{v^1_1}{v_1^{k_1}}}{\mterms{u_l}{v^1_l}{v^m_l}{v^{k_l}_{l}}}{\bterms{u_n}{v_n^1}{v_n^{k_n}}}\\
					& = \mterms{\mterms{t}{u_1}{u_l}{u_n}}{v^1_1\color{black},\ldots,\color{blue} v^{k_1}_1}{\color{blue} v^1_l,\ldots, \color{red}v^m_l\color{black},\ldots, \color{blue}v^{k_l}_l}{v^1_n\color{black},\ldots,\color{blue}v^{k_n}_n
					}
				\end{align*}
			\end{center}
		\end{enumerate}

	\end{enumerate}

\end{lemma}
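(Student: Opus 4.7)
The strategy is to apply the two technical lemmas \ref{unwrap-1} and \ref{unwrap-2} termwise to the structure and axioms of a $([\to,\category{Set}],\times,\fun)$-multicategory, translating each piece of data and each commutative diagram into its ``pure'' ($0$-component) and ``effectful'' ($1$-component) parts. First, note that an object of $[\to,\category{Set}]$ is by definition a function, so each $\Mhom{\bb{C}}{\Gamma}{B}$ immediately yields the arrow $\Mhom{\bb{C}}{\Gamma}{B}_\diamond$ between the sets $\Mhom{\bb{C}}{\Gamma}{B}_0$ and $\Mhom{\bb{C}}{\Gamma}{B}_1$. Since the unit $I$ of $([\to,\category{Set}],\times)$ is the arrow $1\to 1$, the identity morphism $id_A:I\to \Mhom{\bb{C}}{A}{A}$ picks out exactly one element $\color{blue}id_A\color{black}\in \Mhom{\bb{C}}{A}{A}_0$ whose image under $\Mhom{\bb{C}}{A}{A}_\diamond$ is $\overline{\color{blue}id_A\color{black}}$, and conversely any such element determines a unique morphism from $I$.

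Next, I would apply Lemma \ref{unwrap-1} to the substitution morphism $sub:\Mhom{\bb{C}}{A_1,\ldots, A_n}{B}\times (\fun_{i=1}^n \Mhom{\bb{C}}{\Delta_i}{A_i}) \to \Mhom{\bb{C}}{\Delta_1,\ldots,\Delta_n}{B}$. This immediately decomposes $sub$ into a pure composite $psub:=sub_0$ and, for each $j\in\{1,\ldots,n\}$, an effectful composite $esub^n_j:=sub_{1,j}$. The first compatibility square of Lemma \ref{unwrap-1} translates directly into the composition coherence axiom (1a) (applying $\Mhom{\bb{C}}{\Delta_1,\ldots,\Delta_n}{B}_\diamond$ to a pure substitute equals the effectful substitute in position $j$ with $\overline{\color{blue}u_j\color{black}}$), while the second square translates into (1b) (different choices of effectful slot agree once we move effects around via $\overline{(-)}$). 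The left unitality axiom of the $(\mcal{V},\otimes,\fun)$-multicategory unpacks on the $0$- and $1$-components to (2a) and (2b), and the right unitality axiom similarly unpacks to (3a) and (3b) after observing that the coherence map $\fun_{i=1}^n I \cong I$ is the identity on the unique point.

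Finally, the main structural content comes from the associativity axiom, which is exactly the outer square \eqref{main-square} in Lemma \ref{unwrap-2}. Applying that lemma with the appropriate instantiation shows this single diagram in $[\to,\category{Set}]$ is equivalent to the pair of diagrams \eqref{first-square} (all components taken at index $0$) and \eqref{second-square} (one distinguished $1$-component in each tensor, corresponding to choices $l$ and $m$); these are precisely the pure associativity (4a) and the effectful associativity (4b). The main obstacle is not conceptual but notational: one must keep careful track of which index is ``activated'' when iterating the pushout characterization (Corollary \ref{pushout-char}) to identify the domains of $esub^n_j$ and of the composite $esub \circ (\fun esub)$, and of the fact that the duoidal coherence map acts as the identity on those active indices while distributing products on the inactive ones. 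Once this bookkeeping is set up, the equivalence of data and of axioms is an entirely routine componentwise verification via Lemmas \ref{unwrap-1} and \ref{unwrap-2}.
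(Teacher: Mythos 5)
Your proposal is correct and follows exactly the route the paper takes: the paper's own proof simply notes that the hom-objects are arrows of sets, that $psub$ and $esub_j$ together with the coherence conditions come from Lemma \ref{unwrap-1} applied to $sub$, that unitality is verified componentwise, and that associativity is Lemma \ref{unwrap-2}. Your write-up is a more detailed expansion of the same argument, including the correct identification of the two compatibility squares of Lemma \ref{unwrap-1} with coherence axioms (1a) and (1b), and of diagrams \eqref{first-square} and \eqref{second-square} with axioms (4a) and (4b).
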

\begin{proof}
	The fact that hom-sets in the multicategory determine arrows in \category{Set} is clear. The existence and composition coherence of the operations $psub$ and $esub_j$ follow from Lemma \ref{unwrap-1}. The unitality axioms are easily verified and the associativity conditions follow from Lemma \ref{unwrap-2}.
\end{proof}

Finally, we unwrap the notion of a morphism between $[\to,\category{Set}]$-multicategories, and this will help us define functors between the respective categories.

\begin{lemma}\label{unwrap-3}
	Giving a morphism of $([\to,\category{Set}],\times, \fun)$-multicategories,  $f: \bb{C}\to \bb{D}$ is equivalent to giving maps,
	\begin{align*}
		f^i_{\Gamma;A}: \Mhom{\bb{C}}{\Gamma}{A}_i \to \Mhom{\bb{D}}{f\Gamma}{fA}_i
	\end{align*}
	for $i=1,2$ such that the following conditions are met:
	
	\begin{enumerate}
		\item Values and computations get mapped sensibly: 
		\begin{equation}\label{coh-diag}
			\begin{tikzcd}[cramped]
				{\Mhom{\bb{C}}{\Gamma}{A}_0} && {\Mhom{\bb{C}}{\Gamma}{A}_1} \\
				{\Mhom{\bb{D}}{f\Gamma}{fA}_0} && {\Mhom{\bb{D}}{f\Gamma}{fA}_1}
				\arrow["{\Mhom{\bb{C}}{\Gamma}{A}_\diamond}", from=1-1, to=1-3]
				\arrow["{\Mhom{\bb{D}}{f\Gamma}{fA}_\diamond}", from=2-1, to=2-3]
				\arrow["{f^0_{\Gamma;A}}"{description}, from=1-1, to=2-1]
				\arrow["{f^1_{\Gamma;A}}"{description}, from=1-3, to=2-3]
			\end{tikzcd}
		\end{equation}
		Equationally, if $\color{blue}t\color{black}:\Gamma\to A$, then, 
		\begin{equation}\label{coh-eq}
			f^1(\color{red}\overline{\color{blue} t}\color{black}) = \color{red}\overline{\color{black}f^0(\color{blue}t\color{black})}\color{black}
		\end{equation}

		\item Identities are preserved:
		\begin{equation}\label{id-diag}
			\begin{tikzcd}[cramped]
				{\Mhom{\bb{C}}{A}{A}_0} & {\Mhom{\bb{D}}{fA}{fA}_0} \\
				I
				\arrow["{id_A}", from=2-1, to=1-1]
				\arrow["{f^0_{A;A}}", from=1-1, to=1-2]
				\arrow["{id_{FA}}"', from=2-1, to=1-2]
			\end{tikzcd}
		\end{equation}
		Or equationally, for every $A$,
		\begin{equation}\label{id-eq}
			f^0(\color{blue}id_A\color{black}) = \color{blue} id_{fA} \color{black}
		\end{equation}
		
		\item Pure substitution is preserved: 
		
		\begin{equation}\label{psub-diag}
			\begin{tikzcd}[cramped]
				{\substack{\Mhom{\bb{C}}{A_1,\ldots, A_n}{B}_0\times \prod_{i=1}^n \Mhom{\bb{C}}{\Delta_i}{A_i}_0}} && {\substack{\Mhom{\bb{D}}{fA_1,\ldots, fA_n}{B}_0\times \prod_{i=1}^n \Mhom{\bb{D}}{f\Delta_i}{fA_i}_0}} \\
				{\substack{\Mhom{\bb{C}}{\Delta_1,\ldots, \Delta_n}{B}_0}} && {\substack{\Mhom{\bb{D}}{f\Delta_1,\ldots, f\Delta_n}{fB}_0}}
				\arrow["{\substack{f^0_{A_1,\ldots,A_n;B}\\\times  \prod_{i=1}^n f_{\Delta_i;A_i}^0}}", from=1-1, to=1-3]
				\arrow["{\substack{psub^{\bb{C}}}}", from=1-1, to=2-1]
				\arrow["{\substack{f^0_{\Delta_1,\ldots, \Delta_n;B}}}", from=2-1, to=2-3]
				\arrow["{\substack{psub^{\bb{D}}}}", from=1-3, to=2-3]
			\end{tikzcd}
		\end{equation}
		For pure  terms, $\color{blue}t\color{black}: A_1,\ldots, A_n\to B$ and $\{\color{blue}u_i\color{black}: 
		\Delta_i\to A_i\}_{1\leq i\leq n}$, 
		\begin{equation}\label{psub-eq}
			f^0( \bterms{t}{u_1}{u_n} ) = \bterms{\color{black}f^0(\color{blue}t\color{black})}{\color{black} f^0(\color{blue}u_1\color{black})}{\color{black} f^0(\color{blue}u_n\color{black})}
		\end{equation}

		\item Effectful substitution is preserved: 
		\begin{equation}\label{esub-diag}
			\begin{tikzcd}[cramped]
				{\substack{\Mhom{\bb{C}}{A_1,\ldots, A_n}{B}_1\times \prod_{i=1}^n \Mhom{\bb{C}}{\Delta_i}{A_i}_{\delta_{i,j}}}} && {\substack{\Mhom{\bb{D}}{fA_1,\ldots, fA_n}{B}_1\times \prod_{i=1}^n \Mhom{\bb{D}}{f\Delta_i}{fA_i}_{\delta_{i,j}}}} \\
				{\substack{\Mhom{\bb{C}}{\Delta_1,\ldots, \Delta_n}{B}_1}} && {\substack{\Mhom{\bb{D}}{f\Delta_1,\ldots, f\Delta_n}{fB}_1}}
				\arrow["{\substack{f^1_{A_1,\ldots,A_n;B}\\\times  \prod_{i=1}^n f_{\Delta_i;A_i}^0 | f_{\Delta_j;A_j}^1}}", from=1-1, to=1-3]
				\arrow["{\substack{esub_j^{\bb{C}}}}", from=1-1, to=2-1]
				\arrow["{\substack{f^1_{\Delta_1,\ldots, \Delta_n;B}}}", from=2-1, to=2-3]
				\arrow["{\substack{esub_j^{\bb{D}}}}", from=1-3, to=2-3]
			\end{tikzcd}
		\end{equation}
		for every $j\in\{1,\ldots, n\}$. This translates to saying that for effectful terms $\color{red}t\color{black}: A_1,\ldots, A_n\to B$, $\color{red} u_j\color{black}: \Delta_j\to A_j$ and a collection of pure terms, $\{\color{blue}u_i\color{black}\}_{1\leq i\leq n, i\neq j}$, 
		\begin{equation}\label{esub-eq}
			f^1(\mterms{t}{u_1}{u_j}{u_n}) = \mterms{\color{black}f^1(\color{red}t\color{black})}{\color{black} f^0(\color{blue} u_1\color{black} )}{\color{black} f^1(\color{red} u_j\color{black} )}{\color{black} f^0(\color{blue} u_n\color{black} )}
		\end{equation}
	\end{enumerate}
	
\end{lemma}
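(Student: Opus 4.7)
The plan is to recognize that everything reduces to Lemma \ref{unwrap-1} applied to the source and target objects involved in each of the defining diagrams, and that morphisms in $[\to,\category{Set}]$ are just commutative squares of functions. A morphism of $(\mcal{V},\otimes,\fun)$-multicategories is by definition an object assignment together with a family of morphisms $f_{\Gamma;A}: \Mhom{\bb{C}}{\Gamma}{A}\to \Mhom{\bb{D}}{f\Gamma}{fA}$ in $[\to,\category{Set}]$ satisfying identity- and composition-preservation squares. A morphism in $[\to,\category{Set}]$ is, unwinding definitions, precisely the data of a pair $(f^0_{\Gamma;A}, f^1_{\Gamma;A})$ of functions making the naturality square \eqref{coh-diag} commute, which gives clause (1) of the lemma.

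For the identity axiom, the identity is a morphism $id_A: I \to \Mhom{\bb{C}}{A}{A}$ where $I = (1\to 1)$. By Lemma \ref{unwrap-1} (with $n=0$, or directly by the universal property of $I$) such a morphism is determined by its level-$0$ component, i.e. by $\color{blue}id_A\color{black}\in \Mhom{\bb{C}}{A}{A}_0$; its level-$1$ component is forced to be $\color{red}\overline{\color{blue}id_A}\color{black}$. The preservation square \eqref{id-diag} therefore collapses to a single equation on level-$0$ components (equation \eqref{id-eq}), with the level-$1$ equation being automatic from \eqref{coh-eq}.

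For the composition axiom, I would apply Lemma \ref{unwrap-1} with $a = \Mhom{\bb{C}}{A_1,\ldots, A_n}{B}$, the $b^i = \Mhom{\bb{C}}{\Delta_i}{A_i}$, and target $c = \Mhom{\bb{D}}{f\Delta_1,\ldots, f\Delta_n}{fB}$, to both sides of the preservation square $sub^{\bb{D}}\circ(f\times \fun f) = f\circ sub^{\bb{C}}$. Since two morphisms $a\times (\fun_i b^i)\to c$ in $[\to,\category{Set}]$ coincide iff their decomposition into a level-$0$ map and $n$ level-$1$ maps coincides, the single preservation square decomposes into $n+1$ equations in $\category{Set}$: the level-$0$ part yields \eqref{psub-diag} (preservation of pure substitution), and the $j$-th level-$1$ part yields \eqref{esub-diag} (preservation of $esub_j$ for each $j\in\{1,\ldots,n\}$). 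Translating back via Lemma \ref{unwrapped}, these squares are exactly equations \eqref{psub-eq} and \eqref{esub-eq} respectively.

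The conceptually subtle step, which I expect to be the main bookkeeping obstacle rather than a genuine difficulty, is checking that the level-$1$ component of the morphism $f_{A_1,\ldots,A_n;B}\times (\fun_{i=1}^n f_{\Delta_i;A_i})$ at index $j$ really is $f^1_{A_1,\ldots,A_n;B}\times (f^0_{\Delta_1;A_1}\otimes\cdots\otimes f^1_{\Delta_j;A_j}\otimes\cdots\otimes f^0_{\Delta_n;A_n})$, i.e.\ that it has the shape $f^1\times (\prod_{i}f^0 \mid f^1_{\Delta_j;A_j})$ appearing in \eqref{esub-diag}. This follows from the pushout characterisation of $\fun_{i=1}^n$ (Corollary \ref{pushout-char}) together with how the funny tensor of morphisms is constructed from the universal property of the pushout, and it is precisely this pattern that is captured by Lemma \ref{unwrap-1}. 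Once this identification is in hand, the equivalence in both directions is mechanical, and the converse — reconstructing a multicategory morphism from the data $(f^0,f^1)$ together with the four axioms — is straightforward since Lemmas \ref{unwrap-1} and \ref{unwrapped} guarantee that the specified data uniquely assembles into a morphism in $[\to,\category{Set}]$ satisfying the required preservation squares.
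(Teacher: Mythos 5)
Your proposal is correct and follows exactly the route the paper intends: the paper omits the proof of this lemma entirely (the appendix only proves Lemmas \ref{unwrap-1} and \ref{unwrap-2}), and the argument is precisely the one you give — a morphism in $[\to,\category{Set}]$ is a commutative square, maps out of $I=(1\to 1)$ are determined by their level-$0$ component, and the composition-preservation square decomposes via Lemma \ref{unwrap-1} and the pushout characterisation of $\fun_{i=1}^n$ into one level-$0$ equation and $n$ level-$1$ equations. Your identification of the $j$-th level-$1$ component of $f\times(\fun_i f_{\Delta_i;A_i})$ as $f^1\times\prod_i f^0\mid f^1_{\Delta_j;A_j}$ is the only point needing care, and your justification via the universal property of the pushout is the right one.
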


\subsection{From Duoidally Enriched Multicategories to Effectful Multicategories }
\label{6.2.2}

With the characterizations obtained, we are ready to show that every $([\to,\category{Set}],\times,\fun)$-multicategory can be made into an effectful one.

\begin{proposition}
	Every $([\to,\category{Set}],\times, \fun)$-multicategory $\bb{C}$ determines an effectful category, $F\bb{C}$.
\end{proposition}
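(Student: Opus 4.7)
The plan is to read off the three pieces of an effectful multicategory $F\bb{C} = (\blue{\bb{C}_0},\red{\bb{C}_1},J)$ directly from the data that Lemma \ref{unwrapped} extracts from a $([\to,\category{Set}],\times,\fun)$-multicategory $\bb{C}$: the sets $\Mhom{\bb{C}}{\Gamma}{B}_0$ and the operation $psub$ will yield the multicategory $\blue{\bb{C}_0}$, the sets $\Mhom{\bb{C}}{\Gamma}{B}_1$ together with the family $\{esub_j^n\}_{j}$ will yield the premulticategory $\red{\bb{C}_1}$, and the bar map $\color{blue}t\color{black}\mapsto \color{red}\overline{\color{blue}t}$ will supply the comparison $J$.

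First I would take $\blue{\bb{C}_0}$ to have the objects of $\bb{C}$, hom-sets $\Mhom{\bb{C}}{\Gamma}{B}_0$, identities $\color{blue}id_A\color{black}$, and simultaneous substitution $psub$; the three multicategory axioms then transliterate directly to conditions (2a), (3a) and (4a) of Lemma \ref{unwrapped}. For $\red{\bb{C}_1}$ I keep the same objects, take hom-sets $\Mhom{\bb{C}}{\Gamma}{B}_1$ and identities $\color{red}\overline{\color{blue}id_A}\color{black}$, and define the unary substitution by sandwiching the sole effectful input between pure identities,
\[
\color{red}t\color{black}[\Gamma,\color{red}u\color{black},\Gamma']\; :=\; esub_{|\Gamma|+1}^{n}\bigl(\color{red}t\color{black};\,\color{blue}id_{A_1}\color{black},\ldots,\color{blue}id_{A_{|\Gamma|}}\color{black},\color{red}u\color{black},\color{blue}id_{A_{|\Gamma|+2}}\color{black},\ldots,\color{blue}id_{A_n}\color{black}\bigr).
\]
Left and right unitality in $\red{\bb{C}_1}$ are immediate from (2b) and (3b), and the associativity of this unary substitution follows from one invocation of (4b) in the configuration where the inner effectful composite sits exactly at the outer effectful position, every other inner composite being the pure right-unitor $psub(\color{blue}u_i\color{black};\color{blue}id,\ldots,\color{blue}id)=\color{blue}u_i$.

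Next I would define $J$ to be the identity on objects with $J(\color{blue}t\color{black}):=\color{red}\overline{\color{blue}t}$. Preservation of identities holds by construction, and preservation of unary substitution (viewing $\blue{\bb{C}_0}$ as a premulticategory) becomes exactly the equation (1a) once the unary substitution in $\red{\bb{C}_1}$ is unfolded. The essential content is showing that $J$ preserves centrality: for any pure $\color{blue}u\color{black}\colon\Gamma\to A$ and any $\color{red}t\color{black}\colon\Delta_1,A,\Delta_2,B,\Delta_3\to C$ and $\color{red}v\color{black}\colon\Gamma'\to B$, I aim to reduce both sides of the commutativity equation in Definition \ref{central-term} to a single ``joint'' $esub$ of the shape $esub_{l}(\color{red}t\color{black};\,\text{pure ids},\color{blue}u\color{black},\text{pure ids},\color{red}v\color{black},\text{pure ids})$ with marker $l$ at the $B$-position of $\color{red}t\color{black}$. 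On the left-hand side, the first substitution inserts $\color{red}\overline{\color{blue}u}\color{black}$ at the $A$-position with every other input pure; (1b) then lets me slide the marker onto the $B$-position while turning the $A$-position input into the unbarred $\color{blue}u\color{black}$, after which (4b) collapses the two nested $esub$'s into the joint form. On the right-hand side, the second substitution is $esub_{A\text{-pos}}(\color{red}t_2\color{black};\ldots,\color{red}\overline{\color{blue}u}\color{black},\ldots)$, and (1b) relocates this marker into the first slot of the $\Gamma'$-block of $\color{red}t_2\color{black}$, after which one application of (4b) identifies the resulting two-step iteration with the same joint.

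The main obstacle is precisely this centrality verification: it concerns two effectful insertions at unrelated outer positions, whereas (4b) on its own only describes nested substitutions along a single effectful branch. The key trick is to use (1b) to relocate the effectful marker from the $A$-position into the input block of the second effectful term, because this is what aligns the two iterations with (4b). The remaining subtlety is purely bookkeeping: tracking flattened versus nested positions, and repeatedly invoking right-unitality in both $psub$ and $esub$ to swallow identity arguments.
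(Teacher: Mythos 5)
Your construction is the same as the paper's: $\blue{\bb{C}_0}$ from the $0$-component with $psub$, $\red{\bb{C}_1}$ from the $1$-component with unary substitution given by $esub_j$ padded with pure identities, $J$ the bar map, and the centrality of barred terms established by using the coherence equation (1b) to relocate the effectful marker into the input block of the other effectful term and then collapsing with associativity (4b) — which is precisely the chain of equalities in the paper's proof. The argument is correct and takes essentially the same route.
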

\begin{proof}
	The construction is natural. We define a multicategory, $F\bb{C}_0$ and a premulticategory $F\bb{C}_1$ as having the same objects as $\bb{C}$. The homsets are given as:
	\begin{align*}
		\Mhom{F\bb{C}_0}{\Gamma}{B} &: = \Mhom{\bb{C}}{\Gamma}{B}_0\\
		\Mhom{F\bb{C}_1}{\Gamma}{B} &: = \Mhom{\bb{C}}{\Gamma}{B}_1
	\end{align*}
	
	For an object $A$, the identity in $F\bb{C}_0$ is the element $\color{blue} id_A \color{black} \in \Mhom{\bb{C}}{A}{A}_0$, and the identity in $F\bb{C}_1$ is $\color{red}\overline{\color{blue} id_A} \color{black}\in \Mhom{\bb{C}}{A}{A}_1$. 
	
	Composition in $F\bb{C}_0$ is defined as:
	\begin{align*}
		\Mhom{F\bb{C}_0}{A_1,\ldots,A_i,\ldots A_n}{B} &\times \Mhom{F\bb{C}_0}{\Delta}{A_i}\\
		&\to \Mhom{F\bb{C}_0}{A_1,\ldots, A_{i-1},\Delta,A_{i+1},\ldots A_n}{B}\\
		\color{blue} t \color{black},\color{blue} u\color{black} 	&\mapsto \bterms{t}{id_{A_1}\color{black},\ldots, \color{blue} u}{id_{A_n}}
	\end{align*}
	
	Composition in $F\bb{C}_1$ is defined as:
	\begin{align*}
		\Mhom{F\bb{C}_1}{A_1,\ldots,A_i,\ldots A_n}{B} &\times \Mhom{F\bb{C}_1}{\Delta}{A_i}\\
		&\to \Mhom{F\bb{C}_1}{A_1,\ldots, A_{i-1},\Delta,A_{i+1},\ldots A_n}{B}\\
		\color{red} t \color{black},\color{red} u\color{black} 	&\mapsto \mterms{t}{id_{A_1}}{u}{id_{A_n}}
	\end{align*}
	
	Finally, the centre-preserving premulticategory morphism, 
	\[
	F\bb{C}_J: F\bb{C}_0\to F\bb{C}_1
	\]
	is the identity on objects morphism which acts on the multihom sets in the obvious way:
	\begin{align*}
		\Mhom{F\bb{C}_0}{\Gamma}{B}_0&\to \Mhom{F\bb{C}_0}{\Gamma}{B}_1\\
		\color{blue} t \color{black} &\mapsto \color{red}\overline{\color{blue} t}
	\end{align*}
	
	Checking that $F\bb{C}_0$ defines a multicategory is straightforward. Unitality axioms in $F\bb{C}_1$ are also straightforward.
	We check that  substitution in $F\bb{C}_1$ is associative: Let $\color{red}t\color{black}: A_1,\ldots, A_i,\ldots, A_n\to B$, $\color{red}u\color{black}: B_1,\ldots, B_j, \ldots, B_m\to A_i$ and $\color{red}v\color{black}: C_1,\ldots, C_p\to B_j$ be effectful terms. Then, from the associativity condition in Lemma \ref{unwrapped}:
	\begin{align*}
		&\mterms{t}{ \color{blue}id_{A_1} \color{black}(\color{blue}id_{A_1}\color{black}) }{\mterms{u}{id_{B_1}}{v}{id_{B_m}}}{id_{A_n}\color{black}(\color{blue}id_{A_n}\color{black})}\\
		& = \mterms{\mterms{t}{id_{A_1}}{u}{id_{A_n}} }{id_{A_1}}{\color{blue} id_{B_1}\color{black},\ldots, \color{red} v\color{black},\ldots, \color{blue}id_{B_m}}{id_{A_n}}
	\end{align*}
	which shows associativity in $F\bb{C}_1$. That all maps in $F\bb{C}_0$ are central follows from the associativity axiom in Lemma \ref{unwrapped}.

	Note that $F\bb{C}_J$ obviously preserves identities. To see why it preserves composition, we use composition coherence (a) from Lemma \ref{unwrapped}:
	\begin{align*}
		\color{red}\overline{\bterms{t}{id_{A_1}\color{black},\ldots, \color{blue} u}{id_{A_n}}}
		& = \mterms{\overline{\color{blue}t}}{id_{A_1}}{\overline{\color{blue} u}}{id_{A_n}}
	\end{align*}
	This shows that $F\bb{C}_J$ is indeed a premulticategory morphism.

	For preservation of centrality: let $\color{red}t\color{black}: A_1,\ldots, A_i, \ldots, A_j,\ldots, A_n\to B, \color{red}v\color{black}: C_1,\ldots, C_m\to A_j$ be  effectful terms and let $\color{blue} u\color{black}: D_1,\ldots, D_p\to A_i$ be a pure term. We show that $\color{red}\overline{\color{blue} u}$ is central in $F\bb{C}_1$, using the axioms of Lemma \ref{unwrapped}:
	\begin{align*}
		&	\mterms{\mterms{t}{id_{A_1}}{\color{blue} id_{A_i}\color{black},\ldots, \color{red} v}{id_{A_n}}}{id_{A_1}}{\color{red}\overline{\color{blue} u}\color{black},\ldots, \color{blue} id_{C_1}\color{black},\ldots, \color{blue} id_{C_m} }{id_{A_n}}\\
		&=	\mterms{\mterms{t}{id_{A_1}}{\color{blue} id_{A_i}\color{black},\ldots, \color{red} v}{id_{A_n}}}{id_{A_1}}{{\color{blue} u}\color{black},\ldots, \color{blue} \color{red}{\overline{\color{blue}{id_{C_1}}}}\color{black},\ldots, \color{blue} id_{C_m} }{id_{A_n}}\\
		&=		\mterms{t}{id_{A_1}\color{black}(\color{blue}id_{A_1}\color{black})\color{black},\ldots, \color{blue} id_{A_i}\color{black}(\color{blue}u\color{black})}{\color{red}v \color{black}[\color{red} \overline{\color{blue}id_{C_1}}\color{black},\color{blue} id_{C_2}\color{black},\ldots, \color{blue} id_{C_m}  \color{black}]}{id_{A_n}\color{black}(\color{blue}id_{A_n}\color{black})}\\
		&=\mterms{t}{id_{A_1}\color{black}(\color{blue}id_{A_1}\color{black})\color{black},\ldots, \color{blue} \color{blue}u\color{black}}{v}{id_{A_n}\color{black}(\color{blue}id_{A_n}\color{black})}\\
		&=	\mterms{t}{id_{A_1}\color{black}(\color{blue}id_{A_1}\color{black})\color{black},\ldots, \color{blue} \color{black}\color{blue}u\color{black}(\color{blue}id_{D_1}\color{black},\ldots,\color{blue} id_{D_p}\color{black})\color{black}}{\overline{\color{blue}id_{A_j}}\color{black}[\color{red}v\color{black}]}{id_{A_n}\color{black}(\color{blue}id_{A_n}\color{black})}\\
		&=\mterms{\mterms{t}{id_{A_1}\color{black},\ldots, \color{blue} u}{\overline{\color{blue}id_{A_j}}}{id_{A_n}}}{id_{A_1}\color{black},\ldots, \color{blue}id_{D_1}\color{black},\ldots, \color{blue}id_{D_p}}{v}{id_{A_n}}\\
		&=\mterms{\mterms{t}{id_{A_1}\color{black},\ldots, \color{red}\overline{\color{blue}u}}{{\color{blue}id_{A_j}}}{id_{A_n}}}{id_{A_1}\color{black},\ldots, \color{blue}id_{D_1}\color{black},
			\ldots, \color{blue} id_{D_p}}{v}{id_{A_n}}\\
	\end{align*} 
	which shows that it does not matter in which order $\color{blue}u$ and $\color{red} v$ are substituted in $\color{red} t$.
\end{proof}

Now we show that the construction $F$ above extends to a functor between the obvious categories.

\begin{lemma}\label{functor-f}
	The mapping $F$ defined above extends to a functor, 
	\[
	F: ([\to,\category{Set}],\times, \fun)\category{-MultiCat}\to \category{EffMultiCat}
	\]
\end{lemma}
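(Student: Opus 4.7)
The plan is to use the explicit characterization of morphisms from Lemma \ref{unwrap-3}. Given a $([\to,\category{Set}],\times,\fun)$-multicategory morphism $f:\bb{C}\to\bb{D}$, that lemma hands us two families of maps $f^0_{\Gamma;A}$ and $f^1_{\Gamma;A}$ satisfying the coherence equations (\ref{coh-eq}), (\ref{id-eq}), (\ref{psub-eq}), (\ref{esub-eq}). I would define $Ff=(\blue{Ff_0},\red{Ff_1})$ to act as $f_{ob}$ on objects, with $\blue{Ff_0}$ acting on hom-sets by $f^0$ and $\red{Ff_1}$ acting on hom-sets by $f^1$.

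First I would check that $\blue{Ff_0}:F\bb{C}_0\to F\bb{D}_0$ is a multicategory morphism. Identity preservation is equation (\ref{id-eq}). For composition preservation, recall that composition in $F\bb{C}_0$ is $\blue{t},\blue{u}\mapsto \bterms{t}{id_{A_1}\color{black},\ldots,\color{blue}u}{id_{A_n}}$; applying (\ref{psub-eq}) and then (\ref{id-eq}) to each padded identity gives precisely the composite $\bterms{f^0(\color{blue}t\color{black})}{id_{fA_1}\color{black},\ldots,\color{blue}f^0(u)}{id_{fA_n}}$ in $F\bb{D}_0$. A parallel argument using (\ref{esub-eq}) in place of (\ref{psub-eq}) shows that $\red{Ff_1}:F\bb{C}_1\to F\bb{D}_1$ is a premulticategory morphism; identity preservation here reduces to the chain $f^1(\red{\overline{\color{blue}id_A}})=\red{\overline{f^0(\color{blue}id_A\color{black})}}=\red{\overline{\color{blue}id_{fA}}}$ by (\ref{coh-eq}) and (\ref{id-eq}).

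Next I would verify that the square with $F\bb{C}_J$ and $F\bb{D}_J$ commutes. Since both $J$-morphisms send $\color{blue}t\color{black}\mapsto\red{\overline{\color{blue}t}}$, commutativity on a pure term $\color{blue}t$ reduces to $f^1(\red{\overline{\color{blue}t}})=\red{\overline{f^0(\color{blue}t\color{black})}}$, which is exactly (\ref{coh-eq}). Functoriality of $F$ (preservation of identities and composition of morphisms) then follows componentwise, since the components $f^0$ and $f^1$ from Lemma \ref{unwrap-3} are ordinary set-functions and Lemma \ref{unwrap-3} identifies morphisms with such pairs.

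The main obstacle is purely bookkeeping: one must keep track of which coherence equation is invoked at each step and remember that composition in $F\bb{C}_0$ and $F\bb{C}_1$ was defined by padding with identities in all but one slot. No genuinely new ingredient is required beyond what Lemma \ref{unwrap-3} already provides.
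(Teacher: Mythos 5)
Your proposal is correct and follows essentially the same route as the paper's proof: both define $Ff_0$ and $Ff_1$ via the components $f^0,f^1$ from Lemma \ref{unwrap-3}, verify preservation of identities and (padded-identity) composition using equations (\ref{id-eq}), (\ref{psub-eq}), (\ref{esub-eq}), and reduce commutativity of the $J$-square to (\ref{coh-eq}). Nothing is missing.
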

\begin{proof}
	We need to define the action of $F$ on maps $f: \bb{C}\to \bb{D}$. Define the maps, $Ff_0: F\bb{C}_0\to F\bb{D}_0$ and $Ff_1: F\bb{C}_1\to F\bb{D}_1$ to act on objects as 
	\begin{align*}
		Ff_0(A) = Ff_1(A) : = f(A)
	\end{align*}
	and on hom-objects just like $f^0$ and $f^1$, i.e,
	\[\begin{tikzcd}
		{\Mhom{F\bb{C}}{\Gamma}{A}_i} && {\Mhom{F\bb{D}}{f\Gamma}{fA}_i} \\
		{\Mhom{\bb{C}}{\Gamma}{A}_i} && {\Mhom{\bb{D}}{f\Gamma}{fA}_i}
		\arrow["{Ff_{i_{\Gamma;A}}}", dashed, from=1-1, to=1-3]
		\arrow["{=}"{description}, from=2-1, to=1-1]
		\arrow["{=}"{description}, from=2-3, to=1-3]
		\arrow["{f_{\Gamma;A}^i}", from=2-1, to=2-3]
	\end{tikzcd}\]
	
	Both $Ff_0$ and $Ff_1$ define premulticategory morphisms: let $\color{blue} t\color{black}: A_1,\ldots, A_n\to B$ and $\color{blue} u\color{black}: B_1,\ldots, B_m\to A_i$ be effectful terms in $F\bb{C}_0$. Then, 
	\begin{align*}
		f^0(\bterms{t}{id_{A_1}\color{black},\ldots, \color{blue}u}{id_{A_n}}) &= \bterms{\color{black}f^0(\color{blue}t\color{black})}{\color{black}f^0(\color{blue}id_{A_1}\color{black})\color{black},\ldots, \color{blue} \color{black}f^0(\color{blue}u\color{black})}{\color{black}f^0(\color{blue}id_{A_n}\color{black})}\\
		& =\bterms{\color{black}f^0(\color{blue}t\color{black})}{id_{fA_1}\color{black},
			\ldots, f^0(\color{blue}u\color{black})}{id_{fA_n}}
	\end{align*}
	using \ref{id-eq} and \ref{psub-eq}, and this shows that $Ff_0$ preserves composition. 
	And similarly, if $\color{red}t \color{black}: A_1,\ldots, A_n\to B$ and $\color{red} u \color{black}: B_1,\ldots, B_m\to A_i$ are effectful terms, we have
	\begin{align*}
		f^1(\mterms{t}{id_{A_1}}{u}{id_{A_n}})  &=\mterms{\color{black}f^1(\color{red}t\color{black})}{\color{black}f^0(\color{blue}id_{A_1}\color{black})}{\color{black}f^1(\color{red}u\color{black})}{\color{black}f^0(\color{blue}id_{A_n}\color{black})}\\
		&	= \mterms{\color{black}f^1(\color{red}t\color{black})}{id_{fA_1}}{\color{black}f^1(\color{red}u \color{black})}{id_{fA_n}}
	\end{align*}
	using \ref{esub-eq} and \ref{id-eq}, and this proves that $Ff_1$ preserves composition.
	
	It is also easy to see that both $Ff_0$ and $Ff_1$ preserve identities
	\begin{align*}
		f^0(\color{blue} id_A\color{black}) = \color{blue} id_{fA}\\
		f^1(\color{red}\overline{\color{blue} id_A}\color{black}) = \color{red}\overline{\color{black}f^0(\color{blue}t\color{black})} \color{black}= \color{red}\overline{\color{blue}id_{fA}}
	\end{align*}
	using \ref{id-eq} and \ref{coh-eq}. Finally, that the square
	\[\begin{tikzcd}[cramped]
		{F\bb{C}_0} && {F\bb{C}_1} \\
		{F\bb{D}_0} && {F\bb{D}_1}
		\arrow["{F\bb{C}_J}", from=1-1, to=1-3]
		\arrow["{Ff_0}", from=1-1, to=2-1]
		\arrow["{F\bb{D}_J}", from=2-1, to=2-3]
		\arrow["{Ff_1}", from=1-3, to=2-3]
	\end{tikzcd}\]
	commutes follows from the fact that that both $Ff_0$ and $Ff_1$ act on objects in the same way, and that \ref{coh-diag} commutes. This shows that $f$ gets mapped to a morphism of effectful multicategories. 
	
	Finally,  $F$ preserves identities and composition and thus is functorial.
\end{proof}

\subsection{From  Effectful Multicategories to Duoidally Enriched Multicategories}

To show that an effectful multicategory defines a $([\to,\category{Set}],\times, \fun)$-multicategory, we first define an iterated substitution operation in an arbitrary premulticategory. 

\begin{definition}
	Let $\bb{C}$ be a premulticategory. Define the operation
	\begin{align*}
		isub:\Mhom{\bb{C}}{\Gamma,A_1,\ldots,A_n,\Gamma'}{B} \times &\Big(\prod_{i=1}^{n} \Mhom{\bb{C}}{A^i_1,\ldots, A_i^{k_i}}{A_i}\Big)\\ &\to \Mhom{\bb{C}}{\Gamma, A^1_1,\ldots, A_1^{k_1},\ldots, A^1_n,\ldots, A^{k_n}_n,\Gamma'}{B} \\
		(t, u_1,\ldots, u_n) &\mapsto t\langle \Gamma,u_1,\ldots, u_n,\Gamma' \rangle	
	\end{align*}
	recursively as:
	\begin{align*}
		t\langle \Gamma, u_1,\Gamma' \rangle &= t[\Gamma,u_1,\Gamma']\\
		t\langle \Gamma, u_1,\ldots, u_n,\Gamma' \rangle & = t\langle \Gamma,u_1,\ldots, u_{n-1},A_n,\Gamma' \rangle[\Gamma,\Delta_1,\ldots, \Delta_{n-1}, u_n, \Gamma'], \qquad (n\geq 2)
	\end{align*}
\end{definition}

Note that in defining this operation, we have chosen an order of evaluation, i.e, we evaluate the left-most term first. As such, the operation $isub$ is in no way canonical and we could have as well chosen $n!-1$ other ways to plug in the terms. Obviously, this operation itself can be iterated, and this observation is behind the following fact. 

\begin{lemma}[Splitting]
	Let $t:\Gamma,A_1,\ldots, A_n,\Gamma'\to B$ be a multimap, and let $\{u_i: \Delta_i \to A_i\}_{1\leq i\leq n}$ be a family of them. Then, 	
	\begin{align*}
		&t\langle \Gamma,u_1,\ldots, u_n,\Gamma' \rangle \\&= t\langle \Gamma,u_1,\ldots, u_j,A_{j+1},\ldots,A_n,\Gamma'  \rangle \langle \Gamma,\Delta_1,\ldots, \Delta_j,u_{j+1},\ldots, u_n,\Gamma' \rangle
	\end{align*}
	for any $1\leq j\leq n$.
\end{lemma}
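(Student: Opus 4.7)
The plan is to proceed by induction on $n$, the number of multimaps substituted iteratively. Throughout, I read the expression $t\langle \Gamma, u_1, \ldots, u_k, A_{k+1}, \ldots, A_n, \Gamma' \rangle$ as an iterated substitution over the $k$ genuine multimaps $u_1, \ldots, u_k$ in a multimap whose right-hand context is the extended list $A_{k+1}, \ldots, A_n, \Gamma'$; equivalently, a chain of $k$ single substitutions performed from left to right. Under this reading the two sides of the claimed identity should, after unpacking, both equal the same chain of $n$ single substitutions applied successively to $t$, so the lemma is essentially a regrouping statement about that chain.

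For the base case $n = 2$ and $j = 1$, both sides unfold via the base and recursive clauses of $isub$ to $t[\Gamma, u_1, A_2, \Gamma'][\Gamma, \Delta_1, u_2, \Gamma']$, which is by definition $t\langle \Gamma, u_1, u_2, \Gamma' \rangle$. For the inductive step, I would first dispatch the corner case $j = n - 1$ separately: the second iterated substitution on the right-hand side contains only the single genuine multimap $u_n$, so by the base clause of $isub$ it collapses to the single substitution $[\Gamma, \Delta_1, \ldots, \Delta_{n-1}, u_n, \Gamma']$, and the resulting equation is precisely the recursive clause defining the left-hand side.

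For $j < n - 1$, apply the recursive clause to the left-hand side to obtain
\[
t\langle \Gamma, u_1, \ldots, u_n, \Gamma' \rangle = t\langle \Gamma, u_1, \ldots, u_{n-1}, A_n, \Gamma' \rangle [\Gamma, \Delta_1, \ldots, \Delta_{n-1}, u_n, \Gamma'].
\]
View the first factor as an iterated substitution of $n - 1$ multimaps, with $A_n$ absorbed into the extended right-hand context $A_n, \Gamma'$. The induction hypothesis then applies and splits this at position $j$, producing
\[
t\langle \Gamma, u_1, \ldots, u_j, A_{j+1}, \ldots, A_n, \Gamma' \rangle \, \langle \Gamma, \Delta_1, \ldots, \Delta_j, u_{j+1}, \ldots, u_{n-1}, A_n, \Gamma' \rangle.
\]
Reapplying the recursive clause of $isub$ in reverse then fuses this trailing iterated substitution together with the outstanding single substitution $[\Gamma, \Delta_1, \ldots, \Delta_{n-1}, u_n, \Gamma']$ into $\langle \Gamma, \Delta_1, \ldots, \Delta_j, u_{j+1}, \ldots, u_n, \Gamma' \rangle$, yielding the required right-hand side.

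The main obstacle is notational rather than mathematical: one needs to make precise what $t\langle \Gamma, u_1, \ldots, u_j, A_{j+1}, \ldots, A_n, \Gamma' \rangle$ means when some slots carry objects rather than multimaps, and to verify that the induction hypothesis can be legitimately invoked after absorbing $A_n$ into the right-hand context. No nontrivial appeal to the premulticategory associativity axiom beyond what is already built into the recursive definition of $isub$ is needed, since the argument is purely structural on that definition.
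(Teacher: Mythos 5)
Your proof is correct: the paper states the Splitting lemma without proof, and your structural induction on $n$ --- unfolding the recursive clause of $isub$, invoking the inductive hypothesis with $A_n$ absorbed into the right-hand context, and refolding the trailing chain via the recursive clause read in reverse --- is exactly the routine argument the authors leave implicit. The only loose end is the degenerate case $j = n$, where the second iterated substitution is empty; this is a defect of the lemma's statement (which should read $1 \leq j \leq n-1$, or stipulate that an empty iterated substitution acts as the identity) rather than of your argument.
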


This derived operation also inherits some amount of associativity, from the premulticategory. This is basically the fact that let-binding is associative in an effectful, call-by-value language.
\begin{lemma}\label{isub-assoc}
	Let $t: \Gamma,A,\Gamma' \to B$, $u: \Lambda,A^1,\ldots, A^{k},\Lambda'\to A$ and $v^i: \Delta^i\to A^i $. Then,
	\[
	t[\Gamma, u, \Gamma'] \langle \Gamma,\Lambda,v^1,\ldots, v^k,\Lambda',\Gamma' \rangle = t [\Gamma,u\langle \Lambda,v^1,\ldots, v^k,\Lambda' \rangle,\Gamma']
	\]
\end{lemma}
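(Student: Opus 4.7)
The plan is to proceed by induction on $k$, the length of the middle block of iterated substitutions. The base case $k=1$ is immediate: by the recursive definition, $t[\Gamma,u,\Gamma']\langle \Gamma,\Lambda,v^1,\Lambda',\Gamma'\rangle = t[\Gamma,u,\Gamma'][\Gamma,\Lambda,v^1,\Lambda',\Gamma']$, and the associativity axiom of a premulticategory (applied to $t$, $u$, $v^1$, with appropriate re-reading of the outer contexts) converts this to $t[\Gamma,u[\Lambda,v^1,\Lambda'],\Gamma'] = t[\Gamma,u\langle \Lambda,v^1,\Lambda'\rangle,\Gamma']$.

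For the inductive step, assume the statement holds for all iterated substitutions of length $k-1$. Unfold the outermost iterated substitution using the recursive definition:
\[
t[\Gamma,u,\Gamma']\langle \Gamma,\Lambda,v^1,\ldots,v^k,\Lambda',\Gamma'\rangle = \theta\,[\Gamma,\Lambda,\Delta^1,\ldots,\Delta^{k-1}, v^k,\Lambda',\Gamma'],
\]
where $\theta := t[\Gamma,u,\Gamma']\langle \Gamma,\Lambda,v^1,\ldots,v^{k-1},A^k,\Lambda',\Gamma'\rangle$. Applying the inductive hypothesis to $\theta$ (treating the unchanged $A^k$ as part of the fixed surrounding context of $u$) rewrites $\theta$ as $t[\Gamma,\,u\langle \Lambda,v^1,\ldots,v^{k-1},A^k,\Lambda'\rangle,\,\Gamma']$. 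Next, invoke the premulticategory associativity axiom once more, now on the outer substitution into the $\Gamma,-,\Gamma'$-hole occupied by $u\langle\Lambda,v^1,\ldots,v^{k-1},A^k,\Lambda'\rangle$, to move the $v^k$-substitution inside.

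This yields $t\bigl[\Gamma,\, u\langle \Lambda,v^1,\ldots,v^{k-1},A^k,\Lambda'\rangle[\Lambda,\Delta^1,\ldots,\Delta^{k-1},v^k,\Lambda'],\, \Gamma'\bigr]$. The inner bracketed term is, by the recursive definition of $\langle-\rangle$, precisely $u\langle\Lambda,v^1,\ldots,v^k,\Lambda'\rangle$, which closes the induction.

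The proof is essentially bookkeeping: the content lies entirely in the premulticategory associativity axiom, which is applied exactly once in the base case and once per inductive step. The main obstacle I expect is purely notational, namely keeping the decomposition of each outer context (the $\Gamma$, $\Gamma'$, $\Lambda$, $\Lambda'$, and the accumulating $\Delta^1,\ldots,\Delta^{k-1}$) aligned when applying the single-term associativity axiom, since the axiom as stated in the excerpt only names three contexts $\Gamma_1,\Gamma_1',\Gamma_2,\Gamma_2'$ and must be read with suitable concatenations substituted for them. No appeal to centrality is needed, and no use of the derived Splitting Lemma is required — although one could alternatively give a slicker proof by first establishing the Splitting Lemma and then applying it together with a single instance of the premulticategory associativity axiom.
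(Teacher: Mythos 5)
Your proof is correct and is exactly the argument the paper intends: the paper's entire proof of this lemma is the single line ``The proof is by induction on $k$,'' and your induction on $k$ --- unfolding the outermost step of the recursive definition, applying the inductive hypothesis with $A^k$ absorbed into $\Lambda'$, and then using the premulticategory associativity axiom once to push the $v^k$-substitution inside $u$ --- is the natural way to fill in that outline. Your base case and bookkeeping of contexts both check out against the definitions in the paper.
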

\begin{proof}The proof is by induction on $k$. 
\end{proof}

Further, this derived  operation is also preserved under premulticategory morphisms.
\begin{lemma}\label{isub-pres}
	Let $f: \bb{C}\to \bb{D}$ be a premulticategory morphism, and let $t: \Gamma,A_1,\ldots, A_n,\Gamma'\to B$ and $\{u_i: \Delta_i\to A_i\}_{1\leq i\leq n}$ be a collection of terms in $\bb{C}$. Then,
	\begin{align*}
		f(t\langle\Gamma, u_1,\ldots, u_n,\Gamma'\rangle) = f(t)\langle f\Gamma', f(u_1),\ldots, f(u_n),f\Gamma'\rangle
	\end{align*}
\end{lemma}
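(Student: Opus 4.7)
The plan is to proceed by induction on $n$, the number of terms being simultaneously plugged into $t$. The base case $n=1$ is immediate from the recursive definition: by definition $t\langle \Gamma, u_1, \Gamma'\rangle = t[\Gamma, u_1, \Gamma']$, and since $f$ is a premulticategory morphism, it preserves the unary substitution operation, giving $f(t[\Gamma, u_1, \Gamma']) = f(t)[f\Gamma, f(u_1), f\Gamma']$, which by definition equals $f(t)\langle f\Gamma, f(u_1), f\Gamma'\rangle$.

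For the inductive step, assume the result holds for $n-1$. Using the recursive definition of $isub$, I would rewrite
\[
t\langle \Gamma, u_1,\ldots, u_n,\Gamma' \rangle = t\langle \Gamma,u_1,\ldots, u_{n-1},A_n,\Gamma' \rangle[\Gamma,\Delta_1,\ldots, \Delta_{n-1}, u_n, \Gamma'].
\]
Applying $f$ and using the fact that $f$ preserves the unary substitution operation (as it is a premulticategory morphism), this becomes
\[
f\bigl(t\langle \Gamma,u_1,\ldots, u_{n-1},A_n,\Gamma' \rangle\bigr)[f\Gamma, f\Delta_1,\ldots, f\Delta_{n-1}, f(u_n), f\Gamma'].
\]
Now the inductive hypothesis applies to the inner expression $t\langle \Gamma,u_1,\ldots, u_{n-1},A_n,\Gamma'\rangle$ (which uses $n-1$ substitutions), yielding
\[
f(t)\langle f\Gamma, f(u_1),\ldots, f(u_{n-1}), fA_n, f\Gamma'\rangle [f\Gamma, f\Delta_1,\ldots, f\Delta_{n-1}, f(u_n), f\Gamma'].
\]
Finally, unwinding the recursive definition of $isub$ applied to $f(t)$ with the family $f(u_1),\ldots, f(u_n)$ identifies this expression with $f(t)\langle f\Gamma, f(u_1),\ldots, f(u_n), f\Gamma'\rangle$, completing the induction.

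I do not expect any serious obstacle here: the argument is a routine induction driven entirely by the recursive definition of $isub$ and the preservation of unary substitution by $f$. The only subtlety is bookkeeping — ensuring the contexts $\Gamma, \Delta_1,\ldots, \Delta_{n-1}$ appearing on either side of the inductive step line up correctly after $f$ is applied pointwise. This is purely notational, since $f$ acts identically on objects across all positions.
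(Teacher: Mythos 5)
Your proof is correct and follows exactly the same route as the paper's: induction on $n$, with the base case given by preservation of unary substitution and the inductive step obtained by unfolding the recursive definition of the iterated substitution, applying preservation of unary substitution, invoking the hypothesis on the inner term, and refolding. The only difference is cosmetic indexing ($n-1 \to n$ versus $n \to n+1$).
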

\begin{proof}
	When $n = 1$, we have
	\[
	f(t\langle \Gamma,u_1,\Gamma'\rangle ) = f(t[\Gamma, u_1,\Gamma']) = f(t)[f\Gamma,f(u_1),f\Gamma'] = f(t)\langle f\Gamma, f(u_1),f\Gamma'\rangle 
	\]
	Assume for induction that 
	\[
	f(t\langle \Gamma, u_1,\ldots, u_n,\Gamma'\rangle) = f(t)\langle f\Gamma, f(u_1),\ldots,f(u_n), f\Gamma'\rangle 
	\]
	Then for $n+1$
	\begin{align*}
		&	f(t\langle \Gamma,u_1,\ldots, u_n,u_{n+1},\Gamma'\rangle) \\& = f(t\langle \Gamma, u_1,\ldots, u_n,A_{n+1},\Gamma'\rangle[\Gamma, \Delta_1,\ldots, \Delta_n, u_{n+1},\Gamma'])\\
		& = f(t\langle \Gamma, u_1,\ldots, u_n,A_{n+1},\Gamma'\rangle)[f\Gamma,f\Delta_1,\ldots, f\Delta_n, f(u_{n+1}), f\Gamma']\\
		& = f(t)\langle f\Gamma, f(u_1),\ldots, f(u_n), fA_{n+1},f\Gamma'\rangle [f\Gamma, f\Delta_1,\ldots, f\Delta_n, f(u_{n+1}), f\Gamma']\\
		& = f(t)\langle f\Gamma, f(u_1),\ldots, f(u_n), f(u_{n+1}), f\Gamma'\rangle
	\end{align*}
\end{proof}

Recall the notion of a central term (Definition \ref{central-term}). To differentiate them, we notate them in the color blue, following the convention in the previous section. A term, $\color{blue} u\color{black}: \Gamma\to A$ is said to be central if for all other terms, $t: \Delta_1,A,\Delta_2,B,\Delta_3\to C$, $t': \Delta_1,B,\Delta_2,A,\Delta_3\to C$ and $v: \Gamma'\to B$, the following equations hold:
\begin{align*}
	&t[\Delta_1, \color{blue} u \color{black}, \Delta_2,B,\Delta_3] [\Delta_1,\Gamma, \Delta_2,v,\Delta_3]\\
	& = t[\Delta_1,A,\Delta_2,v,\Delta_3][\Delta_1,\color{blue}u\color{black}, \Delta_2,\Gamma',\Delta_3]
\end{align*}
\begin{align*}
	&t'[\Delta_1,B,\Delta_2,\color{blue}u\color{black},\Delta_3][\Delta_1,v,\Delta_2,\Gamma,\Delta_3]\\
	& = t'[\Delta_1,v,\Delta_2,A,\Delta_3][\Delta_1,\Gamma',\Delta_2,\color{blue}u\color{black},\Delta_3]
\end{align*}

This property also gets lifted to this iterated operation: plugging in a group of central terms followed by plugging in a group of arbitrary terms is the same as plugging in the group of arbitrary terms followed by plugging in the group of central terms. The proof is by induction, and we split it into two lemmas. The first deals with the base case.

\begin{lemma}\label{un-cent}
	Let $t: \Gamma, A_1,\Lambda, A_2,\ldots, A_n,\Gamma'\to B$ and $u_i: \Delta_i\to A_i$ be multimaps for $1\leq i\leq n$. Assume that $\blue{u_1}$ is central. Then, we have an equality:
	\begin{align*}
		&t[\Gamma, \blue{u_1},\Lambda, A_2,\ldots, A_n,\Gamma'] \langle \Gamma, \Delta_1,\Lambda, u_2,,\ldots, u_n,\Gamma' \rangle\\
		& = t\langle \Gamma, A_1,\Lambda, u_2,\ldots, u_n,\Gamma' \rangle [\Gamma, \blue{u_1},\Lambda, \Delta_2,\ldots, \Delta_n,\Gamma']
	\end{align*}
	\begin{proof}
		Induct on $n$: When $n = 2$, we have
		\begin{align*}
			&t[\Gamma,\blue{u_1},\Lambda,A_2,\Gamma']\langle \Gamma,\Delta_1,\Lambda, u_2,\Gamma' \rangle\\
			& = \text{By def}\\
			& t[\Gamma,\blue{u_1},\Lambda, A_2,\Gamma'][\Gamma,\Delta_1,\Lambda,u_2,\Gamma']\\
			& =\text{By centrality of $\blue{u_1}$}\\
			& t[\Gamma,A_1,\Lambda,u_2,\Gamma'][\Gamma,\blue{u_1},\Lambda,\Delta_2,\Gamma']\\
			& = \text{By def}\\
			& t\langle \Gamma,A_1,\Lambda,u_2,\Gamma' \rangle[\Gamma,\blue{u_1},\Lambda,\Delta_2,\Gamma']
		\end{align*} 
		Assume true for some $n$. Then, we obtain an equality:
		\begin{align*}
			&	t[\Gamma,\blue{u_1},\Lambda,A_2,\ldots, A_n,A_{n+1},\Gamma']\langle \Gamma,\Delta_1,\Lambda,u_2,\ldots, u_n,A_{n+1},\Gamma' \rangle\\
			& = t[\Gamma,A_1,\Lambda,u_2,\ldots, u_n,A_{n+1},\Gamma'][\Gamma, \blue{u_1},\Lambda,\Delta_2,\ldots, \Delta_n,A_{n+1},\Gamma']
		\end{align*}
		Consequently,
		\begin{align*}
			&t[\Gamma, \blue{u_1},\Lambda,A_2,\ldots, A_{n+1},\Gamma']\langle \Gamma,\Delta_1,\Lambda,u_2,\ldots, u_{n+1},\Gamma' \rangle\\
			& = \{\text{Substitute $r =	t[\Gamma, \blue{u_1},\Lambda,A_2,\ldots, A_{n+1},\Gamma'] $}\}\\
			& r \langle \Gamma,\Delta_1,\Lambda,u_2,\ldots, u_{n+1},\Gamma' \rangle\\
			& = \{\text{By def}\}\\
			&	r\langle \Gamma,\Delta_1,\Lambda,u_2,\ldots, u_n,A_{n+1},\Gamma' \rangle[\Gamma,\Delta_1,\Lambda, \Delta_2,\ldots, \Delta_n, u_{n+1},\Gamma']
		\end{align*}
		Simplifying the first half of this term:
		\begin{align*}
			&r\langle \Gamma,\Delta_1,\Lambda,u_2,\ldots, u_n,A_{n+1},\Gamma' \rangle\\
			& = \{\text{Substituting the value of $r$}\}\\
			&  t[\Gamma, \blue{u_1},\Lambda,A_2,\ldots, A_{n+1},\Gamma'] \langle \Gamma,\Delta_1,\Lambda,u_2,\ldots, u_n,A_{n+1},\Gamma' \rangle\\
			& =\{\text{Using the IH}\}\\
			& t\langle \Gamma,A_1,\Lambda,u_2,\ldots, u_n,A_{n+1},\Gamma' \rangle[\Gamma,\blue{u_1},\Lambda, \Delta_2,\ldots, ,\Delta_n, A_{n+1},\Gamma']\\
			& = \{\text{Substitute $s = t\langle \Gamma,A_1,\Lambda,u_2,\ldots, u_n,A_{n+1},\Gamma' \rangle$}\}\\
			&  s[\Gamma,\blue{u_1},\Lambda, \Delta_2,\ldots, \Delta_n,A_{n+1},\Gamma']
		\end{align*}
		Then, 
		\begin{align*}
			&	r\langle \Gamma,\Delta_1,\Lambda,u_2,\ldots, u_n,A_{n+1},\Gamma' \rangle[\Gamma,\Delta_1,\Lambda, \Delta_2,\ldots, \Delta_n, u_{n+1},\Gamma']\\
			& =  \{\text{Substituting according to the previous equality}\}\\
			&s[\Gamma,\blue{u_1},\Lambda, \Delta_2,\ldots, ,\Delta_n,A_{n+1},\Gamma'][\Gamma,\Delta_1,\Lambda, \Delta_2,\ldots, \Delta_n, u_{n+1},\Gamma']\\
			& = \{\text{Using the fact that $\blue{u_1}$ is central}\}\\
			&  s[\Gamma,A_1,\Lambda,\Delta_2,\ldots,\Delta_n, u_{n+1},\Gamma'][\Gamma, \blue{u_1},\Lambda,\Delta_2,\ldots, \Delta_n,\Delta_{n+1},\Gamma']
		\end{align*}
		However, 
		\begin{align*}
			&	s[\Gamma,A_1,\Lambda,\Delta_2,\ldots, \Delta_n,u_{n+1},\Gamma']\\
			& = \{\text{Substituting the value of $s$}\}\\
			&t\langle \Gamma,A_1,\Lambda,u_2,\ldots, u_n,A_{n+1},\Gamma' \rangle[\Gamma,A_1,\Lambda,\Delta_2,\ldots, \Delta_n,u_{n+1},\Gamma']\\
			& =\{\text{By def}\}\\
			& t\langle \Gamma,A_1,\Lambda,u_2,\ldots, u_{n+1},\Gamma' \rangle
		\end{align*}
		Putting all these together, 
		\begin{align*}
			&t[\Gamma, \blue{u_1},\Lambda,A_2,\ldots, A_{n+1},\Gamma']\langle \Gamma,\Delta_1,\Lambda,u_2,\ldots, u_{n+1},\Gamma' \rangle\\
			& = 	r\langle \Gamma,\Delta_1,\Lambda,u_2,\ldots, u_n,A_{n+1},\Gamma' \rangle[\Gamma,\Delta_1,\Lambda, \Delta_2,\ldots, \Delta_n, u_{n+1},\Gamma']\\
			& = s[\Gamma,A_1,\Lambda,\Delta_2,\ldots,\Delta_n, u_{n+1},\Gamma'][\Gamma, \blue{u_1},\Lambda,\Delta_2,\ldots, \Delta_n,\Delta_{n+1},\Gamma']\\
			& = t\langle \Gamma,A_1,\Lambda,u_2,\ldots, u_{n+1},\Gamma' \rangle[\Gamma, \blue{u_1},\Lambda,\Delta_2,\ldots, \Delta_n,\Delta_{n+1},\Gamma']
		\end{align*}
		as required. 
	\end{proof}

\end{lemma}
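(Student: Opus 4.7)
The statement is naturally approached by induction on $n$, the number of simultaneous positions into which we are iteratively substituting. The intuition is that centrality of $\blue{u_1}$ is a purely binary property—it says $\blue{u_1}$ commutes past any single other term—but combined with the recursive definition of $\langle-\rangle$, this binary property should propagate past an entire block of substitutions.

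For the base case $n=2$, the iterated substitution $\langle -\rangle$ collapses to the ordinary unary substitution $[-]$ (by the first clause in the definition of $isub$). After unfolding, both sides of the claimed equation become concrete binary substitutions, and the equality is then precisely one of the two equations in the definition of centrality applied to $\blue{u_1}$ and $u_2$, with the other terms absorbed into the ambient context.

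For the inductive step, I would assume the equation holds for $n$ and aim to prove it for $n+1$. The plan is: on both sides, use the recursive clause of $isub$ to peel off the substitution of $u_{n+1}$ so that each side decomposes as an iterated substitution of $u_2,\ldots,u_n$ followed by a single $[\,\cdot,u_{n+1},\cdot\,]$ at the end. On the left-hand side, after peeling off $u_{n+1}$ the inner expression is $t[\Gamma,\blue{u_1},\Lambda,A_2,\ldots,A_{n+1},\Gamma']\langle\Gamma,\Delta_1,\Lambda,u_2,\ldots,u_n,A_{n+1},\Gamma'\rangle$, to which the induction hypothesis applies directly, rewriting it as $t\langle\Gamma,A_1,\Lambda,u_2,\ldots,u_n,A_{n+1},\Gamma'\rangle[\Gamma,\blue{u_1},\Lambda,\Delta_2,\ldots,\Delta_n,A_{n+1},\Gamma']$. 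Then one more application of centrality of $\blue{u_1}$ allows me to commute $\blue{u_1}$ past the outermost $u_{n+1}$-substitution, and a final unfolding of the definition of $\langle-\rangle$ assembles the right-hand side.

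The main obstacle will be bookkeeping rather than mathematical depth: keeping track of exactly which context slots receive $A_i$, $\Delta_i$, or $u_i$ at each stage of peeling and which exact premulticategory associativity equation (Lemma \ref{isub-assoc}) is invoked to justify the re-bracketings. In particular, when I apply centrality at the final step, I must verify that $\blue{u_1}$ is being commuted past $u_{n+1}$ in a configuration of the form $r[\,\cdot,\blue{u_1},\cdot,A_{n+1},\cdot\,][\,\cdot,u_{n+1},\cdot\,]$, which is exactly one of the two shapes that the definition of centrality covers; getting the contexts on the left and right of $\blue{u_1}$ and $u_{n+1}$ to match the prescribed pattern $\Delta_1,\Delta_2,\ldots,\Delta_n$ versus $\Delta_1,\ldots,\Delta_n,\Delta_{n+1}$ is where the notation is heaviest but no new ideas are needed.
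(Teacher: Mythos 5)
Your proposal matches the paper's proof essentially step for step: induction on $n$, with the base case being a single application of the centrality of $\blue{u_1}$, and the inductive step peeling off the $u_{n+1}$-substitution via the recursive clause of $isub$, applying the induction hypothesis to the inner block, and then using centrality once more to commute $\blue{u_1}$ past $u_{n+1}$ before refolding. The remaining work is indeed only the context bookkeeping you anticipate, and the centrality application at the end lands in exactly the shape covered by Definition \ref{central-term}.
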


This fact can be generalized to the case where the first $j$ terms are all central. 
\begin{lemma}\label{cent-left}
	Let $t: \Gamma,A_1,\ldots, A_j,\Lambda, A_{j+1},\ldots, A_n,\Gamma'$, and $u_i: \Delta_i\to A_i$  be multimaps for $1\leq i\leq n$. Assume that $\blue{u_1},\ldots, \blue{u_j}$ are all central. Then, 
	\begin{align*}
		&t\langle \Gamma, \blue{u_1},\ldots, \blue{u_j},\Lambda,A_{j+1},\ldots, A_n,\Gamma' \rangle\langle \Gamma, \Delta_1,\ldots, \Delta_j, \Lambda, u_{j+1},\ldots, u_n,\Gamma'  \rangle 
		\\& = t\langle \Gamma, A_1,\ldots,A_j,\Lambda, u_{j+1},\ldots, u_n,\Gamma' \rangle\langle \Gamma, \blue{u_1},\ldots, \blue{u_j},\Lambda, \Delta_{j+1},\ldots, \Delta_n,\Gamma' \rangle
	\end{align*}
\end{lemma}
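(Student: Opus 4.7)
I would prove this by induction on $j$, the number of central terms. The base case $j=1$ is exactly Lemma \ref{un-cent}: the first iterated substitution $\langle \Gamma,\blue{u_1},\Lambda,A_2,\ldots,A_n,\Gamma'\rangle$ collapses to the single substitution $[\Gamma,\blue{u_1},\Lambda,A_2,\ldots,A_n,\Gamma']$ appearing in un-cent, and the second iterated substitution matches verbatim, so no new argument is needed at the base.

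For the inductive step, assume the statement for $j$ central terms and consider $j+1$ central terms $\blue{u_1},\ldots,\blue{u_{j+1}}$. My plan is to peel off the last central term $\blue{u_{j+1}}$ using the splitting lemma, swap it past the non-central tail with one application of Lemma \ref{un-cent}, and then apply the induction hypothesis to the remaining $j$ central terms. Concretely, set
\[
r := t\langle \Gamma,\blue{u_1},\ldots,\blue{u_j},A_{j+1},\Lambda,A_{j+2},\ldots,A_n,\Gamma'\rangle.
\]
The splitting lemma then yields $t\langle \Gamma,\blue{u_1},\ldots,\blue{u_{j+1}},\Lambda,A_{j+2},\ldots,A_n,\Gamma'\rangle = r[\Gamma,\Delta_1,\ldots,\Delta_j,\blue{u_{j+1}},\Lambda,A_{j+2},\ldots,A_n,\Gamma']$. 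Now un-cent applied to $r$, with $\blue{u_{j+1}}$ as the single central term sitting in front of the middle context $\Lambda$ and $u_{j+2},\ldots,u_n$ as the non-central tail, swaps the substitution of $\blue{u_{j+1}}$ past that of the $u_{j+2},\ldots,u_n$. This reduces the left-hand side to $r\langle \Gamma,\Delta_1,\ldots,\Delta_j,A_{j+1},\Lambda,u_{j+2},\ldots,u_n,\Gamma'\rangle$ post-composed with a residual single substitution of $\blue{u_{j+1}}$.

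At this point the induction hypothesis applies to $r\langle \Gamma,\Delta_1,\ldots,\Delta_j,A_{j+1},\Lambda,u_{j+2},\ldots,u_n,\Gamma'\rangle$, read as an instance of the statement for $j$ central terms with middle context $\Lambda' := A_{j+1},\Lambda$; this swaps $\blue{u_1},\ldots,\blue{u_j}$ past $u_{j+2},\ldots,u_n$. A final use of the splitting lemma in reverse merges the resulting substitution of $\blue{u_1},\ldots,\blue{u_j}$ at slots $A_1,\ldots,A_j$ with the outer substitution of $\blue{u_{j+1}}$ at slot $A_{j+1}$ into the single iterated substitution $\langle \Gamma,\blue{u_1},\ldots,\blue{u_{j+1}},\Lambda,\Delta_{j+2},\ldots,\Delta_n,\Gamma'\rangle$, which is exactly the right-hand side for $j+1$.

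The main obstacle is purely bookkeeping: the contexts $\Gamma$, $\Delta_i$, $\Lambda$, $\Gamma'$ and the list of free slots must be tracked through every step so that each invocation of un-cent, the splitting lemma, or the induction hypothesis matches its hypotheses exactly, particularly in identifying the enlarged middle context $\Lambda' = A_{j+1},\Lambda$ when the IH is applied. All the conceptual work has already been done in Lemma \ref{un-cent} and the splitting lemma, so the argument is mechanical — just long enough that careful indexing is the only real risk.
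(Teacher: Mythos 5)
Your plan matches the paper's proof essentially step for step: induction on $j$ with Lemma \ref{un-cent} as the base case, peeling off $\blue{u_{j+1}}$ via the recursive definition of the iterated substitution, one application of Lemma \ref{un-cent} to move $\blue{u_{j+1}}$ past the non-central tail, the induction hypothesis applied with the enlarged middle context $A_{j+1},\Lambda$, and a final merge by the definition of the iterated substitution. The bookkeeping concern you flag (in particular that Lemma \ref{un-cent} is invoked with $\Gamma$ replaced by $\Gamma,\Delta_1,\ldots,\Delta_j$) is exactly what occupies most of the paper's write-up, and your outline handles it correctly.
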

\begin{proof}
	Induct on $j$. The case $j = 1$ was handled in the previous lemma. Assume true for some $j$. Then, 
	\begin{align*}
		&	t\langle \Gamma,\blue{u_1},\ldots, \blue{u_j},A_{j+1},\Lambda,A_{j+2},\ldots, A_n,\Gamma' \rangle \\ 
		&	\qquad\qquad\qquad\qquad\qquad\qquad\langle \Gamma,\Delta_1,\ldots, \Delta_j,A_{j+1},\Lambda,u_{j+2},\ldots, u_n,\Gamma' \rangle\\
		& = t\langle \Gamma,A_1,\ldots, A_j,A_{j+1},\Lambda, u_{j+2},\ldots, u_n,\Gamma' \rangle\\
		&\qquad\qquad\qquad\qquad\qquad\qquad \langle \Gamma, \blue{u_1},\ldots, \blue{u_j}, A_{j+1},\Lambda, \Delta_{j+2},\ldots, \Delta_n, \Gamma' \rangle 
	\end{align*}
	Assume $u_i$ are central for $i\leq j+1$. Then, 
	\begin{align*}
		&	t\langle \Gamma,\blue{u_1},\ldots \blue{u_j},\blue{u_{j+1}},\Lambda,A_{j+2},\ldots, A_n,\Gamma' \rangle \\
		& = \{\text{By def}\}\\
		& t\langle \Gamma,\blue{u_1},\ldots, \blue{u_j},A_{j+1},\Lambda,A_{j+2},\ldots, A_n,\Gamma' \rangle\\
		&\qquad\qquad\qquad\qquad\qquad\qquad
		[\Gamma,\Delta_1,\ldots, \Delta_j,\blue{u_{j+1}},\Lambda,A_{j+2}, \ldots, A_n,\Gamma']\\
		& =\{\text{Setting $r = t\langle \Gamma,\blue{u_1},\ldots, \blue{u_j},A_{j+1},\Lambda,A_{j+2},\ldots, A_n,\Gamma' \rangle$}\}\\
		&  r	[\Gamma,\Delta_1,\ldots, \Delta_j,\blue{u_{j+1}},\Lambda,A_{j+2}, \ldots, A_n,\Gamma']
	\end{align*}
	Then, using the previous lemma, 
	\begin{align*}
		&r	[\Gamma,\Delta_1,\ldots, \Delta_j,\blue{u_{j+1}},\Lambda,A_{j+2}, \ldots, A_n,\Gamma']\\
		&\qquad\qquad\qquad\qquad\qquad\qquad\langle \Gamma, \Delta_1,\ldots, \Delta_j, \Delta_{j+1}, \Lambda,u_{j+2},\ldots, u_n,\Gamma' \rangle\\
		& = r\langle \Gamma, \Delta_1,\ldots, \Delta_j, A_{j+1},\Lambda,u_{j+2},\ldots, u_n,\Gamma' \rangle\\
		&	\qquad\qquad\qquad\qquad\qquad\qquad		
		[\Gamma, \Delta_1,\ldots, \Delta_j, \blue{u_{j+1}},\Lambda, \Delta_{j+2},\ldots, \Delta_n,\Gamma']
	\end{align*}
	Also consider the following chain:
	\begin{align*}
		&r\langle \Gamma, \Delta_1,\ldots, \Delta_j, A_{j+1},\Lambda,u_{j+2},\ldots, u_n,\Gamma' \rangle\\
		& = \{\text{Substituting the value of $r$}\}\\
		& t\langle \Gamma,\blue{u_1},\ldots, \blue{u_j},A_{j+1},\Lambda,A_{j+2},\ldots, A_n,\Gamma' \rangle\\
		& 	\qquad\qquad\qquad\qquad\qquad\qquad\langle \Gamma, \Delta_1,\ldots, \Delta_j, A_{j+1},\Lambda,u_{j+2},\ldots, u_n,\Gamma' \rangle\\
		& = \{\text{Using IH}\}\\
		&  t\langle \Gamma,A_1,\ldots, A_j,A_{j+1},\Lambda, u_{j+2},\ldots, u_n,\Gamma' \rangle\\
		&\qquad\qquad\qquad\qquad\qquad\qquad \langle \Gamma, \blue{u_1},\ldots, \blue{u_j}, A_{j+1},\Lambda, \Delta_{j+2},\ldots, \Delta_n, \Gamma' \rangle \\
		& = \{\text{Setting $s =t\langle \Gamma,A_1,\ldots, A_j,A_{j+1},\Lambda, u_{j+2},\ldots, u_n,\Gamma' \rangle $}\}\\
		&  s\langle \Gamma, \blue{u_1},\ldots, \blue{u_j}, A_{j+1},\Lambda, \Delta_{j+2},\ldots, \Delta_n, \Gamma' \rangle 
	\end{align*}
	Then, 
	\begin{align*}
		&r\langle \Gamma, \Delta_1,\ldots, \Delta_j, A_{j+1},\Lambda,u_{j+2},\ldots, u_n,\Gamma' \rangle\\
		&	\qquad\qquad\qquad\qquad\qquad\qquad		
		[\Gamma, \Delta_1,\ldots, \Delta_j, \blue{u_{j+1}}, \Delta_{j+2},\ldots, \Delta_n,\Gamma']\\
		& = \{\text{From the previous chain of equalities}\}\\
		& s\langle \Gamma, \blue{u_1},\ldots, \blue{u_j}, A_{j+1},\Lambda, \Delta_{j+2},\ldots, \Delta_n, \Gamma' \rangle\\ 
		&	\qquad\qquad\qquad\qquad\qquad\qquad		
		[\Gamma, \Delta_1,\ldots, \Delta_j, \blue{u_{j+1}}, \Delta_{j+2},\ldots, \Delta_n,\Gamma']\\
		& = \{\text{By def}\}\\ 
		&  s\langle \Gamma,\blue{u_1},\ldots, \blue{u_{j+1}},\Lambda,\Delta_{j+2},\ldots, \Delta_n,\Gamma' \rangle
	\end{align*}
	
	Putting everything together:
	\begin{align*}
		&	t\langle \Gamma,\blue{u_1},\ldots \blue{u_j},\blue{u_{j+1}},\Lambda,A_{j+2},\ldots, A_n,\Gamma' \rangle\\
		&	\qquad\qquad\qquad\qquad\qquad\qquad	\langle \Gamma,\Delta_1,\ldots, \Delta_{j+1},\Lambda, u_{j+2},\ldots, u_n,\Gamma' \rangle\\
		& = \{\text{Using the first equality}\}\\
		& 	r	[\Gamma,\Delta_1,\ldots, \Delta_j,\blue{u_{j+1}},\Lambda,A_{j+2}, \ldots, A_n,\Gamma']\\
		&	\qquad\qquad\qquad\qquad\qquad\qquad	\langle \Gamma,\Delta_1,\ldots, \Delta_{j+1},\Lambda, u_{j+2},\ldots, u_n,\Gamma' \rangle\\
		& = \{\text{Using previous lemma}\}\\
		&  r\langle \Gamma, \Delta_1,\ldots, \Delta_j, A_{j+1},\Lambda,u_{j+2},\ldots, u_n,\Gamma' \rangle\\
		&	\qquad\qquad\qquad\qquad\qquad\qquad		
		[\Gamma, \Delta_1,\ldots, \Delta_j, \blue{u_{j+1}},\Lambda, \Delta_{j+2},\ldots, \Delta_n,\Gamma']\\
		& = \{\text{Using the chain of equalities above}\}\\
		&  s\langle \Gamma,\blue{u_1},\ldots, \blue{u_{j+1}},\Lambda,\Delta_{j+2},\ldots, \Delta_n,\Gamma' \rangle\\
		& =\{\text{Using the value of $s$}\}\\
		&  t\langle \Gamma,A_1,\ldots, A_j,A_{j+1},\Lambda, u_{j+2},\ldots, u_n,\Gamma' \rangle\\
		& \qquad\qquad\qquad\qquad\qquad\qquad\langle \Gamma,\blue{u_1},\ldots, \blue{u_{j+1}},\Lambda,\Delta_{j+2},\ldots, \Delta_n,\Gamma' \rangle
	\end{align*}
	as required, and this completes the induction.	
\end{proof}
Basically the same argument yields a similar lemma, where the last $j-n$ terms are central.
\begin{lemma}\label{cent-right}
	Let $t: \Gamma,A_1,\ldots, A_j,\Lambda, A_{j+1},\ldots, A_n,\Gamma'$, and $u_i: \Delta_i\to A_i$  be multimaps. Assume that $\blue{u_j},\ldots, \blue{u_n}$ are all central. Then, 
	\begin{align*}
		&t\langle \Gamma, {u_1},\ldots, {u_j},\Lambda,A_{j+1},\ldots, A_n,\Gamma' \rangle\langle \Gamma, \Delta_1,\ldots, \Delta_j, \Lambda, \blue{u_{j+1}},\ldots, \blue{u_n},\Gamma'  \rangle 
		\\& = t\langle \Gamma, A_1,\ldots,A_j,\Lambda, \blue{u_{j+1}},\ldots, \blue{u_n},\Gamma' \rangle\langle \Gamma, {u_1},\ldots, {u_j},\Lambda, \Delta_{j+1},\ldots, \Delta_n,\Gamma' \rangle
	\end{align*}
\end{lemma}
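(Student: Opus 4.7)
The plan is to prove Lemma \ref{cent-right} by a symmetric induction to that of Lemma \ref{cent-left}, inducting on $n - j$, the number of central terms on the right, with a base case treating the scenario where only $\blue{u_n}$ is central.

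For the base case (one central term on the far right), I would induct on the number of non-central substitutions occurring on the left. When there is only one such substitution on the left, both iterated substitutions $\langle - \rangle$ collapse to ordinary unary substitutions $[-]$, so the equality follows from a single application of the centrality of $\blue{u_n}$. For the step, I would unwrap the outermost recursion in $t\langle \Gamma, u_1, \ldots, u_{j+1}, \Lambda, A_n, \Gamma' \rangle$ using its defining equation, isolating the unary substitution that installs $u_{j+1}$. Then Lemma \ref{isub-assoc} together with a single invocation of the centrality of $\blue{u_n}$ would commute that unary substitution past the one installing $\blue{u_n}$, after which the inductive hypothesis takes care of the remaining composite involving only $u_1, \ldots, u_j$. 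This mirrors, position-by-position, the inductive argument used in Lemma \ref{un-cent}.

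With that base case established, I would run the outer induction on $n - j$. Assume the result for $n - j$ central terms $\blue{u_{j+1}}, \ldots, \blue{u_n}$ and add a further central term $\blue{u_j}$. Unwinding the outermost application of $\langle - \rangle$ on each side rewrites each iterated substitution as an $(n - j)$-fold iterated substitution followed by a single bracket substitution. Applying the inductive hypothesis to the $n - j$ central terms already covered, and then applying the base case (now with $\blue{u_j}$ playing the role of a single central term being swapped past the non-central substitutions on the left), would then reassemble into the desired equality after refolding the definition. The pattern of manipulations is exactly the bookkeeping executed in the proof of Lemma \ref{cent-left}, just carried out on the opposite end of the substitution list.

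The main obstacle is purely notational: the iterated substitution $\langle - \rangle$ is defined by a recursion that processes the rightmost argument last, so unwinding from the right requires careful tracking of which positions still carry the original objects $A_i$ and which have already been replaced by contexts $\Delta_i$ at each intermediate stage. No new mathematical idea beyond those already deployed for Lemma \ref{cent-left} is needed — as the text itself indicates, the same argument applies — but the symmetric rewriting has to be performed without confusing the order of evaluation that is hard-coded into the definition of $\langle - \rangle$.
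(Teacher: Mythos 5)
Your proposal is correct and matches the paper's intent: the paper gives no written proof of this lemma, remarking only that "basically the same argument" as for Lemmas \ref{un-cent} and \ref{cent-left} applies, and your mirrored two-level induction (a base case with a single central term on the right, proved by induction on the number of non-central substitutions and one application of the second centrality equation, followed by an outer induction growing the central block) is exactly that argument carried out on the other end. Your closing caveat about tracking the right-to-left recursion built into $\langle - \rangle$ is the only real point of care, and you have identified it correctly.
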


With these facts, we can prove the following well-known fact that a multicategory defined with the circle-$i$ method gives rise to a multicategory defined with a simultaneous substitution operation.

\begin{lemma}\label{multicat-assoc}
	Let $t: \Gamma,A_1,\ldots, A_n,\Gamma' \to B$, $\blue{u_i}: A^1_i,\ldots, A_i^{k_i} \to A_i$ and $\blue{v^j_i}: \Delta^j_i \to A^j_i$ be multimaps. Assume that all the $\blue{u_i}$ and all the $\blue{v^j_i}$ are central. Then, we have an equality: 
	\begin{align*}
		&	t\langle \Gamma,\blue{u_1}\langle \blue{v^1_1},\ldots, \blue{v^{k_1}_1} \rangle,\ldots, \blue{u_n}\langle \blue{v^1_n},\ldots, \blue{v^{k_n}_n} \rangle,\Gamma' \rangle\\
		& = t\langle \Gamma,\blue{u_1},\ldots, \blue{u_n},\Gamma' \rangle\langle \Gamma,\blue{v^1_1},\ldots, \blue{v^{k_1}_1},\ldots, \blue{v^1_n},\ldots, \blue{v^{k_n}_n},\Gamma'  \rangle
	\end{align*}
\end{lemma}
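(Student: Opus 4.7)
The plan is to induct on $n$. The base case $n=1$ is immediate: by the definition of $\langle\cdot\rangle$ for a singleton substitution, both sides collapse to the identity
\[
t[\Gamma, \blue{u_1}\langle \blue{v^1_1},\ldots, \blue{v^{k_1}_1}\rangle, \Gamma'] = t[\Gamma, \blue{u_1}, \Gamma']\langle \Gamma, \blue{v^1_1},\ldots, \blue{v^{k_1}_1}, \Gamma'\rangle,
\]
which is exactly Lemma \ref{isub-assoc} (with empty $\Lambda$ and $\Lambda'$).

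For the inductive step, write $\vec{v_i}$, $\vec{\Delta_i}$, $\vec{A_i}$ for the sequences $\blue{v^1_i},\ldots, \blue{v^{k_i}_i}$, $\Delta^1_i,\ldots, \Delta^{k_i}_i$, $A^1_i,\ldots, A^{k_i}_i$. The first step is to use the recursive definition of $\langle\cdot\rangle$ to peel off the last substituted position on the LHS, then apply the inductive hypothesis to the remaining $n$-ary factor (treating $A_{n+1}, \Gamma'$ as the right-context). The LHS becomes
\[
s_1\langle \Gamma, \vec{v_1},\ldots, \vec{v_n}, A_{n+1}, \Gamma'\rangle[\Gamma, \vec{\Delta_1},\ldots, \vec{\Delta_n}, \blue{u_{n+1}}\langle \vec{v_{n+1}}\rangle, \Gamma'],
\]
where $s_1 := t\langle \Gamma, \blue{u_1},\ldots, \blue{u_n}, A_{n+1}, \Gamma'\rangle$. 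A second appeal to Lemma \ref{isub-assoc} splits the final bracket as $[\Gamma, \vec{\Delta_1},\ldots, \vec{\Delta_n}, \blue{u_{n+1}}, \Gamma']\langle \Gamma, \vec{\Delta_1},\ldots, \vec{\Delta_n}, \vec{v_{n+1}}, \Gamma'\rangle$.

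What remains is a purely combinatorial rearrangement. Since $\blue{u_{n+1}}$ is central, Lemma \ref{cent-right} lets us commute its substitution past the earlier $\vec{v_1},\ldots,\vec{v_n}$, producing
\[
s_1[\Gamma, \vec{A_1},\ldots, \vec{A_n}, \blue{u_{n+1}}, \Gamma']\langle \Gamma, \vec{v_1},\ldots, \vec{v_n}, \vec{A_{n+1}}, \Gamma'\rangle\langle \Gamma, \vec{\Delta_1},\ldots, \vec{\Delta_n}, \vec{v_{n+1}}, \Gamma'\rangle.
\]
The leading factor $s_1[\Gamma, \vec{A_1},\ldots, \vec{A_n}, \blue{u_{n+1}}, \Gamma']$ is exactly $t\langle \Gamma, \blue{u_1},\ldots, \blue{u_{n+1}}, \Gamma'\rangle$ by the Splitting Lemma, and the two trailing $\langle\cdot\rangle$'s recombine by the same lemma into $\langle \Gamma, \vec{v_1},\ldots, \vec{v_{n+1}}, \Gamma'\rangle$, yielding the RHS.

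The main obstacle is bookkeeping rather than conceptual: every step is an invocation of Lemma \ref{isub-assoc}, the Splitting Lemma, or Lemma \ref{cent-right}, but each step modifies the ambient context and requires careful tracking of which slots in the context correspond to which $A^j_i$ or $\Delta^j_i$. The only place the centrality hypothesis on the $\blue{u_i}, \blue{v^j_i}$ is used is through Lemma \ref{cent-right}; Lemma \ref{cent-closed} ensures that the iterated composites appearing in intermediate expressions remain central, so nothing obstructs the successive rearrangements.
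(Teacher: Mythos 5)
Your proof is correct and follows essentially the same route as the paper's: induction on $n$, peeling off the last substituted position, applying the inductive hypothesis to the $n$-ary prefix, splitting the final bracket with Lemma \ref{isub-assoc}, performing a single centrality swap, and recombining with the Splitting Lemma. The one (immaterial for this statement) difference is that you justify the swap by the centrality of $\blue{u_{n+1}}$ via Lemma \ref{cent-right}, whereas the paper uses the centrality of the block $\blue{v^1_1},\ldots,\blue{v^{k_n}_n}$ via Lemma \ref{cent-left}; both hypotheses hold here, but the paper's choice is what makes its subsequent observation --- that centrality of $\blue{u_{n+1}}$ and the $\blue{v^j_{n+1}}$ is not actually needed in the inductive step, which feeds into Lemma \ref{freyd-assoc} --- go through.
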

\begin{proof}
	We induct on $n$. The case for $n=1$ is easily deduced from Lemma \ref{isub-assoc}. Assume this is true for some $n$, so in particular, 
	\begin{align*}
		&	t\langle \Gamma,\blue{u_1}\langle \blue{v^1_1},\ldots, \blue{v^{k_1}_1} \rangle,\ldots, \blue{u_n}\langle \blue{v^1_n},\ldots, \blue{v^{k_n}_n} \rangle,A_{n+1},\Gamma'  \rangle \\
		& = t\langle \Gamma, \blue{u_1},\ldots, \blue{u_n},A_{n+1},\Gamma' \rangle\langle \Gamma, \blue{v^1_1},\ldots, \blue{v^{k_1}_1},\ldots, \blue{v^1_n},\ldots, \blue{v^{k_n}_n},A_{n+1},\Gamma' \rangle
	\end{align*}
	Set ${\theta_i}: = \blue{u_i}\langle \blue{v^1_i},\ldots, \blue{v_i^{k_i}} \rangle$. Then, 
	\begin{align*}
		&t\langle \Gamma, \theta_1,\ldots, \theta_n, \blue{u_{n+1}}\langle \blue{v^1_{n+1}},\ldots, \blue{v^{k_{n+1}}_{n+1}} \rangle,\Gamma' \rangle\\
		& = \{\text{Using Splitting}\}\\
		&t\langle \Gamma, \theta_1,\ldots, \theta_n, A_{n+1},\Gamma' \rangle[\Gamma, \Delta^1_1,\ldots, \Delta^{k_n}_n, \blue{u_{n+1}}\langle \blue{v^1_{n+1}},\ldots, \blue{v^{k_{n+1}}_{n+1}} \rangle,\Gamma']\\
		& = \{\text{Setting $r = t\langle \Gamma, \theta_1,\ldots, \theta_n, A_{n+1},\Gamma' \rangle$}\}\\
		& r[\Gamma, \Delta^1_1,\ldots, \Delta^{k_n}_n, \blue{u_{n+1}}\langle \blue{v^1_{n+1}},\ldots, \blue{v^{k_{n+1}}_{n+1}} \rangle,\Gamma']\\
		& = \{\text{Using Lemma \ref{isub-assoc}}\}\\
		& r[\Gamma, \Delta^1_1,\ldots, \Delta^{k_n}_n, \blue{u_{n+1}},\Gamma']\langle \Gamma,\Delta^1_1,\ldots, \Delta^{k_n}_n, \blue{v^1_{n+1}},\ldots, \blue{v^{k_{n+1}}_{n+1}},\Gamma' \rangle		
	\end{align*}
	Expanding $r$, we get
	\begin{align*}
		&	r[\Gamma, \Delta^1_1,\ldots, \Delta^{k_n}_n, \blue{u_{n+1}},\Gamma']\\
		& = \{\text{Use the value of $r$}\}\\
		&  t\langle \Gamma, \theta_1,\ldots, \theta_n, A_{n+1},\Gamma' \rangle[\Gamma, \Delta^1_1,\ldots, \Delta^{k_n}_n, \blue{u_{n+1}},\Gamma']\\
		& = \{\text{Using the IH}\}\\
		& t\langle \Gamma, \blue{u_1},\ldots, \blue{u_n},A_{n+1},\Gamma' \rangle\langle \Gamma, \blue{v^1_1},\ldots, \blue{v^{k_n}_n},A_{n+1},\Gamma' \rangle[\Gamma, \Delta^1_1,\ldots, \Delta^{k_n}_n, \blue{u_{n+1}},\Gamma']\\
		& =\{\text{Using Lemma \ref{cent-left}}\}\\
		& t\langle \Gamma, \blue{u_1},\ldots, \blue{u_n},A_{n+1},\Gamma' \rangle[\Gamma, A^1_1,\ldots, A^{k_n}_n, \blue{u_{n+1}},\Gamma']\langle \Gamma, \blue{v^1_1},\ldots, \blue{v^{k_n}_n}, A^1_{n+1},\ldots, A^{k_{n+1}}_{n+1},\Gamma' \rangle\\
		& =\{\text{By def}\}\\
		& t\langle \Gamma,\blue{u_1},\ldots,\blue{u_{n+1}},\Gamma' \rangle\langle \Gamma, \blue{v^1_1},\ldots, \blue{v^{k_n}_n}, A^1_{n+1},\ldots, A^{k_{n+1}}_{n+1},\Gamma' \rangle\\
		& = \{\text{Setting $s = t\langle \Gamma,\blue{u_1},\ldots,\blue{u_{n+1}},\Gamma' \rangle$}\}\\
		& s\langle \Gamma, \blue{v^1_1},\ldots, \blue{v^{k_n}_n}, A^1_{n+1},\ldots, A^{k_{n+1}}_{n+1},\Gamma' \rangle
	\end{align*}
	
	Then, 
	\begin{align*}
		& t\langle \Gamma, \theta_1,\ldots, \theta_{n+1},\Gamma' \rangle\\
		& =\{\text{Using the previous two chains of equalities}\}\\
		&  s\langle \Gamma, \blue{v^1_1},\ldots, \blue{v^{k_n}_n}, A^1_{n+1},\ldots, A^{k_{n+1}}_{n+1},\Gamma' \rangle\langle \Gamma,\Delta^1_1,\ldots, \Delta^{k_n}_n, \blue{v^1_{n+1}},\ldots, \blue{v^{k_{n+1}}_{n+1}},\Gamma' \rangle	\\
		& = \{\text{Using Splitting}\}\\
		& s\langle \Gamma, \blue{v^1_1},\ldots, \blue{v^{k_n}_n}, \blue{v^1_{n+1}},\ldots, \blue{v^{k_{n+1}}_{n+1}},\Gamma' \rangle\\
		& =\{\text{Resubstitute the value of $s$}\}\\
		& t\langle \Gamma,\blue{u_1},\ldots,\blue{u_{n+1}},\Gamma' \rangle\langle \Gamma, \blue{v^1_1},\ldots, \blue{v^{k_n}_n}, \blue{v^1_{n+1}},\ldots, \blue{v^{k_{n+1}}_{n+1}},\Gamma' \rangle
	\end{align*}
	which completes the proof for $n+1$.
\end{proof}

Note that the fact that $u_{n+1}$ and $\{v^j_{n+1}\}$ were central did not figure in the proof of the inductive step, and this assumption was only required to make the induction work. So, a stronger version of  the Lemma holds with the stronger assumption that only  $u_1,\ldots, u_{n-1}$ and $v^1_1,\ldots, v^{k_{n-1}}_{n-1}$ are central. We generalize this further to include cases where $u_p$ and $v^1_p,\ldots, v^{k_p}_p$ are not necessarily central, and this will give us the operations $esub_j$. 

\begin{lemma}\label{freyd-assoc}
	Let $t: \Gamma,A_1,\ldots, A_n,\Gamma' \to B$, $u_i: A^1_i,\ldots, A_i^{k_i} \to A_i$ and $v^j_i: \Delta^j_i \to A^j_i$ be multimaps. Choose some $ p \in \{1,\ldots, n\}$ and  and assume that $u_i$ is central for all $i\neq p$ and $v^j_i$ is central except when $i = p$. Then, the following equality still holds:
	\begin{align*}
		&	t\langle \Gamma,\blue{u_1}\langle \blue{v^1_1},\ldots, \blue{v^{k_1}_1} \rangle,\ldots, u_p\langle v^1_p,\ldots, v^{k_p}_p \rangle,\ldots, \blue{u_n}\langle \blue{v^1_n},\ldots, \blue{v^{k_n}_n} \rangle,\Gamma' \rangle\\
		& = t\langle \Gamma,\blue{u_1},\ldots,u_p,\ldots \blue{u_n},\Gamma' \rangle\langle \Gamma',\blue{v^1_1},\ldots, \blue{v^{k_1}_1},\ldots, v^1_p,\ldots, v^{k_p}_p,\ldots, \blue{v^1_n}\ldots, \blue{v^{k_n}_n},\Gamma'  \rangle
	\end{align*}
\end{lemma}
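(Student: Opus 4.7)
The plan is to induct on $n-p$ (the number of central positions to the right of the non-central position $p$), reducing to Lemma \ref{multicat-assoc} and its immediate strengthening noted in the paragraph following its proof.

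For the base case $p=n$, the statement follows directly from the remark immediately after the proof of Lemma \ref{multicat-assoc}: the inductive step there never invoked centrality of $u_{n+1}$ or of the $v_{n+1}^{j}$, so exactly the same argument establishes the associativity law in the stronger form in which only positions $1,\ldots,n-1$ are central. That is precisely the $p=n$ case of the present lemma.

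For the inductive step, suppose $p < n$. I would apply the Splitting Lemma to peel off position $n$:
\begin{align*}
&t\langle \Gamma, \theta_1, \ldots, \theta_n, \Gamma' \rangle \\
&\qquad = t\langle \Gamma, \theta_1, \ldots, \theta_{n-1}, A_n, \Gamma' \rangle \bigl[\Gamma, \Delta_1^\bullet, \ldots, \Delta_{n-1}^\bullet, \theta_n, \Gamma'\bigr],
\end{align*}
where $\theta_i := u_i\langle v_i^1, \ldots, v_i^{k_i}\rangle$ and $\theta_n$ is central by Lemma \ref{cent-closed} applied iteratively. The inductive hypothesis then applies to the first factor, which still has $p$ as its unique non-central position but now among $n-1$ positions, giving $t\langle \Gamma, u_1, \ldots, u_{n-1}, A_n, \Gamma' \rangle \langle \Gamma, v_1^\bullet, \ldots, v_{n-1}^\bullet, A_n, \Gamma' \rangle$. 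Lemma \ref{isub-assoc} splits the trailing $\theta_n = u_n\langle v_n^\bullet \rangle$ into $[\ldots, u_n, \ldots]\langle \ldots, v_n^\bullet, \ldots\rangle$.

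This leaves a chain of four substitutions to rearrange into the right-hand side. The central substitution $[\ldots, u_n, \ldots]$ can be commuted to the left past the preceding $v$-substitution via Lemma \ref{cent-right}: the hypothesis is satisfied because only a central term ($u_n$, together with the identities in the other positions) is being moved, so the centrality status of the $v_p^\bullet$ block is irrelevant. The resulting adjacent $u$-substitutions then collapse into $t\langle \Gamma, u_1, \ldots, u_n, \Gamma'\rangle$ by the recursive definition of the iterated substitution operation, and the adjacent $v$-substitutions merge via Splitting into $\langle \Gamma, v_1^\bullet, \ldots, v_n^\bullet, \Gamma'\rangle$. The main obstacle will be the bookkeeping in this final rearrangement: one must verify carefully at each commutation that only central terms cross the non-central block at position $p$, so that the centrality hypotheses of Lemmas \ref{cent-right} and \ref{cent-closed} are genuinely met.
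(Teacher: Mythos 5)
Your proposal is correct, but it is organized differently from the paper's proof. The paper argues by one direct computation: it splits the term at position $p$ into three blocks, applies Lemma \ref{multicat-assoc} once to the central block to the right of $p$ and once to the central block to its left, handles position $p$ itself with Lemma \ref{isub-assoc}, and then commutes the resulting central substitutions into place with Lemmas \ref{cent-left} and \ref{cent-right} --- a long chain of equalities threaded through auxiliary abbreviations $r,s,a,b,c$. You instead induct on $n-p$: the base case $p=n$ is exactly the strengthening of Lemma \ref{multicat-assoc} recorded in the remark following its proof, and the inductive step peels the rightmost (central) position off with Splitting, applies the hypothesis to the remaining $n-1$ positions, and moves the single substitution $[\ldots,u_n,\ldots]$ leftward with Lemma \ref{cent-right}. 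Your key observation --- that in every commutation only central terms cross the non-central block at $p$, so the hypotheses of Lemmas \ref{cent-right} and \ref{cent-left} are genuinely satisfied --- is precisely what makes the paper's direct argument work too. Your version is shorter and more modular, essentially replacing the paper's simultaneous three-block bookkeeping by repeated use of the two-block case; its one soft spot is that the base case rests on the informal remark that the proof of Lemma \ref{multicat-assoc} never uses centrality of the last position, which would be cleaner to promote to an explicitly stated lemma before citing it. A minor economy: you do not actually need $\theta_n$ to be central (so the appeal to Lemma \ref{cent-closed} is superfluous); Lemma \ref{cent-right} only needs $u_n$ central, and the final merge of the two $v$-substitutions is pure Splitting, which requires no centrality at all.
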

\begin{proof}
	Set $\theta_i:  = u_i\langle v^1_i,\ldots, v_i^{k_i} \rangle$. Then, we compute a series of equalities: The first splits the following term. 
	\begin{equation}\label{t-val}
		\begin{aligned}
			&t\langle \Gamma,\theta_1,\ldots,\theta_p,\ldots \theta_n,\Gamma' \rangle \\
			& = \{\text{Using Splitting}\}\\
			&  t\langle \Gamma, \theta_1,\ldots, \theta_{p}, A_{p+1},\ldots,A_n,\Gamma' \rangle\langle \Gamma, \Delta^1_1,\ldots, \Delta^{k_p}_p, \theta_{p+1},\ldots, \theta_{n},\Gamma' \rangle\\
			& =\{\text{Setting $r =  t\langle \Gamma, \theta_1,\ldots, \theta_{p}, A_{p+1},\ldots,A_n,\Gamma' \rangle$}\}\\
			& r\langle \Gamma, \Delta^1_1,\ldots, \Delta^{k_p}_p, \blue{u_{p+1}}\langle \blue{v^1_{p+1}},\ldots, \blue{v^{k_{p+1}}_{p+}} \rangle,\ldots, \blue{u_{n}}\langle \blue{v^1_n},\ldots, \blue{v^{k_n}_n} \rangle,\Gamma' \rangle\\
			& =\{\text{Use Lemma \ref{multicat-assoc} since $ \blue{u_{p+1}},\ldots, \blue{u_n}, \blue{v^1_{p+1}},\ldots, \blue{v^{k_n}_n}$ are all central}\}\\
			& r\langle \Gamma,\Delta^1_1,\ldots, \Delta^{k_p}_p, \blue{u_{p+1}},\ldots, \blue{u_n},\Gamma' \rangle\langle \Gamma,\Delta^1_1,\ldots, \Delta^{k_p}_p, \blue{v^1_{p+1}},\ldots, \blue{v^{k_n}_n},\Gamma' \rangle
		\end{aligned}
	\end{equation}
	
	We compute the value of the term above piece by piece. The term $r$ is unwrapped:
	\begin{equation}\label{r-val}
		\begin{aligned}
			& r\\
			& = \{\text{Using the value of $r$}\}\\
			& t \langle \Gamma,\theta_1,\ldots, \theta_p, A_{p+1},\ldots, A_n,\Gamma' \rangle\\
			& = \{\text{By def}\}\\
			& t\langle \Gamma,\theta_1,\ldots, \theta_{p-1},A_p,\ldots, A_n,\Gamma' \rangle[\Gamma,\Delta^1_1,\ldots, \Delta^{k_{p-1}}_{p-1}, \theta_p,A_{p+1},\ldots, A_n,\Gamma']\\
			& =  \{\text{Setting $s = t\langle \Gamma,\theta_1,\ldots, \theta_{p-1},A_p,\ldots, A_n,\Gamma' \rangle$}\}\\
			& s		[\Gamma,\Delta^1_1,\ldots, \Delta^{k_{p-1}}_{p-1}, u_p\langle v^1_p,\ldots, v^{k_p}_{p} \rangle,A_{p+1},\ldots, A_n,\Gamma']\\
			& = \{\text{Using Lemma \ref{isub-assoc}}\}\\
			& s[\Gamma,\Delta^1_1,\ldots, \Delta^{k_{p-1}}_{p-1}, u_p, A_{p+1},\ldots, A_n,\Gamma']\langle \Gamma,\Delta^1_1,\ldots, \Delta^{k_{p-1}}_{p-1}, v^1_p,\ldots, v^{k_p}_p, A_{p+1},\ldots, A_n,\Gamma' \rangle\\
		\end{aligned}
	\end{equation}

	The term $s$ is unwrapped:
	\begin{equation}\label{s-val}
		\begin{aligned}
			& s \\
			& = \{\text{By def}\}\\
			& t\langle \Gamma,\blue{u_1}\langle \blue{v^1_1},\ldots, \blue{v^{k_1}_1} \rangle,\ldots, \blue{u_{p-1}}\langle \blue{v^1_{p-1}},\ldots, \blue{v^{k_{p-1}}_{p-1}} \rangle,A_p,\ldots, A_n,\Gamma' \rangle\\
			& =\{\text{Applying Lemma \ref{multicat-assoc} with the fact that $\blue{u_1},\ldots, \blue{u_p}, \blue{v^1_1},\ldots, \blue{v^{k_{p-1}}_{p-1}}$}\}\\
			& t\langle \Gamma,\blue{u_1},\ldots, \blue{u_{p-1}},A_p,\ldots, A_n,\Gamma' \rangle\langle \Gamma, \blue{v^1_1},\ldots, \blue{v^{k_{p-1}}_{p-1}}, A_p,\ldots, A_n,\Gamma' \rangle\\
			& = \{\text{Setting $a  =t\langle \Gamma,\blue{u_1},\ldots, \blue{u_{p-1}},A_p,\ldots, A_n,\Gamma' \rangle$}\}\\
			& a\langle \Gamma, \blue{v^1_1},\ldots, \blue{v^{k_{p-1}}_{p-1}}, A_p,\ldots, A_n,\Gamma' \rangle
		\end{aligned}
	\end{equation}
	
	The first half of $r$ is computed; 
	\begin{equation}
		\begin{aligned}
			&s[\Gamma,\Delta^1_1,\ldots, \Delta^{k_{p-1}}_{p-1}, u_p, A_{p+1},\ldots, A_n,\Gamma']\\
			& = \{\text{Using \ref{s-val}}\}\\
			& a\langle \Gamma, \blue{v^1_1},\ldots, \blue{v^{k_{p-1}}_{p-1}}, A_p,\ldots, A_n,\Gamma' \rangle[\Gamma,\Delta^1_1,\ldots, \Delta^{k_{p-1}}_{p-1}, u_p, A_{p+1},\ldots, A_n,\Gamma']\\
			& = \{\text{Using \ref{cent-left} and the fact that $\blue{v^1_1},\ldots, \blue{v^{k_{p-1}}_{p-1}}$ are central}\}\\
			&  a[\Gamma, A^1_1,\ldots, A^{k_{p-1}}_{p-1}, u_p,A_{p+1},\ldots, A_n,\Gamma']\langle \Gamma, \blue{v^1_1},\ldots, \blue{v^{k_{p-1}}_{p-1}}, A^1_p,\ldots, A^{k_p}_p, A_{p+1},\ldots, A_n,\Gamma' \rangle\\
		\end{aligned}
	\end{equation}
	
	The first half of the previous result is computed:	
	\begin{equation}\label{a-val}
		\begin{aligned}
			& a[\Gamma, A^1_1,\ldots, A^{k_{p-1}}_{p-1}, u_p,A_{p+1},\ldots, A_n,\Gamma']\\
			& = \{\text{Substituting the value of $a$}\}\\
			&t\langle \Gamma,\blue{u_1},\ldots, \blue{u_{p-1}},A_p,\ldots, A_n,\Gamma' \rangle[\Gamma, A^1_1,\ldots, A^{k_{p-1}}_{p-1}, u_p,A_{p+1},\ldots, A_n,\Gamma']\\
			& = \{\text{By def}\}\\
			& t\langle \Gamma,\blue{u_1},\ldots,\blue{u_{p-1}}, u_p,A_{p+1},\ldots, A_n,\Gamma' \rangle
		\end{aligned}
	\end{equation}
	
	We go back to computing the first half of $r$:
	\begin{equation}\label{r-val2}\\
		\begin{aligned}
			&s[\Gamma,\Delta^1_1,\ldots, \Delta^{k_{p-1}}_{p-1}, u_p, A_{p+1},\ldots, A_n,\Gamma']\\
			& = \{\text{Using \ref{s-val}}\}\\
			&  a[\Gamma, A^1_1,\ldots, A^{k_{p-1}}_{p-1}, u_p,A_{p+1},\ldots, A_n,\Gamma']\langle \Gamma, \blue{v^1_1},\ldots, \blue{v^{k_{p-1}}_{p-1}}, A^1_p,\ldots, A^{k_p}_p, A_{p+1},\ldots, A_n,\Gamma' \rangle\\
			& = \{\text{Using \ref{a-val}}\}\\
			& t\langle \Gamma,\blue{u_1},\ldots,\blue{u_{p-1}}, u_p,A_{p+1},\ldots, A_n,\Gamma' \rangle\langle \Gamma, \blue{v^1_1},\ldots, \blue{v^{k_{p-1}}_{p-1}}, A^1_p,\ldots, A^{k_p}_p, A_{p+1},\ldots, A_n,\Gamma' \rangle\\
			& = \{\text{Setting $b = t\langle \Gamma,\blue{u_1},\ldots, \blue{u_{p-1}},u_p,A_{p+1},\ldots, A_n,\Gamma' \rangle$}\}\\
			& b\langle \Gamma, \blue{v^1_1},\ldots, \blue{v^{k_{p-1}}_{p-1}}, A^1_p,\ldots, A^{k_p}_p, A_{p+1},\ldots, A_n,\Gamma' \rangle
		\end{aligned}
	\end{equation}
	
	Using all the information above, we arrive at an expression for $r$:
	\begin{equation}\label{r-val3}
		\begin{aligned}
			& r \\
			& = \{\text{Using \ref{r-val}}\}\\
			&   s[\Gamma,\Delta^1_1,\ldots, \Delta^{k_{p-1}}_{p-1}, u_p, A_{p+1},\ldots, A_n,\Gamma']\langle \Gamma,\Delta^1_1,\ldots, \Delta^{k_{p-1}}_{p-1}, v^1_p,\ldots, v^{k_p}_p, A_{p+1},\ldots, A_n,\Gamma' \rangle\\
			& = \{\text{Using \ref{r-val2}}\}\\
			& b\langle \Gamma, \blue{v^1_1},\ldots, \blue{v^{k_{p-1}}_{p-1}}, A^1_p,\ldots, A^{k_p}_p, A_{p+1},\ldots, A_n,\Gamma' \rangle\langle \Gamma,\Delta^1_1,\ldots, \Delta^{k_{p-1}}_{p-1}, v^1_p,\ldots, v^{k_p}_p, A_{p+1},\ldots, A_n,\Gamma' \rangle\\
			& = \{\text{Using Splitting}\}\\
			& b\langle \Gamma, \blue{v^1_1},\ldots, \blue{v^{k_{p-1}}_{p-1}}, v^1_p,\ldots, v^{k_p}_p, A_{p+1},\ldots, A_n,\Gamma' \rangle
		\end{aligned}
	\end{equation}

	The first half of the main term is computed
	\begin{equation}\label{r-val4}
		\begin{aligned}
			& r \langle \Gamma,\Delta^1_1,\ldots, \Delta^{k_p}_p, \blue{u_{p+1}},\ldots, \blue{u_n},\Gamma' \rangle\\
			& = \{\text{Using \ref{r-val3}}\}\\
			&  b\langle \Gamma, \blue{v^1_1},\ldots, \blue{v^{k_{p-1}}_{p-1}}, v^1_p,\ldots, v^{k_p}_p, A_{p+1},\ldots, A_n,\Gamma' \rangle\langle \Gamma,\Delta^1_1,\ldots, \Delta^{k_p}_p, \blue{u_{p+1}},\ldots, \blue{u_n},\Gamma' \rangle\\
			& = \{\text{Using Lemma \ref{cent-right} and the fact that $\blue{u_{p+1}},\ldots, \blue{u_n}$ are central} \}\\
			& b\langle \Gamma, A^1_1,\ldots, A^{k_p}_p, \blue{u_{p+1}},\ldots, \blue{u_n},\Gamma' \rangle\\
			&\qquad\qquad\qquad\langle \Gamma, \blue{v^1_1},\ldots, \blue{v^{k_{p-1}}_{p-1}}, v^1_p,\ldots, v^{k_p}_p, A^1_{p+1},\ldots, A^{k_{p+1}}_{p+1},\ldots, A^1_n,\ldots, A^{k_n}_n,\Gamma' \rangle
		\end{aligned}
	\end{equation}

	\begin{equation}\label{b-val}
		\begin{aligned}
			&b\langle \Gamma, A^1_1,\ldots, A^{k_p}_p, \blue{u_{p+1}},\ldots, \blue{u_n},\Gamma' \rangle\\
			& = \{\text{Substituting the value of $b$}\}\\
			& t\langle \Gamma,\blue{u_1},\ldots, \blue{u_{p-1}},u_p,A_{p+1},\ldots, A_n,\Gamma' \rangle\langle \Gamma, A^1_1,\ldots, A^{k_p}_p, \blue{u_{p+1}},\ldots, \blue{u_n},\Gamma' \rangle\\
			& = \{\text{Using Splitting}\}\\
			& t\langle \Gamma,\blue{u_1},\ldots, u_p,\ldots, \blue{u_n},\Gamma' \rangle\\
			& = \{\text{Set $c = t\langle \Gamma,\blue{u_1},\ldots, u_p,\ldots, \blue{u_n},\Gamma' \rangle$}\}\\
			& c
		\end{aligned}
	\end{equation}
	
	\begin{equation}\label{r-subval}
		\begin{aligned}
			&r \langle \Gamma,\Delta^1_1,\ldots, \Delta^{k_p}_p, \blue{u_{p+1}},\ldots, \blue{u_n},\Gamma' \rangle\langle \Gamma,\Delta^1_1,\ldots, \Delta^{k_p}_p, \blue{v^1_{p+1}},\ldots, \blue{v^{k_n}_n},\Gamma' \rangle \\
			&=\{\text{Using \ref{b-val} and \ref{r-val4}}\}\\
			&  c\langle \Gamma, \blue{v^1_1},\ldots, v^{k_p}_p, A^1_{p+1},\ldots, A^{k_{p+1}}_{p+1},\ldots, A^1_n,\ldots, A^{k_n}_n,\Gamma' \rangle\langle \Gamma,\Delta^1_1,\ldots, \Delta^{k_p}_p, \blue{v^1_{p+1}},\ldots, \blue{v^{k_n}_n},\Gamma' \rangle\\
			& = \{\text{Using Splitting}\}\\
			&  c\langle \Gamma,\blue{v^1_1},\ldots, v^{k_p}_p,\blue{v^1_{p+1}},\ldots, \blue{v^{k_n}_n},\Gamma' \rangle
		\end{aligned}
	\end{equation}
	In summary, putting everything together, we obtain:
	\begin{align*}
		&t\langle \Gamma,\theta_1,\ldots, \theta_n,\Gamma' \rangle\\
		& = \{\text{Using }\ref{t-val}\}\\
		&  r \langle \Gamma,\Delta^1_1,\ldots, \Delta^{k_p}_p, \blue{u_{p+1}},\ldots, \blue{u_n},\Gamma' \rangle\langle \Gamma,\Delta^1_1,\ldots, \Delta^{k_p}_p, \blue{v^1_{p+1}},\ldots, v^{k_n}_n,\Gamma' \rangle\\
		& = \{\text{Using \ref{r-subval}}\}\\
		& c\langle \Gamma,\blue{v^1_1},\ldots, v^{k_p}_p,\blue{v^1_{p+1}},\ldots, \blue{v^{k_n}_n},\Gamma' \rangle\\
		& = \{\text{Substituting the value of $c$}\}\\
		& =t\langle \Gamma,\blue{u_1},\ldots, u_p,\ldots, \blue{u_n},\Gamma' \rangle\langle \Gamma,\blue{v^1_1},\ldots,\blue{v^{k_1}_1},\ldots, v^1_{p},\ldots v^{k_p}_p,\blue{v^1_{p+1}},\ldots, \blue{v^{k_n}_n},\Gamma' \rangle
	\end{align*}
	as required.
\end{proof}

The proof above seems rather dense. The idea is this: break the initial term into 3 parts. The first part contains the first $p-1$ terms, the middle part consists of the $p$th term and the third part contains the last $n-p$ terms. Lemma \ref{multicat-assoc} can be applied on the first $p-1$ terms and the last $n-p$ terms. The middle term is also split using Lemma \ref{isub-assoc}. By centrality, everything is then shifted around to obtain the required expression.

Now, we are ready to define the the operations $psub$ and $esub_j$ in an effectful multicategory. 
\begin{definition}\label{def}
	Let $J: \bb{C}_0\to \bb{C}_1$ be an effectful category. Keeping with the convention, we denote terms in $\bb{C}_0$ in blue, and terms in $\bb{C}_1$ in red. 
	Then, define the operations
	\begin{align*}
		psub: \Mhom{\bb{C}_0}{A_1,\ldots, A_n}{B}\times \prod_{i=1}^{n} \Mhom{\bb{C}_0}{\Delta_i}{A_i}&\to \Mhom{\bb{C}_0}{\Delta_1,\ldots, \Delta_n}{B}\\
		\blue{t}, \blue{u_1},\ldots, \blue{u_n} &\mapsto  \blue{t}\langle \blue{u_1},\ldots, \blue{u_n}\rangle\\
	\end{align*}
	\begin{align*}
		esub_j: \Mhom{\bb{C}_1}{A_1,\ldots, A_n}{B}\times \prod_{i=1}^{n} \Mhom{\bb{C}_{\delta_{i,j}}}{\Delta_i}{A_i}&\to \Mhom{\bb{C}_0}{\Delta_1,\ldots, \Delta_n}{B}\\
		\red{t}, \blue{u_1},\ldots, \red{u_j},\ldots, \blue{u_n} & \mapsto \red{t}\{ \blue{u_1},\ldots, \red{u_j},\ldots, \blue{u_n}\} \\&:= \red{t}\langle J(\blue{u_1}),\ldots, \red{u_j},\ldots, J(\blue{u_n})\rangle  
	\end{align*}
\end{definition}

\begin{lemma}
	There is a functor
	\[ 
	G: \category{EffMultiCat} \to ([\to,\category{Set},\times, \fun])-\category{MultiCat}
	\]
	which is an inverse to the functor $F$ from Lemma \ref{functor-f}.
\end{lemma}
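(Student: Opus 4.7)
The plan is to invert the construction $F$ from Lemma \ref{functor-f} by reading Lemma \ref{unwrapped} as a recipe. Given an effectful multicategory $(\blue{\bb{C}_0},\red{\bb{C}_1}, J)$, we define $G\bb{C}$ to have the same objects as $\bb{C}$, to have hom-arrows $J_{\Gamma;B}\colon\Mhom{\blue{\bb{C}_0}}{\Gamma}{B}\to \Mhom{\red{\bb{C}_1}}{\Gamma}{B}$, identities $\blue{id_A}$, pure substitution $psub$ given by the derived simultaneous substitution in the multicategory $\blue{\bb{C}_0}$, and effectful substitution $esub_j$ given by Definition \ref{def}, namely $esub_j(\red{t},\blue{u_1},\ldots,\red{u_j},\ldots,\blue{u_n}) := \red{t}\langle J\blue{u_1},\ldots,\red{u_j},\ldots, J\blue{u_n}\rangle$.

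First I would verify the four axioms of Lemma \ref{unwrapped}. Left and right unitality follow from the unit laws of $\blue{\bb{C}_0}$ and $\red{\bb{C}_1}$ together with preservation of identities by $J$. The composition coherence (1a) says $J$ preserves simultaneous substitution, which follows from $J$ preserving unary substitution plus centrality of the images $J\blue{u_i}$; coherence (1b) reduces to a tautology once both sides are expanded via Definition \ref{def}, since $\red{\overline{\blue{u_j}}} = J\blue{u_j}$ is pure in the premulticategory sense. Pure associativity (4a) is precisely Lemma \ref{multicat-assoc} applied inside $\blue{\bb{C}_0}$.

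The main obstacle is the mixed associativity axiom (4b), which combines one effectful branch with many pure branches at two nesting levels. The hard work here has already been absorbed into Lemma \ref{freyd-assoc}, which was proved exactly for this situation: it gives associativity of the iterated substitution $\langle -\rangle$ in $\red{\bb{C}_1}$ when all but one branch at each level consists of central morphisms. Unfolding Definition \ref{def} on both sides of (4b) and applying Lemma \ref{freyd-assoc} (with $p$ equal to the effectful index $l$ on the outer level, and $m$ on the inner level) yields the required equation.

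On morphisms, given $(\blue{f_0},\red{f_1})\colon (\blue{\bb{C}_0},\red{\bb{C}_1},J_\bb{C})\to(\blue{\bb{D}_0},\red{\bb{D}_1},J_\bb{D})$, set $(Gf)^0 := \blue{f_0}$ and $(Gf)^1 := \red{f_1}$ on hom-sets. Condition (1) of Lemma \ref{unwrap-3} is the commuting square with $J_\bb{C}, J_\bb{D}$, which is exactly the definition of a morphism of effectful multicategories; conditions (2) and (3) follow from $\blue{f_0}$ being a multicategory morphism; and condition (4) follows by unpacking $esub_j$, applying Lemma \ref{isub-pres} to push $\red{f_1}$ through the iterated unary substitution, and then using $\red{f_1}\circ J_\bb{C} = J_\bb{D}\circ \blue{f_0}$. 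Functoriality of $G$ is immediate from these componentwise definitions. Finally, $FG$ and $GF$ are identities on the nose: on objects, $F$ simply splits the hom-arrows of $G\bb{C}$ back into their sources and targets with $J$ as the structure map, and $G$ reassembles those components into the same arrow; on morphisms both operations act as the identity on each component $f^0$, $f^1$.
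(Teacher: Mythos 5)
Your proposal is correct and follows essentially the same route as the paper: same definition of $G\bb{C}$ via the hom-arrows $J_{\Gamma;B}$ with $psub$ and $esub_j$ as in Definition \ref{def}, with the axioms of Lemma \ref{unwrapped} discharged by Lemmas \ref{multicat-assoc}, \ref{freyd-assoc} and \ref{isub-pres}, and the same componentwise action on morphisms. You in fact spell out which lemma handles which axiom more explicitly than the paper does, which leaves the verification as ``straightforward.''
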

\begin{proof}
	We set 
	\begin{enumerate}
		\item $\ob{G\bb{C}} : = \ob{\bb{C}}$
		\item For $\Gamma,A$, set
		\begin{align*}
			\Mhom{G\bb{C}}{\Gamma}{A}_0 &: = \Mhom{\bb{C}_0}{\Gamma}{A}\\
			\Mhom{G\bb{C}}{\Gamma}{A}_1 &: = \Mhom{\bb{C}_1}{\Gamma}{A}\\
			\Mhom{G\bb{C}}{\Gamma}{A}_\diamond &: = J_{\Gamma;A}
		\end{align*}
		\item Identities are the same, i.e, $\blue{id_A}  = id_A^{\bb{C}_0}$.
		\item $psub$ and $esub_j$ are as in Definition \ref{def}
	\end{enumerate}
	
	Verifying that the axioms in Lemma \ref{unwrapped} is straightforward, and use facts like Lemma \ref{multicat-assoc}, \ref{freyd-assoc}, \ref{isub-pres}. If $(f_0, f_1): (\bb{C}_0\to \bb{C}_1)\to (\bb{D}_0\to \bb{D}_1)$ is a morphism of effectful multicategories, define $Gf: G\bb{C}\to G\bb{D}$ to be the obvious $([\to,\category{Set}],\times,\fun)$-multicategory morphism. 
	It is clear why the constructions $F$ and $ G$ are inverse - $FG\bb{C}$ and $GF\bb{C}$ return the same structures, and the operations defined are also recovered.
\end{proof}

\section{Algebras for Duoidally Enriched Multicategories}

We follow our general principle for defining the category of algebras for a multi-ary structure. For this, we require a notion of 2-morphism for $(\mcal{V},\otimes,\fun)$-multicategories. As it turns out, this a straightforward enriched generalization of the notion of a transformation of multicategory morphisms.

\begin{definition}
	Let $f,g: \bb{C}\to \bb{D}$ be  morphisms of $(\mcal{V},\otimes,\fun)$-multicategories. A transformation $\eta: f\Rightarrow g$ is a family of arrows, $\{\eta_A: I\to \Mhom{\bb{D}}{fA}{gA}\}_{A\in \ob{\bb{C}}}$ such that the following diagram commutes for all $A_1,\ldots, A_n,B$: 
	\[\begin{tikzcd}[cramped]
		{\Mhom{\bb{C}}{A_1,\ldots, A_n}{B}\otimes (\fun_{i=1}^n I)} && {\Mhom{\bb{D}}{gA_1,\ldots, gA_n}{gB}\otimes (\fun_{i=1}^n \Mhom{\bb{D}}{fA_i}{gA_i})} \\
		\\
		{\Mhom{\bb{C}}{A_1,\ldots, A_n}{B}} && {\Mhom{\bb{D}}{fA_1,\ldots, fA_n}{gB}} \\
		\\
		{I\otimes \Mhom{\bb{C}}{A_1,\ldots, A_n}{B}} && {\Mhom{\bb{D}}{fB}{gB}\otimes \Mhom{\bb{D}}{fA_1,\ldots, fA_n}{fB}}
		\arrow["coh", from=3-1, to=1-1]
		\arrow["coh"', from=3-1, to=5-1]
		\arrow["{g\otimes (\fun_{i=1}^n \eta_{A_i})}", from=1-1, to=1-3]
		\arrow["comp", from=1-3, to=3-3]
		\arrow["comp"', from=5-3, to=3-3]
		\arrow["{\eta_B\otimes f}"', from=5-1, to=5-3]
	\end{tikzcd}\]
\end{definition}

\begin{proposition}
	Given $\mcal{V}$-multicategories $\bb{C}, \bb{D}$, the following data determines a category, $\Hom{(\mcal{V},\otimes, \fun)\text{-}\category{Mult}}{\bb{C}}{\bb{D}}$
	\begin{enumerate}
		\item Objects: $\mcal{V}$-multicategory morphisms, $f: \bb{C}\to \bb{D}$
		\item Morphisms: Transformations, $\eta: f\Rightarrow g$. 
		\item Identities: For $f: \bb{C}\to \bb{D}$, the identity transformation has as components identity morphisms, $id_{fA}$.
		\item Composition: Given $\eta: f\Rightarrow g$ and $\epsilon: g\Rightarrow h$, define their composite, $\epsilon\ast \eta$ to have as components:
		\[
		(\epsilon\ast \eta)_A: I\xrightarrow{\cong} I\otimes I \xrightarrow{\epsilon_A\otimes\eta_A} \Mhom{\bb{D}}{gA}{hA}\otimes \Mhom{\bb{D}}{fA}{gA} \xrightarrow{comp} \Mhom{\bb{D}}{fA}{hA}
		\] 
	\end{enumerate}
\end{proposition}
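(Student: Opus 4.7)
The plan is to verify the four things required to have a category: (a) that the identity $id_f$ is actually a transformation, (b) that the composite $\epsilon \ast \eta$ of two transformations is again a transformation, (c) associativity of $\ast$, and (d) the identity laws. Of these, step (b) is the only one with substantive content; the others reduce to coherence bookkeeping.

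For step (a), the identity transformation has components $(id_f)_A = id_{fA}\colon I \to \Mhom{\bb{D}}{fA}{fA}$. I would verify the transformation square by substituting into the two composites and using left and right unitality of the $(\mcal{V},\otimes,\fun)$-multicategory $\bb{D}$: the top composite collapses via right unitality of $sub$ (with the unit coherence $I \cong \fun_{i=1}^n I$) to $f$ itself, and the bottom composite collapses via left unitality of $sub$ also to $f$. Hence both paths agree.

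For step (b), I would expand the composite $\epsilon \ast \eta$ in the transformation square for $\epsilon \ast \eta$, and then carve up the square into two rectangles. One rectangle involves $\eta$ relating $f$ and $g$, and the other involves $\epsilon$ relating $g$ and $h$; the transformation axioms for $\eta$ and $\epsilon$ supply the commutativity of each rectangle. The step that needs care is the re-bracketing of $\fun_{i=1}^n (\Mhom{\bb{D}}{gA_i}{hA_i} \otimes \Mhom{\bb{D}}{fA_i}{gA_i})$ into $(\fun_{i=1}^n \Mhom{\bb{D}}{gA_i}{hA_i}) \otimes (\fun_{i=1}^n \Mhom{\bb{D}}{fA_i}{gA_i})$, which is exactly what the interchange law $\zeta$ of the duoidal structure on $\mcal{V}$ provides, together with the associativity axiom of $sub$ in $\bb{D}$ that turns nested substitution into a single substitution. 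This is the main obstacle: what needs to be checked is that the interchange morphism assembled out of the components $\eta_{A_i}, \epsilon_{A_i}$ matches the composite induced by $\ast$, and this requires invoking the duoidal coherence axioms to ensure all the bracketings agree. I would organize the diagram chase as an outer pentagon built from (i) the transformation square for $\eta$ applied to $f(t)$, (ii) the transformation square for $\epsilon$ applied to $g(t)$, (iii) the associativity axiom of $\bb{D}$ to identify the iterated substitution with a single $sub$, and (iv) a naturality/coherence step for $\zeta$ to relate the two orders of tensoring.

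For steps (c) and (d), associativity and unitality of $\ast$ follow pointwise from the corresponding axioms of the substitution operation in $\bb{D}$: both sides of the equation $(\zeta \ast \epsilon)\ast \eta = \zeta \ast (\epsilon \ast \eta)$ have components given by applying $sub$ to the triple $(\zeta_A, \epsilon_A, \eta_A)\colon I \otimes I \otimes I \to \Mhom{\bb{D}}{fA}{kA}$, and these agree by associativity of $sub$ in $\bb{D}$. The identity laws are dual: $id_g \ast \eta = \eta = \eta \ast id_f$ comes directly from left and right unitality of $sub$ in $\bb{D}$ applied componentwise. I would not write these out in detail and would simply state that they follow from componentwise unitality and associativity in $\bb{D}$, mirroring the analogous verification already carried out for ordinary clones in Chapter 3.
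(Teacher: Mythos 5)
Your proposal is correct and follows essentially the same route as the paper, whose own proof is only a one-line sketch stating that one checks closure of transformations under composition by "straightforward diagram chasing involving coherence, associativity and definitions." You correctly isolate the substantive step (closure under $\ast$) and the exact ingredients it needs — the two transformation squares, the associativity axiom of $sub$ in $\bb{D}$, and the duoidal interchange/coherence to re-bracket $\fun_{i=1}^n(a^i\otimes b^i)$ — mirroring the explicit chain of equalities the paper carries out for ordinary clones in Chapter 3.
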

\begin{proof}
As with the other multi-ary structures we have dealt with so far, this involves checking that the transformations are closed under composition. This is done by straightforward  diagram chasing involving coherence, associativity and definitions.
\end{proof}

We recover the notion of a transformation of multicategory morphisms in the case $(\mcal{V},\otimes, \fun) = (\category{Set},\times, \times)$. 
We also obtain a notion of a transformation between $([\to,\category{Set}],\times, \fun)$-multicategories, allowing us to define a notion of algebra for them.

\begin{definition}
	For an effectful multicategory $\bb{C}$, define its category of algebras in another effectful multicategory $\bb{D}$  to be the category $\Hom{([\to,\category{Set}],\times,\fun)\text{-}\category{MultiCat}}{\bb{C}}{\bb{D}}$.
\end{definition}

Unfortunately, there does not seem to be a canonical example of $\bb{D}$ in which  models can be considered in, and we leave this exploration to future work.

\chapter{Conclusions and Further Vistas}

This dissertation has explored the idea of extending the concepts of universal algebra to structures in effectful computation. Following a lengthy exploration of how concepts in universal algebra are applied in pure computation through the theory of clones and multicategories, we extract a general principle that allows one to define the category of algebras for any multi-ary structure.  With this principle, we sought out definitions of 2-morphisms for premulticategories, and for effectful multicategories. 

To do this, we construct the funny tensor on $[\to,\mcal{V}]$, which then provided the motivation for our novel notion of a duoidally enriched multicategory. These structures generalize both multicategories and effectful multicategories at the same time, and hence provide a link between pure and effectful computation. We use this link to transport the definition of a transformation from theory of multicategories to the theory of effectful multicategories. 

However, there are many more avenues to be explored. Our general principle for defining a category of algebras for a multi-ary structure, we propose the requirement that there be a canonical example of the structure in which models can be considered. In pure computation, this canonical structure is provided by \category{Set}. However, the properties of \category{Set} do not make it very well-suited for studying effectful computation. An alternative might be to study models in the effectful multicategory of stateful functions \cite{staton-levy}. However, as pointed out in Chapter 4, we are required to pick out a set of states beforehand, which leads to the question of canonicity. A reasonable solution might be to study models into the effectful multicategory of stateful functions with a countably infinite set of states, i.e, $S = \mathbb{N}$.

There are other very interesting questions that might be worthwhile to explore, and we list them out.

\subsection*{Adding products to an effectful language}

It was observed in Chapter 4 that it was significantly harder to construct left adjoints to the functors from premonoidal categories/effectful categories to premulticategories/effectful multicategories. The reason for this is that freely adding products to an effectful language is not straightforward. This is the case since a pair $(a,b)$ cannot be added into the language without specifying its order of evaluation. In the pure setting, $n$-tuples $(a_1,\ldots, a_n)$ were added without thought on the order of evaluation. However, for effectful languages we have to add $n$-tuples $(a_1,\ldots, a_n)$, each with $n!$ orders of evaluation, and then quotient out by a suitable relation to take care of centrality. Doing this abstractly, with hom-sets will then tell us how to freely add products to effectful languages. 

The functor from monoidal categories to multicategories is monadic, which means that product types on a linear language  are an algebraic structure on the language itself. Whether  product types on effectful languages  are an algebraic structure remains to be seen, and verification of this has to start by examining the functors from premonoidal categories/effectful categories to premulticategories/effectful multicategories.

\subsection*{Free constructions}

A multigraph is a multi-ary version of a graph, i.e, it consists of a set of objects and for every sequence of objects, $A_1,\ldots, A_n,B$, a set of arrows, $\Mhom{\bb{C}}{A_1,\ldots, A_n}{B}$ is given. These form a category \category{MultiGraph}. Evidently, there is a forgetful functor from \category{MultiCat} to \category{MultiGraph}, and Leinster shows that this functor has a left adjoint. 

We can enriched multigraphs to obtain a category $\mcal{V}$-\category{MultiGraph}, and we obtain a forgetful functor, $U: (\mcal{V},\otimes, \square)\text{-}\category{MultiCat} \to \mcal{V}\text{-}\category{MultiGraph}$. Whether this functor is monadic, and has a left adjoint remains to be seen.

	\subsection*{The substitution product}
	
	Single sorted clones and multicategories, which are typically called clones and operads can be seen as monoids in a certain category equipped with the \textit{substitution product}, $\odot$ \cite{coend}. Whether one-object $(\mcal{V},\otimes, \square)$-multicategories also arise in a similar fashion is still an open question.

	\subsection*{$T$-Multicategories}
	
	The notion of a $T$-multicategory generalizes the notion of a multicategory in a different manner, as a monad in a suitable bicategory of spans \cite{leinster}. Ordinary multicategories and categories can all be viewed as instances of this general framework.  The simultaneous substitution operation of a multicategory enables this sort of generalization, and this  prevented premulticategories from being viewed similarly. Since $(\mcal{V},\otimes, \square)$-multicategories have something that looks like a simultaneous substitution operation, the question of whether these serve as examples of $T$-multicategories is an interesting one.

\appendix
 \begin{appendix}

\chapter{Proofs From Section \ref{6.2.2}}
\label{appendix1}

We give proofs for  Lemmas \ref{unwrap-1} and \ref{unwrap-2} that are used to obtain a description of a $([\to,\category{Set}],\times,\fun)$-multicategory.

\subsection*{Proof  of Lemma \ref{unwrap-1}}

\begin{proof}
	To give a map, $a\times (\fun_{i=1}^n b^i)\to c$ in $[\to,\category{Set}]$ is to give a commutative square,
	\begin{equation}\label{comp-comm}
		\begin{tikzcd}
			{a_0\times (\prod_{i=1}^n b_0^i)} &&& {a_1\times (\fun_{i=1}^n b^i)_1} \\
			{c_0} &&& {c_1}
			\arrow["{f_0}", from=1-1, to=2-1]
			\arrow["{a_\diamond \times (\fun_{i=1}^n b^i)_\diamond}", from=1-1, to=1-4]
			\arrow["{f_1}", from=1-4, to=2-4]
			\arrow["{c_\diamond}", from=2-1, to=2-4]
		\end{tikzcd}
	\end{equation}
	However, using the pushout characterization of the $n$-ary funny tensor, and the fact that $a_1\times - $ preserves colimits, the following is a pushout diagram, and hence, to give a map, $f_1: a_1\times (\fun_{i=1}^n b^i)_1 \to c_1$ is to give maps, $f_{1,j}: a_1\times (\fun_{i=1}^n b^i_{\delta_{i,j}})$ for all $j \in \{1,\ldots, n\}$, such that $f_{1,j} =  f_1\circ (1\times k_j) $.
	\[\begin{tikzcd}
		& {a_1\times (\prod_{i=1}^n b^i_{\delta_{i,1}})} \\
		\\
		{a_1\times (\prod_{i=1}^n b^i_0)} & \vdots & {a_1\times (\fun_{i=1}^n b^i)_1} \\
		\\
		& {a_1\times (\prod_{i=1}^n b^i_{\delta_{i,n}})}
		\arrow["{1\times (\prod_{i=1}^n 1| b^1_1)}"{description}, from=3-1, to=1-2]
		\arrow["{1\times (\prod_{i=1}^n 1| b^n_1)}"{description}, from=3-1, to=5-2]
		\arrow["{1\times k_1}"{description}, from=1-2, to=3-3]
		\arrow["{1\times k_n}"{description}, from=5-2, to=3-3]
	\end{tikzcd}\]
	Further, by the pushout characterization of the $n$-ary tensor, we know that 
	\[
	(\fun_{i=1}^n b^i)_\diamond = k_1\circ ( (\prod_{i=1}^n 1| b^j_1))
	\]
	Hence, to make diagram \ref{comp-comm} commute, we require 
	\begin{align*}
		f_{1}\circ (a_\diamond \times (\fun_{i=1}^n b^i)_\diamond) & = c_\diamond \circ f_0\\
		f_1\circ (1\times (\fun_{i=1}^n b^i)_\diamond)\circ (a_\diamond \times 1) & = c_\diamond \circ f_0\\
		f_1\circ (1\times k_1)\circ (1\times (\prod_{i=1}^n 1| b^j_1))\circ (a_\diamond \times 1) & = c_\diamond \circ f_0\\
		f_{1,j}\circ (1\times (\prod_{i=1}^n 1| b^j_1))\circ (a_\diamond \times 1) & = c_\diamond \circ f_0\\
		f_{1,j}\circ (a_\diamond \times(\prod_{i=1}^n 1| b^j_1) ) & = c_\diamond \circ f_0
	\end{align*}
\end{proof}

\subsection*{Proof of Lemma \ref{unwrap-2}}

\begin{proof}
	We use the abbreviation: 
	\[
	c^i : = \fun_{j=1}^{k_i} c^{i,j}
	\] 
	To  give $f^i, g, x, y$ is equivalent to giving $f^i_0, f^i_1, g_0, g_1,x_0,x_1,y_0,y_1$ such that the following squares are commutative:
	Now, to satisfy the conditions, we need the respective left and right composites of the following two squares to be equal: 

	\[\begin{tikzcd}
		{\substack{a_0\times  (\prod_{i=1}^n (b^i_0\times (\prod_{j=1}^{k_i}c^{i,j}_0)))}} &&& {\substack{a_1\times  (\fun_{i=1}^n (b^i\times (\fun_{j=1}^{k_i} c^{i,j})))_0}} \\
		{\substack{a_0\times (\prod_{i=1}^n e^i_0)}} &&& {\substack{a_1\times (\fun_{i=1}^n e^i)_1}} \\
		{\substack{q_0}} &&& {\substack{q_1}} \\
		\\
		{\substack{a_0\times  (\prod_{i=1}^n (b^i_0\times (\prod_{j=1}^{k_i}c^{i,j}_0)))}} &&& {\substack{a_1\times  (\fun_{i=1}^n (b^i\times (\fun_{j=1}^{k_i} c^{i,j})))_0}} \\
		{\substack{(a_0\times (\prod_{i=1}^n b^i_0))\times (\prod_{i=1}^n \prod_{j=1}^{k_i} c^{i,j}_0)}} &&& {\substack{(a_1 \times (\fun_{i=1}^n b^i)_1)\times (\fun_{i=1}^n \fun_{j=1}^{k_i}c^{i,j})_1}} \\
		{\substack{d_0\times (\prod_{i=1}^n \prod_{j=1}^{k_i} c^{i,j}_0)}} &&& {\substack{d_1\times (\fun_{i=1}^n \fun_{j=1}^{k_i} c^{i,j})_1}} \\
		{\substack{q_0}} &&& {\substack{q_1}}
		\arrow["{\substack{coh}}", from=5-1, to=6-1]
		\arrow["{\substack{x_0\times 1}}", from=6-1, to=7-1]
		\arrow["{y_0}", from=7-1, to=8-1]
		\arrow["{\substack{a_\diamond \times (\fun_{i=1}^n (b^i\times (\fun_{j=1}^{k_i} c^{i,j})))_\diamond}}", from=5-1, to=5-4]
		\arrow["{\substack{a_\diamond \times (\fun_{i=1}^n b^i)_\diamond \\\times (\fun_{i=1}^n \fun_{j=1}^{k_i}c^{i,j})_\diamond}}", from=6-1, to=6-4]
		\arrow["{\substack{coh}}", from=5-4, to=6-4]
		\arrow["{\substack{d_\diamond \times (\fun_{i=1}^n \fun_{j=1}^{k_i}c^{i,j})_\diamond}}", from=7-1, to=7-4]
		\arrow["{\substack{q_\diamond}}", from=8-1, to=8-4]
		\arrow["{\substack{x_1\times 1}}", from=6-4, to=7-4]
		\arrow["{\substack{y_1}}", from=7-4, to=8-4]
		\arrow["{\substack{a_\diamond \times (\fun_{i=1}^n (b^i\times (\fun_{j=1}^{k_i} c^{i,j})))_\diamond}}", from=1-1, to=1-4]
		\arrow["{\substack{1\times (\prod_{i=1}^n f^i_0)}}", from=1-1, to=2-1]
		\arrow["{\substack{g_0}}", from=2-1, to=3-1]
		\arrow["{\substack{q_\diamond}}", from=3-1, to=3-4]
		\arrow["{\substack{g_1}}", from=2-4, to=3-4]
		\arrow["{\substack{1\times (\fun_{i=1}^n f^i)_1}}", from=1-4, to=2-4]
		\arrow["{\substack{a_\diamond \times (\fun_{i=1}^n e^i)_\diamond}}", from=2-1, to=2-4]
	\end{tikzcd}\]
	To ask for the square in \ref{main-square} to commute is equivalent to asking that the composites on the left and the right composites of the top diagram are respectively equal to the composites on the left and the right of the bottom diagram. For the composites on the left to be equal is to make the diagram \ref{first-square} commute. We will show that to for the composites on the right  to be equal is equivalent to \ref{second-square} commuting for all $l,m$.
	
	Denote by $\phi$ and $\psi$ the composites, $g_1\circ (1\times (\fun_{i=1}^n f^i)_1)$ and $y_1\circ (x_1\times 1)\circ coh$ respectively.

	To give the required maps, $f^i_1,g^1,x_1,y_1$ is equivalent to giving maps, $f^i_{1,m}, g_{1,l}, x_{1,l}, y_{1,l,m}$ as indicated by the following diagrams. In all the diagrams below, the square is indeed a pushout square, and this can be seen by the pushout characterization of the $n$-ary funny tensor along with the fact that $a\times -$ or $-\times a$ preserves colimits:
	\[\begin{tikzcd}
		{b^i_1\times \prod_{j=1}^{k_i} c^{i,j}_0 } && {b^i_1\times \prod_{j=1}^{k_i} c^{i,j}_{\delta_{j,k_i}}} \\
		& {\reflectbox{$\ddots$}} \\
		{b^i_1\times \prod_{j=1}^{k_i} c^{i,j}_{\delta_{j,1}}} && {b^i_1\times c^i_1} \\
		&&& {e^i_1}
		\arrow["{1\times \iota^i_{k_i}}"{description}, from=1-3, to=3-3]
		\arrow["{1\times \iota^i_1}"{description}, from=3-1, to=3-3]
		\arrow["{1\times \prod_{j=1}^{k_i}1| c^{i,k_i}_\diamond}"{description}, from=1-1, to=1-3]
		\arrow["{1\times \prod_{j=1}^{k_i}1| c^{i,1}_\diamond}"{description}, from=1-1, to=3-1]
		\arrow["{f^i_{1,1}}"{description}, curve={height=12pt}, from=3-1, to=4-4]
		\arrow["{f^i_{1,k_i}}"{description}, curve={height=-12pt}, from=1-3, to=4-4]
		\arrow["{f_1^i}"{description}, dashed, from=3-3, to=4-4]
	\end{tikzcd}\]
	
	\[\begin{tikzcd}
		{a_1\times \prod_{i=1}^n e^i_0} && {a_1\times \prod_{i=1}^n e^i_{\delta_{i,n}}} \\
		& {\reflectbox{$\ddots$}} \\
		{a_1\times \prod_{i=1}^n e^i_{\delta_{i,1}}} && {a_1\times (\fun_{i=1}^n e^i)_1} \\
		&&& {q_1}
		\arrow["{1\times \prod_{i=1}^n 1| e^n_\diamond}"{description}, from=1-1, to=1-3]
		\arrow["{1\times \prod_{i=1}^n 1| e^1_\diamond}"{description}, from=1-1, to=3-1]
		\arrow["{1\times u_1}"{description}, from=3-1, to=3-3]
		\arrow["{1\times u_n}"{description}, from=1-3, to=3-3]
		\arrow["{g_1}"{description}, dashed, from=3-3, to=4-4]
		\arrow["{g_{1,1}}"{description}, curve={height=12pt}, from=3-1, to=4-4]
		\arrow["{g_{1,n}}"{description}, curve={height=-18pt}, from=1-3, to=4-4]
	\end{tikzcd}\]
	
	\[\begin{tikzcd}
		{a_1\times \prod_{i=1}^n b^i_0} && {a_1\times \prod_{i=1}^n b^i_{\delta_{i,n}}} \\
		& {\reflectbox{$\ddots$}} \\
		{a_1\times \prod_{i=1}^n b^i_{\delta_{i,1}}} && {a_1\times (\fun_{i=1}^n b)_1} \\
		&&& {d_1}
		\arrow["{1\times \prod_{i=1}^n 1| b^n_\diamond}"{description}, from=1-1, to=1-3]
		\arrow["{1\times \prod_{i=1}^n 1| b^1_\diamond}"{description}, from=1-1, to=3-1]
		\arrow["{1\times v_1}"{description}, from=3-1, to=3-3]
		\arrow["{1\times v_n}"{description}, from=1-3, to=3-3]
		\arrow["{x_1}"{description}, from=3-3, to=4-4]
		\arrow["{x_{1,1}}"{description}, curve={height=12pt}, from=3-1, to=4-4]
		\arrow["{x_{1,n}}"{description}, curve={height=-12pt}, from=1-3, to=4-4]
	\end{tikzcd}\]
	
	\[\begin{tikzcd}
		{d_1\times \prod_{i=1}^n c^i_0} && {d_1\times \prod_{i=1}^n c^i_{\delta_{i,n}}} \\
		& {\reflectbox{$\ddots$}} \\
		{d_1\times \prod_{i=1}^n c^i_{\delta_{i,1}}} && {d_1\times (\fun_{i=1}^n c^i)_1} \\
		&&& {q_1}
		\arrow["{1\times \prod_{i=1}^n 1| c^n_\diamond}"{description}, from=1-1, to=1-3]
		\arrow["{1\times \prod_{i=1}^n 1| c^1_\diamond}"{description}, from=1-1, to=3-1]
		\arrow["{1\times t_1}"{description}, from=3-1, to=3-3]
		\arrow["{1\times t_n}"{description}, from=1-3, to=3-3]
		\arrow["{y_1}"{description}, from=3-3, to=4-4]
		\arrow["{y_{1,1}}"{description}, curve={height=12pt}, from=3-1, to=4-4]
		\arrow["{y_{1,n}}"{description}, curve={height=-12pt}, from=1-3, to=4-4]
	\end{tikzcd}\]
	
	\[\begin{tikzcd}[column sep=small]
		{d_1\times \prod_{i=1}^n \prod_{j=1}^{k_i} c^{i,j}_0} && {d_1\times \prod_{i=1}^n \prod_{j=1}^{k_i} c^{i,j}_{\delta_{i,l}\cdot\delta_{j,k_l}}} \\
		& {\reflectbox{$\ddots$}} \\
		{d_1\times \prod_{i=1}^n \prod_{j=1}^{k_i} c^{i,j}_{\delta_{i,l}\cdot\delta_{j,1}}} && {d_1\times \prod_{i=1}^n c^i_{\delta_{i,l}}} \\
		&&& {q_1}
		\arrow["{\substack{1\times (\prod_{i=1} 1 |\\ (1\times \prod_{j=1}^{k_l} 1|c^{l,k_l}_\diamond))}}", from=1-1, to=1-3]
		\arrow["{1\times (\prod_{i=1} 1 | (1\times \prod_{j=1}^{k_l} 1|c^{l,1}_\diamond))}"{description}, from=1-1, to=3-1]
		\arrow["{1\times \prod_{i=1}^n 1|q^l_1}"{description}, from=3-1, to=3-3]
		\arrow["{1\times \prod_{i=1}^n 1|q^l_{k_i}}"{description}, from=1-3, to=3-3]
		\arrow["{y_{1,l,1}}"{description}, curve={height=18pt}, from=3-1, to=4-4]
		\arrow["{y_{1,l,k_l}}"{description}, curve={height=-18pt}, from=1-3, to=4-4]
		\arrow["{y_{1,l}}"{description}, dashed, from=3-3, to=4-4]
	\end{tikzcd}\]

	Now note that $\phi$ and $\psi$ are maps out of a pushout square, so these determine and are determined by $\phi_l, \psi_l$ that make the following diagram commute: 
	\[\begin{tikzcd}[column sep=scriptsize]
		& {\substack{a_1\times (\prod_{i=1}^n (b^i_{\delta_{i,1}}\times c^i_{\delta_{i,1}}))}} \\
		{\substack{a_1\times (\prod_{i=1}^n (b^i_0\times c^i_0))}} & \vdots & {\substack{a_1\times (\fun_{i=1}^n b^i\times c^i)_1}} && {\substack{q_1}} \\
		& {\substack{a_1\times (\prod_{i=1}^n (b^i_{\delta_{i,n}}\times c^i_{\delta_{i,n}}))}}
		\arrow[from=2-1, to=1-2]
		\arrow[from=2-1, to=3-2]
		\arrow["{1\times r_1}"{description}, from=1-2, to=2-3]
		\arrow["{1\times r_n}"{description}, from=3-2, to=2-3]
		\arrow["{\substack{\phi_1}}"{description}, curve={height=-24pt}, from=1-2, to=2-5]
		\arrow["{\substack{\phi_n}}"{description}, curve={height=24pt}, from=3-2, to=2-5]
		\arrow["{\substack{\phi}}"{description}, curve={height=-6pt}, from=2-3, to=2-5]
		\arrow["{\substack{\psi_1}}"{description}, curve={height=-12pt}, from=1-2, to=2-5]
		\arrow["{\substack{\psi}}"{description}, curve={height=6pt}, from=2-3, to=2-5]
		\arrow["{\substack{\psi_n}}"{description}, curve={height=12pt}, from=3-2, to=2-5]
	\end{tikzcd}\]
	As a consequence of uniqueness, $\phi =\psi$ if and only if $\phi_l = \psi_l$ for all $l\in \{1,\ldots, n\}$. 
	The following cube can be verified to be commutative with all the information above. 

	\[\begin{tikzcd}[cramped,column sep=tiny]
		& {\substack{a_1\times\prod_{i=1}^n b^i_{\delta_{i,1}} \times c^i_{\delta_{i,1}}}} && {\substack{a_1\times(\fun_{i=1}^n b^i \times c^i)_1}} \\
		&& \ddots\qquad\qquad \\
		{\substack{a_1\times \prod_{i=1}^n b^i_0 \times c^i_0 }} && {\substack{a_1\times\prod_{i=1}^n b^i_{\delta_{i,n}} \times c^i_{\delta_{i,n}}}} \\
		& {\substack{a_1\times\prod_{i=1}^n e^i_{\delta_{i,1}}}} && {\substack{a_1\times(\fun_{i=1}^n e^i)_1}} \\
		&& \ddots\qquad\qquad \\
		{\substack{a_1\times\prod_{i=1}^n e^i_0}} && {\substack{a_1\times\prod_{i=1}^n e^i_{\delta_{i,n}}}} \\
		& {q_1} && {q_1} \\
		&& \ddots\qquad\qquad \\
		{q_1} && {q_1}
		\arrow[from=3-1, to=1-2]
		\arrow[from=3-1, to=3-3]
		\arrow["{\substack{1\times r_1}}"{description}, from=1-2, to=1-4]
		\arrow["{\substack{1\times r_n}}"{description}, from=3-3, to=1-4]
		\arrow[from=6-1, to=4-2]
		\arrow["{1\times u_1}"{description, pos=0.7}, from=4-2, to=4-4]
		\arrow[from=6-1, to=6-3]
		\arrow["{1\times u_n}"{description, pos=0.7}, from=6-3, to=4-4]
		\arrow["{\substack{1\times (\fun_{i=1}^n f^i)_1}}"{description}, from=1-4, to=4-4]
		\arrow["{\substack{1\times\prod_{i=1}^n  f_0^i | f^1_1}}"{description}, from=1-2, to=4-2]
		\arrow["{\substack{1\times \prod_{i=1}^n  f_0^i | f^n_1}}"{description}, from=3-3, to=6-3]
		\arrow["{\substack{1\times \prod_{i=1}^n f^i_0}}"{description}, from=3-1, to=6-1]
		\arrow["{g_{1,0}}"{description}, from=6-1, to=9-1]
		\arrow[from=9-1, to=7-2]
		\arrow["1"{description, pos=0.7}, from=7-2, to=7-4]
		\arrow[from=9-1, to=9-3]
		\arrow["1"{description, pos=0.6}, from=9-3, to=7-4]
		\arrow["{g_{1,1}}"{description}, from=4-2, to=7-2]
		\arrow["{g_{1,n}}"{description}, from=6-3, to=9-3]
		\arrow["{g_1}"{description}, from=4-4, to=7-4]
	\end{tikzcd}\]
	Hence, 
	\begin{align*}
		\phi_l &= \phi\circ (1\times r_l)\\
		& = g_1\circ (1\times (\fun_{i=1}^n f^i)_1)\circ (1\times r_n)\\
		& = g_{1,n}\circ (1\times \prod_{i=1}^{n}f^i_0| f^l_1)
	\end{align*}
	
	Similarly, the following cube can also be seen to be commutative with all the information presented above:

	\[\begin{tikzcd}[cramped, column sep=tiny]
		& {\substack{a_1\times\prod_{i=1}^n b^i_{\delta_{i,1}} \times c^i_{\delta_{i,1}}}} && {\substack{a_1\times(\fun_{i=1}^n b^i \times c^i)_1}} \\
		&& \ddots\qquad\qquad \\
		{\substack{a_1\times \prod_{i=1}^n b^i_0 \times c^i_0 }} && {\substack{a_1\times\prod_{i=1}^n b^i_{\delta_{i,n}} \times c^i_{\delta_{i,n}}}} \\
		& {\substack{(a_1\times (\prod_{i=1}^n b^i_{\delta_{i,1}}))\\\times \prod_{i=1}^n c^i_{\delta_{i,1}}}} && {\substack{ (a_1\times (\fun_{i=1}^n b^i)_1)\\\times (\fun_{i=1}^n c^i)_1}} \\
		&& \ddots\qquad\qquad \\
		{\substack{(a_1\times (\prod_{i=1}^n b^i_0))\\\times \prod_{i=1}^n c^i_0}} && {\substack{(a_1\times (\prod_{i=1}^n b^i_{\delta_{i,1}}))\\\times \prod_{i=1}^n c^i_{\delta_{i,1}}}} \\
		& {\substack{d_1\times \prod_{i=1}^n c^i_{\delta_{i,1}}}} && {\substack{d_1\times (\fun_{i=1}^n c^i)_1}} \\
		&& \ddots\qquad\qquad \\
		{\substack{d_1\times \prod_{i=1}^n c^i_{0}}} && {\substack{d_1\times \prod_{i=1}^n c^i_{\delta_{i,n}}}} \\
		& {q_1} && {q_1} \\
		&& \ddots\qquad\qquad \\
		{q_1} && {q_1}
		\arrow[from=3-1, to=3-3]
		\arrow["{1\times r_1}"{description}, from=1-2, to=1-4]
		\arrow["{1\times r_n}"{description}, from=3-3, to=1-4]
		\arrow[from=3-1, to=1-2]
		\arrow["coh"{description}, from=3-1, to=6-1]
		\arrow[from=6-1, to=4-2]
		\arrow[from=4-2, to=4-4]
		\arrow[from=6-1, to=6-3]
		\arrow[from=6-3, to=4-4]
		\arrow["coh"{description}, from=3-3, to=6-3]
		\arrow["coh"{description}, dashed, from=1-4, to=4-4]
		\arrow["coh"{description}, from=1-2, to=4-2]
		\arrow["{x_{1,0}\times 1}"{description}, from=6-1, to=9-1]
		\arrow[from=9-1, to=7-2]
		\arrow[from=7-2, to=7-4]
		\arrow[from=9-3, to=7-4]
		\arrow["{x_{1,1}\times 1}"{description}, from=4-2, to=7-2]
		\arrow["{x_1\times1}"{description}, from=4-4, to=7-4]
		\arrow["{x_{1,n}\times 1}"{description}, from=6-3, to=9-3]
		\arrow[from=9-1, to=9-3]
		\arrow["{y_{1,0}}"{description}, from=9-1, to=12-1]
		\arrow["{y_{1,n}}"{description}, from=7-2, to=10-2]
		\arrow[from=12-1, to=10-2]
		\arrow[from=12-1, to=12-3]
		\arrow[from=10-2, to=10-4]
		\arrow[from=12-3, to=10-4]
		\arrow["{y_{1,n}}"{description}, from=9-3, to=12-3]
		\arrow["{y_1}"{description}, from=7-4, to=10-4]
	\end{tikzcd}\]
	Therefore,
	\begin{align*}
		\psi_l & = \psi\circ (1\times r_l)\\
		& = y_1\circ (x_1\times 1)\circ coh \circ (1\times r_l)\\
		& = y_{1,n}\circ (x_{1,n}\times 1)\circ coh 
	\end{align*}

	Now note that $\phi_l$ and $\psi_l$ are themselves maps out of a pushout square! So, these maps determine and are determined by the maps, $\phi_{l,m}, \psi_{l,m}$ which make the diagram below commute: 
	
	\[\begin{tikzcd}[cramped,column sep=small]
		& {\substack{a_1\times (\prod_{i=1}^n (b^i_{\delta_{i,l}}\times \\\prod_{j=1}^{k_i} c^{i,j}_{\delta_{i,l}\cdot\delta_{j,1}})) }} \\
		\\
		{\substack{a_1\times (\prod_{i=1}^n (b^i_{\delta_{i,l}}\times \\\prod_{j=1}^{k_i} c^{i,j}_0)) }} & \vdots & {\substack{a_1\times (\prod_{i=1}^n (b^i_{\delta_{i,l}}\times c^i_{\delta_{i,l}}) }} &&& {q_1} \\
		\\
		& {\substack{a_1\times (\prod_{i=1}^n (b^i_{\delta_{i,l}}\times\\ \prod_{j=1}^{k_i} c^{i,j}_{\delta_{i,l}\cdot\delta_{j,k_l}})) }}
		\arrow["{\substack{1\times (\prod_{i=1}^n 1 | (1\times \prod_{j=1}^{k_l} 1|c^{l,1}_\diamond))}}"{description}, from=3-1, to=1-2]
		\arrow["{\substack{1\times (\prod_{i=1}^n 1 | (1\times \prod_{j=1}^{k_l} 1|c^{l,1}_\diamond))}}"{description}, from=3-1, to=5-2]
		\arrow["{\substack{1\times  (\prod_{i=1}^n 1| (1\times\iota^l_{1}))}}"{description}, from=1-2, to=3-3]
		\arrow["{\substack{1\times  (\prod_{i=1}^n 1| (1\times\iota^{l}_{k_l}))}}"{description}, from=5-2, to=3-3]
		\arrow["{\phi_l}"{description}, curve={height=-12pt}, from=3-3, to=3-6]
		\arrow["{\phi_{l,1}}"{description}, curve={height=-30pt}, from=1-2, to=3-6]
		\arrow["{\psi_{l,{k_l}}}"{description}, curve={height=30pt}, from=5-2, to=3-6]
		\arrow["{\psi_l}"{description}, curve={height=12pt}, from=3-3, to=3-6]
		\arrow["{\phi_{l,k_l}}"{description}, curve={height=12pt}, from=5-2, to=3-6]
		\arrow["{\psi_{l,1}}"{description}, curve={height=-12pt}, from=1-2, to=3-6]
	\end{tikzcd}\]
	
	As before, $\phi_l = \psi_l$ if and only if $\psi_{l,m} = \phi_{l,m}$ for all $m\in\{1,\ldots, k_l\}$. We use an analogous argument as above. Consider the commutative cube: 

	\[\begin{tikzcd}[cramped,column sep=tiny, row sep=small]
		& {\substack{a_1\times (\prod_{i=1}^n (b^i_{\delta_{i,l}}\times \\\prod_{j=1}^{k_i} c^{i,j}_{\delta_{i,l}\cdot\delta_{j,1}})) }} && {\substack{a_1\times\\ (\prod_{i=1}^n (b^i_{\delta_{i,l}}\times c^i_{\delta_{i,l}}) }} \\
		&& \ddots\qquad\qquad \\
		{\substack{a_1\times (\prod_{i=1}^n (b^i_{\delta_{i,l}}\times \\\prod_{j=1}^{k_i} c^{i,j}_0)) }} && {\substack{a_1\times (\prod_{i=1}^n (b^i_{\delta_{i,l}}\times\\ \prod_{j=1}^{k_i} c^{i,j}_{\delta_{i,l}\cdot\delta_{j,k_l}})) }} \\
		& {\substack{a_1\\\times \prod_{i=1}^n e^i_{\delta_{i,l}}}} && {\substack{a_1\\\times \prod_{i=1}^n e^i_{\delta_{i,l}}}} \\
		&& \ddots\qquad\qquad \\
		{\substack{a_1\\\times \prod_{i=1}^n e^i_{\delta_{i,l}}}} && {\substack{a_1\\\times \prod_{i=1}^n e^i_{\delta_{i,l}}}} \\
		& {\substack{q_1\\\qquad}} && {\substack{q_1\\\qquad}} \\
		&& \ddots\qquad\qquad \\
		{\substack{q_1\\\qquad}} && {\substack{q_1\\\qquad}}
		\arrow[from=3-1, to=1-2]
		\arrow[from=3-1, to=3-3]
		\arrow["{\substack{1\times  (\prod_{i=1}^n 1| (1\times\iota^l_{1}))}}"{description}, from=1-2, to=1-4]
		\arrow["{\substack{1\times  (\prod_{i=1}^n 1| (1\times\iota^{l}_{k_l}))}}"{description}, from=3-3, to=1-4]
		\arrow["{\substack{1\times \prod_{i=1}^n f_0^i|f^l_{1,0}}}"{description}, from=3-1, to=6-1]
		\arrow[from=6-1, to=4-2]
		\arrow[from=6-1, to=6-3]
		\arrow[from=4-2, to=4-4]
		\arrow[from=6-3, to=4-4]
		\arrow["{\substack{1\times \prod_{i=1}^n f_0^i|f^l_{1,k_l}}}"{description}, from=3-3, to=6-3]
		\arrow["{\substack{1\times \prod_{i=1}^n f_0^i|f^l_{1,1}}}"{description}, from=1-2, to=4-2]
		\arrow["{\substack{1\times \prod_{i=1}^n f^i_0| f^l_1}}"{description}, from=1-4, to=4-4]
		\arrow[from=9-1, to=7-2]
		\arrow[from=9-1, to=9-3]
		\arrow[from=7-2, to=7-4]
		\arrow[from=9-3, to=7-4]
		\arrow["{\substack{g_{1,l}}}"{description}, from=6-1, to=9-1]
		\arrow["{\substack{g_{1,l}}}"{description}, from=4-2, to=7-2]
		\arrow["{\substack{g_{1,l}}}"{description}, from=6-3, to=9-3]
		\arrow["{\substack{g_{1,l}}}"{description}, from=4-4, to=7-4]
	\end{tikzcd}\]
	
	Hence, 
	
	\begin{align*}
		\phi_{l,m} & = \phi_l \circ (1\times  (\prod_{i=1}^n 1| (1\times\iota^l_{m})))\\
		& = g_{1,l}\circ (1\times \prod_{i=1}^{n}f^i_0| f^l_1)\circ 1\times  (\prod_{i=1}^n 1| (1\times\iota^l_{m})))\\
		& = g_{1,l}\circ (1\times \prod_{i=1}^{n} f^i_0| f^l_{1,k_m})
	\end{align*}
	
	Similarly, consider the cube:
	\[\begin{tikzcd}[cramped,column sep=tiny,row sep=small]
		& {\substack{a_1\times (\prod_{i=1}^n (b^i_{\delta_{i,l}}\times \\\prod_{j=1}^{k_i} c^{i,j}_{\delta_{i,l}\cdot\delta_{j,1}})) }} && {\substack{a_1\times\\ (\prod_{i=1}^n (b^i_{\delta_{i,l}}\times c^i_{\delta_{i,l}}) }} \\
		&& \ddots\qquad\qquad \\
		{\substack{a_1\times (\prod_{i=1}^n (b^i_{\delta_{i,l}}\times \\\prod_{j=1}^{k_i} c^{i,j}_0)) }} && {\substack{a_1\times (\prod_{i=1}^n (b^i_{\delta_{i,l}}\times\\ \prod_{j=1}^{k_i} c^{i,j}_{\delta_{i,l}\cdot\delta_{j,k_l}})) }} \\
		& {\substack{ (a_1\times \prod_{i=1}^n b^i_{\delta_{i,l}}) \\\times(\prod_{i=1}^n\prod_{j=1}^{k_i} c^{i,j}_{\delta_{i,l}\cdot\delta_{j,1}})  }} && {\substack{(a_1\times \prod_{i=1}^n b^i_{\delta_{i,l}})\\\times\prod_{i=1}^n c^i_{\delta_{i,l}}}} \\
		&& \ddots\qquad\qquad \\
		{\substack{ (a_1\times \prod_{i=1}^n b^i_{\delta_{i,l}}) \\\times(\prod_{i=1}^n\prod_{j=1}^{k_i} c^{i,j}_0)  }} && {\substack{ (a_1\times \prod_{i=1}^n b^i_{\delta_{i,l}}) \\\times(\prod_{i=1}^n\prod_{j=1}^{k_i} c^{i,j}_{\delta_{i,l}\cdot\delta_{j,k_l}})  }} \\
		& {\substack{d_1\times\\(\prod_{i=1}^n\prod_{j=1}^{k_i} c^{i,j}_{\delta_{i,l}\cdot\delta_{j,1}})}} && {\substack{d_1\times \prod_{i=1}^nc^i_{\delta_{i,l}}}} \\
		&& \ddots\qquad\qquad \\
		{\substack{d_1\times\\(\prod_{i=1}^n\prod_{j=1}^{k_i} c^{i,j}_0)}} && {\substack{d_1\times\\(\prod_{i=1}^n\prod_{j=1}^{k_i} c^{i,j}_{\delta_{i,l}\cdot\delta_{j,k_l}})}} \\
		& {q_1} && {q_1} \\
		&& \ddots\qquad\qquad \\
		{q_1} && {q_1}
		\arrow[from=3-1, to=1-2]
		\arrow[from=3-1, to=3-3]
		\arrow["{\substack{1\times  (\prod_{i=1}^n 1| (1\times\iota^l_{1}))}}"{description}, from=1-2, to=1-4]
		\arrow["{\substack{1\times  (\prod_{i=1}^n 1| (1\times\iota^{l}_{k_l}))}}"{description}, from=3-3, to=1-4]
		\arrow["coh"{description}, from=3-1, to=6-1]
		\arrow[from=6-1, to=4-2]
		\arrow[from=6-1, to=6-3]
		\arrow[from=4-2, to=4-4]
		\arrow[from=6-3, to=4-4]
		\arrow["coh"{description}, from=1-2, to=4-2]
		\arrow["coh"{description}, from=3-3, to=6-3]
		\arrow["coh"{description}, from=1-4, to=4-4]
		\arrow["{\substack{x_{1,l}\times1}}"{description}, from=6-3, to=9-3]
		\arrow[from=9-1, to=7-2]
		\arrow[from=9-1, to=9-3]
		\arrow[from=9-3, to=7-4]
		\arrow["{\substack{x_{1,l}\times1}}"{description}, from=6-1, to=9-1]
		\arrow[from=7-2, to=7-4]
		\arrow["{\substack{x_{1,l}\times1}}"{description}, from=4-2, to=7-2]
		\arrow["{\substack{x_{1,l}\times1}}"{description}, from=4-4, to=7-4]
		\arrow[from=10-2, to=10-4]
		\arrow[from=12-3, to=10-4]
		\arrow["{\substack{y_{1,l}}}"{description}, from=7-4, to=10-4]
		\arrow["{\substack{y_{1,l,k_l}}}"{description}, from=9-3, to=12-3]
		\arrow["{\substack{y_{1,l,1}}}"{description}, from=7-2, to=10-2]
		\arrow[from=9-1, to=12-1]
		\arrow[from=12-1, to=10-2]
		\arrow[from=12-1, to=12-3]
	\end{tikzcd}\]
	Then, we obtain the equality,
	\begin{align*}
		\psi_{l,m} & = \psi_l \circ (1\times  (\prod_{i=1}^n 1| (1\times\iota^l_{m})))\\
		& = y_{1,l}\circ (x_{1,l}\times 1)\circ coh\circ (1\times  (\prod_{i=1}^n 1| (1\times\iota^l_{m})))\\
		& = y_{1,l,k_l}\circ (x_{1,l}\times 1)\circ coh
	\end{align*}
	
	Consequently, $\psi_{l,m} = \phi_{l,m}$ if and only if the following diagram commutes: 
	\[\begin{tikzcd}[cramped]
		{\substack{a_1\times (\prod_{i=1}^n (b^i_{\delta_{i,l}}\times \prod_{j=1}^{k_i}c^{i,j}_{\delta_{i,l}\cdot\delta_{j,m}}))}} && {\substack{a_1\times \prod_{i=1}^n e^i_{\delta_{i,l}}}} \\
		{\substack{(a_1\times \prod_{i=1}^n b^i_{\delta_{i,l}})\times \prod_{i=1}^n\prod_{j=1}^{k_i} c^{i,j}_{\delta_{i,l}\cdot\delta_{j,m}}}} \\
		{\substack{d_1\times \prod_{i=1}^n\prod_{j=1}^{k_i}c^{i,j}_{\delta_{i,l}\cdot\delta_{j,m}}}} && {\substack{q_1}}
		\arrow["{\substack{1\times \prod_{i=1}^n f^i_0 | f^l_{1,m}}}", from=1-1, to=1-3]
		\arrow["{\substack{coh}}", from=1-1, to=2-1]
		\arrow["{\substack{x_{1,l}\times 1}}", from=2-1, to=3-1]
		\arrow["{\substack{y_{1,l,m}}}", from=3-1, to=3-3]
		\arrow["{\substack{g_{1,l}}}", from=1-3, to=3-3]
	\end{tikzcd}\]
	If this commutes for all $m$, we obtain $\phi_l = \psi_l$, and if this commutes for all $l$, we then obtain $\phi = \psi$, as required. 
\end{proof}
\end{appendix}

\addcontentsline{toc}{chapter}{Bibliography}
\bibliography{refs}           
\bibliographystyle{plain}  

\end{document}